\documentclass[11pt,a4paper,DIV16,headsepline,abstracton]{scrartcl}
\pdfoutput=1

\usepackage{amsmath,amssymb,amsthm}
\allowdisplaybreaks
\usepackage{graphics,graphicx}
\usepackage{wrapfig}
\usepackage[margin=1.15in]{geometry}
\usepackage{dsfont}
\usepackage{enumerate}
\usepackage{mathrsfs}
\usepackage{mathtools}
\usepackage{paralist}
\usepackage{color}
\usepackage{transparent}
\usepackage[hidelinks]{hyperref}
\usepackage[affil-it]{authblk}

\usepackage{svg}

\usepackage{scrpage2}

\usepackage{microtype}

\usepackage{booktabs}

\usepackage[numbers]{natbib}
\bibliographystyle{alphadin}

\usepackage{chapterbib}

\usepackage{url}

\usepackage{pgfplots}

\usepackage{authblk}

\usepackage[titletoc,title]{appendix}

\numberwithin{equation}{section}

\numberwithin{equation}{section}

\theoremstyle{plain}
\newtheorem{thm}{Theorem}[section]
\newtheorem{propn}[thm]{Proposition}
\newtheorem{lemma}[thm]{Lemma}
\newtheorem{cor}[thm]{Corollary}

\theoremstyle{definition}
\newtheorem{defn}[thm]{Definition}

\theoremstyle{remark}
\newtheorem{rk}[thm]{Remark}

\newcommand{\RR}{\mathbb{R}}

\newcommand{\NN}{\mathbb{N}}
\newcommand{\ZZ}{\mathbb{Z}}
\newcommand{\CC}{\mathbb{C}}

\newcommand{\dd}{\mathrm{d}}
\newcommand{\norm}[1]{\left\lVert #1\right\rVert}

\newcommand{\Aa}{\mathcal A}

\newcommand{\MM}{\mathcal M}
\newcommand{\HH}{\mathcal H}

\newcommand{\LL}{\mathcal L}
\newcommand{\Ss}{\mathcal S}
\newcommand{\JJ}{\mathcal J}
\newcommand{\tL}{\tilde{\LL}}
\newcommand{\uL}{\underline L}
\newcommand{\uH}{\underline H}
\newcommand{\OO}{\mathcal O}
\newcommand{\Vmod}{\tilde V^{h}}

\newcommand{\del}{\partial}

\renewcommand{\epsilon}{\varepsilon}

\renewcommand{\phi}{\varphi}
\renewcommand{\theta}{\vartheta}
\renewcommand{\rho}{\varrho}

\newcommand*\Laplace{\mathop{}\!\mathbin\bigtriangleup}
\newcommand{\SLaplace}{\Laplace \kern-3mm /\kern+1mm}
\newcommand{\SGrad}{\nabla \kern-2.9mm /\kern+1mm}
\newcommand{\sta}[1]{#1^{\ast}}
\renewcommand{\Im}{\mathrm{Im}\,}
\renewcommand{\Re}{\mathrm{Re}\,}

\newcommand{\e}{\operatorname{e}}
\newcommand{\im}{\mathrm{i}}


\title{Unstable mode solutions\\ to the Klein-Gordon equation\\ in Kerr-anti-de\,Sitter~spacetimes}
\author{Dominic Dold\footnote{\,\,\,Cambridge Centre for Analysis, Department of Pure Mathematics and Mathematical Statistics, University of Cambridge, Wilberforce Road, Cambridge CB3 0AG, United Kingdom\\Email address: \tt{\href{mailto:D.Dold@maths.cam.ac.uk}{D.Dold@maths.cam.ac.uk}}}} 
\date{}

\begin{document}

\maketitle
\vspace{-1.1cm}
\begin{abstract}
For any cosmological constant $\Lambda=-3/\ell^2<0$ and any $\alpha<9/4$, we find a Kerr-AdS spacetime $(\MM,g_{\mathrm{KAdS}})$, in which the Klein-Gordon equation $\Box_{g_{\mathrm{KAdS}}}\psi+\alpha/\ell^2\psi=0$ has an exponentially growing mode solution satisfying a Dirichlet boundary condition at infinity. The spacetime violates the Hawking-Reall bound $r_+^2>|a|\ell$.
We obtain an analogous result for Neumann boundary conditions if $5/4<\alpha<9/4$. 
Moreover, in the Dirichlet case, one can prove that, for any Kerr-AdS spacetime violating the Hawking-Reall bound, there exists an open family of masses $\alpha$ such that the corresponding Klein-Gordon equation permits exponentially growing mode solutions.
 Our result adopts methods of Shlapentokh-Rothman developed in \citep{ShlapentokhGrowing} and provides the first rigorous construction of a superradiant instability for negative cosmological constant.
\end{abstract}

\tableofcontents
 
\section{Introduction}
\label{sec:IntroductionI}

\subsection{The Klein-Gordon equation in asymptotically anti-de\,Sitter spacetimes}

The Einstein vacuum equations
\begin{align}
\label{eqn:Einstein}
R_{\mu\nu}-\frac{1}{2}Rg_{\mu\nu}+\Lambda g_{\mu\nu}=0
\end{align}
with cosmological constant $\Lambda$ can be understood as a system of second-order partial differential equations for the metric tensor $g$ of a four-dimensional spacetime $(\MM,g)$. Solutions with negative cosmological constant have drawn considerable attention in recent years, mainly due to the conjectured instability of these spacetimes. For more details, see \citep{AndersonUniqueness}, \citep{DafermosHolzegelInstability}, \citep{Bizon}, \citep{DiasGravitational}, \citep{DiasHorowitzMarolfSantos}, \citep{HolzegelLukSmuleviciWarnick} and references therein.

\begin{wrapfigure}{r}{.25\textwidth}
	\hspace{0.2cm}
	\def\svgwidth{80pt}
	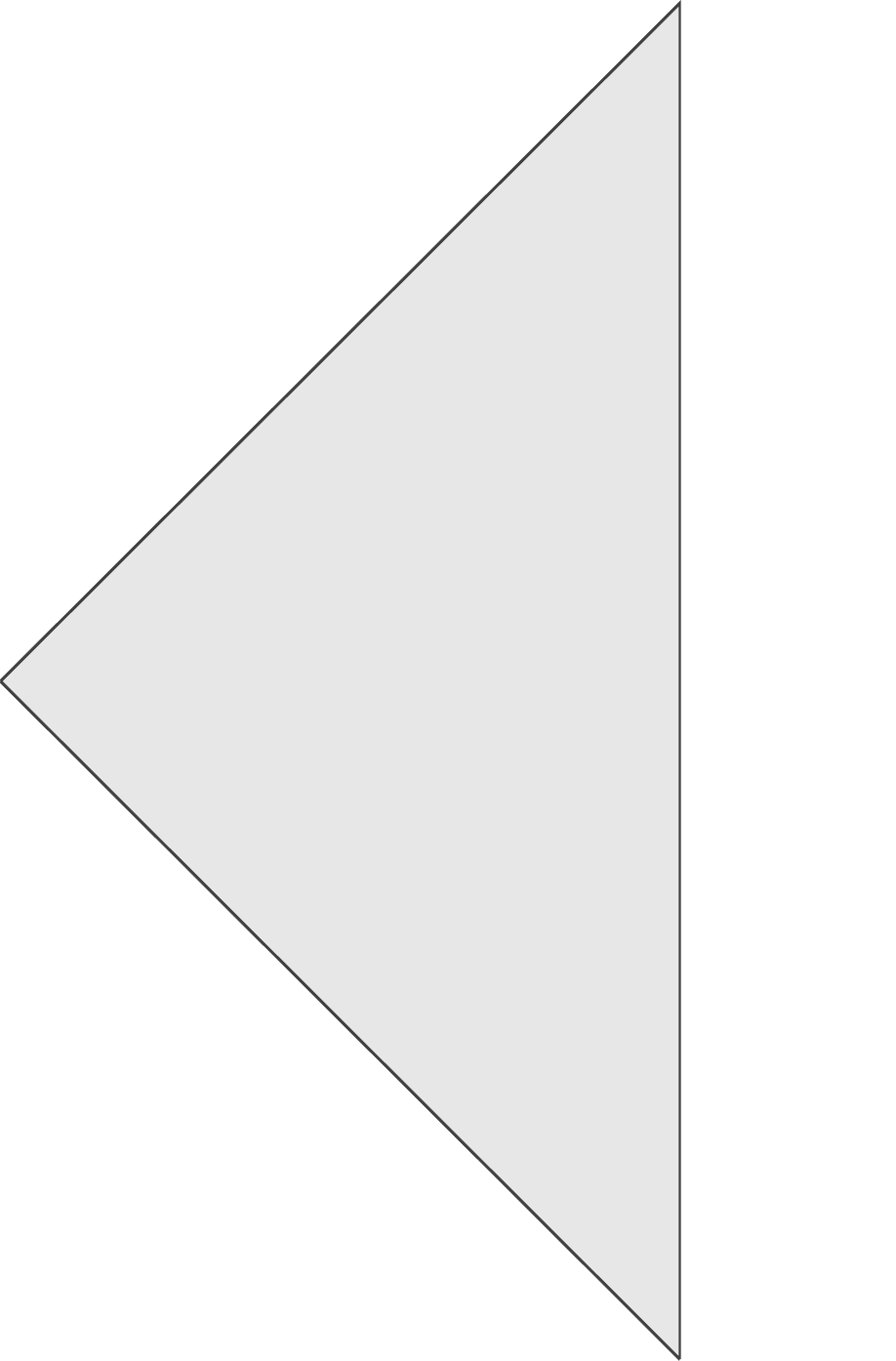
	\caption*{\footnotesize Penrose diagram of the exterior of the Kerr-AdS spacetime \label{figure:Penrose}}
\end{wrapfigure}

In appropriate coordinates, (\ref{eqn:Einstein}) forms a system of non-linear wave equations. A first step in understanding the global dynamics of solutions to (\ref{eqn:Einstein}) -- and thus eventually answering the question of stability -- is the study of linear wave equations on a fixed background. For $\Lambda<0$, efforts have focused on understanding the dynamics of the Klein-Gordon equation
\begin{align}
\label{eqn:KG}
 \Box_g\psi+\frac{\alpha}{\ell^2}\psi=0
\end{align}
for an asymptotically AdS metric $g$ with cosmological constant $\Lambda=-3/\ell^2$ and a mass term $\alpha$ satisfying the Breitenlohner-Freedman bound $\alpha<9/4$ \citep{BreitenlohnerFreedman}, which is required for well-posedness of the equation -- see \citep{WarnickMassive}, \citep{HolzegelWell} and \citep{Vasy}.  The conformally coupled case $\alpha=2$ encompasses scalar-type metric perturbations around an exact AdS spacetime \citep{IshibashiWald3}.

For $g$ being the metric of an exact AdS spacetime, the massive wave equation (\ref{eqn:KG}) allows for time-periodic solutions due to the timelike nature of null and spacelike infinity $\mathcal I$; in particular, general solutions to (\ref{eqn:KG}), while remaining bounded, do not decay.
The behaviour of solutions to (\ref{eqn:KG}) on black-hole spacetimes is very different. Given a Kerr-AdS spacetime with parameters $\ell$, $M$ and $a$ satisfying $|a|<\ell$, define the Hawking-Reall Killing vector field
\begin{align*}
K:=T+\frac{a\Xi}{r_+^2+a^2}\Phi,
\end{align*}
where, using Boyer-Lindquist coordinates, $T=\del_t$ and $\Phi=\del_{\tilde\phi}$; see Section~\ref{subsec:KerrAdS} for definitions of $\Xi$ and $r_+$. The vector field $K$ is the (up to normalisation) unique Killing vector field that is null on the horizon $\HH$ and non-spacelike in a neigbourhood of $\HH$.  It is globally timelike in the black hole exterior if the Hawking-Reall bound $r_+^2>|a|\ell$ is satisfied. If the bound is violated, $K$ becomes non-timelike far away from the horizon.

\begin{wrapfigure}{r}{.4\textwidth}
	\hspace{0.2cm}
	\def\svgwidth{250pt}
	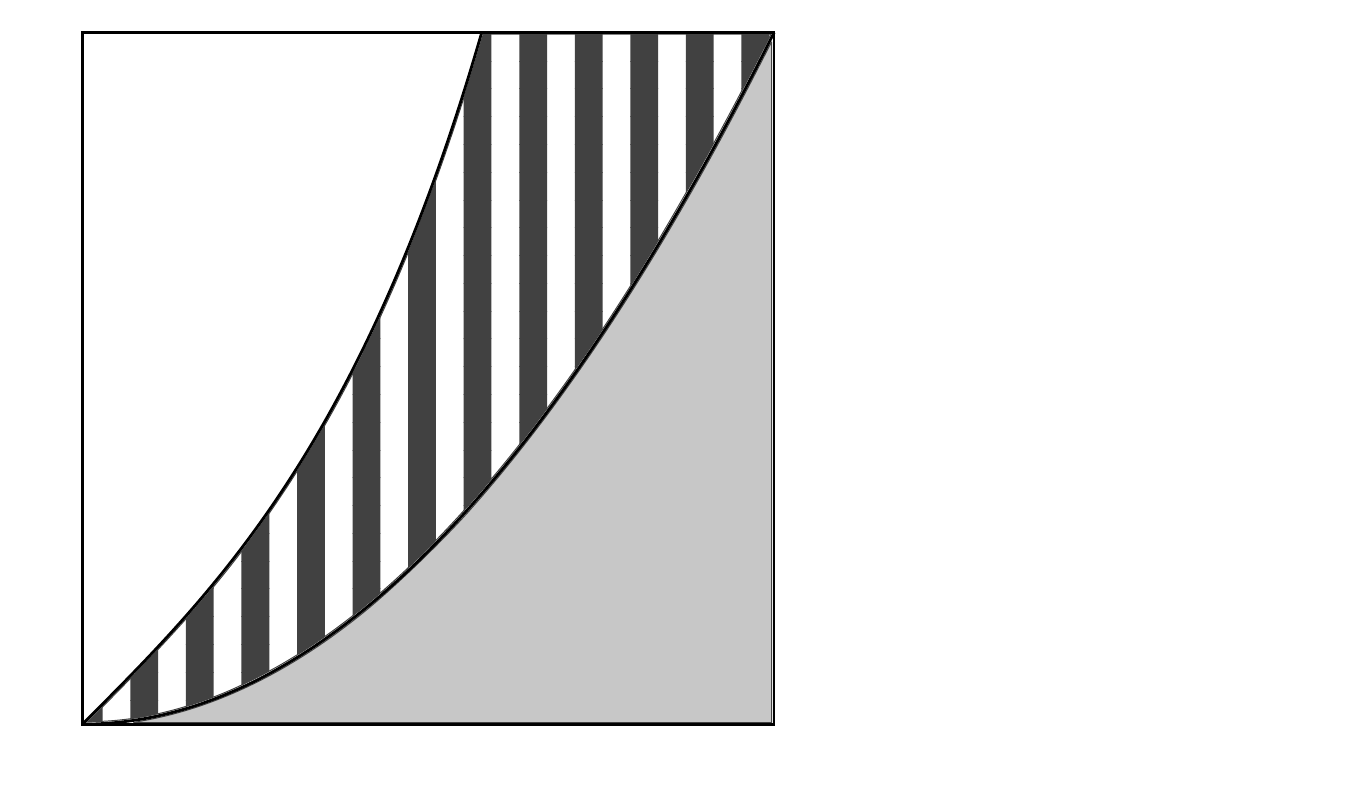
	\caption*{\footnotesize For $r_+<\ell$, the two shaded regions represent the set of admissible parameters for $|a|/\ell$ and $r_+/\ell$. Within the plain gray area (bottom right), the Hawking-Reall bound is satisfied, whereas it is violated in the striped (intermediate) domain.\label{figure:parameters}}
\end{wrapfigure}

 In \citep{HawkingReall}, Hawking and Reall use the existence of a globally causal $K$ for $r_+^2>|a|\ell$ to argue towards the stability of these spacetimes.
 Indeed, uniform boundedness of solutions to (\ref{eqn:KG}) in the full regime $\alpha<9/4$ was proved for $r_+^2>|a|\ell$ in \citep{HolzegelMassive} and \citep{HolzegelWarnickBoundedness}. Moreover, in \citep{HolzegelSmuleviciDecay}, it was shown that solutions with the fastest radial decay (Dirichlet conditions at infinity) in fact decay  logarithmically in time\footnote{\,\,\,Slightly stronger restrictions on $\alpha$ and the spacetime parameters were imposed in \citep{HolzegelSmuleviciDecay} for technical reasons, but the result is believed to hold in full generality by virtue of \citep{HolzegelWarnickBoundedness}.} and \citep{HolzegelSmuleviciQuasi} proves that this logarithmic bound is sharp. 

 For spacetimes violating the Hawking-Reall bound, the global behaviour of solutions to (\ref{eqn:KG}) has not been investigated rigorously, but it was argued in the physics literature -- see \citep{CardosoDias}, \citep{CardosoDiasLemosYoshida}, \citep{CardosoDiasYoshida} and \citep{DiasGravitational} -- that at least for small black holes, i.\,e. for $|a|\ll\ell$ and $|a|\ll r_+$, instability of solutions to (\ref{eqn:KG}) is to be expected if $r_+^2<|a|\ell$. As, in this regime, there is no Killing vector field which is globally timelike in the black hole exterior, this parallels the  situation of asymptotically flat Kerr spacetimes, where superradiance
 is present. For the present discussion, we will understand superradiance loosely as energy extraction from a rotating black hole. We will make this more precise in Lemma~\ref{lemma:negative_energy}.

\subsection{Unstable modes and superradiance in spacetimes with $\Lambda=0$}

The study of energy extraction from black holes in asymptotically flat spacetimes has a long history in the physics literature and two different, but related mechanisms have been proposed.
On the one hand, Press and Teukolsky \citep{PressTeukolsky} suggested that the leakage of energy through the horizon of a rotating black hole could be used to create a black hole bomb by placing a mirror around it. Superradiance would increase the radiation pressure on the mirror over time until it finally breaks, setting free all the energy at once.\footnote{\,\,\,In the asymptotically AdS case, infinity could serve as such a mirror due to the timelike character of spacelike and null infinity.} On the other hand, it was argued that energy could be extracted by the aid of massive waves acting as a natural mirror. This goes back to Zel'dovich \citep{Zeldovich} and was explored further by Starobinsky in \citep{Starobinskii}. Numerous heuristic and numerical studies on the superradiant behaviour of solutions to the Klein-Gordon equation followed, e.\,g. \citep{DamourDeruelleRuffini}, \citep{Zouros}, \citep{Detweiler}, \citep{Dolan2} and \citep{Dolan}. These studies found exponentially growing solutions to the massive wave equation on Kerr spacetimes.

Remarkably, this instability is not present at the level of the massless wave equation
\begin{align*}
\Box_g\psi=0,
\end{align*}
see \citep{DafermosRodnianskiShlapentokh}, where boundedness and decay for such solutions is proved in the full subextremal range $|a|<M$. Even though energy can potentially leak out of the black hole, superradiance can be overcome here as the superradiant frequencies in Fourier spaces are not trapped. In particular, in the context of scattering \citep{DafermosRodnianskiShlapentokhScattering}, a quantitative bound on the maximal superradiant amplification was shown.

In accordance with the above heuristic of massive waves acting as a natural mirror for a black hole bomb, this situation changes dramatically for the Klein-Gordon equation
\begin{align}
\label{eqn:KG_Yakov}
\Box_g\psi-\mu^2\psi=0
\end{align}
with scalar mass $\mu>0$.
A first rigorous construction of exponentially growing finite-energy solutions
in Kerr spacetimes was given by Shlapentokh-Rothman \citep{ShlapentokhGrowing}. The constructed solutions were modes. Mode solutions are solutions of the form
\begin{align}
\label{eqn:Mode}
\psi(t,r,\theta,\tilde\phi)=\e^{-\im\omega t}\e^{\im m\tilde\phi}S_{ml}(\cos\theta)R(r)
\end{align}
in Boyer-Lindquist coordinates $(t,r,\theta,\tilde\phi)$ for $\omega\in\CC$, $m\in\ZZ$ and $l\in\ZZ_{\geq|m|}$, where the smooth functions $S_{ml}$ and $R$ satisfy ordinary differential equations arising from the separability property of the wave equation in Boyer-Lindquist coordinates \citep{Carter_Separability}. We call a mode unstable if it is exponentially growing in time, i.\,e. if $\Im\omega>0$. Shlapentokh-Rothman showed that, for any given Kerr spacetime with $0<|a|<M$, 
there is  an open family of masses $\mu$ producing unstable modes with finite energy. The construction starts from proving existence of real modes and hence produces in particular periodic solutions. We will adopt this strategy.

\subsection{Unstable modes and superradiance in Kerr-AdS spacetimes}

Let us return to the Kerr-AdS case and 
connect the existence of unstable modes to superradiance. Recall that the energy-momentum tensor for the Klein-Gordon equation (\ref{eqn:KG}) is given by
\begin{align*}
 \mathbb{T}_{\mu\nu}:=\Re\left(\nabla_{\mu}\psi\overline{\nabla_{\nu}\psi}\right)-\frac{1}{2}g_{\mu\nu}\left(|\nabla\psi|^2-\frac{\alpha}{\ell^2}|\psi|^2\right)
\end{align*}
and that, for each vector field $X$, we obtain a current
\begin{align*}
 J_{\mu}^X:=\mathbb{T}_{\mu\nu}X^{\nu}.
\end{align*}
While in Kerr spacetimes, $T=\del_t$ (see Section~\ref{subsec:KerrAdS}) is the (up to normalisation) unique timelike Killing field at infinity, the family of vector fields $T+\lambda\Phi$ with $\Phi=\del_{\tilde{\phi}}$ is timelike near infinity in Kerr-AdS spacetimes if and only if
\begin{align}
\label{eqn:Lambda_range}
-\ell^{-2}\left(\ell+a\right)<\lambda<\ell^{-2}\left(\ell-a\right).
\end{align}
Hence, in this range of values for $\lambda$,  the conserved current $J_{\mu}^{T+\lambda\Phi}$ encapsulates the energy density of the scalar field measured by different (rotating) observers at infinity. The vector field $T+\lambda\Phi$ becomes spacelike or null at the horizon.

Recall that the Hawking-Reall vector field $K$ is tangent to the null generators of the horizon $\HH$.  Therefore the energy density radiated through the horizon is measured by
\begin{align*}
 J_{\mu}^{T+\lambda\Phi}K^{\mu}\Big\lvert_{\mathcal H}&=\Re\left(\left(T\psi+\lambda\Phi\psi\right)\overline{K\psi}\right)\Big\lvert_{\mathcal H}\\
  &=\Re\left(\left(T\psi+\lambda\Phi\psi\right)\overline{\left(T\psi+\frac{a\Xi}{r_+^2+a^2}\Phi\psi\right)}\right)\bigg\lvert_{\mathcal H}
\end{align*}
since $g(T+\lambda\Phi,K)=0$ on the horizon.
For mode solutions (\ref{eqn:Mode}), this yields
\begin{align}
\label{eqn:ModeEnergy}
 J_{\mu}^{T+\lambda\Phi}K^{\mu}\Big\lvert_{\HH}&=\left(|\omega|^2-\Re(\omega)\frac{ma\Xi}{r_+^2+a^2}+m\lambda\left(\frac{ma\Xi}{r_+^2+a^2}-\Re(\omega)\right)\right)|\psi|^2\bigg\lvert_{\HH}.
\end{align}
A non-trivial mode solution radiates energy away from the horizon if and only if the expression (\ref{eqn:ModeEnergy}) is negative
for all $\lambda$ in the range (\ref{eqn:Lambda_range}). The thusly characterised frequencies $\omega$ form the superradiant regime.

\begin{lemma}
	\label{lemma:negative_energy}
	Let $r_+^2<|a|\ell$. Let $\psi$ be a mode solution with $\omega(\epsilon)=\omega_R(\epsilon)+\im\epsilon$ for sufficiently small $\epsilon>0$, $\omega_R(\epsilon)\in\RR$ and $\omega_R(0)=ma\Xi/(r_+^2+a^2)$. If
	\begin{align}
	\label{eqn:superradiance}
	\omega_R(0)\frac{\del\omega_R}{\del\epsilon}(0)<0,
	\end{align}
	then $J_{\mu}^{T+\lambda\Phi}K^{\mu}\Big\lvert_{\HH}<0$ for sufficiently small $\epsilon>0$ and $\lambda$ in (\ref{eqn:Lambda_range}).
\end{lemma}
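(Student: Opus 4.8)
The strategy is to view the bracketed coefficient in (\ref{eqn:ModeEnergy}) as a function of $\epsilon$ and $\lambda$, to observe that it vanishes identically in $\lambda$ at $\epsilon=0$, and to read off its sign for small $\epsilon>0$ from a first-order Taylor expansion in $\epsilon$, with the assumption (\ref{eqn:superradiance}) supplying the sign of the $\epsilon$-derivative. Write $\Omega:=a\Xi/(r_+^2+a^2)$ for the angular velocity of the horizon, so that the hypothesis reads $\omega_R(0)=m\Omega$, and abbreviate $\omega_R'(0):=\partial_\epsilon\omega_R(0)$. Substituting $\Re(\omega)=\omega_R(\epsilon)$ and $|\omega|^2=\omega_R(\epsilon)^2+\epsilon^2$ into (\ref{eqn:ModeEnergy}) and factoring the resulting quadratic in the frequency, I would record
\begin{align*}
J_{\mu}^{T+\lambda\Phi}K^{\mu}\big\lvert_{\HH}=P(\epsilon,\lambda)\,|\psi|^2\big\lvert_{\HH},\qquad P(\epsilon,\lambda):=\big(\omega_R(\epsilon)-m\Omega\big)\big(\omega_R(\epsilon)-m\lambda\big)+\epsilon^2 .
\end{align*}
Since $\omega_R(0)=m\Omega$, we have $P(0,\lambda)=0$ for every $\lambda$, while $\partial_\epsilon P(0,\lambda)=m\,\omega_R'(0)\,(\Omega-\lambda)$.

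The only genuinely geometric ingredient is the sign of $\Omega-\lambda$ on the admissible range (\ref{eqn:Lambda_range}). Using $\Xi=(\ell-a)(\ell+a)/\ell^2>0$, a short computation shows that the violated Hawking-Reall bound $r_+^2<|a|\ell$ is equivalent to $\Omega$ lying strictly beyond the interval (\ref{eqn:Lambda_range}) on the side fixed by $\sgn a$: for $a>0$ one has $r_+^2<a\ell\iff a(\ell+a)>r_+^2+a^2\iff\Omega>\ell^{-2}(\ell-a)$, and the case $a<0$ is symmetric (while $a=0$ is excluded by $r_+^2<|a|\ell$). Hence $\sgn(\Omega-\lambda)=\sgn\Omega=\sgn a$ for every $\lambda$ in (\ref{eqn:Lambda_range}), and also on its closure; since $\Omega\neq0$, the quantity $m\,\omega_R'(0)\,(\Omega-\lambda)$ thus has the same sign as $m\,\omega_R'(0)\,\Omega=\omega_R(0)\,\omega_R'(0)$, which is strictly negative by (\ref{eqn:superradiance}). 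As $\lambda\mapsto m\,\omega_R'(0)\,(\Omega-\lambda)$ is affine, it attains its maximum over the closure of (\ref{eqn:Lambda_range}) at an endpoint, hence is bounded above there by some $-\delta<0$.

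It remains to make this uniform in $\lambda$ for small $\epsilon$. Writing $\omega_R(\epsilon)-m\Omega=\epsilon\big(\omega_R'(0)+o(1)\big)$ as $\epsilon\to0$, with the $o(1)$ independent of $\lambda$, and inserting this into $P$ while using that $\lambda\mapsto m(\Omega-\lambda)$ stays bounded on the bounded interval (\ref{eqn:Lambda_range}), one gets
\begin{align*}
P(\epsilon,\lambda)=\epsilon\Big[m\,\omega_R'(0)\,(\Omega-\lambda)+o(1)\Big]\le-\tfrac{\delta}{2}\,\epsilon<0
\end{align*}
for all sufficiently small $\epsilon>0$, uniformly in $\lambda$ in the range (\ref{eqn:Lambda_range}). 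Consequently $J_{\mu}^{T+\lambda\Phi}K^{\mu}\le0$ on $\HH$, with strict inequality wherever $\psi$ does not vanish, which is the assertion.

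I do not foresee a serious obstacle: the core of the argument is a Taylor expansion made uniform over a bounded parameter interval. The one step needing care is the equivalence between the violated Hawking-Reall bound $r_+^2<|a|\ell$ and the position of the horizon angular velocity $\Omega$ relative to (\ref{eqn:Lambda_range}) -- precisely the point at which the absence of a globally timelike Killing field in these spacetimes enters.
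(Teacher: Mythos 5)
Your proof is correct and follows essentially the same route as the paper: both observe that the flux coefficient in (\ref{eqn:ModeEnergy}) vanishes identically in $\lambda$ at $\epsilon=0$ and reduce its sign for small $\epsilon>0$ to that of $\frac{\del\omega_R}{\del\epsilon}(0)\,\big(\omega_R(0)-m\lambda\big)$, which is settled by the violated Hawking-Reall bound placing the horizon angular velocity outside the range (\ref{eqn:Lambda_range}). You merely spell out the two points the paper leaves as ``easily checked'': the equivalence of $r_+^2<|a|\ell$ with the sign of $\Omega-\lambda$ on that range, and the uniformity in $\lambda$ of the first-order expansion in $\epsilon$.
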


\begin{proof}
	Since $J_{\alpha}^{T+\lambda\Phi}K^{\alpha}=0$ at the horizon for $\epsilon=0$, it suffices to differentiate (\ref{eqn:ModeEnergy}) with respect to $\epsilon$ and evaluate at $\epsilon=0$. We see that the derivative is negative if and only if
	\begin{align*}
	\omega_R(0)\frac{\del\omega_R}{\del\epsilon}(0)-m\lambda\frac{\del\omega_R}{\del\epsilon}(0)<0.
	\end{align*}
	This, however, can be easily checked to hold using (\ref{eqn:Lambda_range}) and $r_+^2<|a|\ell$.
\end{proof}

\begin{rk}
	If $r_+^2>|a|\ell$, then $K$ induces an energy density at infinity and $J_{\mu}^KK^{\mu}\geq 0$, in accordance with the intuition of not being in the superradiant regime if the Hawking-Reall bound is satisfied.
\end{rk}

We will show that our constructed growing mode solutions -- as the modes of \citep{ShlapentokhGrowing} -- satisfy the assumptions of Lemma~\ref{lemma:negative_energy}. This corroborates our interpretation that the unstable modes are a linear manifestation of the superradiant properties of Kerr-AdS spacetimes.

\subsection{The Kerr-AdS family}
\label{subsec:KerrAdS}

Before stating our results, we introduce the Kerr-AdS family of spacetimes. For a more exhaustive presentation, we refer the reader to \citep{HolzegelSmuleviciDecay}. Kerr-AdS spacetimes depend on three parameters $(\ell,M,a)$, where $\ell$ is related to the cosmological constant $\Lambda$ via $\Lambda=-3/\ell^2$. The parameter $M>0$ represents the mass of the black hole and $a$, the angular momentum per unit mass, is assumed to satisfy $|a|<\ell$. This condition guarantees for the metric to be regular. Let
\begin{align*}
\Delta_-(r):=(r^2+a^2)\left(1+\frac{r^2}{\ell^2}\right)-2Mr.
\end{align*}
The polynomial $\Delta_-$ has two real roots, denoted by $r_-<  r_+$. We can write
\begin{align}
\label{eqn:Delta-}
\Delta_-(r)=\ell^{-2}(r-r_+)(r^3+r^2r_++r(r_+^2+a^2+\ell^2)-a^2\ell^2r_+^{-1}),
\end{align}
whence
\begin{align*}
\del_r\Delta_-(r_+)=\frac{1}{\ell^2}(3r_+^3+r_+a^2+r_+\ell^2-a^2\ell^2r_+^{-1}).
\end{align*}
This expression imposes some restrictions on the range of $|a|$ in terms of $r_+$ as shown in the following

\begin{lemma}
	\label{lemma:r+}
	If $r_+<\ell$, 
	\begin{align}
	\label{eqn:rlessr}
	a^2<r_+^2\frac{3\frac{r_+^2}{\ell^2}+1}{1-\frac{r_+^2}{\ell^2}}.
	\end{align}
	If $r_+\geq \ell$, $|a|$ can take any value in $[0,\ell)$.
\end{lemma}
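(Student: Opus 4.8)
The plan is to notice that the only information in the formula for $\partial_r\Delta_-(r_+)$ that constrains the parameters is its \emph{sign}: a Kerr-AdS black hole has a non-degenerate event horizon, i.e.\ $\partial_r\Delta_-(r_+)>0$ (so that the surface gravity $\kappa=\partial_r\Delta_-(r_+)/(2(r_+^2+a^2))$ is positive). I would therefore first establish $\partial_r\Delta_-(r_+)>0$ from the standing hypotheses and then unwind this one inequality into the two cases of the lemma.

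For the first part, I would use \eqref{eqn:Delta-} to write $\Delta_-(r)=\ell^{-2}(r-r_+)q(r)$ with the cubic $q(r):=r^3+r^2r_++r(r_+^2+a^2+\ell^2)-a^2\ell^2r_+^{-1}$, so that $\partial_r\Delta_-(r_+)=\ell^{-2}q(r_+)$ (the term $(r-r_+)q'(r)$ vanishing at $r_+$). The key observation is that $q$ is strictly increasing on $\RR$: the quadratic $q'(r)=3r^2+2r_+r+(r_+^2+a^2+\ell^2)$ has discriminant $4r_+^2-12(r_+^2+a^2+\ell^2)=-8r_+^2-12a^2-12\ell^2<0$ and positive leading coefficient, hence $q'>0$ everywhere. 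In particular $q$ has exactly one real zero; since the real zeros of $\Delta_-=\ell^{-2}(r-r_+)q$ are $r_+$ together with that zero, and by hypothesis $\Delta_-$ has the two \emph{distinct} real roots $r_-<r_+$, the zero of $q$ must be $r_-$. Strict monotonicity then gives $q(r_+)>q(r_-)=0$, i.e.\ $\partial_r\Delta_-(r_+)>0$.

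It then remains to rearrange $q(r_+)>0$. Multiplying by $r_+>0$ and collecting the terms containing $a^2$,
\begin{align*}
q(r_+)>0\iff 3r_+^4+r_+^2a^2+r_+^2\ell^2-a^2\ell^2>0\iff a^2(\ell^2-r_+^2)<r_+^2\bigl(3r_+^2+\ell^2\bigr).
\end{align*}
If $r_+<\ell$, then $\ell^2-r_+^2>0$; dividing by $\ell^2-r_+^2$ and then numerator and denominator by $\ell^2$ yields exactly \eqref{eqn:rlessr}. If $r_+\ge\ell$, the left-hand side is non-positive while the right-hand side is strictly positive, so the inequality holds for \emph{every} $a$; thus $\partial_r\Delta_-(r_+)>0$ imposes no further restriction and $|a|$ may take any value in $[0,\ell)$ (for fixed $\ell,a$ such a spacetime is obtained by taking $M>\ell+a^2\ell^{-1}$, since then $\Delta_-(\ell)=2(\ell^2+a^2)-2M\ell<0<a^2=\Delta_-(0)$ forces a root of $\Delta_-$ in $(\ell,\infty)$, while $q$ still has a single real zero).

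All the manipulations are elementary rearrangements; the one step that genuinely requires attention is the \emph{strict} inequality $q(r_+)>0$ rather than merely $q(r_+)\ge0$. This is exactly where the hypothesis that $\Delta_-$ has two \emph{distinct} real roots is used: through the negative discriminant of $q'$ it forces the cubic $q$ to have a single, simple real zero, which rules out the degenerate possibility that $r_+$ is a multiple root of $\Delta_-$ (equivalently, an extremal horizon with $\partial_r\Delta_-(r_+)=0$).
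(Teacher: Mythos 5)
Your proposal is correct and follows essentially the same route as the paper: the paper's (very terse) proof likewise deduces everything from $\del_r\Delta_-(r_+)>0$ being a necessary condition for $r_-<r_+$ and then rearranges that inequality, with the $r_+\geq\ell$ case imposing no restriction. Your argument simply supplies the details the paper leaves implicit (strict monotonicity of the cubic factor $q$ via the negative discriminant of $q'$, hence the strictness $q(r_+)>q(r_-)=0$), so there is nothing to add.
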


\begin{proof}
	These statements follow from $\del_r\Delta_-(r_+)>0$, which is a necessary condition for $r_-<r_+$. Note also that $r_+\geq \ell$ implies $r_+^2\geq|a|\ell$.
\end{proof}

Therefore, under the restriction of Lemma~\ref{lemma:r+}, there is a bijection between Kerr-AdS spacetimes with parameters $(\ell,M,a)$ and spacetimes with parameters $(\ell,r_+,a)$.
 Henceforth we will use the shorthand notations $\MM_{\mathrm{KAdS}}(\ell,M,a)$ and $\MM_{\mathrm{KAdS}}(\ell,r_+,a)$ to denote Kerr-AdS spacetimes with parameters $(\ell,M,a)$ and $(\ell,r_+,a)$ respectively. The restriction of Lemma~\ref{lemma:r+} can be seen in the above figure.

Given $(\ell,M,a)$, a chart covering all of the domain of outer communication  is given by Boyer-Lindquist coordinates $(t,r,\theta,\tilde{\phi})\in\RR\times(r_+,\infty)\times S^2$. The metric in these coordinates is
\begin{align*}
g_{\mathrm{AdS}}&=-\frac{\Delta_--\Delta_{\theta}a^2\sin^2\theta}{\Sigma}\,\dd t^2-2\frac{\Delta_{\theta}(r^2+a^2)-\Delta_-}{\Xi\Sigma}a\sin^2\theta\,\dd t\,\dd\tilde{\phi}+\frac{\Sigma}{\Delta_-}\,\dd r^2\\
	&~~~~~~~~~~~~+\frac{\Sigma}{\Delta_{\theta}}\,\dd\theta^2+\frac{\Delta_{\theta}(r^2+a^2)^2-\Delta_-a^2\sin^2\theta}{\Xi^2\Sigma}\sin^2\theta\,\dd\tilde{\phi}^2,
\end{align*}
where
\begin{align*}
\Sigma=r^2+a^2\cos^2\theta,~~~~~~\Delta_{\theta}=1-\frac{a^2}{\ell^2}\cos^2\theta,~~~~~~\Xi=1-\frac{a^2}{\ell^2}.
\end{align*}

Since Boyer-Lindquist coordinates break down at $r=r_+$, we introduce Kerr-AdS-star coordinates $(\sta t,r,\theta,\phi)$. These are related to Boyer-Lindquist coordinates by
\begin{align*}
\sta t:=t+A(r)~~~\mathrm{and}~~~\phi:=\tilde{\phi}+B(r),
\end{align*}
where
\begin{align*}
\frac{\dd A}{\dd r}=\frac{2Mr}{\Delta_-(1+r^2/\ell^2)}\mathrm{~~~and~~~}\frac{\dd B}{\dd r}=\frac{a\Xi}{\Delta_-}.
\end{align*}
In these coordinates, the metric extends smoothly through $r=r_+$. One sees that the boundary $r=r_+$ of the Boyer-Lindquist patch is null and we shall call it the event horizon $\HH$.

Finally, we introduce the tortoise coordinate $\sta r$ which is related to $r$ by
\begin{align*}
\frac{\dd \sta r}{\dd r}=\frac{r^2+a^2}{\Delta_-(r)}
\end{align*}
with $\sta r(+\infty)=\pi/2$. We will denote the derivative with respect to $\sta r$ by $'$.

\subsection{Statement of the results}

The analysis in this paper yields two types of instability results:

\begin{itemize}
\item[A.] Given a cosmological constant $\Lambda$ and a mass $\alpha$, there is a Kerr-AdS spacetime for this $\Lambda$ in which (\ref{eqn:KG}) has a growing solution.
\item[B.] Given a Kerr-AdS spacetime violating the Hawking-Reall bound, there is a range for the scalar mass such that, in this spacetime, (\ref{eqn:KG}) has a growing solution.
\end{itemize}

To make this more precise, recall that mode solutions are Fourier modes that take the form
\begin{align*}
\psi(t,r,\theta,\tilde\phi)=\e^{-\im\omega t}\e^{\im m\tilde\phi}S_{ml}(\cos\theta)R(r)
\end{align*}
in Boyer-Lindquist coordinates $(t,r,\theta,\tilde\phi)$ for $\omega\in\CC$, $m\in\ZZ$ and $l\in\ZZ_{\geq|m|}$. Define $u(r):=(r^2+a^2)^{1/2}R(r)$. Use $\Ss_{\mathrm{mode}}(\alpha,\omega,m,l)$ to denote the set of all mode solutions with parameters $\omega,m,l$ to the Klein-Gordon equation with scalar mass $\alpha$. Set
\begin{align*}
\kappa^2:=9/4-\alpha.
\end{align*}

We require that all mode solutions are smooth. For the $S_{ml}$ this is ensured automatically by the definition -- see Section \ref{subsec:Spheroidal}. Hence we only need to impose a regularity condition on the function $u$, given parameters $\ell$, $r_+$, $a$, $m$ and $\omega$.

\begin{defn}[Horizon regularity condition]
	\label{defn:HRC}
A smooth function $f:\,(r_+,\infty)\rightarrow\CC$ satisfies the horizon regularity condition if
$f(r)=(r-r_+)^{\xi}\rho$ for a smooth function $\rho$ as well as a constant
				\begin{align}
				\label{eqn:defn_xi}
				\xi:=\im\frac{\Xi am-(r_+^2+a^2)\omega}{\del_r\Delta_-(r_+)}.
				\end{align}
\end{defn}

Henceforth we will only call a mode $\psi$ a mode solution to (\ref{eqn:KG}) if its radial part $R$ (and hence $u$) satisfies the horizon regularity condition. 
At infinity, we will study two different boundary conditions for $u$. 

\begin{defn}[Dirichlet boundary condition]
	\label{defn:BdyConds}
	Given a mass $\alpha<9/4$ (i.\,e. a $\kappa>0$), a smooth function $f:\,(r_+,\infty)\rightarrow\CC$ satisfies the Dirichlet boundary condition if 
	\begin{align*}
	r^{1/2-\kappa}f\rightarrow 0
	\end{align*}
	as $r\rightarrow\infty$.
	
	We say that a $\psi\in\Ss_{\mathrm{mode}}(\alpha,\omega,m,l)$ satisfies Dirichlet boundary condition if its radial part $u$ satisfies the Dirichlet boundary condition.
\end{defn}

Mode solutions satisfying these boundary conditions are analogous to the modes considered in \citep{ShlapentokhGrowing}. 

We are able to show the following result.

\begin{thm}
	\label{thm:oldD}
	Given a cosmological constant $\Lambda=-3/\ell^2$, a black hole radius $0<r_+<\ell$ and a scalar mass parameter $\alpha_0\in(-\infty,9/4)$, there are a spacetime parameter $a$ satisfying the regularity condition $|a|<\ell$, mode parameters $m$ and $l$ and a $\delta>0$ such that there are a smooth curve
	\begin{align*}
	(-\delta,\delta)\rightarrow\RR^2,~\epsilon\mapsto(\alpha(\epsilon),\omega_R(\epsilon))
	\end{align*}
	with
	\begin{align}
	\label{eqn:HRfrequency}
	\alpha(0)=\alpha_0~~~\mathrm{and}~~~\omega_R(0)=\frac{\Xi am}{r_+^2+a^2}
	\end{align}
	and corresponding
	mode solutions in $\Ss_{\mathrm{mode}}(\alpha(\epsilon),\omega_R(\epsilon)+\im\epsilon,m,l)$  satisfying the horizon regularity condition and Dirichlet boundary conditions.

	For all $\epsilon\in(0,\delta)$, these modes satisfy
	\begin{align*}
\frac{\dd\alpha}{\dd\epsilon}(0)>0	~~~\mathrm{and}~~~\omega_R(0)\frac{\del\omega_R}{\del\epsilon}(0)<0.
	\end{align*}
\end{thm}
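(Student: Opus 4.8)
\emph{Proof strategy.} I would follow the scheme of Shlapentokh-Rothman~\citep{ShlapentokhGrowing}: first exhibit a genuine (real, hence time-periodic) mode sitting at the Hawking--Reall frequency with exactly the prescribed mass, then bifurcate off it into $\{\Im\omega>0\}$ by an implicit-function argument, reading the signs of the derivatives off the accompanying Wronskian identities.

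\textbf{Reduction to ordinary differential equations.} Separating variables turns (\ref{eqn:KG}) into an angular equation for $S_{ml}$ --- a spheroidal eigenvalue problem with eigenvalue $\lambda_{ml}=\lambda_{ml}(a^2\omega^2,m,l,\alpha)$ --- and a radial equation for $R$. Setting $u=(r^2+a^2)^{1/2}R$ and passing to the tortoise coordinate $\sta r\in(-\infty,\pi/2)$, the radial equation becomes a Schrödinger-type equation $u''+(\omega^2-\VV)u=0$ with a potential $\VV=\VV(\sta r;\alpha,\omega,m,l,a)$ holomorphic in $\omega$; I would record its indicial exponents, $(r-r_+)^{\pm\xi}$ at the horizon --- with $\xi$ the constant (\ref{eqn:defn_xi}), so that $\Re\xi=(r_+^2+a^2)\,\Im\omega/\del_r\Delta_-(r_+)>0$ exactly when $\Im\omega>0$, using $\del_r\Delta_-(r_+)>0$ from the proof of Lemma~\ref{lemma:r+} --- and $r^{1/2\pm\kappa}$ at infinity. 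These single out, uniquely up to scaling, the horizon-regular solution $u_{\mathrm{hor}}(\cdot\,;\alpha,\omega)$ and the Dirichlet solution $u_{\infty}(\cdot\,;\alpha,\omega)$, and I would encode the existence of a Dirichlet mode as the vanishing of their suitably normalised Wronskian $\WW(\alpha,\omega):=W[u_{\mathrm{hor}},u_{\infty}]$: a mode solution in $\Ss_{\mathrm{mode}}(\alpha,\omega,m,l)$ satisfying the Dirichlet condition exists if and only if $\WW(\alpha,\omega)=0$. Since $\xi$ is holomorphic (indeed linear) in $\omega$, the branch $(r-r_+)^{\xi(\omega)}$ is entire in $\omega$, and so $u_{\mathrm{hor}}$, $u_{\infty}$ and $\WW$ are holomorphic in $\omega$ in a neighbourhood of the real axis and smooth in $\alpha$.

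\textbf{A real mode at mass $\alpha_0$.} At $\epsilon=0$ and $\omega=\omega_R(0)=\Xi am/(r_+^2+a^2)$ one has $\xi=0$, so the radial problem is a \emph{regular self-adjoint} Sturm--Liouville problem and real modes correspond exactly to its eigenvalues $\alpha$, all lying below $9/4$ by the Breitenlohner--Freedman threshold. For fixed $r_+<\ell$ and $a$ in the striped admissible band --- that is, $r_+^2<|a|\ell$ together with (\ref{eqn:rlessr}), which is non-empty precisely because $r_+<\ell$ --- I would show, by a semiclassical/WKB analysis of $\VV$ for large $m$ (with $l=m$), where the violation $r_+^2<|a|\ell$ of the Hawking--Reall bound forces a deep potential well between the horizon turning point and the anti-de\,Sitter wall, that real modes exist for an open set of masses, and that as $a$ sweeps the band this set of masses sweeps out all of $(-\infty,9/4)$ (tending to $9/4$ as $r_+^2\uparrow|a|\ell$ and to $-\infty$ towards the other end of the band). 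Choosing $a$ so that $\alpha_0$ lies among these masses fixes $(a,m,l)$ and a real mode $u_0$.

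\textbf{Bifurcation into the upper half-plane.} With $(a,m,l)$ now fixed, I would regard $\WW=0$ as two real equations in the two real unknowns $(\alpha,\omega_R)$, with $\epsilon=\Im\omega$ a parameter, and apply the implicit function theorem at the base point $(\alpha_0,\omega_R(0),0)$, where $\WW=0$ by the previous step. Along the slice $\omega=\omega_R(0)$ the exponent $\xi$ stays $0$ and all data stay real, so $\Im\WW\equiv 0$ there and hence $\del_\alpha(\Im\WW)=0$ at the base point; a Sturm oscillation argument gives $\del_\alpha(\Re\WW)\neq0$; and holomorphy in $\omega$ gives $\del_{\omega_R}\WW=\del_\omega\WW$. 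The crucial transversality statement is $\Im\del_\omega\WW(\alpha_0,\omega_R(0))\neq0$: differentiating the Wronskian in $\omega$ at $\WW=0$ (so that $u_{\infty}\propto u_{\mathrm{hor}}=u_0$) produces a bulk integral of $(\del_\omega\VV)\,|u_0|^2$, which is real, plus a horizon boundary term proportional to $\del_\omega\xi$ and to the non-zero horizon value of $u_0$ --- and $\del_\omega\xi=-\im(r_+^2+a^2)/\del_r\Delta_-(r_+)$ is purely imaginary and non-zero --- so this boundary term is purely imaginary and survives. Hence the Jacobian $D_{(\alpha,\omega_R)}(\Re\WW,\Im\WW)$ is triangular with non-vanishing diagonal, and the implicit function theorem yields the smooth curve $\epsilon\mapsto(\alpha(\epsilon),\omega_R(\epsilon))$ with $\alpha(0)=\alpha_0$ and $\omega_R(0)=\Xi am/(r_+^2+a^2)$, together with the corresponding modes in $\Ss_{\mathrm{mode}}(\alpha(\epsilon),\omega_R(\epsilon)+\im\epsilon,m,l)$, which still satisfy horizon regularity and the Dirichlet condition.

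\textbf{Signs of the derivatives, and the main obstacle.} Differentiating $\WW(\alpha(\epsilon),\omega_R(\epsilon)+\im\epsilon)=0$ at $\epsilon=0$ and using $\del_\epsilon\WW=\im\,\del_\omega\WW$ together with $\del_\alpha\WW\in\RR$ at the base point gives
\begin{align*}
\frac{\del\omega_R}{\del\epsilon}(0)=-\frac{\Re\del_\omega\WW}{\Im\del_\omega\WW},\qquad
\frac{\dd\alpha}{\dd\epsilon}(0)=\frac{|\del_\omega\WW|^2}{\del_\alpha(\Re\WW)\cdot\Im\del_\omega\WW},
\end{align*}
so both sign assertions come down to pinning the signs of $\del_\alpha(\Re\WW)$, $\Re\del_\omega\WW$ and $\Im\del_\omega\WW$ at the base point --- the first from a Sturm comparison, the latter two from the bulk-plus-horizon decomposition above and the $\omega$-monotonicity of $\VV$ (the $\del_\omega\omega^2$ contribution carrying the sign of $\omega_R(0)$) --- after which, having fixed the normalisation of $\WW$ and, if needed, the sign of $m$, one obtains $\frac{\dd\alpha}{\dd\epsilon}(0)>0$ and $\omega_R(0)\frac{\del\omega_R}{\del\epsilon}(0)<0$, the latter being exactly condition (\ref{eqn:superradiance}) of Lemma~\ref{lemma:negative_energy}. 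I expect the two genuinely hard points to be: first, that an \emph{arbitrary} $\alpha_0<9/4$ is realised, which requires the large-$m$ asymptotics of the radial problem and control of the eigenvalue flow across the Hawking--Reall-violating band; and second, the precise evaluation of the signs of $\del_\omega\WW$ and $\del_\alpha\WW$, i.e.\ the quantitative superradiance input. Both the potential well in the first point and the sign in the second are exactly where the failure of the Hawking--Reall bound $r_+^2>|a|\ell$ is indispensable.
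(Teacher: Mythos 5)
Your overall architecture (separate variables, encode Dirichlet modes as the zero set of a Wronskian/reflection coefficient, perturb off a real mode by the implicit function theorem with a current identity supplying the transversality, then read the signs of $\dd\alpha/\dd\epsilon$ and $\del\omega_R/\del\epsilon$ from the resulting relations) matches the paper's Step II quite closely: your triangular Jacobian is exactly the content of (\ref{eqn:det1})--(\ref{eqn:det2}) together with $\del A/\del\alpha\neq 0$. But there is a genuine gap in your Step I, which is the heart of Theorem~\ref{thm:oldD}: you must produce a \emph{real} mode at exactly the prescribed mass $\alpha_0$ for some admissible $a$. Your proposal — treat $\alpha$ as a Sturm--Liouville eigenvalue at fixed $a$, invoke a WKB well to get "an open set of masses", and then claim that as $a$ sweeps the Hawking--Reall-violating band these masses sweep out all of $(-\infty,9/4)$ — is asserted, not proved, and you yourself flag it as the hard point. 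It is also not a standard eigenvalue problem in $\alpha$: the mass enters nonlinearly through the angular eigenvalue $\lambda_{ml}(\omega,\alpha)$, through the cutoff $\Theta(\alpha)$ in $V_\alpha$, and through the $\alpha$-dependent boundary behaviour $r^{-1/2-\kappa}$ with $\kappa=\sqrt{9/4-\alpha}$, so "real modes correspond exactly to its eigenvalues $\alpha$" needs justification, and the sweeping/continuity claim is precisely the kind of delicate statement the paper only obtains in a weaker form (left-continuity of $\hat a$ as a function of $\alpha$, Section~\ref{subsec:Continuity}) and only needs for Theorem~\ref{thm:new}, not for Theorem~\ref{thm:oldD}. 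The paper avoids your sweep entirely: it fixes $\alpha_0$, minimises the functional $\LL_a$ under the constraint $\norm{f/r}_{L^2}=1$ in weighted Sobolev spaces (Lemmas~\ref{lemma:VariationInequalityD}--\ref{lemma:ELD}), obtaining a Lagrange multiplier $\nu_a\le 0$, and then varies only the spacetime parameter $a$, using continuity of $\nu_a$ and openness of the set $\Aa$ to show $\nu_{\hat a}=0$ at $\hat a=\inf\Aa$ (Proposition~\ref{propn:a_hat}); this hits the prescribed mass exactly, with the negativity of $\LL_a$ for large $m=l$ supplied by the eigenvalue asymptotics of Lemma~\ref{lemma:GroundState} and Lemma~\ref{lemma:Vneg}.

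Two further steps are asserted where the paper has to work. First, $\del_\alpha(\Re\WW)\neq 0$ is not a bare Sturm oscillation fact, because varying $\alpha$ also moves $\lambda_{ml}$ and the boundary condition; the paper proves it (Lemma~\ref{lemma:del_A__del_alphe}) via the perturbation formula for $\del\lambda/\del\alpha$ (Proposition~\ref{propn:LambdaDeriv}), where for $\alpha>0$ a cancellation is needed to get a definite sign of $\del_\alpha\lambda-\ell^{-2}(r^2+\Theta(\alpha)a^2)$. Second, your sign argument for $\omega_R(0)\,\del_\epsilon\omega_R(0)<0$ rests on an unproven monotonicity of $\Re\,\del_\omega\WW$ "carried by $\del_\omega\omega^2$"; at $\omega=\omega_+$ the bulk integrand $\del_\omega(V-\omega^2)$ contains competing terms (notably $\del_\omega\lambda$ and $2ma\Xi\Delta_-^{-1}\bigl(1-\tfrac{r^2+a^2}{r_+^2+a^2}\bigr)$), whose signs require the angular perturbation computation and are not obvious; the paper instead proves this sign by a separate global current argument at $\epsilon>0$ with the monotone function $K(r)$, valid only for $|m|$ sufficiently large (Proposition~\ref{lemma:DelOmega}), and then deduces $\dd\alpha/\dd\epsilon(0)>0$ from it together with Lemma~\ref{propn:IfThen} (Proposition~\ref{propn:alpha_Dirichlet}). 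Your Wronskian-derivative route for the signs may well be salvageable, but as written it replaces these quantitative inputs with an assertion exactly where the superradiant mechanism has to be verified.
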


\begin{rk}
	The $u$ in the theorem has finite energy and hence the spacetime parameters of the theorem must violate the Hawking-Reall bound as explained in the previous sections; this is explained further in Lemma~\ref{lemma:HR_positivity} and Remark~\ref{rk:HR_positivity}. By Lemma~\ref{lemma:r+}, we know that the $a$ can be located anywhere in the range
	\begin{align*}
	\frac{r_+^2}{\ell}<|a|<r_+\sqrt{\frac{3\frac{r_+^2}{\ell^2}+1}{1-\frac{r_+^2}{\ell^2}}}.
	\end{align*}
	We remark that our result does not restrict to small $|a|$. In fact, we can enforce $|a|$ to be as close to $\ell$ as we wish by choosing $r_+/\ell<1$ large.
\end{rk}

Lemma~\ref{lemma:negative_energy} implies that the constructed modes are superradiant and indicates that the instability is driven by energy leaking through the horizon.

Our next theorem builds on the first, but allows for the construction of an unstable superradiant mode with Dirichlet boundary conditions for each given $\alpha<9/4$.

\begin{thm}
	\label{thm:new}
	Let $\ell>0$ and $\alpha<9/4$. Then there is an $\MM_{\mathrm{KAdS}}(\ell,r_+,a)$ and a super\-radiant $\psi\in\Ss_{\mathrm{mode}}(\alpha,\omega_R+\im\epsilon,m,l)$ for an $\omega_R\in\RR$ and $\epsilon>0$ satisfying Dirichlet boundary conditions.
\end{thm}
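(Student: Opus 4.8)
\noindent\emph{Proof idea.} The plan is to deduce the statement from Theorem~\ref{thm:oldD} by an exhaustion argument in the mass. Fix $\ell$, the target mass $\alpha<9/4$, and once and for all an admissible horizon radius $r_+\in(0,\ell)$. Feeding a threshold mass $\alpha_0<9/4$ into Theorem~\ref{thm:oldD} yields a rotation parameter $a(\alpha_0)$ with $|a(\alpha_0)|<\ell$, mode numbers $m(\alpha_0),l(\alpha_0)$, a width $\delta(\alpha_0)>0$, and a smooth curve $\epsilon\mapsto(\alpha(\epsilon;\alpha_0),\omega_R(\epsilon;\alpha_0))$ on $(-\delta(\alpha_0),\delta(\alpha_0))$ with $\alpha(0;\alpha_0)=\alpha_0$, $\tfrac{\dd\alpha}{\dd\epsilon}(0;\alpha_0)>0$, and corresponding growing Dirichlet modes living \emph{in the single spacetime} $\MM_{\mathrm{KAdS}}(\ell,r_+,a(\alpha_0))$. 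In particular the masses these growing modes realise fill a nonempty interval $\big(\alpha_0,\alpha_0+\eta(\alpha_0)\big)$ with $\eta(\alpha_0)>0$.

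Suppose we knew that $\eta$ is bounded below near the target, say $\eta\ge\eta_0>0$ on $(\alpha-\eta_0,\alpha)$. Then, picking $\alpha_0\in(\alpha-\eta_0,\alpha)$, the continuous function $\epsilon\mapsto\alpha(\epsilon;\alpha_0)$ starts below $\alpha$ at $\epsilon=0$ and exceeds $\alpha$ somewhere in $(0,\delta(\alpha_0))$, so by the intermediate value theorem there is $\epsilon_*\in(0,\delta(\alpha_0))$ with $\alpha(\epsilon_*;\alpha_0)=\alpha$; moreover $\epsilon_*\to0$ as $\alpha_0\uparrow\alpha$, since $\alpha(\epsilon;\alpha_0)=\alpha_0+\epsilon\,\tfrac{\dd\alpha}{\dd\epsilon}(0;\alpha_0)+O(\epsilon^2)$. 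The associated $\psi\in\Ss_{\mathrm{mode}}\big(\alpha,\omega_R(\epsilon_*;\alpha_0)+\im\epsilon_*,m(\alpha_0),l(\alpha_0)\big)$ in $\MM_{\mathrm{KAdS}}(\ell,r_+,a(\alpha_0))$ is then a growing mode solution of mass exactly $\alpha$ satisfying Dirichlet boundary conditions. Its spacetime violates the Hawking-Reall bound automatically, because this mode has finite energy (cf. the Remark after Theorem~\ref{thm:oldD} and Lemma~\ref{lemma:HR_positivity}); and since $\epsilon_*$ is small, $\omega_R(0;\alpha_0)$ equals the Hawking-Reall frequency $\Xi a(\alpha_0)m(\alpha_0)/(r_+^2+a(\alpha_0)^2)$ and $\omega_R(0;\alpha_0)\tfrac{\del\omega_R}{\del\epsilon}(0;\alpha_0)<0$, Lemma~\ref{lemma:negative_energy} shows $\psi$ is superradiant. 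This proves the theorem.

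The one input not already contained in Theorem~\ref{thm:oldD} is the local lower bound on $\eta$, and extracting it is where the work lies: one must check, from the proof of Theorem~\ref{thm:oldD}, that the width $\delta(\alpha_0)$ and the drift rate $\tfrac{\dd\alpha}{\dd\epsilon}(0;\alpha_0)$ can be taken locally uniform in $\alpha_0$ (e.g.\ that the constructed data depend continuously on $\alpha_0$ and the relevant implicit-function-theorem estimate is uniform on compact $\alpha_0$-ranges), so that $\eta(\alpha_0)$, which is essentially $\delta(\alpha_0)\,\tfrac{\dd\alpha}{\dd\epsilon}(0;\alpha_0)$, does not collapse as $\alpha_0\uparrow\alpha$. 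I expect this stability statement to be the main obstacle. An alternative that avoids it is to start instead from the threshold mode of mass exactly $\alpha$ (run Theorem~\ref{thm:oldD} with $\alpha_0=\alpha$) and to re-do its perturbation step with $\alpha$ held fixed and the rotation parameter $a$ promoted to an unknown of the branch in place of $\alpha$; this needs the analogue of the non-degeneracy $\tfrac{\dd\alpha}{\dd\epsilon}(0)\neq0$ with $a$ in the role of $\alpha$, i.e.\ control of the derivative in $a$ of the renormalised Wronskian governing the construction, which is less transparent than the derivative in $\alpha$ since $a$ enters the metric, the radial potential and the horizon exponent $\xi$ all at once, and along such a branch the spacetime itself varies, so one would also have to re-examine the superradiance sign rather than quote Lemma~\ref{lemma:negative_energy} verbatim.
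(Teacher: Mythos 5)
Your strategy is the same as the paper's: start Theorem~\ref{thm:oldD} at a mass just below the target $\alpha$, use $\frac{\dd\alpha}{\dd\epsilon}(0)>0$ and the intermediate value theorem to make the curve $\epsilon\mapsto\alpha(\epsilon)$ hit $\alpha$ at some $\epsilon_*>0$, and read off a growing Dirichlet mode of mass exactly $\alpha$. But the ``one input not already contained in Theorem~\ref{thm:oldD}'' that you defer — the locally uniform lower bound on how far each curve travels in mass — is not a technical afterthought; it \emph{is} the proof of Theorem~\ref{thm:new} in the paper (Section~\ref{subsec:Continuity}), and as written your argument has a genuine gap there. Two things must actually be established. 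First, the starting data of the curves must converge to the target data as $\alpha_0\uparrow\alpha$: the rotation parameter produced by the variational step is $\hat a=\Phi(\alpha_0,r_+)=\inf\Aa_{\alpha_0,r_+}$, the infimum of an open set, so its dependence on $\alpha_0$ is a priori uncontrolled. The paper proves that $\Phi(\cdot,r_+)$ is non-increasing (monotonicity of $\LL_{\alpha,r_+,a}(f)$ in $\alpha$) and then \emph{left}-continuous at the target mass (Lemmas~\ref{lemma:non-increasing} and~\ref{lemma:CL}); only one-sided continuity is available, which is precisely why one must approach from below, and without it your starting points $(\alpha_0,\Omega_R(\alpha_0),0,\Phi(\alpha_0,r_+))$ need not enter any fixed neighbourhood of the target point. (Related: the mode parameters should not be allowed to vary as $m(\alpha_0)$; the paper fixes one $m$ valid on a whole interval of masses via Remark~\ref{rk:Choice} and Lemma~\ref{lemma:well-defined}.) Second, the uniformity of the implicit-function-theorem step: the paper recasts the IFT curves as integral curves of an explicit vector field $W$ built from $\del A/\del\alpha$, $\del A/\del\omega$, uses smoothness of the reflection coefficient $A$ in $(\alpha,\omega,a)$ (Lemma~\ref{lemma:continuous_AB}) and the non-vanishing of the determinant $D$ at the target point to get $W^{\alpha}\geq\rho>0$, $W^{\omega_R}\leq-\rho$ on a fixed ball, and then proves via a continuity-of-exit-time argument that all curves starting in a smaller ball exist for a uniform parameter time $T_0>0$. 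Only then does the IVT step you invoke ($\alpha(T_0)\geq\alpha_0+T_0\rho$, so the curve crosses the target) go through. So your ``$\eta(\alpha_0)\approx\delta(\alpha_0)\frac{\dd\alpha}{\dd\epsilon}(0;\alpha_0)$ is bounded below'' is the correct heuristic, but asserting it as an expectation rather than proving it leaves the main content of the theorem unproved; your fallback (fixing $\alpha$ and perturbing in $a$) is not what the paper does and, as you note, would require new non-degeneracy estimates in $a$ that are not established anywhere in the paper.
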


The methods used in our proof also show the following statement:

\begin{cor}
\label{cor:new}
Let $\ell>0$, $\alpha<9/4$ and $0<r_+<\ell$. Then there is an $\epsilon>0$ such that for all $|a|\in(r_+^2/\ell,r_+^2/\ell+\epsilon)$, the Klein-Gordon equation with mass $\alpha$ has an exponentially growing mode solution in $\MM_{\mathrm{KAdS}}(\ell,r_+,a)$. 
\end{cor}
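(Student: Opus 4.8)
Corollary~\ref{cor:new} is an openness-in-$a$ refinement of Theorem~\ref{thm:new}, so the plan is to revisit the construction behind Theorems~\ref{thm:oldD} and \ref{thm:new} and check that, with $\ell$ and $r_+\in(0,\ell)$ held fixed, it yields exponentially growing mode solutions not merely for one value of $a$ but for an entire right-neighbourhood of the Hawking--Reall threshold $|a|=r_+^2/\ell$. Recall that this construction has two stages. Stage one: for a spacetime violating the Hawking--Reall bound one produces a \emph{real} mode solution sitting exactly at the Hawking--Reall frequency $\omega_R(0)=\Xi am/(r_+^2+a^2)$; at this frequency the exponent of Definition~\ref{defn:HRC} vanishes, $\xi=0$, so horizon regularity is just smoothness at $r_+$, and the real mode is found by a continuity argument in the mass together with the angular numbers, imposing Dirichlet data at infinity. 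Stage two: one perturbs into $\Im\omega=\epsilon>0$ by an implicit-function-theorem argument, obtaining the smooth curve $\epsilon\mapsto(\alpha(\epsilon),\omega_R(\epsilon))$ of Theorem~\ref{thm:oldD}; since $\frac{\dd\alpha}{\dd\epsilon}(0)>0$, genuine growing modes exist for every mass in a one-sided window above the real-mode mass.

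The plan is then as follows. First I would fix $\ell$, $r_+\in(0,\ell)$ and $\alpha<9/4$ and note that every $|a|$ strictly between $r_+^2/\ell$ and the upper admissible value of Lemma~\ref{lemma:r+} violates the Hawking--Reall bound; this half-open interval is the range on which stage one is available. Next I would exploit the freedom in the angular numbers: running stage one over large $m$ with $l$ chosen accordingly, the attainable real-mode masses $\mu_0^2=\mu_0^2(\ell,r_+,a,m,l)$ should fill an arbitrarily fine subset of $(-\infty,9/4)$, so one can pick a single pair $(m,l)$ with $\mu_0^2$ just below $\alpha$ and the gap $\alpha-\mu_0^2$ smaller than the width of the stage-two window; feeding this into stage two then gives a growing (Dirichlet, in fact superradiant by Lemma~\ref{lemma:negative_energy}) mode at mass exactly $\alpha$. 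All quantities entering both stages depend continuously on $a$, and the integer choice $(m,l)$ is locally constant in $a$, so the same $(m,l)$ works on a whole interval of $a$. Choosing $(m,l)$ so that $\mu_0^2\to\alpha^-$ as $|a|\downarrow r_+^2/\ell$ places this interval against the threshold, and taking $\epsilon$ no larger than its length yields the statement.

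The main obstacle is the uniformity as $|a|\downarrow r_+^2/\ell$: the threshold is a degenerate endpoint, since there the Hawking--Reall field $K$ becomes null rather than spacelike at infinity, so one has to verify that neither the continuity argument producing the real mode nor the non-degeneracy underlying the implicit-function-theorem step degrades as $a$ approaches $r_+^2/\ell$, and that the attainable-mass set stays fine and the perturbation window stays open \emph{uniformly} on a one-sided neighbourhood of the threshold. The supporting technical heart is the large-$m$ (semiclassical) analysis of the radial ODE needed for the density/spacing estimate on the masses $\mu_0^2(\cdot,m,l)$; everything else is continuity together with the openness of the sign conditions $\frac{\dd\alpha}{\dd\epsilon}(0)>0$ and $\omega_R(0)\frac{\del\omega_R}{\del\epsilon}(0)<0$.
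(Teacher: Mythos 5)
Your reading of the two-stage structure (real mode, then implicit-function-theorem perturbation with $\dd\alpha/\dd\epsilon(0)>0$) is correct, but the way you combine the stages is not the paper's route, and it contains a step that fails exactly where you need it. In the paper, stage one does not produce, for an arbitrarily prescribed Hawking--Reall-violating $a$, a real mode at some mass $\mu_0^2(\ell,r_+,a,m,l)$: for a \emph{fixed} mass it produces a real mode only at the special value $\hat a=\inf\Aa$ obtained variationally (Proposition~\ref{propn:a_hat}), and the prescribed mass $\alpha$ is then hit exactly by the continuity/flow argument of Section~\ref{subsec:Continuity}, not by adjusting the real-mode mass. The (sketched) proof of Corollary~\ref{cor:new} keeps $\alpha$ fixed throughout and uses Lemma~\ref{lemma:Vneg}: once $r_+^2<|a|\ell$, taking $|m|$ large --- with $m$ depending on the size of the violation --- makes $V-\omega_+^2$ negative on an interval, hence $\LL_a(f)<0$ for some $f$, so smaller and smaller violations are brought into $\Aa$ by taking $|m|$ larger; one then reruns the perturbation and continuity arguments at the fixed mass $\alpha$. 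In particular $m$ must grow as $|a|\downarrow r_+^2/\ell$.

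Your plan does the opposite --- one fixed pair $(m,l)$ on an interval of $a$ abutting the threshold, with real-mode mass ``just below $\alpha$'' --- and this is untenable. Fix $(m,l)$ and any $\beta\in(\alpha,9/4)$. By Lemma~\ref{lemma:HR_positivity}, Remark~\ref{rk:HR_positivity} and the strict monotonicity of $\LL_{\alpha,r_+,a}(f)$ in the mass (used in Lemma~\ref{lemma:non-increasing}), one has $\inf\Aa_{\beta,r_+}>r_+^2/\ell$; hence for every $a\in(r_+^2/\ell,\inf\Aa_{\beta,r_+})$ and every mass $\beta'\le\beta$ the functional is non-negative on all test functions. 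A real mode $u$ there would have frequency $\omega_+$ (Lemma~\ref{thm:RestrReal}) and, by the ODE and the vanishing boundary terms, $\LL_{\mu_0^2,r_+,a}(u)=0$, whence $\LL_{\beta,r_+,a}(u)<0$ and (after a routine cutoff) $a\in\Aa_{\beta,r_+}$, a contradiction. So with $(m,l)$ fixed, sufficiently close to the threshold there is \emph{no} real-mode mass below $\beta$, let alone just below $\alpha$: the attainable real-mode mass is pushed up towards $9/4$ (not towards $\alpha^-$, as you assert), while your stage-two window only opens upwards, so the target mass $\alpha$ becomes unreachable on a neighbourhood of the threshold. Independently of this, the two quantitative inputs your plan rests on --- density/fine spacing of the attainable real-mode masses in $(-\infty,9/4)$ as $(m,l)$ varies, and a lower bound on the ``width of the stage-two window'' exceeding the gap to $\alpha$ --- are supplied nowhere in the paper: the implicit function theorem gives no quantitative window (the continuity argument of Section~\ref{subsec:Continuity} exists precisely to avoid needing one), and no spacing estimate for threshold masses is available.
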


\begin{rk}
These results also apply to the massless wave equation, which is an important difference to the asymptotically flat case.
\end{rk}

Furthermore, although this will not be pursued explicitly in this paper, one can also show the analogue of Shlapentokh-Rothman's result in our setting by only adapting the proof slightly.

\begin{thm}
\label{thm:analogue_Yakov}
	Given a Kerr-AdS spacetime $\MM_{\mathrm{KAdS}}(\ell,r_+,a)$ satisfying, $|a|<\ell$, $r_+>0$ and $r_+^2<|a|\ell$ (and the restrictions of Lemma~\ref{lemma:r+}), there are mode parameters $m$ and $l$ as well as a $\delta>0$ such that, for each $\epsilon\in(-\delta,\delta)$, there is an open family of masses $\alpha(\epsilon)$ and a mode solution in $\Ss_{\mathrm{mode}}(\alpha(\epsilon),\omega_R(\epsilon)+\im\epsilon,m,l)$ satisfying Dirichlet boundary conditions with $\omega_R(0)$ as in (\ref{eqn:HRfrequency}).
\end{thm}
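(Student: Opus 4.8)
The strategy is to run the proof of Theorem~\ref{thm:oldD} with the spacetime $\MM_{\mathrm{KAdS}}(\ell,r_+,a)$ kept fixed throughout. The one structural change is that $a$ is no longer a free parameter to be tuned so that the initial (real, time-periodic) mode sits at a prescribed mass; instead the mass $\alpha_\star$ at which that mode is constructed is whatever the construction produces, and the output is a curve $\epsilon\mapsto(\alpha(\epsilon),\omega_R(\epsilon))$ whose $\alpha$-component sweeps out an open interval of masses about $\alpha_\star$.

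Step 1 (the real mode). Set $\omega=\omega_R(0):=\Xi am/(r_+^2+a^2)$. Then the horizon exponent $\xi$ of Definition~\ref{defn:HRC} vanishes, so the horizon regularity condition reduces to boundedness of $u$ at $\sta r=-\infty$, and on $\sta r\in(-\infty,\pi/2)$ the radial equation for $u$ becomes a genuine real Schrödinger-type equation $-u''+V_\alpha u=0$ with $V_\alpha=V_0+\kappa^2 W_\infty$, where $\kappa^2=9/4-\alpha$ and $W_\infty\ge 0$ vanishes at the horizon and blows up like $(\pi/2-\sta r)^{-2}$ at the Dirichlet end. The hypothesis $r_+^2<|a|\ell$ is exactly what makes $V_0$ dip below zero on a region separated from the horizon — the superradiant well, absent when the Hawking--Reall bound holds (compare the Remark after Lemma~\ref{lemma:negative_energy}) — and its depth grows with $m^2$. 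Choosing $|m|$ large enough (and, say, $l=|m|$) and running the variational/shooting argument used in the proof of Theorem~\ref{thm:oldD} produces a $\kappa_\star>0$, i.e.\ a mass $\alpha_\star=9/4-\kappa_\star^2<9/4$, for which the boundary-value problem ``$u$ bounded at $\sta r=-\infty$, $u$ Dirichlet at $\sta r=\pi/2$'' has a non-trivial real solution $u_\star$; this $u_\star$ is the radial part of a real mode in $\Ss_{\mathrm{mode}}(\alpha_\star,\omega_R(0),m,l)$ with Dirichlet boundary conditions.

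Step 2 (perturbation and the open family). For $(\alpha,\omega)$ in a complex neighbourhood of $(\alpha_\star,\omega_R(0))$ let $u_\HH(\alpha,\omega)$ be the Frobenius solution of the radial equation satisfying the horizon regularity condition with exponent $\xi=\xi(\omega)$, and $u_\II(\alpha,\omega)$ the solution satisfying the Dirichlet condition at $\sta r=\pi/2$; both are analytic in $(\alpha,\omega)$, and a Dirichlet mode solution exists exactly when their Wronskian $W(\alpha,\omega)$ vanishes. We have $W(\alpha_\star,\omega_R(0))=0$, and differentiating the radial equation in $\omega$ and using a Green's-function/Wronskian identity — the same kind of energy computation that underlies Lemma~\ref{lemma:negative_energy} — expresses $\del_\omega W(\alpha_\star,\omega_R(0))$ through $\int|u_\star|^2$ together with boundary terms and shows it is non-zero. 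The implicit function theorem then yields a real-analytic $\omega=\Omega(\alpha)$ with $\Omega(\alpha_\star)=\omega_R(0)$ and $W(\alpha,\Omega(\alpha))\equiv 0$; writing $\Omega(\alpha)=\omega_R(\alpha)+\im\,\epsilon(\alpha)$ we have $\epsilon(\alpha_\star)=0$, and the same first-variation computation that yields $\tfrac{\dd\alpha}{\dd\epsilon}(0)>0$ in Theorem~\ref{thm:oldD} gives $\epsilon'(\alpha_\star)>0$. Re-parametrising by $\epsilon$ produces a smooth curve $\epsilon\mapsto(\alpha(\epsilon),\omega_R(\epsilon))$ on some $(-\delta,\delta)$ with $\alpha(0)=\alpha_\star$, $\omega_R(0)=\Xi am/(r_+^2+a^2)$, and corresponding mode solutions in $\Ss_{\mathrm{mode}}(\alpha(\epsilon),\omega_R(\epsilon)+\im\epsilon,m,l)$ satisfying Dirichlet boundary conditions; since $\alpha(\cdot)$ is a local diffeomorphism near $0$, the masses $\{\alpha(\epsilon):\epsilon\in(-\delta,\delta)\}$ form an open interval about $\alpha_\star$, and for $\epsilon\in(0,\delta)$ the associated modes are exponentially growing — this is the claimed open family of masses.

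The main obstacle is Step 1 carried out with the spacetime fixed: one must verify, using only $r_+^2<|a|\ell$ and \emph{not} smallness of $|a|$, that increasing $|m|$ deepens the superradiant well enough to force a non-trivial solution of the threshold boundary-value problem for some admissible $\kappa_\star>0$, while simultaneously controlling the Dirichlet endpoint $\sta r=\pi/2$ (the two indicial branches $u\sim r^{-1/2\pm\kappa}$ must be kept apart, with the usual case distinction around $\kappa=1$). This is precisely the analysis in the proofs of Theorems~\ref{thm:oldD} and \ref{thm:new}; holding $a$ fixed changes nothing in it, and everything downstream — the Wronskian, the implicit function theorem, the sign of $\epsilon'(\alpha_\star)$, and the superradiance of the resulting modes via Lemma~\ref{lemma:negative_energy} — goes through verbatim.
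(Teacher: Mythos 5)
The paper never writes this proof out: it is stated as following from Sections~\ref{sec:Real}--\ref{subsec:Crossing} ``by adapting the proof slightly'', the adaptation being that in Step~I one tunes the \emph{mass} rather than $a$ (define the analogue of $\Aa$ in the $\alpha$-variable, use strict monotonicity and continuity of $\alpha\mapsto\LL_{\alpha,r_+,a}(f)$, non-emptiness from Lemmas~\ref{lemma:Vneg}--\ref{lemma:FunctionalNegative} under $r_+^2<|a|\ell$ for $|m|$ large, and non-negativity of $\LL_{\alpha}$ for $\alpha$ very negative so that $\hat\alpha:=\inf$ exists and $\nu_{\hat\alpha}=0$), after which Step~II applies verbatim. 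Your Step~1 is in this spirit, but note two points of detail: with $a$ frozen the minimiser only satisfies the ODE with a Lagrange multiplier, so the mechanism that kills $\nu$ must be spelled out as above; and your decomposition ``$V=V_0+\kappa^2W_\infty$ with $W_\infty\ge0$'' is not accurate, since the potential also depends on $\alpha$ through the angular eigenvalue $\lambda_{ml}(\alpha)$ and the $\Theta(\alpha)$ switch --- the monotonicity in $\alpha$ you need is exactly the combination $\del\lambda/\del\alpha-\ell^{-2}(r^2+\Theta(\alpha)a^2)<0$ of Proposition~\ref{propn:LambdaDeriv}, as used in Lemma~\ref{lemma:del_A__del_alphe}.

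The genuine gap is in Step~2. The paper solves $A(\alpha,\omega)=0$ for $(\alpha,\omega_R)$ as functions of $\epsilon=\Im\omega$, and the whole weight of the argument is the non-degeneracy $\del A/\del\alpha\neq0$ (Lemma~\ref{lemma:del_A__del_alphe}); equations (\ref{eqn:det1})--(\ref{eqn:det2}) show the Jacobian is invertible \emph{only if} this holds. Your replacement --- solve $W(\alpha,\omega)=0$ for complex $\omega=\Omega(\alpha)$ using ``$\del_\omega W\neq0$'' --- does not avoid this and is incomplete as stated. First, the implicit function theorem in the two real variables $(\omega_R,\omega_I)$ needs invertibility of the full real $2\times2$ Jacobian, i.e.\ effectively holomorphy of $A$ in $\omega$; the paper only proves smoothness (Lemma~\ref{lemma:continuous_AB}) and is set up precisely so as not to need analyticity in $\omega$. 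Second, your Green's-identity sketch for $\del_\omega W$ (``$\int|u_\star|^2$ plus boundary terms'') omits that $V$ depends on $\omega$ through the linear term $-2m\omega a\Xi(\cdots)$ and through $\lambda_{ml}(\omega)$, and that the horizon regularity condition depends on $\omega$ through $\xi(\omega)$; the correct non-vanishing statement is exactly (\ref{eqn:det1}), obtained from the $Q_T$ identity. Third, and most importantly, even granting a curve $\alpha\mapsto\Omega(\alpha)$, the theorem requires modes for \emph{every} small $\epsilon$, hence $\Im\Omega'(\alpha_\star)\neq0$; using (\ref{eqn:det1})--(\ref{eqn:det2}) one checks that $\Im\Omega'(\alpha_\star)\neq0$ is \emph{equivalent} to $\del A/\del\alpha\neq0$, so Lemma~\ref{lemma:del_A__del_alphe} cannot be bypassed. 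Your plan to import $\dd\alpha/\dd\epsilon(0)>0$ from Theorem~\ref{thm:oldD} is circular here, since Propositions~\ref{lemma:DelOmega} and \ref{propn:alpha_Dirichlet} are proved along the $\epsilon$-parametrised curve whose existence rests on that lemma (and they require enlarging $|m|$, cf.\ Remark~\ref{rk:large_m}). The fix is simply to run the paper's Step~II unchanged: it uses nothing about how the real mode was produced, so once your Step~1 is carried out with $\alpha$ as the tuned parameter, Sections~\ref{subsec:PerturbingD}--\ref{subsec:Crossing} give the curve $\epsilon\mapsto(\alpha(\epsilon),\omega_R(\epsilon))$ and the open family of masses.
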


\begin{rk}
	\begin{compactenum}
	\item	Conversely, in the asymptotically flat Kerr case of \citep{ShlapentokhGrowing}, it is also possible to prove an analogue of Theorem~\ref{thm:new} instead of only the analogue of Theorem~\ref{thm:oldD}, using our strategy explained in the next section.
	\item To contrast our case to the asymptotically flat setting, we add three observations. First, in \citep{ShlapentokhGrowing}, the curve $\epsilon\mapsto(\mu(\epsilon),\omega_R(\epsilon)+\im\epsilon)$ must satisfy $\mu(0)^2>\omega_R(0)^2$. There is no equivalent condition for Kerr-AdS spacetimes as the instability is not driven by the interplay of frequency and mass, but by the violation of the Hawking-Reall bound. Second, in both cases, $\del\omega_R/\del\epsilon<0$ for small $\epsilon$, so $\omega_R(0)$ can be seen as the upper bound of the superradiant regime. Third, the result in Kerr holds for all $m\neq 0$, $l\geq |m|$. In contrast, our result is a statement about large $m=l$.
	\end{compactenum}
\end{rk}

It is known -- see \citep{HolzegelWarnickBoundedness} and references therein -- that, for $0<\kappa<1$, i.\,e. $5/4<\alpha<9/4$, we also have well-posedness for different boundary conditions at infinity. This underlies the following

\begin{defn}[Neumann boundary condition]
	\label{defn:BdyCondsNeumann}
	Given a mass $5/4<\alpha<9/4$ (i.\,e. $0<\kappa<1$), a smooth function $f:\,(r_+,\infty)\rightarrow\CC$ satisfies the Neumann boundary condition if
\begin{align*}
r^{1+2\kappa}\frac{\dd}{\dd r}\left(r^{\frac{1}{2}-\kappa}f\right)\rightarrow 0
\end{align*} as $r\rightarrow\infty$.
\end{defn}

Using the techniques of twisted derivatives, introduced in \citep{WarnickMassive}, we can prove versions of Theorems~\ref{thm:oldD} and \ref{thm:new} for Neumann boundary conditions. 
\begin{thm}
	\label{thm:oldN}
	Given a cosmological constant $\Lambda=-3/\ell^2$, a black hole radius $0<r_+<\ell$ and a scalar mass parameter $\alpha_0\in(5/4,9/4)$, there are a spacetime parameter $a$ satisfying the regularity condition $|a|<\ell$, mode parameters $m$ and $l$ and a $\delta>0$ such that there is a smooth curve
	\begin{align*}
	(-\delta,\delta)\rightarrow\RR^2,~\epsilon\mapsto(\alpha(\epsilon),\omega_R(\epsilon))
	\end{align*}
	with (\ref{eqn:HRfrequency}). Moreover, there are corresponding mode solutions in $\Ss_{\mathrm{mode}}(\alpha(\epsilon),\omega_R(\epsilon)+\im\epsilon,m,l)$  satisfying Neumann boundary conditions. If $\epsilon\in(0,\delta)$, then the modes satisfy
		\begin{align*}
		\frac{\dd\alpha}{\dd\epsilon}(0)>0	~~~\mathrm{and}~~~\omega_R(0)\frac{\del\omega_R}{\del\epsilon}(0)<0.
		\end{align*}
\end{thm}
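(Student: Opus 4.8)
The plan is to run the proof of Theorem~\ref{thm:oldD} with essentially one structural change: the Hilbert-space framework attached to Dirichlet data is replaced by the one attached to Neumann data through Warnick's twisted derivatives \citep{WarnickMassive}. Recall that, after separation of variables, the radial function $u=(r^2+a^2)^{1/2}R$ solves a Schr\"odinger-type equation in the tortoise coordinate $\sta r$, and that the two indicial behaviours near $\sta r=\pi/2$ are $r^{-1/2\pm\kappa}$; Dirichlet picks out the branch controlled by $r^{1/2-\kappa}u\to 0$, whereas Neumann allows the other branch and is precisely the condition that $r^{1+2\kappa}\tfrac{\dd}{\dd r}(r^{1/2-\kappa}u)$ vanish at infinity, which is square-integrable against the natural weight only when $\kappa<1$, i.e. $\alpha>5/4$ --- exactly the hypothesis of the theorem. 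I would therefore fix a twisting function $h$, smooth and strictly positive on $(r_+,\infty)$ and with $h(r)^{-1}=r^{1/2-\kappa}$ for $r$ large, set $u=h\,v$, and recast the mode equation as an eigenvalue problem for $v$ in the weighted Hilbert space in which the twisted Dirichlet energy $\int h^2|v'|^2$ (plus a zeroth-order term) is finite. Integrating by parts produces a boundary term at infinity proportional to $h^2\,v'\,\overline v\big|_{\infty}$, i.e. to $r^{1+2\kappa}\tfrac{\dd}{\dd r}(r^{1/2-\kappa}u)\cdot\overline{(r^{1/2-\kappa}u)}$, so the Neumann condition of Definition~\ref{defn:BdyCondsNeumann} is exactly the \emph{natural} boundary condition for this twisted form; the horizon regularity condition of Definition~\ref{defn:HRC} is imposed at $\HH$ exactly as in the Dirichlet case.

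With this set-up the argument parallels Theorem~\ref{thm:oldD}. At $\epsilon=0$ one has $\omega=\omega_R(0)=\Xi am/(r_+^2+a^2)$, so the horizon exponent $\xi$ in (\ref{eqn:defn_xi}) vanishes and $u$ is smooth up to $\HH$; the twisted quadratic form is then real and symmetric and defines a self-adjoint operator with spectral parameter $\kappa^2=9/4-\alpha$. Choosing $m=l$ large and $a$ in a Hawking-Reall-violating window permitted by Lemma~\ref{lemma:r+}, one tunes the effective radial potential so that the prescribed value $\alpha_0\in(5/4,9/4)$ is realised as an eigenvalue, yielding a real --- hence time-periodic --- Neumann mode at $(\alpha_0,\omega_R(0))$. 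One then perturbs: an implicit-function-theorem argument around this mode produces a smooth curve $\epsilon\mapsto(\alpha(\epsilon),\omega_R(\epsilon))$ and corresponding $\psi\in\Ss_{\mathrm{mode}}(\alpha(\epsilon),\omega_R(\epsilon)+\im\epsilon,m,l)$ satisfying the twisted Neumann condition at infinity and, now with $\xi\neq 0$ (so that $u=(r-r_+)^\xi\times(\text{smooth})$), the horizon regularity condition at $\HH$. The non-degeneracy feeding the implicit function theorem is that the linearised operator at $\epsilon=0$ has one-dimensional kernel spanned by the real mode and that the derivative of the spectral parameter along the family is non-zero; this last computation also gives the sign $\frac{\dd\alpha}{\dd\epsilon}(0)>0$, while $\omega_R(0)\frac{\del\omega_R}{\del\epsilon}(0)<0$ comes from the same current identity used in Lemma~\ref{lemma:negative_energy}, whose boundary term at infinity is killed by the Neumann condition together with the twisting.

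The main obstacle is the first step: confirming that the twisted-derivative formulation genuinely captures Neumann data --- that the weighted space is the correct energy space, that the operator at $\epsilon=0$ is self-adjoint with spectrum discrete in the window $0<\kappa<1$, and, above all, that the boundary term arising from the twisted integration by parts vanishes \emph{if and only if} $r^{1+2\kappa}\frac{\dd}{\dd r}(r^{1/2-\kappa}u)\to 0$. This is the delicate point of \citep{WarnickMassive}, and here it has to be carried through with the $\epsilon$-dependent complex frequency and the $(r-r_+)^\xi$ factor at the horizon both present, so that the same implicit-function machinery underlying Theorem~\ref{thm:oldD} applies without change. Once this compatibility is established, the remaining ingredients --- the spectral-parameter tuning in $m=l$, the implicit function theorem, and the two sign computations --- are only cosmetic modifications of the Dirichlet proof.
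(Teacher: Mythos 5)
There is a genuine gap, and it lies precisely where you claim the work is finished. You assert that once the twisted-derivative framework is set up, ``the implicit function theorem and the two sign computations are only cosmetic modifications of the Dirichlet proof.'' The paper's own structure shows the opposite: the hard part of the Neumann case is step~II, not the functional-analytic setup of the twisting. The reason is that the Neumann boundary condition of Definition~\ref{defn:BdyCondsNeumann} (and the twisting $h=r^{-1/2+\kappa}$) depends on $\alpha$ through $\kappa=\sqrt{9/4-\alpha}$. Consequently $u_{\alpha}=\del u/\del\alpha$ does \emph{not} satisfy the Neumann condition: since $h_1\sim r^{-1/2+\kappa}$, one has $\del h_1/\del\alpha\sim r^{-1/2+\kappa}\log r$, so the Dirichlet-style argument for the non-degeneracy (multiply the $\alpha$-differentiated radial ODE by $\overline u$ and integrate by parts, as in Lemma~\ref{lemma:del_A__del_alphe}) produces boundary terms at infinity that do not vanish, and the same failure occurs for $u_{\epsilon}$ in the proof of $\del\alpha/\del\epsilon(0)>0$. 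The paper repairs this in Sections~\ref{subsec:PerturbingN} and \ref{subsec:small_behaviour_Neumann} by introducing the corrected function $v=u_{\alpha}+\tfrac{1}{2\kappa}fu$ with the specific choice (\ref{f_perturbation}) of $f$ (logarithmic with a tuned $r^{-2\kappa}$ tail), deriving the identity (\ref{eqn:Bzero_contradiction}), and controlling its sign by new Hardy-type inequalities treated separately for $0<\kappa\leq 1/2$ and $1/2<\kappa<1$; only this yields $\del B/\del\alpha\neq 0$ (for Neumann the relevant coefficient is $B$, not $A$) and $\del\alpha/\del\epsilon(0)>0$. Your proposal supplies no substitute for this mechanism: the statement that ``the linearised operator has one-dimensional kernel and the derivative of the spectral parameter is non-zero'' is exactly what must be proved, and the Dirichlet computation you point to does not survive the $\alpha$-dependence of the boundary condition. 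Likewise, $\omega_R(0)\,\del\omega_R/\del\epsilon(0)<0$ is not a consequence of the current identity of Lemma~\ref{lemma:negative_energy}; in the paper it is Proposition~\ref{propn:del_omega_Neumann}, proved with a \emph{twisted} microlocal current, the monotone auxiliary function $K(r)$ and largeness of $|m|$ (Lemma~\ref{lemma:negative_energy} runs in the opposite direction, deducing superradiance from that sign).

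A secondary issue: your step~I (``tune the effective radial potential so that $\alpha_0$ is realised as an eigenvalue'') is not how the construction can be run literally, because $\alpha$ enters the twisting, the potential $V_{\alpha}$, the angular eigenvalue $\lambda_{ml}(\alpha)$ and the indicial exponents simultaneously, so ``$\alpha$ as spectral parameter'' is a nonlinear eigenvalue problem. The paper instead fixes $\alpha_0$, minimises the twisted functional $\tL_a$ under the constraint $\norm{f/r}_{L^2}=1$ to obtain a Lagrange multiplier $\nu_a$, and varies the black-hole parameter $a$ to the infimum $\hat a=\inf\Aa$ where $\nu_{\hat a}=0$ (using Lemma~\ref{lemma:FunctionalNegativeTwisted} for negativity beyond the Hawking--Reall bound and Lemma~\ref{lemma:HR_positivity} for positivity below it). Your identification of the twisted natural boundary condition and of $\xi=0$ at $\omega=\omega_+$ is correct, but the part of the argument you dismiss as cosmetic is the part that actually requires new ideas.
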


\begin{thm}
	\label{thm:newN}
	Let $\ell>0$ and $5/4<\alpha<9/4$. Then there is an $\MM_{\mathrm{KAdS}}(\ell,r_+,a)$ and a super\-radiant $\psi\in\Ss_{\mathrm{mode}}(\alpha,\omega_R+\im\epsilon,m,l)$ for an $\omega_R\in\RR$ and $\epsilon>0$ satisfying Neumann boundary conditions.
\end{thm}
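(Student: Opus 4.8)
The plan is to deduce Theorem~\ref{thm:newN} from Theorem~\ref{thm:oldN} in exactly the way Theorem~\ref{thm:new} follows from Theorem~\ref{thm:oldD}: the two pairs differ only in the boundary condition imposed at infinity, while the passage from ``a growing mode exists in \emph{some} Kerr-AdS spacetime for \emph{some} mass slightly above $\alpha_0$'' to ``a growing mode of the \emph{prescribed} mass $\alpha$ exists'' does not see that choice. So fix $\ell>0$ and a target $\alpha\in(5/4,9/4)$, put $\kappa=\sqrt{9/4-\alpha}\in(0,1)$, fix any $r_+\in(0,\ell)$, and apply Theorem~\ref{thm:oldN} with an auxiliary mass $\alpha_0\in(5/4,\alpha)$, to be chosen close to $\alpha$. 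This furnishes a parameter $a=a(\alpha_0)$ with $|a|<\ell$, mode parameters $m$ and $l$, a $\delta>0$, and a smooth curve $\epsilon\mapsto(\alpha(\epsilon),\omega_R(\epsilon))$ with $\alpha(0)=\alpha_0$, $\omega_R(0)=\Xi am/(r_+^2+a^2)$, $\tfrac{\dd\alpha}{\dd\epsilon}(0)>0$ and $\omega_R(0)\tfrac{\del\omega_R}{\del\epsilon}(0)<0$, along with Neumann modes in $\Ss_{\mathrm{mode}}(\alpha(\epsilon),\omega_R(\epsilon)+\im\epsilon,m,l)$ for $\epsilon\in(-\delta,\delta)$.

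Next I would run an intermediate value argument in $\epsilon$. Since $\tfrac{\dd\alpha}{\dd\epsilon}(0)>0$, the image $\alpha((0,\delta))$ contains an interval $(\alpha_0,\alpha_0+c)$ with $c=c(\alpha_0)>0$; if $\alpha<\alpha_0+c$ there is an $\epsilon^\star\in(0,\delta)$ with $\alpha(\epsilon^\star)=\alpha$, and then the mode in $\Ss_{\mathrm{mode}}(\alpha,\omega_R(\epsilon^\star)+\im\epsilon^\star,m,l)$ is exponentially growing (as $\Im(\omega_R(\epsilon^\star)+\im\epsilon^\star)=\epsilon^\star>0$), satisfies Neumann boundary conditions, and has mass exactly $\alpha$ in $\MM_{\mathrm{KAdS}}(\ell,r_+,a)$. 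To secure ``$\alpha<\alpha_0+c$'' I would argue that the entire output of Theorem~\ref{thm:oldN} -- hence $c(\alpha_0)$ -- depends on $\alpha_0$ in a lower semicontinuous way, since the real mode at $\epsilon=0$ and the curve emanating from it come from a Fredholm/implicit-function construction whose data vary smoothly with $\alpha_0$; thus $c(\alpha_0)$ has a positive lower bound as $\alpha_0\uparrow\alpha$ and any sufficiently close $\alpha_0$ works. Equivalently, the previous step shows that for every $\alpha_0\in(5/4,9/4)$ the interval $(\alpha_0,\alpha_0+c(\alpha_0))$ consists of masses realised by a growing Neumann mode in some Kerr-AdS spacetime, and a local lower bound on $c(\alpha_0)$ fills the whole window $(5/4,9/4)$ by connectedness. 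Finally, the constructed mode has finite energy because it is Neumann and $\kappa<1$, so the spacetime violates the Hawking-Reall bound, $r_+^2<|a|\ell$ (cf. the remark after Theorem~\ref{thm:oldD}); with $\omega_R(0)\tfrac{\del\omega_R}{\del\epsilon}(0)<0$, Lemma~\ref{lemma:negative_energy} gives $J^{T+\lambda\Phi}_\mu K^\mu\big|_{\HH}<0$ for small $\epsilon>0$, and choosing $\alpha_0$ near $\alpha$ forces $\epsilon^\star$ small, so the mode is superradiant.

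The only genuinely Neumann-specific ingredient is Theorem~\ref{thm:oldN} itself, which I would establish following \citep{WarnickMassive}: rewrite the radial equation for $u$ using the twisted derivative $\tilde\partial_r f=h\,\partial_r(h^{-1}f)$ with weight $h\sim r^{1/2-\kappa}$ near infinity, so that the Neumann condition $r^{1+2\kappa}\tfrac{\dd}{\dd r}(r^{1/2-\kappa}u)\to0$ turns into the natural flux-type boundary condition for the twisted operator and its quadratic form extends to the appropriate twisted Sobolev space; existence of a real mode at $(\alpha_0,\omega_R(0))$ and the smooth curve of growing modes for $\epsilon>0$ then follow by the same variational-plus-implicit-function scheme as in the Dirichlet case, adapting \citep{ShlapentokhGrowing}. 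What must be re-verified in the twisted framework is only that the quoted conclusions persist: $\tfrac{\dd\alpha}{\dd\epsilon}(0)>0$, $\omega_R(0)\tfrac{\del\omega_R}{\del\epsilon}(0)<0$, and the smooth dependence of $a$, $\delta$ and the curve on $\alpha_0$.

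The hard part will be exactly that uniformity: showing $c(\alpha_0)$ does not collapse as $\alpha_0$ approaches $\alpha$. This is already delicate for Theorem~\ref{thm:new}, and the Neumann setting compounds it, because the admissible window is the bounded interval $(5/4,9/4)$ -- one cannot run the continuation ``from $-\infty$'' -- and because the twisted weights degenerate and the indicial roots $1/2\pm\kappa$ resonate at $\kappa=1$, i.e. $\alpha=5/4$, where logarithms appear at infinity and the constants in the Neumann a~priori estimates are hardest to control. Handling that degeneration, or else arranging the reduction -- possibly also varying $r_+$ -- so that $\alpha_0$ stays bounded away from $5/4$, is the crux.
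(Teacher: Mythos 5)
Your overall strategy is the same as the paper's: fix the target mass $\alpha$, start Theorem~\ref{thm:oldN} at auxiliary masses $\alpha_0<\alpha$, use $\tfrac{\dd\alpha}{\dd\epsilon}(0)>0$ and an intermediate value argument along the implicit-function curve to hit $\alpha$, and invoke Lemma~\ref{lemma:negative_energy} for superradiance. The gap is in the single sentence you lean on to make this work: the claim that $c(\alpha_0)$ is lower semicontinuous ``since the real mode at $\epsilon=0$ and the curve emanating from it come from a Fredholm/implicit-function construction whose data vary smoothly with $\alpha_0$.'' The implicit-function part is indeed locally uniform (this is the paper's vector-field argument in Section~\ref{subsec:Continuity}, with a uniform existence time $T_0$ and $W^{\alpha}\geq\rho>0$ on a fixed ball, and it carries over to Neumann almost verbatim). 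What does \emph{not} vary smoothly -- or even continuously, a priori -- with $\alpha_0$ is the \emph{starting point} of the curve, namely the black-hole parameter $\hat a(\alpha_0)=\Phi(\alpha_0,r_+)=\inf\Aa_{\alpha_0,r_+}$ produced by the variational step of Proposition~\ref{propn:a_hat}. Unless the starting points $(\alpha_0,\Omega_R(\alpha_0),0,\Phi(\alpha_0,r_+))$ converge into the fixed ball where the uniform estimates hold as $\alpha_0\uparrow\alpha$, the uniform lower bound on $c(\alpha_0)$ collapses. In the Dirichlet case the paper secures exactly this by proving monotonicity and left-continuity of $\Phi(\cdot,r_+)$ (Lemmata~\ref{lemma:non-increasing} and \ref{lemma:CL}), using that $\LL_{\alpha,r_+,a}(f)$ is monotone in $\alpha$ for each fixed test function $f$.

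In the Neumann case even that argument breaks, and this is the genuinely new content of Section~\ref{subsec:Continuity_Neumann} which your proposal does not supply: the twisting $h=r^{-1/2+\kappa}$ and the modified potential $\Vmod_a$ depend on $\alpha$ through $\kappa$, so the elementary monotonicity of the functional in $\alpha$ is lost and $\Phi(\cdot,r_+)$ need not be non-increasing or left-continuous. The paper replaces it by a weaker, harder-won statement: a monotonicity result for the twisted functional along rescaled solutions (Lemma~\ref{lemma:negative_derivative_twisted_functional}, proved by rerunning the $\del B/\del\alpha\neq 0$ computation of Section~\ref{subsec:PerturbingN}), which gives only local right-decreasingness of $\Phi$ (Corollary~\ref{cor:local_monotonicity_right}); one then constructs a left-continuous selection $\Psi(\cdot,r_+)$ by taking limits of real modes at the jump points, using continuity of the reflection/transmission coefficients to check that the limit is still a Neumann mode (Lemma~\ref{lemma:Psi_for_Phi}), and feeds $\Psi$ rather than $\Phi$ into the continuation argument. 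Your proposal flags ``the hard part'' but locates it in the wrong place -- degeneration of the indicial roots as $\kappa\to 1$, i.\,e. $\alpha\to 5/4$, which is irrelevant here since $\alpha_0$ can be kept near the interior target $\alpha$ -- rather than in the $\alpha$-dependence of the twisting destroying the monotonicity/continuity of $\alpha\mapsto\hat a(\alpha)$. Without an argument replacing Lemmata~\ref{lemma:negative_derivative_twisted_functional}--\ref{lemma:Psi_for_Phi}, the intermediate value step cannot be closed.
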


Let us conclude this section with a general remark on boundedness.
From \citep{HolzegelWarnickBoundedness}, we know that solutions to the Klein-Gordon equation with  Dirichlet  boundary conditions remain bounded for all $r_+^2>|a|\ell$. A similar statement holds for Neumann boundary conditions under more restrictive assumptions on the parameters. For $r_+^2=|a|\ell$, one can easily repeat the proof of the second theorem of \citep{HolzegelSmuleviciDecay} to see that there are no periodic solutions. One can potentially also extend the decay result of \citep{HolzegelSmuleviciDecay} to $r_+^2=|a|\ell$. Our results do not rule out boundedness in the entire parameter range in which $r_+^2<|a|\ell$ since we did not show that for \emph{any} given Kerr-AdS spacetime and \emph{any} $\alpha$, there are unstable mode solutions; they do, however, impose restrictions on the ranges of spacetime parameters and masses $\alpha$ in which boundedness could potentially hold. It is believed that, using more refined spectral estimates, our results can be shown to hold in the full regime $r_+^2<|a|\ell$, but we will not pursue this further.

\subsection{Outline of the proof}

The difficulty lies in the construction of the radial part $u$, for which we use the strategy of \citep{ShlapentokhGrowing}, which, as our present work shows, can be applied to more general settings than Kerr spacetimes. The technique contains two main steps.
\begin{itemize}
	\item[I.] Construct $u$ corresponding to a real frequency $\omega_0\in\RR$.
	\item[II.] Obtain a mode solution corresponding to a complex $\omega$ with $\Im\omega>0$ by varying spacetime and mode parameters.
\end{itemize}
We note that both steps are completely independent of each other, in particular step II does not rely on the method by which the periodic mode solution was constructed, but only requires existence of such a mode.

Let us first only deal with Dirichlet boundary conditions. To complete step I, $u$ needs to satisfy the radial ODE
\begin{align}
\label{eqn:angularODE_intro}
u''-(V-\omega_0^2)u=0
\end{align}
for the given boundary condition -- see Section~\ref{subsec:Spheroidal}. Lemma~\ref{thm:RestrReal} then already restricts $\omega_0$ to $\omega_+:=ma\Xi/(r_+^2+a^2)$. It is important to note that the boundary value problem does not admit nontrivial solutions in general.

\begin{lemma}
	If $u$ satisfies the Dirichlet boundary condition for real $\omega_0$ and $V-\omega_0^2\geq 0$, then $u=0$.
\end{lemma}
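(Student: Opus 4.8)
The plan is to run a convexity (energy) argument for the radial ODE~\eqref{eqn:angularODE_intro}, written in the tortoise variable $\sta r$, which ranges over $(-\infty,\pi/2)$ with $\sta r\to-\infty$ corresponding to the horizon $r\to r_+$ and $\sta r\to\pi/2$ to infinity, and with $'=\dd/\dd\sta r$. Set $\phi:=|u|^2$. Since $u$ solves $u''=(V-\omega_0^2)u$ and $V-\omega_0^2$ is real, a one-line computation gives
\begin{align*}
\phi''=\left(u\bar u\right)''=2|u'|^2+2\,\Re\!\left(\bar u\,u''\right)=2|u'|^2+2(V-\omega_0^2)|u|^2\geq 0,
\end{align*}
using the hypothesis $V-\omega_0^2\geq 0$. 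Thus $\phi$ is a non-negative, smooth, convex function of $\sta r$ on $(-\infty,\pi/2)$, and it remains only to control it at the two endpoints.

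At the horizon, the horizon regularity condition (Definition~\ref{defn:HRC}) together with the reality of $\omega_0$ forces the exponent $\xi$ in~\eqref{eqn:defn_xi} to be purely imaginary, so that $|u(r)|=|(r-r_+)^{\xi}\rho(r)|=|\rho(r)|$ stays bounded as $r\to r_+$ (indeed $|u|\to|\rho(r_+)|$); hence $\phi$ is bounded as $\sta r\to-\infty$. At infinity, the Frobenius/indicial analysis near $r=\infty$ carried out in Section~\ref{subsec:Spheroidal} shows that a solution of~\eqref{eqn:angularODE_intro} obeying the Dirichlet boundary condition (Definition~\ref{defn:BdyConds}) must be asymptotic to the faster-decaying branch $u\sim\mathrm{const}\cdot r^{-1/2-\kappa}$, so $u\to 0$ and therefore $\phi\to 0$ as $\sta r\to\pi/2$.

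The conclusion then follows from elementary convexity. As $\phi$ is smooth and convex, $\phi'$ is non-decreasing. If $\phi'(\sta r_0)<0$ for some $\sta r_0$, then convexity gives $\phi(\sta r)\geq\phi(\sta r_0)+\phi'(\sta r_0)(\sta r-\sta r_0)$ for all $\sta r<\sta r_0$, and the right-hand side tends to $+\infty$ as $\sta r\to-\infty$, contradicting boundedness of $\phi$ near the horizon. Hence $\phi'\geq 0$ throughout, so $\phi$ is non-decreasing on $(-\infty,\pi/2)$, and consequently $\phi(\sta r)\leq\lim_{\sta r\to\pi/2}\phi=0$. Combined with $\phi\geq 0$ this yields $\phi\equiv 0$, i.e. $u\equiv 0$.

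The only non-routine ingredient is the endpoint analysis at infinity: one must know that the Dirichlet condition $r^{1/2-\kappa}u\to 0$ genuinely selects the decaying solution $u=O(r^{-1/2-\kappa})$ — rather than merely killing the leading term of the slowly-decaying branch, which for $\kappa\geq 1/2$ would a priori still permit growth — so that $\phi\to 0$ at $\sta r=\pi/2$; this is exactly what the asymptotic expansion of Section~\ref{subsec:Spheroidal} supplies, and everything else is a short computation plus a convexity estimate. (Equivalently one may integrate by parts: from $u''-(V-\omega_0^2)u=0$,
\begin{align*}
0=\Re\int_{-\infty}^{\pi/2}\bar u\,\bigl(u''-(V-\omega_0^2)u\bigr)\,\dd\sta r=\Re\bigl[\bar u\,u'\bigr]_{-\infty}^{\pi/2}-\int_{-\infty}^{\pi/2}\bigl(|u'|^2+(V-\omega_0^2)|u|^2\bigr)\,\dd\sta r,
\end{align*}
and the same two endpoint inputs — $\Re\xi=0$ at the horizon and the Dirichlet decay rate at infinity — make the boundary term vanish; the remaining integrand is non-negative, so $u'\equiv 0$, and then the Dirichlet condition forces $u\equiv 0$.)
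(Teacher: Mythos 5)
Your proposal is correct, and it is essentially the paper's argument: the parenthetical integration-by-parts at the end is precisely the paper's one-line proof (the current $Q=\Re(u'\overline u)$ vanishes at both ends and has non-negative derivative), while your primary convexity argument for $|u|^2$ is an equivalent variant using the same two endpoint inputs ($\xi$ purely imaginary at the horizon for real $\omega_0$, and the Dirichlet condition selecting the decaying branch $u\sim r^{-1/2-\kappa}$ at infinity).
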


\begin{proof}
	Define $Q(r):=\Re(u'\overline u)$, note that $Q(r_+)=Q(\infty)=0$ and integrate $\dd Q/\dd r$.
\end{proof}

Hence, in a first step in Section~\ref{subsec:Potential}, we will find spacetime and mode parameters such that $V-\omega_0^2<0$ on some subinterval of $(r_+,\infty)$ for given $\ell$ and $\alpha_0$ by a careful analysis of the shape of the potential $V$ in Lemma~\ref{lemma:Vneg}. This requires proving an asymptotic estimate for the eigenvalues of the modified oblate spheroidal harmonics (Lemma~\ref{lemma:GroundState}). The spacetime parameters will necessarily violate the Hawking-Reall bound.

The radial ODE is the Euler-Lagrange equation of the functional
\begin{align}
\label{eqn:functional_intro}
\LL_a(f):=\int_{r_+}^{\infty}\left(\frac{\Delta_-}{r^2+a^2}\left\lvert\frac{\dd f}{\dd r}\right\lvert^2+(V-\omega^2)\frac{r^2+a^2}{\Delta_-}|f|^2\right)\,\dd r.
\end{align}
The functional is not bounded below, so we need to impose a norm constraint, which we choose to be $\norm{f/r}_{L^2(r_+,\infty)}=1$. Then Lemma~\ref{lemma:VariationInequalityD} gives a coercivity-type estimate. To carry out the direct method of the calculus of variations, we use the weighted Sobolev spaces that arise naturally from the functional -- see Section~\ref{sec:Real}. This setting of the minimisation problem then guarantees that the minimiser satisfies the correct boundary conditions. We remark that we will directly work with the functional (\ref{eqn:functional_intro}) instead of regularising first at the horizon and then taking the limit, as in \citep{ShlapentokhGrowing}.
 Then, in Lemma~\ref{lemma:ELD}, we obtain an ODE
\begin{align*}
u''-(V-\omega_0^2)u+\nu_a\frac{u}{r^2}=0
\end{align*}
with a Lagrange multiplier $\nu_a\leq 0$ that depends continuously on the spacetime parameter $a$. By varying $a$, we find an $\hat a$ such that $\nu_{\hat a}=0$ (Proposition~\ref{propn:a_hat}) and hence a solution to the radial ODE.

To carry out step II, we need the asymptotic analysis of (\ref{eqn:angularODE_intro}) that is worked out in Section~\ref{subsec:LocalAna}. There are two branches that asymptote $r^{-1/2+\kappa}$ and $r^{-1/2-\kappa}$, respectively, at infinity. Let $h_1$ denote the branch with slow decay and $h_2$ the one with fast decay. Then
\begin{align*}
u(r,\alpha,\omega)=A(\alpha,\omega)h_1(r,\alpha,\omega)+B(\alpha,\omega)h_2(r,\alpha,\omega).
\end{align*}  For the parameters from step I, $A(\alpha_0,\omega_0)=0$. By varying $\omega$ and $\alpha$ simultaneously in Section~\ref{subsec:PerturbingD}, the implicit function theorem yields a curve
\begin{align*}
\epsilon\mapsto(\omega_R(\epsilon)+\im\epsilon,\alpha(\epsilon))
\end{align*}
with $\omega_R(0)=\omega_0$ and $\alpha(0)=\alpha_0$ such that
\begin{align*}
A(\alpha(\epsilon),\omega(\epsilon))=0.
\end{align*}
along the curve. As $\Im\omega(\epsilon)>0$ for $\epsilon>0$, these modes grow exponentially whilst satisfying Dirichlet boundary conditions.
 In Section~\ref{subsec:Crossing}, we show that
\begin{align}
\label{epsilon_dependence_outline}
\omega_R(0)\frac{\del\omega_R}{\del\epsilon}(0)<0~~~\mathrm{and}~~~
\frac{\del\alpha}{\del\epsilon}(0)>0,
\end{align}
which proves Theorem~\ref{thm:oldD}. A careful analysis of the domain of the implicit function theorem in Section~\ref{subsec:Continuity} yields Theorem~\ref{thm:new}. Here, the analysis heavily exploits several continuity properties in the parameters. A difficulty is caused by $\hat a$ being defined as the infimum of an open set.

For Corollary~\ref{cor:new}, one observes that, by Lemma~\ref{lemma:Vneg}, once the Hawking-Reall bound is violated, one can always make the potential $V$ negative on some interval by choosing $|m|$ sufficiently large. This yields periodic modes for very small violation of the Hawking-Reall bound and hence growing modes by repeating the above argument.

The situation is more complicated if $u$ satisfies the Neumann boundary condition. Since, in this case, $u\sim r^{-1/2+\kappa}$ as $r\rightarrow\infty$, $\LL_a$ is not well-defined and hence cannot be used to produce periodic modes. 
To carry out the construction of step I, we use twisted derivatives as introduced in \citep{WarnickMassive} and used extensively in \citep{HolzegelWarnickBoundedness}. To find the minimiser via the variational argument, we also need to modify our function spaces and use twisted weighted Sobolev spaces. All details are given in Section~\ref{sec:RealDN}.

The main technical problems, however, arise in the second part of the argument.
The underlying reason is that the proofs for step II rely severely on establishing monotonicity properties for the functional when varying $\alpha$. Since the twisting necessarily depends on $\alpha$, proving monotonicity in $\alpha$ is more involved and indeed the monotonicity properties shown in the Neumann case are weaker; nevertheless, the ideas introduced in Section~\ref{subsec:PerturbingN} are sufficiently robust not only to construct the growing modes, but also to be applicable to showing (\ref{epsilon_dependence_outline}) and to transition from Theorem~\ref{thm:oldN} to Theorem~\ref{thm:newN}. It is also in the Neumann case, where the independence of steps I and II -- alluded to above -- is exploited.

\section{Preliminaries}
\label{sec:Prelim}

\subsection{The modified oblate spheroidal harmonics}
\label{subsec:Spheroidal}

Following \citep{HolzegelSmuleviciDecay}, we define the $L^2(\sin\theta\,\dd\theta\,\dd\tilde{\phi})$-self adjoint operator $P$ acting on $H^1(S^2)$-complex valued functions as
\begin{align*}
-P(\omega,\ell,a)f&=\frac{1}{\sin\theta}\del_{\theta}(\Delta_{\theta}\sin\theta\del_{\theta}f)+\frac{\Xi^2}{\Delta_{\theta}}\frac{1}{\sin^2\theta}\del^2_{\tilde{\phi}}f\\&~~~~~~~~+\Xi\frac{a^2\omega^2}{\Delta_{\theta}}\cos^2\theta f-2\im a\omega\frac{\Xi}{\Delta_{\theta}}\frac{a^2}{\ell^2}\cos^2\theta\del_{\tilde{\phi}}f.
\end{align*}
We also define
\begin{align*}
P_{\alpha}(\omega,\ell,a,\alpha):=\begin{cases}
P(\omega,\ell,a)+\frac{\alpha}{\ell^2}a^2\sin^2\theta		& \mathrm{if~}\alpha>0\\
P(\omega,\ell,a)-\frac{\alpha}{\ell^2}a^2\cos^2\theta		& \mathrm{if~}\alpha\leq 0.
\end{cases}
\end{align*}
For equivalent definitions in Kerr spacetime see \citep{DafermosRodnianskiSmalla} and also \citep{FinsterSchmid} for a more detailed discussion.
From elliptic theory \citep[cf.][]{HolzegelSmuleviciDecay}, we can make the following definitions: $P(\omega,\ell,a)$ has eigenvalues $\tilde{\lambda}_{ml}(\omega,\ell,a)$ with eigenfunctions $\e^{\im m\tilde{\phi}}\tilde S_{ml}(\omega,\ell,a,\cos\theta)$; $P_{\alpha}(\omega,\ell,a,\alpha)$ has eigenvalues ${\lambda}(\omega,\ell,a,\alpha)$ with eigenfunctions $\e^{\im m\tilde{\phi}} S_{ml}(\omega,\ell,a,\alpha,\cos\theta)$. The eigenfunctions form an orthonormal basis of $L^2(\sin\theta\,\dd\theta\,\dd\tilde{\phi})$. Below we will suppress $(\omega,\ell,a,\alpha)$ in the notation.

If $\alpha\leq 0$, $S_{ml}$ satisfies the angular ODE
\begin{align}
\begin{aligned}
\label{eqn:AngularODE1}
&\frac{1}{\sin\theta}\del_{\theta}\left(\Delta_{\theta}\sin\theta\del_{\theta}S_{ml}(\cos\theta)\right)-\bigg(\frac{\Xi^2}{\Delta_{\theta}}\frac{m^2}{\sin^2\theta}-\frac{\Xi}{\Delta_{\theta}}a^2\omega^2\cos^2\theta\\&~~~~~~~~~-2ma\omega\frac{\Xi}{\Delta_{\theta}}\frac{a^2}{\ell^2}\cos^2\theta-\frac{\alpha}{\ell^2}a^2\cos^2\theta\bigg)S_{ml}(\cos\theta)+\lambda_{ml} S_{ml}(\cos\theta)=0
\end{aligned}
\end{align}
for $\lambda_{m\ell}(\omega,\alpha,a)\in\CC$. If $\alpha>0$, the angular ODE takes the form
\begin{align}
\begin{aligned}
\label{eqn:AngularODE2}
&\frac{1}{\sin\theta}\del_{\theta}\left(\Delta_{\theta}\sin\theta\del_{\theta}S_{ml}(\cos\theta)\right)-\bigg(\frac{\Xi^2}{\Delta_{\theta}}\frac{m^2}{\sin^2\theta}-\frac{\Xi}{\Delta_{\theta}}a^2\omega^2\cos^2\theta\\&~~~~~~~~~-2ma\omega\frac{\Xi}{\Delta_{\theta}}\frac{a^2}{\ell^2}\cos^2\theta+\frac{\alpha}{\ell^2}a^2\sin^2\theta\bigg)S_{ml}(\cos\theta)+\lambda_{ml} S_{ml}(\cos\theta)=0.
\end{aligned}
\end{align}
Using these modified oblate spheroidal harmonics, one obtains that, for fixed $m$ and $l$, $u:=\sqrt{r^2+a^2}R$ satisfies the radial ODE
\begin{align}
\label{eqn:radial_ODE_prelim}
u''(r)+(\omega^2-V(r))u(r)=0,
\end{align}
with
\begin{align*}
V(r)&=V_{+}(r)+V_{0}(r)+V_{\alpha}(r)\\
V_{+}(r)&=-\Delta_-^2\frac{3r^2}{(r^2+a^2)^4}+\Delta_-\frac{5\frac{r^4}{\ell^2}+3r^2\left(1+\frac{a^2}{\ell^2}\right)-4Mr+a^2}{(r^2+a^2)^3}\\
V_{0}(r)&=\frac{\Delta_-(\lambda_{ml}+\omega^2a^2)-\Xi^2a^2m^2-2m\omega a\Xi(\Delta_- -(r^2+a^2))}{(r^2+a^2)^2}\\
V_{\alpha}(r)&=-\frac{\alpha}{\ell^2}\frac{\Delta_-}{(r^2+a^2)^2}(r^2+\Theta(\alpha)a^2).
\end{align*}
Here $\Theta(x)=1$ if $x>0$ and zero otherwise.
We will use the shorthand $\tilde V:=V-\omega^2$.
Recall that $'$ denotes an $r^{\ast}$-derivative.

To indicate the dependence upon $a$, we will often write $V_a$ and $\tilde V_a$ for $V$ and $\tilde V$ respectively.

\subsection{Local analysis of the radial ODE}
\label{subsec:LocalAna}

To see which boundary conditions are appropriate for $u$, we perform a local analysis of the radial ODE near the horizon $r=r_+$ and at infinity, using the following theorem about regular singularities, which we cite from \citep{Teschl}, but it can also be found in \citep{ShlapentokhGrowing} or \citep{Olver}.

\begin{thm}
	\label{thm:RegularSing}
	Consider the complex ODE
	\begin{align}
	\label{eqn:CplxODE}
	\frac{\dd^2H}{\dd z^2}+f(z,\nu)\frac{\dd H}{\dd z}+g(z,\nu)H=0.
	\end{align}
	Suppose $f$ and $g$ are meromorphic and have poles of order (at most) one and two, respectively, at $z_0\in\CC$. Let $f_0(\nu)$ and $g_0(\nu)$ be the coefficients of pole of order one and two, respectively, in the Laurent expansions. 
	Let $s_1(\nu)$ and $s_2(\nu)$ be the two solutions of the indicial equation 
	\begin{align*}
	s(s-1)+f_0(\nu)s+g_0(\nu)=0
	\end{align*}
	with $\Re(s_1)\leq\Re(s_2)$.
	
	If $s_2(\nu)-s_1(\nu)\notin\NN_0$, a fundamental system of solutions is given by
	\begin{align*}
	h_j(z,\nu)=(z-z_0)^{s_j(\nu)}\rho_j(z,\nu),
	\end{align*}
	where the functions $\rho_j$ are holomorphic and satisfy $\rho_j(z_0,\nu)=1$.
	
	If $s_2(\nu)-s_1(\nu)=m\in\NN_0$, a fundamental system is given by
	\begin{align*}
	h_1&=(z-z_0)^{s_1}\rho_1+c\log(z)h_2\\
	h_2&=(z-z_0)^{s_2}.
	\end{align*}
	The constant $c$ may be zero unless $m=0$.
	
	In both cases, the radius of convergence of the power series of $\rho_j$ is at least equal to the minimum of the radii of convergence of the Laurent series of $f$ and $g$.
\end{thm}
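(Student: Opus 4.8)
The final statement is the classical Frobenius theorem on regular singular points, so the plan is to construct the two solutions as formal series in powers of $z-z_0$ and then establish convergence by a majorant (dominant-series) argument. First I would translate so that $z_0=0$ and multiply (\ref{eqn:CplxODE}) by $z^2$, rewriting it as $z^2 H''+z\,p(z,\nu)H'+q(z,\nu)H=0$, where $p:=zf$ and $q:=z^2g$ are holomorphic near $z=0$ with $p(0,\nu)=f_0(\nu)$, $q(0,\nu)=g_0(\nu)$, and whose Taylor coefficients $p_k(\nu),q_k(\nu)$ satisfy Cauchy estimates $|p_k(\nu)|,|q_k(\nu)|\leq C(\nu)\,r'^{-k}$ for every $r'$ strictly below the common radius of convergence $r$ of the Laurent series of $f$ and $g$. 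Substituting the ansatz $H=z^s\sum_{k\geq 0}c_kz^k$ with $c_0=1$ and collecting powers of $z$, the $z^0$-coefficient yields the indicial equation $F(s):=s(s-1)+f_0s+g_0=0$ with roots $s_1,s_2$, and for $n\geq 1$ the $z^n$-coefficient yields the recursion
\begin{align*}
F(s+n)\,c_n=-\sum_{k=0}^{n-1}\big((s+k)\,p_{n-k}+q_{n-k}\big)c_k .
\end{align*}

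\emph{Non-resonant case.} If $s_2-s_1\notin\NN_0$, then for each root $s_j$ one has $F(s_j+n)\neq 0$ for all $n\geq 1$: otherwise $s_j+n$ would be a root of $F$ distinct from $s_j$, forcing (using $\Re s_1\leq\Re s_2$) the difference $s_2-s_1$ to lie in $\NN$, contrary to assumption. Hence the recursion determines $c_n=c_n(s_j(\nu))$ uniquely and rationally in $s_j(\nu)$, so holomorphically in $\nu$ away from the resonant locus, giving formal solutions $h_j=z^{s_j}\rho_j$ with $\rho_j=\sum_k c_k^{(j)}z^k$ and $\rho_j(0)=1$. Linear independence of $h_1,h_2$ follows from $s_1\neq s_2$ (e.g.\ by inspecting the Wronskian near $z=0$, the leading exponents being distinct).

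\emph{Convergence — the main point.} The one genuinely analytic step is showing that the formal series $\rho_j$ converge on $\{|z|<r\}$. I would run a majorant argument: since the leading term of $F$ is $s^2$, one has $|F(s_j+n)|\geq\gamma n^2$ for all $n$ past some $n_0$; combining this with the Cauchy bounds on $p_k,q_k$, the recursion gives, for $n>n_0$,
\begin{align*}
|c_n|\leq\frac{C'}{n^2}\sum_{k=0}^{n-1}(k+1)\,r'^{-(n-k)}\,|c_k| ,
\end{align*}
and one verifies by induction that the geometric comparison sequence $M\,(r'')^{-n}$ — with $r''<r'$ and $M$ large enough to absorb the finitely many indices $n\leq n_0$ — is a majorant, so $\rho_j$ converges on $\{|z|<r''\}$. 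Letting $r''\uparrow r$ (through $r'\uparrow r$) gives the claimed lower bound on the radius of convergence. Since all the constants can be taken locally uniform in $\nu$ on compact subsets of the non-resonant locus, the convergence is locally uniform there and $\rho_j$ is holomorphic jointly in $(z,\nu)$.

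\emph{Resonant case.} If $s_2-s_1=m\in\NN_0$, the construction above still produces $h_2=z^{s_2}\rho_2$ for the larger root, since $F(s_2+n)\neq 0$ for every $n\geq 1$. For the second solution I would make the ansatz $h_1=z^{s_1}\rho_1+c\log z\cdot h_2$; substituting, the logarithmic term contributes $c\log z$ times the operator applied to $h_2$ (which vanishes) plus an explicit $z^{s_2}\times(\text{power series})$ remainder, and the constant $c$ is chosen precisely to cancel the obstruction appearing in the recursion for $\rho_1$ at order $n=m$, where $F(s_1+m)=F(s_2)=0$. Solvability of that single linear equation determines $c$, leaving $c_m$ free (to be normalised); when $m=0$ the obstruction is generically nonzero so $c\neq 0$ is forced, whereas for $m>0$ it may vanish, and convergence of the resulting $\rho_1$ follows from the same majorant estimate. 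The main obstacle is exactly this convergence estimate; the indicial algebra, the resonance bookkeeping, and the holomorphy in $\nu$ are all routine once it is in hand.
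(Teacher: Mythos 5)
The paper does not actually prove Theorem~\ref{thm:RegularSing}: it is quoted as a classical result with references to Teschl and Olver, so there is no in-paper argument to compare against. Your proposal is precisely the standard Frobenius-method proof those references give, and it is correct in substance: after writing the equation as $z^2H''+zp\,H'+qH=0$ with $p=zf$, $q=z^2g$ holomorphic, the ansatz $z^s\sum c_kz^k$ gives the indicial equation and the recursion $F(s+n)c_n=-\sum_{k<n}((s+k)p_{n-k}+q_{n-k})c_k$; non-resonance gives $F(s_j+n)\neq0$ for all $n\geq1$, the quadratic growth $|F(s_j+n)|\gtrsim n^2$ together with Cauchy estimates on $p,q$ yields the geometric majorant, and letting $r''\uparrow r$ gives the stated radius of convergence. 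One bookkeeping point in the resonant case is slightly off: for $m=0$ the necessity of $c\neq0$ is not that an ``obstruction at order $n=m$'' is generically nonzero --- at a double root one has $F(s_1+n)\neq0$ for every $n\geq1$, so the recursion for $\rho_1$ is solvable for \emph{any} $c$, including $c=0$; what forces $c\neq0$ is that the choice $c=0$ merely reproduces $h_2$ (same exponent, same normalisation), so no second linearly independent solution arises without the logarithm. For $m\geq1$ your mechanism (choose $c$ to cancel the genuine order-$m$ obstruction, with $c$ possibly zero and $c_m$ left free) is the right one, and the same majorant handles the inhomogeneous recursion, so convergence carries over; none of this affects the correctness of your argument as a whole.
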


\subsubsection{The horizon}

Adopting the notation of the previous section, and, after expressing the radial ODE (\ref{eqn:radial_ODE_prelim}) with $r$-derivatives, we have
\begin{align*}
f=\frac{\del_r\Delta_-}{\Delta_-}-\frac{2r}{r^2+a^2},~~~~~~~~~~g=\frac{(r^2+a^2)}{\Delta_-^2}(\omega^2-V).
\end{align*}
Thus we obtain
\begin{align*}
f_0&=\lim_{r\rightarrow r_+}(r-r_+)f=1\\
g_0&=\lim_{r\rightarrow r_+}(r-r_+)^2\frac{(r^2+a^2)^2}{\Delta_-^2}(\omega^2-V)=\lim_{r\rightarrow r_+}\frac{(r-r_+)^2}{\Delta_-^2}\left(\omega (r^2+a^2)-\Xi a m\right)^2
=-\xi^2
\end{align*}
with
\begin{align*}
\xi:=\im\frac{\Xi a m-\omega(r_+^2+a^2)}{\del_r\Delta_-(r_+)}
\end{align*}
as $\del_r\Delta_-(r_+)>0$. Thus, the indicial equation is solved by $s=\pm\xi$. 

Therefore if $\xi\neq 0$, a local basis of solutions $u$ (or $R$) is given by 
\begin{align*}
\{(\cdot-r_+)^{\xi}\phi_1,(\cdot-r_+)^{-\xi}\phi_2\}
\end{align*}
for holomorphic functions $\phi_i$ satisfying $\phi_i(r_+)=1$. For $\xi=0$, a local basis is given by
\begin{align*}
\{\phi_1,\phi_1\left(1+c\log(\cdot-r_+)\right)\}
\end{align*}
for $\phi_1(r_+)=1$ and some constant $c$.

\begin{lemma}
	\label{lemma:SmoothHorizon}
	If $u$ extends smoothly to the horizon, then there is a smooth function $\rho:\,[r_+,\infty)\rightarrow \CC$ such that
	\begin{align*}
	u=(\cdot-r_+)^\xi\rho.
	\end{align*}
\end{lemma}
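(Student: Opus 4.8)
The plan is to read the local structure of $u$ near $r=r_+$ off the Frobenius analysis already carried out in Section~\ref{subsec:LocalAna}. Written with $r$-derivatives, the radial ODE~(\ref{eqn:radial_ODE_prelim}) has a regular singular point at $r_+$ with $f_0=1$ and $g_0=-\xi^2$, so its indicial exponents are $\pm\xi$, and Theorem~\ref{thm:RegularSing} hands us a local basis of solutions on a one-sided neighbourhood $(r_+,r_++\delta)$: if neither $2\xi$ nor $-2\xi$ lies in $\NN$, it is $\{(\cdot-r_+)^{\xi}\phi_1,\,(\cdot-r_+)^{-\xi}\phi_2\}$ with $\phi_j$ holomorphic and $\phi_j(r_+)=1$; if $\xi=0$, it is $\{\phi_1,\,\phi_1(1+c\log(\cdot-r_+))\}$ with necessarily $c\neq0$; and if $2\xi\in\NN$ or $-2\xi\in\NN$, one basis element is $(\cdot-r_+)^{\pm\xi}$ (possibly times a holomorphic factor, non-zero at $r_+$), while the other involves $(\cdot-r_+)^{\mp\xi}$ together with a possible multiple of $\log(\cdot-r_+)$ times the first. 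Since $u$ solves~(\ref{eqn:radial_ODE_prelim}) on $(r_+,\infty)$, on $(r_+,r_++\delta)$ it is a fixed linear combination of this basis.

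The heart of the argument is that smoothness of $u$ at $r_+$ forces the coefficient of every branch other than the one of the form $(\cdot-r_+)^{\xi}\times(\text{holomorphic})$ to vanish, which I would organise by the sign of $\Re\xi$ (note $\Re\xi=(r_+^2+a^2)\,\Im\omega/\del_r\Delta_-(r_+)$ has the sign of $\Im\omega$). If $\Re\xi>0$, the branch $(\cdot-r_+)^{-\xi}\phi_2$ — and, in the resonant sub-case $2\xi\in\NN$, also the accompanying $\log(\cdot-r_+)\,(\cdot-r_+)^{-\xi}$ term, which is subdominant there — is unbounded at $r_+$ while $u$ is bounded, so its coefficient is $0$ and $u=c\,(\cdot-r_+)^{\xi}\rho_0$ with $\rho_0$ holomorphic near $r_+$; if moreover $2\xi$ is an odd integer, then $(\cdot-r_+)^{\xi}$ is not $C^\infty$ at $r_+$ and $\rho_0$ is non-zero there, forcing $c=0$, i.e.\ $u\equiv0$. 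If $\Re\xi=0$ and $\xi\neq0$, both branches $(\cdot-r_+)^{\pm\xi}\phi_j$ behave like $\e^{\pm\im(\Im\xi)\log(\cdot-r_+)}$, a quasi-periodic oscillation in $\log(\cdot-r_+)$ with no limit as $r\downarrow r_+$, so a combination admitting a limit is trivial and $u\equiv0$. If $\xi=0$, the $\log$-branch is unbounded and $c\neq0$, so again its coefficient vanishes and $u=c_1\phi_1$. In every case we reach $u=c\,(\cdot-r_+)^{\xi}\rho_0$ with $\rho_0$ holomorphic near $r_+$ (possibly $\rho_0\equiv0$).

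It then remains to assemble a single $\rho\in C^\infty([r_+,\infty))$ with $u=(\cdot-r_+)^{\xi}\rho$. On the open interval $(r_+,\infty)$ the function $(\cdot-r_+)^{-\xi}$ is smooth and nowhere zero and $u$ is smooth by hypothesis, so $(\cdot-r_+)^{-\xi}u\in C^\infty((r_+,\infty))$; near $r_+$ it coincides with the holomorphic $c\rho_0$ from the previous step, so the two glue to a function $\rho\in C^\infty([r_+,\infty))$ with $u=(\cdot-r_+)^{\xi}\rho$ by construction. The remaining case $\Re\xi<0$ is the mirror image: boundedness of $u$ now annihilates the $(\cdot-r_+)^{\xi}$-branch, so $u=c\,(\cdot-r_+)^{-\xi}\phi_2$ and $(\cdot-r_+)^{-\xi}u=c\,(\cdot-r_+)^{-2\xi}\phi_2$; since $\Re(-2\xi)>0$ and $-2\xi\notin\NN_0$ in the non-resonant situation, smoothness at $r_+$ forces $c=0$ and hence $u\equiv0$, which trivially has the claimed form (the resonant sub-case being handled by the same odd/even dichotomy as above).

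The only step that needs genuine care rather than being mechanical is the bookkeeping of the resonant cases ($\xi=0$ or $\pm2\xi\in\NN$): one must check that any $\log(\cdot-r_+)$ factor produced by Theorem~\ref{thm:RegularSing} is always attached to a branch whose coefficient is independently killed by the boundedness/smoothness of $u$, and one repeatedly invokes the elementary fact that $(\cdot-r_+)^{\beta}$ extends to a $C^\infty$ function at $r_+$ only if $\beta\in\NN_0$ — this is exactly what upgrades ``$u$ bounded'' to ``bad coefficients vanish'' and, in the half-integer-exponent sub-case, to ``$u\equiv0$''. Beyond that, the lemma is essentially a corollary of the Frobenius dichotomy of Section~\ref{subsec:LocalAna}: only the exponent-$\xi$ branch can produce a solution extending smoothly across the horizon.
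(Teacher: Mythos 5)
There is a genuine gap: you have misread the hypothesis, and as a result your argument proves a different (and, for the paper's purposes, useless) statement. In this lemma ``$u$ extends smoothly to the horizon'' does not mean that $u$ extends to a $C^\infty$ function of the Boyer--Lindquist coordinate $r$ on $[r_+,\infty)$; Boyer--Lindquist coordinates degenerate at $r=r_+$, and the hypothesis is that the mode solution $\psi$ extends smoothly across the event horizon $\HH$, i.e.\ is smooth in Kerr-AdS-star coordinates $(\sta t,r,\phi,\theta)$. The entire content of the lemma is the coordinate-change computation: writing $\psi$ in star coordinates produces the extra radial factor $\e^{-\im(\omega A(r)-mB(r))}$, so smoothness at $\HH$ is equivalent to $u(r)=\e^{-\im(\omega A(r)-mB(r))}f(r)$ with $f$ smooth, and since $\frac{\dd}{\dd r}\bigl(-\im(\omega A-mB)\bigr)=\frac{\xi}{r-r_+}+\OO(1)$ one gets $\e^{-\im(\omega A-mB)}=(r-r_+)^{\xi}\times(\text{smooth, nonvanishing})$, hence $u=(\cdot-r_+)^{\xi}\rho$. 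No Frobenius theory is needed, and conversely Frobenius theory cannot see this, because the hypothesis is a statement about spacetime regularity, not about the behaviour of the ODE solution in the $r$ variable.

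That your reading cannot be the intended one is visible from your own conclusions: under the literal ``$u\in C^\infty$ at $r=r_+$'' interpretation, $(r-r_+)^{\xi}$ is not $C^\infty$ for \emph{any} $\xi\notin\NN_0$ (not only when $2\xi$ is an odd integer), so your case analysis would force $u\equiv 0$ whenever $\xi$ is purely imaginary and nonzero, or $\Re\xi>0$ non-integer, or $\Re\xi<0$ --- i.e.\ in essentially every situation in which the lemma is actually applied. In Lemma~\ref{thm:RestrReal} the lemma is invoked for real $\omega$ with a priori $\omega\neq\omega_+$ (so $\xi$ purely imaginary and nonzero), and in the perturbation argument the constructed modes have $\Im\omega>0$, so $\xi$ has positive real part but is generically not an integer; in both cases nontrivial $u$ of the form $(r-r_+)^{\xi}\rho$ must be admitted, which is exactly what the horizon regularity condition (Definition~\ref{defn:HRC}) encodes. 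So the fix is not to patch the resonant bookkeeping but to replace the whole approach: interpret the hypothesis via the Kerr-star coordinate change and carry out the phase-factor computation above.
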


\begin{proof}
	Boyer-Lindquist coordinates break down at the horizon, so we need to change to Kerr-star coordinates. Then the solution $\psi$ takes the form
	\begin{align*}
	\psi(\sta t,r,\sta\phi,\theta)=\e^{-\im\omega(t-A(r))}\e^{\im m(\sta\phi-B(r))}S_{ml}(a\omega,\cos\theta)\frac{u(r)}{(r^2+a^2)^{1/2}},
	\end{align*}
	where
	\begin{align*}
	\frac{\dd A}{\dd r}=\frac{(r^2+a^2)(1+r^2/\ell^2)-\Delta_-}{\Delta_-(1+r^2/\ell^2)},~~~~~~\frac{\dd B}{\dd r}=\frac{a(1-a^2/\ell^2)}{\Delta_-}.
	\end{align*}
	Hence $R$ extends smoothly to the horizon if there is a smooth function $f$ such that
	\begin{align*}
	u(r)=\e^{-\im(\omega A(r)-m B(r))}f(r).
	\end{align*}
	Therefore the claim reduces to showing that
	\begin{align*}
	\rho(r):=(r-r_+)^{-\xi}\e^{-\im(\omega A(r)-m B(r))}
	\end{align*}
	is smooth. Since
	\begin{align*}
	\frac{\dd}{\dd r}\left(-\im(\omega A(r)-m B(r))\right)=\frac{\xi}{r-r_+}+\OO(1),
	\end{align*}
	we have
	\begin{align*}
	\rho(r)=\e^{-\xi\log(r-r_+)}\e^{\xi\log(r-r_+)+\OO(r-r_+)},
	\end{align*}
	which proves the claim.
\end{proof}

\begin{cor}
	\label{cor:horizon}
	Assume $u$ satisfies the horizon regularity condition. Then a local basis of solutions to the ODE at the horizon is given by
	\begin{align*}
	(\cdot-r_+)^{\xi}\rho
	\end{align*}
	for a holomorphic function $\rho$ defined around $r=r_+$.
\end{cor}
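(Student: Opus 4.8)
The plan is to read the corollary off from Theorem~\ref{thm:RegularSing}, applied at the regular singular point $z_0=r_+$, and then to use the horizon regularity condition to discard the wrong branches. The first step is to observe that, once the radial ODE (\ref{eqn:radial_ODE_prelim}) is rewritten with $r$-derivatives, its coefficients $f$ and $g$ are rational functions of $r$: the only building blocks are the polynomial $\Delta_-(r)$, the polynomial $r^2+a^2$, and the $r$-independent eigenvalue $\lambda_{ml}$. Hence $f$ and $g$ are meromorphic on $\CC$ with finitely many poles, namely the roots of $\Delta_-$ and $r=\pm\im a$; since $\del_r\Delta_-(r_+)>0$, $r_+$ is a simple root of $\Delta_-$ and is isolated from the remaining singularities. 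The computation preceding the corollary gives $f_0=1$ and $g_0=-\xi^2$ with $\xi$ as in (\ref{eqn:defn_xi}), so the indicial roots at $r_+$ are $s=\pm\xi$, and the power-series factors $\rho_j$ furnished by Theorem~\ref{thm:RegularSing} are genuinely holomorphic on a disc about $r_+$ of positive radius.

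Next I would split into cases. In the non-resonant case $2\xi\notin\ZZ$ (which in particular covers $\omega_R\neq\omega_+:=\Xi am/(r_+^2+a^2)$, since then $\xi\notin\RR$), Theorem~\ref{thm:RegularSing} gives the fundamental system $\{(\cdot-r_+)^{\xi}\rho_1,\,(\cdot-r_+)^{-\xi}\rho_2\}$ with $\rho_j$ holomorphic and $\rho_j(r_+)=1$. If $u$ satisfies the horizon regularity condition, then $u=(\cdot-r_+)^{\xi}\sigma$ for some smooth $\sigma$; expanding $u$ in this basis, the coefficient of $(\cdot-r_+)^{-\xi}\rho_2$ must vanish, for otherwise $(\cdot-r_+)^{-2\xi}$ would be forced to be smooth at $r_+$ (using $\rho_2(r_+)\neq0$), contradicting $-2\xi\notin\NN_0$. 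Hence $u$ is a scalar multiple of $(\cdot-r_+)^{\xi}\rho_1$, and one takes $\rho:=\rho_1$; this simultaneously upgrades the ``smooth $\rho$'' of Lemma~\ref{lemma:SmoothHorizon} to ``holomorphic $\rho$''. In the resonant case $2\xi\in\ZZ$ one normalises so that $s_2=\xi$ --- legitimate because $\Re\xi=(r_+^2+a^2)\Im\omega/\del_r\Delta_-(r_+)\ge0$ for the modes under consideration, so that in fact $2\xi\in\NN_0$ and $\xi$ is the indicial root of larger real part --- and uses the second alternative of Theorem~\ref{thm:RegularSing}: the solution $h_2=(\cdot-r_+)^{\xi}$ is already of the claimed form (with $\rho\equiv1$), while the companion solution $h_1$ contains either a $(\cdot-r_+)^{-\xi}$ factor (if $\xi\neq0$) or a $\log(\cdot-r_+)$ term (if $\xi=0$), hence never satisfies the horizon regularity condition unless that logarithmic coefficient already vanishes, in which case both fundamental solutions are holomorphic anyway. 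In every case the solutions obeying the horizon regularity condition are precisely those of the form $(\cdot-r_+)^{\xi}\rho$ with $\rho$ holomorphic near $r_+$, which is the assertion.

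The one point that deserves real care is the resonant case, and above all $\xi=0$: this is not an exotic edge case but exactly the situation of the real modes constructed in Step~I, where one necessarily has $\omega_0=\omega_+$ and hence $\xi=0$ by (\ref{eqn:defn_xi}). There Theorem~\ref{thm:RegularSing} a priori permits a $\log(\cdot-r_+)$ term in one of the fundamental solutions, and this cannot be excluded from the ODE itself --- only from $u$, by invoking the horizon regularity hypothesis. Everything else is bookkeeping already contained in the text preceding the corollary: $\xi$ is well-defined because $\del_r\Delta_-(r_+)>0$, and $\rho$ has a positive radius of convergence because $r_+$ is a simple root of $\Delta_-$ separated from the other poles of $f$ and $g$.
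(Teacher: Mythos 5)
Your argument is correct and follows essentially the same route as the paper: apply Theorem~\ref{thm:RegularSing} at the regular singular point $r=r_+$ with indicial roots $s=\pm\xi$, then use the horizon regularity condition to discard the $(\cdot-r_+)^{-\xi}$ branch (respectively the logarithmic branch when $\xi=0$), leaving exactly the multiples of $(\cdot-r_+)^{\xi}\rho$ with $\rho$ holomorphic. Your explicit handling of the resonant case $2\xi\in\NN$ (which the paper's dichotomy $\xi\neq 0$ versus $\xi=0$ passes over) is a minor refinement of the same argument, not a different method.
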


This asymptotic analysis at the horizon motivates the horizon regularity condition of Definition~\ref{defn:HRC}.

\subsubsection{Infinity}
\label{subsec:local_infinity}

The radial ODE has a regular singularity at $r=\infty$. To analyse it using the Theorem~\ref{thm:RegularSing}, we rewrite equation (\ref{eqn:CplxODE}) by introducing $x:=1/z$. This yields
\begin{align*}
\frac{\dd^2 H}{\dd x^2}+\left(\frac{2}{x}-\frac{f}{x^2}\right)\frac{\dd H}{\dd x}+\frac{g}{x^4}H=0.
\end{align*}
For the radial ODE, we have $x=1/r$. We obtain
\begin{align*}
f(x=0)=0,~~~~\lim_{x\rightarrow 0}\frac{f}{x}=2,~~~~
g(x=0)=0,~~~~\lim_{x\rightarrow 0}\frac{g}{x}=0,~~~~\lim_{x\rightarrow 0}\frac{g}{x^2}=\alpha-2.
\end{align*}
The indicial equation becomes
\begin{align*}
s^2-s+\alpha=0,
\end{align*}
which is solved by $s_{\pm}=\frac{1}{2}\pm\sqrt{\frac{9}{4}-\alpha}$. Set
\begin{align*}
\mathcal E:=\left\{\frac{9-k^2}{4}\,:\,k\in\NN\right\}.
\end{align*}
Then, for $\alpha\notin\mathcal E$, a local basis of solutions near infinity is given by
\begin{align*}
\{r^{-1/2+\sqrt{9/4-\alpha}}\rho_1(r),r^{-1/2-\sqrt{9/4-\alpha}}\rho_2(r)\}
\end{align*}
with functions $\rho_1,\rho_2$, smooth at $\infty$ and satisfying $\rho_1(\infty)=\rho_2(\infty)=1$. For $\alpha\in\mathcal E$, a local basis is given by
\begin{align*}
\left\{C_3r^{-1/2-\kappa}\log\frac{1}{r}+r^{-1/2+\kappa}\rho_2,r^{-1/2-\sqrt{9/4-\alpha}}\rho_2(r)\right\}.
\end{align*}
If $u$ extends smoothly to $r=r_+$ and we specify a boundary value $u(r_+)$, then the arguments of Section~\ref{subsec:uniqueness_continuity} show that $C_3$ has to be zero.

\begin{lemma}
	Let $u$ satisfy (\ref{eqn:radial_ODE_prelim}) on $(r_+,\infty)$ and extend smoothly to $r=r_+$, then, for large $r$, $u$ is a linear combination of
	\begin{align*}
	h_1(r,\alpha,\omega,a)&=r^{-1/2+\kappa}\rho_1(r,\alpha,\omega,a)\\
	h_2(r,\alpha,\omega,a)&=r^{-1/2-\kappa}\rho_2(r,\alpha,\omega,a)
	\end{align*}
	for functions $\rho_1$ and $\rho_2$ holomorphic at $r=\infty$ and satisfying $\rho_1(\infty)=\rho_2(\infty)=1$.
\end{lemma}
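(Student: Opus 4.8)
The plan is to read the statement off directly from the regular-singularity theorem, Theorem~\ref{thm:RegularSing}, applied to the radial ODE (\ref{eqn:radial_ODE_prelim}) at $r=\infty$, exactly as set up in Section~\ref{subsec:local_infinity}; the smoothness of $u$ at $r_+$ enters only to exclude one exceptional term. First I would recall the reduction: in the variable $x=1/r$ the equation (\ref{eqn:radial_ODE_prelim}) takes the form (\ref{eqn:CplxODE}) with singular point $x=0$, and since its coefficients are rational functions of $r$ they are meromorphic at $x=0$ with poles of order at most one and two respectively and a strictly positive radius of convergence of their Laurent expansions. Hence Theorem~\ref{thm:RegularSing} applies: the functions $\rho_j$ it produces are honestly convergent power series, so the $h_j$ are defined and analytic for all sufficiently large $r$ (this accounts for the qualifier ``for large $r$'' in the statement), and — as computed in Section~\ref{subsec:local_infinity} — the indicial roots are $s_\pm=\tfrac12\pm\kappa$, with exponent gap $s_+-s_-=2\kappa$.

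If $\alpha\notin\mathcal{E}$, then $2\kappa\notin\NN_0$ and the non-resonant alternative of Theorem~\ref{thm:RegularSing} already supplies a fundamental system consisting of $h_1=r^{-1/2+\kappa}\rho_1$ and $h_2=r^{-1/2-\kappa}\rho_2$ with $\rho_1,\rho_2$ holomorphic at $r=\infty$ and $\rho_1(\infty)=\rho_2(\infty)=1$. Since these two solutions span the two-dimensional solution space of (\ref{eqn:radial_ODE_prelim}) near infinity, $u$ — like every solution — is a linear combination of them; the smoothness hypothesis at the horizon is not even used in this range.

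The only work, and the main obstacle, is the resonant case $\alpha\in\mathcal{E}$, where $2\kappa=k\in\NN$. There Theorem~\ref{thm:RegularSing} only guarantees a fundamental system $\{b_1,b_2\}$ with $b_2=h_2=r^{-1/2-\kappa}\rho_2$ and $b_1=h_1+C_3\log(1/r)\,h_2$, $h_1=r^{-1/2+\kappa}\rho_1$, where the constant $C_3$ need not vanish; writing $u=c_1b_1+c_2b_2=c_1h_1+\bigl(c_2+c_1C_3\log(1/r)\bigr)h_2$ one sees that $u\in\spann\{h_1,h_2\}$ exactly when $c_1C_3=0$, i.e.\ when the logarithmic coefficient of $u$ vanishes. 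To secure this I would invoke Lemma~\ref{lemma:SmoothHorizon} and Corollary~\ref{cor:horizon}: the solutions of (\ref{eqn:radial_ODE_prelim}) that extend smoothly to $r_+$ form a one-dimensional family (unique up to scaling), and since the horizon exponent $\xi$ is independent of $\alpha$ this family depends continuously on $\alpha$; combining this uniqueness and continuity (the argument deferred to Section~\ref{subsec:uniqueness_continuity}) with the log-free expansion available at the dense set of non-resonant parameters $\alpha\notin\mathcal{E}$ forces $c_1C_3=0$ also when $\alpha\in\mathcal{E}$, which completes the proof. I expect this resonant/logarithmic case to be the only subtle point; the rest is a bookkeeping application of the regular-singular-point theorem together with the expansions already computed in Section~\ref{subsec:local_infinity}.
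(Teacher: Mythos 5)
For $\alpha\notin\mathcal E$ your argument is the paper's own: Theorem~\ref{thm:RegularSing} in the variable $x=1/r$ gives the fundamental system $\{h_1,h_2\}$ directly, with indicial roots $\tfrac12\pm\kappa$, and the horizon hypothesis is indeed not needed there. The genuine gap is in your treatment of the resonant case $\alpha\in\mathcal E$. The step ``the horizon-smooth family depends continuously on $\alpha$, the expansion is log-free on the dense set $\alpha\notin\mathcal E$, hence $c_1C_3=0$ at resonance'' is not a valid inference. For non-resonant $\alpha$ the absence of a logarithm is automatic for \emph{every} solution of the ODE, so it carries no information that could survive the limit; what degenerates as $\alpha\to\alpha_*\in\mathcal E$ is the basis itself, since the exponent gap $2\kappa$ becomes an integer and the connection coefficients of a fixed solution with respect to $h_1,h_2$ may blow up, with the divergent pieces recombining into a logarithm in the limit. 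Bessel's equation is the standard counterexample to this kind of reasoning: the solution fixed by data at an interior point depends analytically on the order $\nu$ and is a log-free combination of $J_\nu,J_{-\nu}$ for every non-integer $\nu$, yet at integer order it generically contains $Y_n$ and hence a $\log$ term in its expansion at the singular point. Locally uniform convergence $u_\alpha\to u_{\alpha_*}$, which is all that Lemma~\ref{lemma:continuous_parameters} provides, is perfectly compatible with a logarithm appearing at $\alpha_*$, so nothing in your argument forces $c_1C_3=0$.

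For comparison, the paper does not argue by density in $\alpha$ at all: it disposes of the resonant case with the single assertion, just before the lemma in Section~\ref{subsec:local_infinity}, that smooth extension to $r=r_+$ for prescribed $u(r_+)$ together with the arguments of Section~\ref{subsec:uniqueness_continuity} (in particular Lemma~\ref{lemma:uniqueness_ODE}) forces $C_3=0$; no parameter variation is involved. Whatever one thinks of how terse that assertion is, your proposed substitute for it is precisely the step of your proof that would fail; closing the gap would require an argument tied to the resonant equation itself (for instance, showing that the Frobenius obstruction coefficient at order $2\kappa$ vanishes for this particular potential), or else restricting the statement to $\alpha\notin\mathcal E$.
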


\begin{cor}
	If $u$ satisfies the horizon regularity condition and the Neumann boundary condition at infinity, then, for $5/4<\alpha<9/4$,
	\begin{align*}
	u=C_1h_1
	\end{align*}
	for a constant $C_1\in\CC$.
	
	If $u$ satisfies the horizon regularity condition and  the Dirichlet boundary condition at infinity, then, for all $\alpha<9/4$,
	\begin{align*}
	u(r)=C_2h_2
	\end{align*}
	for a constant $C_2\in\CC$.
\end{cor}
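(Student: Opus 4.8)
The plan is to read off the conclusion from the two‑term asymptotics of the preceding Lemma by substituting it into the defining limits of the two boundary conditions. First I would note that, since $u$ obeys the horizon regularity condition of Definition~\ref{defn:HRC}, Corollary~\ref{cor:horizon} identifies $u$ up to a constant with the holomorphic solution branch at $r=r_+$; thus $u$ is a genuine solution of (\ref{eqn:radial_ODE_prelim}) on $(r_+,\infty)$ that is smooth through the horizon in Kerr-star coordinates, and the preceding Lemma applies: for large $r$ there are $A,B\in\CC$ with $u=Ah_1+Bh_2$. (When $\alpha\in\mathcal E$ --- in the Neumann range only $\alpha=2$ --- the local basis at infinity could a priori contain a logarithm, but the horizon regularity condition forces its coefficient $C_3$ to vanish, as in Section~\ref{subsec:uniqueness_continuity}, so the two-term form persists.) For the Neumann argument I would also record the refinement $\rho_1(r)=1+O(r^{-2})$, hence $\rho_1'(r)=O(r^{-3})$: this amounts to the vanishing of the $r^{-1}$ coefficient in the Frobenius series for $h_1$ at infinity, which follows from the $n=1$ recursion once one observes that, in the chart $x=1/r$ of Section~\ref{subsec:local_infinity}, the radial ODE has no constant term in the coefficient of $u'$ (equivalently $f$ has no $r^{-2}$ term) and no simple-pole term in the coefficient of $u$ (equivalently $g$ has no $r^{-3}$ term).

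Next I would treat the Dirichlet case. From $u=Ah_1+Bh_2$ one has $r^{1/2-\kappa}u=A\rho_1+B\,r^{-2\kappa}\rho_2$; since $\kappa>0$ the second term tends to $0$ while $\rho_1(\infty)=1$, so $r^{1/2-\kappa}u\to A$ as $r\to\infty$. By Definition~\ref{defn:BdyConds} the Dirichlet condition then forces $A=0$, i.e.\ $u=Bh_2=:C_2h_2$, valid for every $\alpha<9/4$. For the Neumann case, where $5/4<\alpha<9/4$ and so $0<\kappa<1$, I would differentiate and multiply by $r^{1+2\kappa}$ to obtain
\begin{align*}
r^{1+2\kappa}\frac{\dd}{\dd r}\bigl(r^{1/2-\kappa}u\bigr)=A\,r^{1+2\kappa}\rho_1'-2\kappa B\,\rho_2+B\,r\,\rho_2'\,,
\end{align*}
and then let $r\to\infty$: the first term is $O(r^{2\kappa-2})\to0$ because $\rho_1'=O(r^{-3})$ and $\kappa<1$, the last term is $O(r^{-2})\to0$, and $\rho_2\to1$. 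Hence the left side tends to $-2\kappa B$, so Definition~\ref{defn:BdyCondsNeumann} together with $\kappa>0$ forces $B=0$, i.e.\ $u=Ah_1=:C_1h_1$.

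The main obstacle --- in fact the only step that is more than a one-line substitution --- is the ingredient used in the Neumann computation that $h_1$ itself satisfies the Neumann condition, equivalently that the $r^{-1}$ coefficient of $\rho_1$ vanishes so that $r^{1+2\kappa}\rho_1'\to0$ even when $\kappa\ge1/2$. This rests on the precise Laurent data of the radial ODE at infinity (one must check in particular that $V$ carries no $r^1$ term and that the coefficient of $u'$ in its $r$-derivative form carries no $r^{-2}$ term) rather than on the indicial exponents alone; it can also be quoted from the twisted-derivative well-posedness theory for the Neumann problem. Everything else is immediate.
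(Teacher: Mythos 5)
Your proof is correct and follows the route the paper leaves implicit for this corollary: insert the decomposition $u=Ah_1+Bh_2$ of the preceding lemma into Definitions~\ref{defn:BdyConds} and~\ref{defn:BdyCondsNeumann} and read off $A=0$, respectively $B=0$. The one genuine refinement you add --- that $\rho_1=1+O(r^{-2})$, so that $r^{1+2\kappa}\rho_1'\to0$ even for $1/2\le\kappa<1$ --- is indeed needed and is correctly justified: $f$ has no $r^{-2}$ term and $g$ no $r^{-3}$ term (the only odd contribution, the $-2Mr$ in $\Delta_-$, enters only at relative order $r^{-3}$ and hence at orders $r^{-4}$ in $f$ and $r^{-5}$ in $g$), so the $n=1$ Frobenius coefficient vanishes whenever $s_-+1$ is not itself an indicial root. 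The only hairline caveat is the exceptional value $\alpha=2$ (i.e.\ $\kappa=1/2$, the one element of $\mathcal E$ in the Neumann range), where $s_-+1=s_+$ and the $n=1$ recursion merely \emph{allows} rather than forces the $r^{-1}$ coefficient of $\rho_1$ to vanish --- that coefficient is the usual ambiguity of adding a multiple of $h_2$ to $h_1$ --- so there one should say that $h_1$ is fixed by this normalisation (implicit in the paper's lemma anyway), after which your Neumann conclusion $u=C_1h_1$ persists; likewise $r\rho_2'$ is $O(r^{-1})$ rather than $O(r^{-2})$, which changes nothing.
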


{

\begin{rk}
	The asymptotics near infinity do not change if we add $\nu(r^2+a^2)/(r^2\Delta_-)$ to $g$ as in Section~\ref{sec:Real}.
\end{rk}

\subsubsection{Uniqueness of solutions and dependence on parameters}
\label{subsec:uniqueness_continuity}

As one would expect, specifying one of the boundary conditions at infinity and choosing a value of $u$ at $r=r_+$ determines the solution to the radial ODE uniquely, which is being made more precise in the following standard lemma.

\begin{lemma}
	\label{lemma:uniqueness_ODE}
	Let $C_0\in\CC$. Then there is a unique classical solution to (\ref{eqn:radial_ODE_prelim}) on $(r_+,\infty)$ satisfying $u(r_+)=C_0$ and extending smoothly to $r=r_+$.
\end{lemma}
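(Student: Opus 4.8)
The plan is to reduce this to the local Frobenius analysis at the regular singular point $r=r_+$ supplied by Theorem~\ref{thm:RegularSing} and Lemma~\ref{lemma:SmoothHorizon}, combined with the standard global existence/uniqueness theory for linear second-order ODEs whose coefficients are analytic away from isolated singularities. The one point requiring care --- and the only genuine obstacle --- is the dimension count: one must know that the space of solutions of (\ref{eqn:radial_ODE_prelim}) that extend smoothly through the horizon is \emph{exactly} one-dimensional, which is precisely what the local analysis of Section~\ref{subsec:LocalAna} gives, with separate attention to the resonant case $2\xi\in\NN_0$ and in particular to the double root $\xi=0$ (the Hawking--Reall frequency), where a logarithmic branch is forced. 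Throughout, for a function obeying the horizon regularity condition I read $u(r_+)$ as the value $\rho(r_+)$ of the regular factor in $u=(\cdot-r_+)^{\xi}\rho$, equivalently $\lim_{r\to r_+}(r-r_+)^{-\xi}u(r)$.

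\emph{Existence.} Expressing (\ref{eqn:radial_ODE_prelim}) in $r$-derivatives as in Section~\ref{subsec:LocalAna}, the coefficients $f$ and $g$ are real-analytic on $(r_+,\infty)$ --- there $\Delta_->0$ and $r^2+a^2>0$, so no further singularities occur --- and $r=r_+$ is a regular singular point with indicial roots $\pm\xi$. By Theorem~\ref{thm:RegularSing} there is a solution $h=(\cdot-r_+)^{\xi}\rho$ with $\rho$ holomorphic near $r_+$ and $\rho(r_+)=1$, defined on some $(r_+,r_++\delta)$; by Lemma~\ref{lemma:SmoothHorizon} (equivalently Corollary~\ref{cor:horizon}) this $h$ extends smoothly to $r=r_+$ in Kerr-star coordinates. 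Since the ODE has no singular point inside the open interval $(r_+,\infty)$, $h$ continues uniquely to a solution on all of $(r_+,\infty)$. Then $u:=C_0\,h$ is a classical solution on $(r_+,\infty)$, extends smoothly to the horizon, and satisfies $u(r_+)=C_0\rho(r_+)=C_0$.

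\emph{Uniqueness.} Given two solutions $u_1,u_2$ as in the statement, set $w:=u_1-u_2$; it solves (\ref{eqn:radial_ODE_prelim}), extends smoothly to $r=r_+$, and has $w(r_+)=0$. Near $r_+$, $w$ is a linear combination of the two basis branches of Theorem~\ref{thm:RegularSing}: in the non-resonant case $(\cdot-r_+)^{\xi}\rho_1$ and $(\cdot-r_+)^{-\xi}\rho_2$, and in the resonant case $2\xi\in\NN_0$ (in particular $\xi=0$) the second branch acquires a $\log(\cdot-r_+)$ contribution, with nonzero coefficient when $\xi=0$. In every case Lemma~\ref{lemma:SmoothHorizon} forces the smooth solution $w$ to be a multiple of $h=(\cdot-r_+)^{\xi}\rho$ alone, say $w=c\,h$ on some $(r_+,r_++\delta)$; then $0=w(r_+)=c\rho(r_+)=c$, so $w$ vanishes on an open subinterval of $(r_+,\infty)$ and hence, by uniqueness for the regular linear ODE on $(r_+,\infty)$, $w\equiv0$. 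This proves uniqueness and completes the argument.
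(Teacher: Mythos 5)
Your argument is correct, and it is essentially the intended one: the paper states Lemma~\ref{lemma:uniqueness_ODE} as a ``standard lemma'' without proof, and your write-up simply assembles the ingredients the paper has already set up in Section~\ref{subsec:LocalAna} (Theorem~\ref{thm:RegularSing}, Lemma~\ref{lemma:SmoothHorizon}, Corollary~\ref{cor:horizon}) with global existence/uniqueness for a linear ODE that is regular on $(r_+,\infty)$. Your two points of care --- reading $u(r_+)$ as $\lim_{r\to r_+}(r-r_+)^{-\xi}u(r)$, which is the convention the paper implicitly uses later when normalising $|u|(r_+)=1$ for $\Im\omega>0$, and treating the resonant case $2\xi\in\NN_0$ (in particular the forced logarithm at $\xi=0$) --- are exactly the spots the paper glosses over, so nothing is missing.
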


The continuous dependence of the solution $u$ on parameters is also well-known:

\begin{lemma}
	\label{lemma:continuous_parameters}
	Let $u_0$ be a unique solution to (\ref{eqn:radial_ODE_prelim}) for a certain set of parameters $(\alpha_0,\omega_0,a_0)$ with fixed $u(r_+)$ satisfying either the Dirichlet or Neumann boundary condition. Let there be a neighbourhood of these parameters such that for all $(\alpha,\omega,a)$ in said neighbourhood, there is a unique solution $u_{\alpha,\omega,a}$ with the same boundary conditions. Fix an $\hat r\in(r_+,\infty)$. Then
	\begin{align*}
	(\alpha,\omega,a)\mapsto u_{\alpha,\omega,a}(\hat r)
	\end{align*}
	is smooth.
\end{lemma}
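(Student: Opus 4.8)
The plan is to split the interval $(r_+,\infty)$ at an interior point, handle the singular endpoint $r_+$ by means of the Frobenius expansion, and reduce the rest to the classical theorem on smooth dependence of solutions of linear ODEs on initial data and parameters.

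First I would fix an interior point $r_1\in(r_+,\hat r)$ and reduce the claim to showing that the map $(\alpha,\omega,a)\mapsto\bigl(u_{\alpha,\omega,a}(r_1),u_{\alpha,\omega,a}'(r_1)\bigr)$ is smooth on a small neighbourhood $\UU$ of $(\alpha_0,\omega_0,a_0)$. By Corollary~\ref{cor:horizon} the solution extending smoothly to the horizon is, up to scaling, of the form $u_{\alpha,\omega,a}=(\cdot-r_+)^{\xi}\rho_{\alpha,\omega,a}$ with $\rho_{\alpha,\omega,a}$ holomorphic near $r_+$, and the prescription $u(r_+)=C_0$ fixes $\rho_{\alpha,\omega,a}(r_+)=C_0$. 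Writing $\rho_{\alpha,\omega,a}=\sum_{k\geq 0}b_k(r-r_+)^k$ and inserting into the radial ODE gives the Frobenius recursion $k(k+2\xi)\,b_k=q_k(b_0,\dots,b_{k-1})$ with $b_0=C_0$, where each $q_k$ is a polynomial whose coefficients are the Taylor coefficients at $r_+$ of the coefficient functions $f$ and $g$ of the ODE. These are rational in $r$ and depend on $(\alpha,\omega,a)$ through rational expressions and through the angular eigenvalue $\lambda_{ml}$, which depends smoothly (indeed real-analytically) on the parameters by analytic perturbation theory for the family $P_{\alpha}$ (cf.\ \citep{HolzegelSmuleviciDecay}). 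In particular $f$ and $g$ are holomorphic in $r$ on a fixed disc around $r_+$, uniformly for $(\alpha,\omega,a)\in\UU$ once $\UU$ is small enough, because the nearest competing singularities are $r_-$ and the non-real roots of $r^2+a^2$, all of which keep a definite distance from $r_+$ when $|a|<\ell$.

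Next, since $2\xi_0\notin\{-1,-2,\dots\}$ — for $\omega$ real one has $\xi\in\im\RR$, and in general this is an open condition that persists on $\UU$ — the denominators $k(k+2\xi)$ are nonzero for every $k\geq 1$, so the $b_k$ are uniquely determined and are smooth functions of $(\alpha,\omega,a)\in\UU$. Moreover, using the uniform bounds on $f$ and $g$ just obtained, the majorant argument behind Theorem~\ref{thm:RegularSing} yields $|b_k|\leq CM^k$ with $C,M$ independent of the parameters, so $\rho_{\alpha,\omega,a}$ and $\rho_{\alpha,\omega,a}'$ converge locally uniformly on a fixed disc $\{|r-r_+|<\delta\}$, uniformly in $(\alpha,\omega,a)\in\UU$, and are therefore smooth in $(\alpha,\omega,a)$ there, being locally uniform limits of smooth functions. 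Choosing $r_1\in(r_+,r_++\delta)$ and using that $\xi$ is real-analytic in the parameters while $r_1-r_+>0$ is fixed, the maps $(\alpha,\omega,a)\mapsto u_{\alpha,\omega,a}(r_1)=(r_1-r_+)^{\xi}\rho_{\alpha,\omega,a}(r_1)$ and its $r$-derivative at $r_1$ are smooth; the same holds for the $r^{\ast}$-derivative, as $\dd r^{\ast}/\dd r=(r^2+a^2)/\Delta_-$ is smooth and nonvanishing on $[r_1,\hat r]$.

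Finally, on the compact interval $[r_1,\hat r]\subset(r_+,\infty)$ the radial ODE $u''+(\omega^2-V)u=0$ is a linear second-order equation with coefficients smooth in $(r;\alpha,\omega,a)$ on $[r_1,\hat r]\times\UU$, so by the classical theorem on smooth dependence of solutions to ODEs on initial data and parameters, the solution launched from the data $\bigl(u_{\alpha,\omega,a}(r_1),u_{\alpha,\omega,a}'(r_1)\bigr)$ takes a value at $\hat r$ that depends smoothly on $(\alpha,\omega,a)\in\UU$; by uniqueness (Lemma~\ref{lemma:uniqueness_ODE}) this solution coincides with $u_{\alpha,\omega,a}$, which proves the lemma. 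I expect the only delicate point to be the uniform control of the Frobenius expansion at the singular endpoint $r_+$ — in particular ruling out the resonances $2\xi\in-\NN$ and invoking the smooth dependence of $\lambda_{ml}$ on the parameters; everything on $[r_1,\hat r]$ is textbook, and the same argument in fact yields real-analytic dependence at no extra cost.
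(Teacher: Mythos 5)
The paper never actually proves this lemma---it is recorded as ``well-known'' standard ODE theory with no argument supplied---so your write-up fills a gap rather than paralleling an existing proof. Your route (Frobenius expansion at the regular singular point $r=r_+$, then the classical theorem on smooth dependence on initial data and parameters over the regular compact interval $[r_1,\hat r]$, glued by the uniqueness statement of Lemma~\ref{lemma:uniqueness_ODE}) is exactly the standard argument the paper implicitly appeals to, and the structure is sound: the non-resonance condition $2\xi\notin\{-1,-2,\dots\}$ holds at the relevant parameters (including the case $\xi=0$ at $\omega=\omega_+$, where your recursion still selects the log-free branch demanded by the horizon regularity condition) and is open; the coefficients of the ODE are holomorphic in $r$ on a fixed disc about $r_+$ uniformly over a small parameter neighbourhood; and the dependence of $\lambda_{ml}$ on the parameters is analytic as in Appendix~\ref{sec:AngularODE}.

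One justification as written is incorrect, though easily repaired: a locally uniform limit of functions that are smooth in the parameters is in general only continuous, so ``being locally uniform limits of smooth functions'' does not by itself give smoothness of $(\alpha,\omega,a)\mapsto\rho_{\alpha,\omega,a}(r_1)$. Two standard fixes are available. Either observe that each $b_k$ is holomorphic in $\omega$ (and extends holomorphically in complexified $\alpha$ and $a$, since $\xi$, the Taylor coefficients of $f$ and $g$, and $\lambda_{ml}$ do), so that your uniform bound $|b_k|\leq CM^k$ upgrades the locally uniform convergence to holomorphy of the sum via Weierstrass' theorem and Cauchy estimates; or differentiate the Frobenius recursion in the parameters and rerun the majorant argument to obtain $|\del^{\beta}b_k|\leq C_{\beta}M^k$ for every multi-index $\beta$, giving locally uniform convergence of all parameter-derivatives. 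Either repair is one line and also delivers the real-analytic dependence you assert at the end. With that fixed, the remaining steps (smoothness of $(r_1-r_+)^{\xi}$ in the parameters, passing between $r$- and $r^{\ast}$-derivatives via $\dd r^{\ast}/\dd r=(r^2+a^2)/\Delta_-$, the classical parameter-dependence theorem on $[r_1,\hat r]$, and the identification of the resulting solution with $u_{\alpha,\omega,a}$) are correct.
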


Let $u$ be a solution to (\ref{eqn:radial_ODE_prelim}) that extends smoothly to the horizon. Fixing $u(r_+)$, we can uniquely define 
 reflection and transmission coefficients $A(\alpha,\omega,a)$ and $B(\alpha,\omega,a)$ via
 \begin{align}
 \label{eqn:reflection_transmission}
 u(r,\alpha,\omega,a)= A(\alpha,\omega,a) h_1(r,\alpha,\omega,a)+B(\alpha,\omega,a) h_2(r,\alpha,\omega,a)
 \end{align}
 for large $r$. Here $h_1$ and $h_2$ are the local basis near infinity from Section~\ref{subsec:local_infinity}.
 Let $W$ denote the Wronskian. {Then
 	\begin{align*}
 	A=\frac{W(u,h_2)}{W(h_1,h_2)}
 	\end{align*}
 and similarly for $B$.
 
 \begin{lemma}
 	\label{lemma:continuous_AB}
 	$A$ and $B$ are smooth in $\alpha$, $\omega$ and $a$.
\end{lemma}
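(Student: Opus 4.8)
The plan is to differentiate through the explicit representation already recorded above, $A=W(u,h_2)/W(h_1,h_2)$ together with $B=W(h_1,u)/W(h_1,h_2)$. Once we know that $u$, $h_1$, $h_2$ and their first derivatives depend smoothly on $(\alpha,\omega,a)$ at one fixed radius, and that $W(h_1,h_2)$ is bounded away from zero there, the quotients are smooth and we are done. It suffices to prove this locally around an arbitrary admissible triple $(\alpha_0,\omega_0,a_0)$ with $\alpha_0<9/4$ and $\alpha_0\notin\mathcal E$, i.e.\ $2\kappa=s_+-s_-\notin\NN_0$, which is all that is needed in the applications; the set $\mathcal E$ is discrete, and away from it no logarithmic branch occurs at infinity.

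First I would fix a radius $\hat r\in(r_+,\infty)$ so large that, for all $(\alpha,\omega,a)$ in a small neighbourhood $N$ of $(\alpha_0,\omega_0,a_0)$, the point $x=1/\hat r$ lies well inside the common disc of convergence of the Frobenius expansions of $\rho_1,\rho_2$ at $x=0$. This is possible because, by the last assertion of Theorem~\ref{thm:RegularSing}, that radius of convergence is at least the radius of convergence of the Laurent expansions of the coefficients $f,g$ of the radial ODE; written in the variable $x=1/r$, the (finitely many) singularities of $f,g$ are the zeros of the polynomial $x^4\Delta_-(1/x)$ and vary continuously with $(\alpha,\omega,a)$, so $N$ can be shrunk to keep their distance to $0$ bounded below. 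The Frobenius coefficients $c_n^{(j)}(\alpha,\omega,a)$ in $\rho_j(x)=\sum_{n\ge 0}c_n^{(j)}x^n$ are produced by the standard recursion, solvable by rational expressions in the data whose denominators never vanish precisely because $s_+-s_-=2\kappa\notin\NN_0$; hence each $c_n^{(j)}$ is smooth (indeed real-analytic, holomorphic in $\omega$) on $N$, and, using the uniform convergence of the series at $x=1/\hat r$, so are $h_j(\hat r,\cdot)$ and $\partial_r h_j(\hat r,\cdot)$. Equivalently, one may characterise $h_2$ as the unique solution on $[\hat r,\infty)$ with $r^{1/2+\kappa}h_2\to 1$, realise it as the fixed point of a Volterra operator whose kernel is smooth in the parameters, and recover $h_1$ from it by reduction of order on $[\hat r,\infty)$.

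Second, shrinking $N$ further if necessary so that Lemmas~\ref{lemma:uniqueness_ODE} and~\ref{lemma:continuous_parameters} apply throughout, the solution $u$ with the prescribed value $u(r_+)$ has $u(\hat r)$ and $\partial_r u(\hat r)$ smooth on $N$. The three Wronskians at $\hat r$ are fixed bilinear expressions in the $C^1$ boundary data of $u,h_1,h_2$ at $\hat r$ (the passage between the $r$- and $r^\ast$-Wronskian only inserts the smooth nonvanishing factor $(\hat r^2+a^2)/\Delta_-(\hat r)$ and is irrelevant to the ratio), hence smooth on $N$; moreover $W(h_1,h_2)(\hat r)\neq 0$ since $h_1,h_2$ form a fundamental system. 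Therefore $A=W(u,h_2)/W(h_1,h_2)$ and $B=W(h_1,u)/W(h_1,h_2)$ are quotients of smooth functions with nonvanishing denominator, so they are smooth on $N$; since $(\alpha_0,\omega_0,a_0)$ was an arbitrary triple in the stated range, the lemma follows.

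The only genuine work is the second paragraph: controlling the solutions $h_1,h_2$ at infinity and their $r$-derivatives \emph{simultaneously} in $r$ (near $\infty$) and in the parameters. Everything else is Wronskian bookkeeping. The hypothesis $\alpha\notin\mathcal E$ is exactly what keeps the Frobenius recursion nonsingular and excludes the logarithmic solution; at the discrete exceptional parameters one would argue separately, or by continuity of $A,B$ across them, but that is not needed for the present purposes.
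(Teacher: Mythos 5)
Your proposal is correct and follows essentially the same route as the paper: the paper's proof simply notes that $A$ and $B$ are $r$-independent Wronskian quotients and invokes Lemma~\ref{lemma:continuous_parameters}, which is exactly your argument, with your Frobenius-series discussion supplying the (implicit in the paper) smooth parameter dependence of $h_1,h_2$ at a fixed radius.
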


\begin{proof}
	Note that $A$ and $B$ are independent of $r$ and apply Lemma~\ref{lemma:continuous_parameters}.
\end{proof}

\subsection{Detailed analysis of the potential}
\label{subsec:Potential}

From the analysis in \citep{HolzegelSmuleviciDecay} we know that the angular ODE has countably many simple eigenvalues $\lambda_{ml}$, labelled by $l=|m|,|m|+1,\ldots$ for any given $m\in\ZZ$, and corresponding real-valued eigenfunction $S_{ml}$. For later use, we need a bound from below which can be found in \citep{HolzegelSmuleviciDecay}, where it is proved under the assumption of the Hawking-Reall bound. We give the slight extension to our regime.

\begin{lemma}
Let $\omega\in\RR$. For $|a|<\ell$, the eigenvalues satisfy
\begin{align}
\begin{aligned}
\label{eqn:bound_below}
\lambda_{ml}+a^2\omega^2&\geq\Xi^2|m|(|m|+1)\\
\lambda_{ml}+a^2\omega^2&\geq\Xi^2|m|(|m|+1)+a^2\omega_+^2-C_{\ell,a}|m||\omega-\omega_+|,
\end{aligned}
\end{align}
where $C_{\ell,a}>0$ depends on $\ell$ and $a$ only and
\begin{align*}
\omega_+(\ell,r_+,a,m):=\frac{ma\Xi}{r_+^2+a^2}
\end{align*}
\end{lemma}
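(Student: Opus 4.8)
The plan is to exploit the variational (Rayleigh–Ritz) characterisation of the lowest eigenvalue $\lambda_{ml}$ of the operator $P_\alpha$ (equivalently, of $P$ plus the mass correction), together with the crude separation of the angular operator into a part that is manifestly non-negative and a remainder that is controlled by $|a|<\ell$. Writing $f=\e^{\im m\tilde\phi}\tilde S_{ml}(\cos\theta)$ with $\|\tilde S_{ml}\|_{L^2(\sin\theta\,\dd\theta)}=1$, one has
\[
\lambda_{ml}+a^2\omega^2 = \int_{S^2}\Big(\Delta_\theta|\del_\theta \tilde S_{ml}|^2+\frac{\Xi^2}{\Delta_\theta}\frac{m^2}{\sin^2\theta}|\tilde S_{ml}|^2 + (\text{terms in }\omega)\Big)\sin\theta\,\dd\theta + a^2\omega^2,
\]
and the task is to bound the right-hand side below. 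First I would treat the genuinely spheroidal terms (those proportional to $a^2\omega^2\cos^2\theta$ and $2ma\omega(a^2/\ell^2)\cos^2\theta$) by completing the square: grouping $\Xi a^2\omega^2\cos^2\theta/\Delta_\theta - 2ma\omega\Xi(a^2/\ell^2)\cos^2\theta/\Delta_\theta$ and adding/subtracting $a^2\omega^2$ one produces a non-negative quadratic form in $\omega$ up to an error of size $O(a^2|\omega||\omega-\omega_+|)$ or, more to the point, an error that after using $|a|/\ell<1$ collapses to $-C_{\ell,a}|m||\omega-\omega_+|$ plus $a^2\omega_+^2$; the constant $C_{\ell,a}$ absorbs $\sup|\cos^2\theta/\Delta_\theta|$, which is finite precisely because $|a|<\ell$. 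Throughout, the mass correction $\pm(\alpha/\ell^2)a^2\sin^2\theta$ (resp. $\cos^2\theta$) is sign-definite in the direction that only helps, or is bounded, so it can be folded into $C_{\ell,a}$.

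For the first, $\omega$-independent bound $\lambda_{ml}+a^2\omega^2\ge \Xi^2|m|(|m|+1)$, the key step is to observe that after discarding the non-negative $\omega$-dependent pieces one is left with a Schrödinger-type operator on $S^2$ dominated below by $\Xi^2$ times the flat spherical operator $-\frac{1}{\sin\theta}\del_\theta(\sin\theta\del_\theta)+\frac{m^2}{\sin^2\theta}$ (using $\Delta_\theta\ge\Xi$ and $\Xi^2/\Delta_\theta\ge\Xi$), whose lowest eigenvalue on the $m$-th Fourier mode is exactly $|m|(|m|+1)$, realised by the associated Legendre function $P_{|m|}^{|m|}$. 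This is essentially the computation already carried out in \citep{HolzegelSmuleviciDecay} under the Hawking–Reall bound; I would simply check that the only place the bound $r_+^2>|a|\ell$ entered was in controlling the sign of some term, and that the weaker inequality $|a|<\ell$ suffices to keep $\Delta_\theta$ bounded away from $0$ and hence to run the same argument. The second bound then follows by redoing the completion of the square but keeping the $a^2\omega^2$ term, replacing it by $a^2\omega_+^2 + a^2(\omega^2-\omega_+^2)$ and estimating $a^2|\omega^2-\omega_+^2|\le a^2|\omega-\omega_+|(|\omega|+|\omega_+|)$; since in the regime of interest $|\omega|$ is comparable to $|\omega_+|\sim|m|$, the product is $\le C_{\ell,a}|m||\omega-\omega_+|$.

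The main obstacle I anticipate is bookkeeping rather than conceptual: making sure the constant $C_{\ell,a}$ really depends only on $\ell$ and $a$ and not on $m$, $l$ or $\omega$. The danger is that a naive estimate of the cross term $2ma\omega\Xi(a^2/\ell^2)\cos^2\theta/\Delta_\theta$ produces a factor of $m^2$ (from $m\omega$ with $\omega\sim\omega_+\sim m$) rather than the claimed single power of $|m|$. The resolution is to not bound this term by brute force but to keep it paired with the $a^2\omega^2\cos^2\theta$ term inside the completed square, so that the genuine obstruction is only the \emph{difference} $\omega-\omega_+$, and the leftover $\omega_+$-dependence is exactly the $+a^2\omega_+^2$ appearing in the statement; one power of $|m|$ survives from $\omega_+\sim m/(r_+^2+a^2)$ multiplying $|\omega-\omega_+|$. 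Once this pairing is done carefully the two inequalities drop out, with $C_{\ell,a}$ determined by $\|\cos^2\theta/\Delta_\theta\|_{L^\infty(S^2)}$, $\Xi$, $1/\ell^2$ and $|\alpha|$ — all finite and $m$-independent under $|a|<\ell$.
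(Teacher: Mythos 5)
Your plan is essentially the paper's argument: bound the Rayleigh quotient below by splitting off the piece $\Delta_{\theta}|\partial_{\theta}S|^2+\Xi^2 m^2\sin^{-2}\theta|S|^2$, which min--max comparison with the spherical harmonics bounds by $\Xi^2|m|(|m|+1)$, and show that the remaining ($\omega$-dependent) part, evaluated at $\omega=\omega_+$, is pointwise nonnegative, the difference being of size $|m||\omega-\omega_+|$. The one step that would fail as written is your treatment of the $a^2(\omega^2-\omega_+^2)$ term: estimating $a^2|\omega^2-\omega_+^2|\le a^2|\omega-\omega_+|(|\omega|+|\omega_+|)$ and then invoking ``$|\omega|$ comparable to $|\omega_+|$'' is not legitimate, because the lemma is claimed for \emph{every} real $\omega$ with a constant depending only on $\ell$ and $a$; for large $|\omega|$ your bound is not of the form $C_{\ell,a}|m||\omega-\omega_+|$. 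The sign structure rescues this, and is exactly what the paper exploits: the coefficient multiplying $a^2(\omega^2-\omega_+^2)$ is $1-\Xi\Delta_{\theta}^{-1}\cos^2\theta=\sin^2\theta/\Delta_{\theta}\ge 0$, and $\omega^2-\omega_+^2=(\omega-\omega_+)^2+2\omega_+(\omega-\omega_+)\ge 2\omega_+(\omega-\omega_+)$, so the nonnegative square may be discarded and only $2|\omega_+|\,|\omega-\omega_+|\le C_{\ell,a}|m|\,|\omega-\omega_+|$ remains, the single power of $|m|$ coming from $\omega_+$.

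Two smaller points. First, the ``completed square'' establishing nonnegativity of the $\omega_+$-part must include the centrifugal surplus $\Xi^2m^2\sin^{-2}\theta(\Delta_{\theta}^{-1}-1)=\Xi^2m^2\sin^{-2}\theta\,a^2\ell^{-2}\cos^2\theta\,\Delta_{\theta}^{-1}$: the two $\cos^2\theta$ terms you list, together with $+a^2\omega_+^2$, are by themselves negative near $\theta=0$ (the cross term $-2ma\omega_+\Xi\Delta_{\theta}^{-1}a^2\ell^{-2}\cos^2\theta$ survives there with $ma\omega_+>0$), so the nonnegativity hinges on that extra term; the paper verifies the resulting bracket is nonnegative by locating its critical points at $\theta=0,\pi/2,\pi$. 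Relatedly, for the first ($\omega$-independent) inequality the $\omega$-dependent pieces are not individually nonnegative, so they cannot simply be ``discarded''; one has to run the same completion of the square in $\omega$. Second, the mass term cannot be ``folded into $C_{\ell,a}$'': that constant must be independent of $\alpha$, and the mass term is not proportional to $|\omega-\omega_+|$. Instead, by the very definition of $P_{\alpha}$ it is nonnegative in both cases $\alpha>0$ and $\alpha\le 0$ and can simply be dropped, as the paper does.
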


\begin{proof}
	We focus on the second inequality since the first one can be obtained similarly.
Let
\begin{align*}
\tilde P f:=-\frac{1}{\sin\theta}\del_{\theta}\left(\del_{\theta}\Delta_{\theta}\sin\theta\del_{\theta}f\right)+\Xi^2\frac{m^2}{\sin^2\theta}f.
\end{align*}
Then
\begin{align*}
\lambda_{ml}S_{ml}+a^2\omega^2S_{ml}&\geq\tilde PS_{ml}-\Xi^2\frac{m^2}{\sin^2\theta}S_{ml}+\Xi^2\frac{m^2}{\sin^2\theta}\frac{1}{\Delta_{\theta}}S_{ml}-\Xi\frac{a^2\omega^2}{\Delta_{\theta}}\cos^2\theta S_{ml}\\
	&~~~~~~~~-2ma\omega\frac{\Xi}{\Delta_{\theta}}\frac{a^2}{\ell^2}\cos^2\theta S_{ml}+a^2\omega_+^2S_{ml}+a^2(\omega^2-\omega_+^2)S_{ml}\\
	&=: \tilde PS_{ml}+P_cS_{ml}+a^2\omega_+^2S_{ml},
\end{align*}
where we have already used that the mass term is always nonnegative. We want to show that $P_c\geq 0$. We have the decomposition
\begin{align*}
P_c=P_c^++a^2(\omega^2-\omega_+^2)\left(1-\frac{\Xi}{\Delta_{\theta}}\cos^2\theta\right)-2ma\frac{\Xi}{\Delta_{\theta}}\frac{a^2}{\ell^2}(\omega-\omega_+)\cos^2\theta,
\end{align*}
where $P_c^+$ is the $\omega_+$-part (i.\,e. the part for $\omega=\omega_+$) with
\begin{align*}
P_c^+&=\frac{\Xi^2}{\Delta_{\theta}}\frac{m^2}{\sin^2\theta}\left(\frac{a^2}{\ell^2}\cos^2\theta+\frac{a^4}{(r_+^2+a^2)^2}\sin^4\theta-2\frac{a^2}{r_+^2+a^2}\frac{a^2}{\ell^2}\sin^2\theta\cos^2\theta\right).
\end{align*}
Interpreting the bracket as a function in $\theta$, we see that it has critical points only at $\theta=0,\pi/2,\pi$. Hence $P_c^+\geq 0$.
Therefore, we know
\begin{align*}
P_c&\geq a^2(\omega^2-\omega_+^2)\left(1-\frac{\Xi}{\Delta_{\theta}}\cos^2\theta\right)-2ma\frac{\Xi}{\Delta_{\theta}}\frac{a^2}{\ell^2}(\omega-\omega_+)\cos^2\theta\\
&\geq \frac{a^2}{\Delta_{\theta}} 2\omega_+(\omega-\omega_+)\sin^2\theta-2ma\frac{\Xi}{\Delta_{\theta}}\frac{a^2}{\ell^2}(\omega-\omega_+)\cos^2\theta\\
&\geq -\left(2\frac{a^2}{\Delta_{\theta}}|\omega_+|\sin^2\theta+2|m||a|\frac{\Xi}{\Delta_{\theta}}\frac{a^2}{\ell^2}\cos^2\theta\right)|\omega-\omega_+|
\end{align*}
To obtain the estimate, one only needs to integrate by parts on the sphere. The $\tilde P$ term yields
\begin{align*}
\int_{0}^{\pi}\tilde P S_{ml}\cdot\overline{S}_{ml}&=\int_{0}^{\pi}\left(\Delta_{\theta}|\del_{\theta}S_{ml}|^2+\Xi^2\frac{m^2}{\sin\theta}|S_{ml}|^2\right)\,\dd\theta\\
	&\geq\Xi^2\int_{0}^{\pi}\left(|\del_{\theta}S_{ml}|^2+\frac{m^2}{\sin^2\theta}|S_{ml}|^2\right)\sin\theta\,\dd\theta\\
	&\geq\Xi^2|m|(|m|+1),
\end{align*}
where we compared with spherical harmonics via the min-max principle.
\end{proof}

We will also need an asymptotic upper bound on the ground state eigenvalue $\lambda_{mm}$. 
By the min-max principle, we know that
\begin{align*}
\lambda_{mm}&=\min_{u\in U,\norm{u}=1}\int_0^{\pi}\Bigg(\left[\Delta_{\theta}\left\lvert\frac{\dd u}{\dd\theta}\right\lvert^2+\frac{\Xi^2}{\Delta_{\theta}}\frac{m^2}{\sin^2\theta}|u|^2\right]-\frac{\Xi}{\Delta_{\theta}}a^2\omega^2\cos^2\theta|u|^2\\&~~~~~~~~~~-2ma\omega\frac{\Xi}{\Delta_{\theta}}\frac{a^2}{\ell^2}\cos^2\theta|u|^2-\frac{\alpha}{\ell^2}a^2\cos^2\theta|u|^2\Bigg)\sin\theta\,\dd\theta\\
&\leq \min_{u\in U,\,\norm{u}=1}\int_0^{\pi}\left(\Delta_{\theta}\left\lvert\frac{\dd u}{\dd \theta}\right\lvert^2+\frac{\Xi^2}{\Delta_{\theta}}\frac{m^2}{\sin^2\theta}|u|^2\right)\sin\theta\,\dd\theta+\left(|\alpha|+2|ma|\cdot|\omega-\omega_+|\right)\frac{a^2}{\ell^2}
 \end{align*}
for $U=\{(\sin\theta)^{|m|}\rho(\theta)~:~\rho~\mathrm{analytic}\}$, which is the subspace of $L^2$ which contains all $S_{ml}$ -- see Appendix~\ref{sec:AngularODE}.

\begin{lemma}
Let $n\in\NN$. Then
\begin{align*}
\int_0^{\pi}\sin^n\theta\,\dd \theta=\sqrt{\pi}\frac{\Gamma\left(\frac{n+1}{2}\right)}{\Gamma\left(\frac{n+2}{2}\right)}\mathrm{~~~and~~~}
\int_0^{\pi}\sin^n\theta\cos^2\theta\,\dd\theta=\frac{\sqrt{\pi}}{2}\frac{\Gamma\left(\frac{n+1}{2}\right)}{\Gamma\left(\frac{n+4}{2}\right)},
\end{align*}
in particular
\begin{align*}
\int_0^{\pi}\sin^n\theta\,\dd \theta\sim\sqrt{2\pi}n^{-1/2}~~~\mathrm{and}~~~
\int_0^{\pi}\sin^n\theta\cos^2\theta\,\dd\theta\sim\sqrt{2\pi}n^{-3/2}
\end{align*}
as $n\rightarrow\infty$.
\end{lemma}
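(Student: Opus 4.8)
The plan is to recognise all four quantities as classical Beta-function integrals and then read off both the exact values and the large-$n$ behaviour from standard properties of the Gamma function.

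First I would use the reflection symmetry $\theta\mapsto\pi-\theta$, under which $\sin\theta$ and $\cos^2\theta$ are invariant, to write $\int_0^{\pi}\sin^n\theta\,\dd\theta=2\int_0^{\pi/2}\sin^n\theta\,\dd\theta$ and likewise for the weighted integral. Then I would invoke the identity $\int_0^{\pi/2}\sin^{2a-1}\theta\cos^{2b-1}\theta\,\dd\theta=\tfrac12 B(a,b)=\tfrac{\Gamma(a)\Gamma(b)}{2\Gamma(a+b)}$, which itself follows from the substitution $t=\sin^2\theta$ in the definition of the Beta function. Taking $a=\tfrac{n+1}{2}$, $b=\tfrac12$ for the first integral and $a=\tfrac{n+1}{2}$, $b=\tfrac32$ for the second, and using $\Gamma(\tfrac12)=\sqrt\pi$ and $\Gamma(\tfrac32)=\tfrac{\sqrt\pi}{2}$, produces exactly the two closed forms claimed. (Alternatively one can avoid the Beta function altogether: integration by parts gives the recursion $I_n=\tfrac{n-1}{n}I_{n-2}$ for $I_n:=\int_0^{\pi}\sin^n\theta\,\dd\theta$, and one checks by induction that the stated $\Gamma$-expression satisfies the same recursion with $I_0=\pi$, $I_1=2$; the weighted integral is then obtained from $\cos^2\theta=1-\sin^2\theta$ together with the unweighted formula for exponents $n$ and $n+2$, or by the same parts argument.)

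For the asymptotics I would apply the standard ratio estimate $\Gamma(x+\alpha)/\Gamma(x+\beta)\sim x^{\alpha-\beta}$ as $x\to\infty$, a one-line consequence of Stirling's formula. With $x=n/2$ this gives $\Gamma(\tfrac{n+1}{2})/\Gamma(\tfrac{n+2}{2})\sim(n/2)^{-1/2}$ and $\Gamma(\tfrac{n+1}{2})/\Gamma(\tfrac{n+4}{2})\sim(n/2)^{-3/2}$; substituting into the closed forms and collecting the powers of $2$ and $\sqrt\pi$ yields $\int_0^{\pi}\sin^n\theta\,\dd\theta\sim\sqrt{2\pi}\,n^{-1/2}$ and $\int_0^{\pi}\sin^n\theta\cos^2\theta\,\dd\theta\sim\sqrt{2\pi}\,n^{-3/2}$. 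There is no genuine obstacle here — the statement is a textbook computation — and the only point demanding a modicum of care is bookkeeping the constants $2$ and $\sqrt\pi$ when passing from the $\Gamma$-ratios to the final asymptotic coefficients.
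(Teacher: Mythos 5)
Your proposal is correct, and the constants all check out: with $a=\tfrac{n+1}{2}$, $b=\tfrac12$ (resp.\ $b=\tfrac32$) the Beta-function identity reproduces both closed forms, and the ratio estimate $\Gamma(x+\alpha)/\Gamma(x+\beta)\sim x^{\alpha-\beta}$ at $x=n/2$ gives precisely the stated $\sqrt{2\pi}\,n^{-1/2}$ and $\sqrt{2\pi}\,n^{-3/2}$. The paper disposes of the lemma in one line, ``induction and the Stirling formula'', i.e.\ the recursion $I_n=\tfrac{n-1}{n}I_{n-2}$ for the exact values and Stirling for the asymptotics; your primary route packages the exact values via the Beta function instead, while your parenthetical alternative is exactly the paper's induction argument, and your asymptotic step is the same Stirling-based estimate. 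So the two arguments are essentially interchangeable, with the Beta-function version being marginally slicker and the inductive version more elementary; nothing is missing from either.
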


\begin{proof}
The expressions follow by induction and the Stirling formula.
\end{proof}

Define
\begin{align*}
u_m:=\left(\pi^{-1/2}\frac{\Gamma\left({(|m|+2)}/{2}\right)}{\Gamma\left({(|m|+3)}/{2}\right)}\right)^{1/2}\sin^{|m|}\theta.
\end{align*}
Then $u_m\in U$ and $\norm{u_m}_{L^2((0,\pi);\sin\theta\,\dd\theta)}$=1. As $1-\cos^2\theta\leq\Delta_{\theta}\leq 1$, we hence know that
\begin{align*}
\lambda_{mm}\leq\int_0^{\pi}\left(\left\lvert\frac{\dd u_m}{\dd \theta}\right\lvert^2+\Xi^2\frac{m^2}{\sin^4\theta}|u_m|^2\right)\sin\theta\,\dd\theta+\left(|\alpha|+2|ma|\cdot|\omega-\omega_+|\right)\frac{a^2}{\ell^2}.
\end{align*}

\begin{lemma}
\label{lemma:GroundState}
\begin{align*}
\lim_{m\rightarrow\infty}\frac{\lambda_{mm}}{m^2}=\Xi^2
\end{align*}
\end{lemma}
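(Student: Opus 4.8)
The plan is to establish the two-sided estimate $\limsup_{m\to\infty}\lambda_{mm}/m^2\leq\Xi^2\leq\liminf_{m\to\infty}\lambda_{mm}/m^2$ and conclude by squeezing.

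The lower bound is immediate from the eigenvalue estimate \eqref{eqn:bound_below}: after rearranging it reads $\lambda_{mm}\geq\Xi^2|m|(|m|+1)-o(m^2)$ (using the first inequality when $\omega$ is held fixed, or the second one — in which the $a^2\omega^2$ and $a^2\omega_+^2$ terms cancel and $|\omega-\omega_+|=0$ — when $\omega=\omega_+$, as in the applications), so dividing by $m^2$ and letting $m\to\infty$ gives $\liminf_{m\to\infty}\lambda_{mm}/m^2\geq\Xi^2$.

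For the upper bound I would substitute the normalized test function $u_m=c_m\sin^{|m|}\theta$, with $c_m^{-2}=\int_0^\pi\sin^{2|m|+1}\theta\,\dd\theta$, into the Rayleigh-quotient bound for $\lambda_{mm}$ already derived above,
$$\lambda_{mm}\leq\int_0^\pi\left(\left|\frac{\dd u_m}{\dd\theta}\right|^2+\Xi^2\frac{m^2}{\sin^4\theta}|u_m|^2\right)\sin\theta\,\dd\theta+\left(|\alpha|+2|ma|\cdot|\omega-\omega_+|\right)\frac{a^2}{\ell^2}.$$
Since $\dd u_m/\dd\theta=c_m|m|\sin^{|m|-1}\theta\cos\theta$, the two integrals are $c_m^2m^2\int_0^\pi\sin^{2|m|-1}\theta\cos^2\theta\,\dd\theta$ and $\Xi^2m^2c_m^2\int_0^\pi\sin^{2|m|-3}\theta\,\dd\theta$. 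With the Wallis reduction $\int_0^\pi\sin^n\theta\,\dd\theta=\tfrac{n-1}{n}\int_0^\pi\sin^{n-2}\theta\,\dd\theta$ (equivalently the $\Gamma$-identities and Stirling asymptotics recorded above), the first equals $m^2/(2|m|)$, hence $o(m^2)$, while $c_m^2\int_0^\pi\sin^{2|m|-3}\theta\,\dd\theta=\big(\int_0^\pi\sin^{2|m|-3}\theta\,\dd\theta\big)\big/\big(\int_0^\pi\sin^{2|m|+1}\theta\,\dd\theta\big)\to1$, so the second is $\Xi^2m^2(1+o(1))$. The last term is $o(m^2)$ in the regime under consideration — for $\omega=\omega_+$ it is just $|\alpha|a^2/\ell^2$, and for fixed $\omega$ one reaches the same conclusion by estimating the cross term $\int_0^\pi(-2ma\omega\tfrac{\Xi}{\Delta_\theta}\tfrac{a^2}{\ell^2}\cos^2\theta)|u_m|^2\sin\theta\,\dd\theta$ directly, which is $O(1)$ because $\int_0^\pi\cos^2\theta|u_m|^2\sin\theta\,\dd\theta=O(1/|m|)$. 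Dividing by $m^2$ gives $\limsup_{m\to\infty}\lambda_{mm}/m^2\leq\Xi^2$, and together with the lower bound this proves the lemma.

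The only delicate point is getting the leading constant to equal exactly $\Xi^2$: one has to verify that the ratio of the two Wallis-type integrals $\int_0^\pi\sin^{2|m|-3}\theta\,\dd\theta$ and $\int_0^\pi\sin^{2|m|+1}\theta\,\dd\theta$ (whose exponents differ by $4$ but are both $\sim2|m|$) tends to $1$, and that the gradient term, carrying the opposite $\cos^2\theta$ weight, is genuinely of lower order. Conceptually this holds because $|u_m|^2\sin\theta$ concentrates near the equator $\theta=\pi/2$ as $m\to\infty$, where $\Delta_\theta=1$ and $\sin\theta=1$; thus the crude replacement of $\Xi^2/(\Delta_\theta\sin^2\theta)$ by $\Xi^2/\sin^4\theta$ costs nothing in the limit, and the whole computation reduces to a Wallis/Laplace-type asymptotic evaluation.
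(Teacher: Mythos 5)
Your proof is correct and follows essentially the same route as the paper: the lower bound from the eigenvalue estimate (\ref{eqn:bound_below}) (with the $a^2\omega_+^2$ cancellation at $\omega=\omega_+$, which you rightly single out), and the upper bound by inserting the normalised test function $c_m\sin^{|m|}\theta$ into the Rayleigh-quotient bound and evaluating the Wallis/Gamma-type integrals, so that the gradient term is $O(|m|)$ and the centrifugal term is $\Xi^2m^2(1+o(1))$. Your normalisation $c_m^{-2}=\int_0^{\pi}\sin^{2|m|+1}\theta\,\dd\theta$ is in fact the correct one (the constant printed before Lemma~\ref{lemma:GroundState} is off, though the paper's asymptotics implicitly use the correct $c_m^2\sim C|m|^{1/2}$), and your Wallis reductions reproduce exactly the asymptotics of the paper's auxiliary integral lemma.
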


\begin{proof}
By equation (\ref{eqn:bound_below}),
we already have $\lim_{m\rightarrow\infty}\lambda_{mm}/m^2\geq \Xi^2$. To prove the result, we compute
\begin{align*}
\frac{1}{m^2}\int_0^{\pi}\left(\left\lvert\frac{\dd u_m}{\dd \theta}\right\lvert^2+\Xi^2\frac{m^2}{\sin^4\theta}|u_m|^2\right)\sin\theta\,\dd\theta&=\pi^{-1/2}\frac{\Gamma\left(\frac{|m|+2}{2}\right)}{\Gamma\left(\frac{|m|+3}{2}\right)}\times\\
	&~~~~~\times \int_0^{\pi}\left(\cos^2\theta\sin^2\theta+\Xi^2\right)\sin^{2|m|-3}\,\dd\theta\\
	&\sim m^{1/2}m^{-3/2}+\Xi^2m^{1/2}m^{-1/2}
\end{align*}
by the previous lemma. Hence $\lim_{m\rightarrow\infty}\lambda_{mm}/m^2\leq\Xi^2$.
\end{proof}

\begin{lemma}
\label{lemma:Vneg}
Let $N,L>0$. Then, given $\ell>0$ and $\alpha<9/4$. Moreover assume the spacetime parameters $r_+$ and $a$ satisfy
\begin{align}
\label{eqn:aRange}
\frac{r_+^4-a^2\ell^2}{(r_+^2+a^2)^2}<-N.
\end{align}
Then
 there is an $m_0>0$ such that for all  mode parameters $|m|\geq m_0$ and $l=m$, we have
\begin{align}
\label{eqn:V_above}
 V-\omega^2\leq -N\frac{\Delta_-m^2\Xi^2}{(r^2+a^2)^2}
\end{align}
on an interval $(R_1,R_2)$ of length $L$ at $\omega=\omega_+(\ell,r_+,a,m)$.
\end{lemma}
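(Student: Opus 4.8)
The plan is to isolate the $m\to\infty$ leading behaviour of $\tilde V_a=V-\omega^2$ at $\omega=\omega_+$, show that (\ref{eqn:aRange}) forces this limit profile to lie strictly below $-N\Delta_-\Xi^2/(r^2+a^2)^2$ far out in $r$, and then upgrade this to (\ref{eqn:V_above}) for large $m$ by a uniform convergence argument.

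First I would set $\omega=\omega_+(\ell,r_+,a,m)=ma\Xi/(r_+^2+a^2)$, so that $\omega^2=m^2a^2\Xi^2/(r_+^2+a^2)^2$ and $m\omega=m^2a\Xi/(r_+^2+a^2)$, and substitute into the explicit formulae for $V_+$, $V_0$, $V_\alpha$ from Section~\ref{subsec:Spheroidal}. Every $m$-dependent contribution is then exactly $m^2$ times a fixed function of $r$, except for the term $\Delta_-\lambda_{mm}/(r^2+a^2)^2$; here I invoke Lemma~\ref{lemma:GroundState} with $l=m$, applied along the curve $\omega=\omega_+(\ell,r_+,a,m)$ (along which the $|\omega-\omega_+|$-errors in its proof vanish), to write $\lambda_{mm}=(\Xi^2+\eps_m)m^2$ with $\eps_m\to0$. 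Collecting terms — in particular combining the $\Delta_-$-proportional pieces via
\begin{align*}
\frac{\Delta_-}{(r^2+a^2)^2}\left(1-\frac{a^2}{r_+^2+a^2}\right)^2=\frac{\Delta_- r_+^4}{(r^2+a^2)^2(r_+^2+a^2)^2}
\end{align*}
and the remaining pieces via $-a^2\big((r^2+a^2)^{-1}-(r_+^2+a^2)^{-1}\big)^2$ — one obtains
\begin{align*}
\frac{\tilde V_a}{m^2}\longrightarrow c(r):=\frac{\Xi^2\left(\Delta_-(r)r_+^4-a^2(r^2-r_+^2)^2\right)}{(r^2+a^2)^2(r_+^2+a^2)^2}\qquad(m\to\infty),
\end{align*}
uniformly on compact subsets of $(r_+,\infty)$, because the error $\tilde V_a/m^2-c(r)=\Delta_-\eps_m/(r^2+a^2)^2+(V_++V_\alpha)/m^2$ is controlled there (the functions $\Delta_-/(r^2+a^2)^2$, $V_+$, $V_\alpha$ being locally bounded).

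Next I would examine
\begin{align*}
c(r)+N\frac{\Delta_-\Xi^2}{(r^2+a^2)^2}=\frac{\Xi^2\left(\Delta_-(r)\big(r_+^4+N(r_+^2+a^2)^2\big)-a^2(r^2-r_+^2)^2\right)}{(r^2+a^2)^2(r_+^2+a^2)^2}.
\end{align*}
The numerator's bracket is a quartic in $r$; since $\Delta_-(r)$ and $(r^2-r_+^2)^2$ have leading terms $r^4/\ell^2$ and $r^4$, its $r^4$-coefficient is $\big(r_+^4+N(r_+^2+a^2)^2\big)/\ell^2-a^2$, which is strictly negative precisely because (\ref{eqn:aRange}) reads $r_+^4+N(r_+^2+a^2)^2<a^2\ell^2$. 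Consequently
\begin{align*}
c(r)+N\frac{\Delta_-\Xi^2}{(r^2+a^2)^2}\longrightarrow \frac{\Xi^2\big(r_+^4+N(r_+^2+a^2)^2-a^2\ell^2\big)}{\ell^2(r_+^2+a^2)^2}<0\qquad(r\to\infty),
\end{align*}
so I can fix $R_1>r_+$ and $\delta>0$ with $c(r)+N\Delta_-\Xi^2/(r^2+a^2)^2\leq-\delta$ for all $r\geq R_1$. Setting $R_2:=R_1+L$ and using uniform convergence on $[R_1,R_2]$, choose $m_0$ so large that $\tilde V_a/m^2-c(r)\leq\delta$ on $[R_1,R_2]$ whenever $|m|\geq m_0$; then $\tilde V_a/m^2\leq-N\Delta_-\Xi^2/(r^2+a^2)^2$ there, i.e. (\ref{eqn:V_above}) holds on the interval $(R_1,R_2)$ of length $L$.

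The step I expect to be the crux — as opposed to mere bookkeeping — is the algebraic reduction producing the clean form of $c(r)$ together with the observation that the $r^4$-coefficient of the quartic $\Delta_-(r)\big(r_+^4+N(r_+^2+a^2)^2\big)-a^2(r^2-r_+^2)^2$ is \emph{exactly} the quantity that (\ref{eqn:aRange}) is designed to make negative; this is also where the violation of the Hawking--Reall bound enters essentially. A secondary point requiring care is that $\omega_+$ grows linearly in $m$, so Lemma~\ref{lemma:GroundState} genuinely has to be applied along $\omega=\omega_+(\ell,r_+,a,m)$ rather than at a fixed frequency.
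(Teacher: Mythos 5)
Your proof is correct and follows essentially the paper's own route: the limit profile $c(r)$ you derive is exactly the paper's grouping of the $m^2\Xi^2$ terms at $\omega=\omega_+$ (the constant term $(r_+^4-a^2\ell^2)/(r_+^2+a^2)^2$ plus its decaying correction), and both arguments invoke Lemma~\ref{lemma:GroundState} along $\omega=\omega_+$ and then choose $R_1$ large before taking $|m|$ large. The only difference is bookkeeping: you read off negativity from the $r\to\infty$ limit of $c(r)+N\Delta_-\Xi^2/(r^2+a^2)^2$, whereas the paper isolates the boxed Hawking--Reall term below $-N$ and bounds the decaying remainder separately.
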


\begin{proof}
Let us first rewrite the potential:
\begin{align}
\label{eqn:potential_rewritten}
 V-\omega^2&=V_++V_{\alpha}-\omega^2\notag\\\nonumber
  &~~~~~~~~~+\frac{\Delta_-}{(r^2+a^2)^2}\left(\Xi^2m^2-2\frac{a^2\Xi^2m^2}{r_+^2+a^2}+\frac{m^2a^4\Xi^2}{(r_+^2+a^2)^2}-\frac{m^2a^2\Xi^2}{\Delta_-}\frac{(r^2-r_+^2)^2}{(r_+^2+a^2)^2}\right)\\\notag
  &~~~~~~~~~+\frac{\Delta_-}{(r^2+a^2)^2}(\lambda-\Xi^2m^2)\\\notag
  &=V_++V_{\alpha}+\frac{\Delta_-}{(r^2+a^2)^2}(\lambda-\Xi^2m^2)+\frac{\Delta_-m^2\Xi^2}{(r^2+a^2)^2(r_+^2+a^2)^2}(r_+^4-a^2\ell^2)\\\notag
  &~~~~~~~~~+\frac{\Delta_-m^2\Xi^2}{(r^2+a^2)^2(r_+^2+a^2)^2}\frac{a^2}{\Delta_-}(r-r_+)[r(2r_+^2+a^2+\ell^2)-a^2\ell^2r_+^{-1}+r_+^3]\\\notag
  &=\frac{\Delta_-m^2\Xi^2}{(r^2+a^2)^2}\times\\\notag
  &~~~~~~~\Big[\frac{2r^2}{\Xi^2m^2\ell^2}+\frac{\Delta_-}{(r^2+a^2)^2}\frac{a^2}{m^2\Xi^2}+\frac{r^2-a^2}{(r^2+a^2)^2}\frac{2Mr}{m^2\Xi^2}-\frac{\alpha}{\ell^2}\frac{1}{m^2\Xi^2}\left(r^2+\Theta(\alpha)a^2\right)\\
  &~~~~~~+\left(\frac{\lambda}{\Xi^2m^2}-1\right)+\boxed{\frac{r_+^4-a^2\ell^2}{(r_+^2+a^2)^2}}\\\notag
  &~~~~~~+\frac{a^2(r-r_+)}{\Delta_-}[r(2r_+^2+a^2+\ell^2)-a^2\ell^2r_+^{-1}+r_+^3]\Big]
\end{align}
Note that
\begin{align*}
 [r(2r_+^2+a^2+\ell^2)-a^2\ell^2r_+^{-1}+r_+^3]|_{r=r_+}=\Delta'_-(r_+)>0.
\end{align*}
Moreover $\lambda-\Xi^2m^2>0$ by (\ref{eqn:bound_below}). Therefore, to obtain negativity, we will violate the Hawking-Reall bound in the boxed term. First we can choose $|m|$ large such that $\lambda/\Xi^2m^2-1$  is sufficiently small by Lemma~\ref{lemma:GroundState}. Since the term in the last line is decaying, we can find an $R_1$ such that the last term is bounded on $[R_1,R_1+L]$. By making $|m|$ possibly larger, the terms of the first line are also bounded on the interval.

Let 
\begin{align*}
\frac{r_+^4-a^2\ell^2}{(r_+^2+a^2)^2}\leq-N-\epsilon_m
\end{align*}
Now we choose $m$ sufficiently large such that
\begin{align*}
\frac{\lambda_{mm}}{\Xi^2m^2}-1&<\frac{\epsilon_m}{2}.
\end{align*}
There is an $R_1$ such that
\begin{align*}
 \frac{a^2(r-r_+)}{\Delta_-}[r(2r_+^2+a^2+\ell^2)-a^2\ell^2r_+^{-1}+r_+^3]<\frac{\epsilon_m}{2}
\end{align*}
for all $r\geq R_1$. Set $R_2:=R_1+L$ and choose $m$ such that
\begin{align*}
\frac{1}{m^2}\left(\frac{2r^2}{\Xi^2\ell^2}+\frac{\Delta_-}{(r^2+a^2)^2}\frac{a^2}{\Xi^2}+\frac{r^2-a^2}{(r^2+a^2)^2}\frac{2Mr}{\Xi^2}-\frac{\alpha}{\ell^2}\frac{r^2}{\Xi^2}\right)<\frac{\epsilon_m}{2}
\end{align*}
on $[R_1,R_2]$.
Putting everything together, the lemma follows.
\end{proof}

\begin{rk}
	\label{rk:HR_bound}
The same proof yields the analogous negativity results for $V-\omega^2+F$, where $F$ is any continuous function on $(r_+,\infty)$ that is independent of $m$. This will be used in Section~\ref{sec:RealDN}.
\end{rk}

Define the functional
\begin{align*}
 \LL_{\alpha,r_+,a}(f):=\int_{r_+}^{\infty}\left(\frac{\Delta_-}{r^2+a^2}\left\lvert\frac{\dd f}{\dd r}\right\lvert^2+\frac{(V-\omega^2)(r^2+a^2)}{\Delta_-}|f|^2\right)\,\dd r
\end{align*}
on $C_0^{\infty}(r_+,\infty)$. We often suppress some of the indices and write $\LL_a$ and $V_a$ in view of Section~\ref{sec:Real}.

\begin{lemma}
\label{lemma:FunctionalNegative}
 Choose $(r_+,a,\ell)$ and $(m,l)$ as in Lemma~\ref{lemma:Vneg}. Then there is a function $f\in C_0^{\infty}(r_+,\infty)$ such that
\begin{align*}
 \LL_{a}(f)<0.
\end{align*}
\end{lemma}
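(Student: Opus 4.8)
The plan is to exhibit an explicit test function supported on the interval $(R_1,R_2)$ produced by Lemma~\ref{lemma:Vneg}, and show that on this function the potential term dominates the gradient term. First I would fix $\ell$, $\alpha$ and choose the spacetime parameters $(r_+,a)$ so that \eqref{eqn:aRange} holds for a value of $N$ still to be determined; Lemma~\ref{lemma:Vneg} then provides, for every prescribed length $L$ and all $|m|$ large, an interval $(R_1,R_2)$ of length $L$ on which
\begin{align*}
V-\omega_+^2\leq-N\frac{\Delta_-m^2\Xi^2}{(r^2+a^2)^2}.
\end{align*}
The quantity $\Delta_-/(r^2+a^2)$ is smooth and strictly positive on the compact interval $[R_1,R_2]\subset(r_+,\infty)$, so it is bounded above and below there by positive constants depending only on $(\ell,r_+,a)$ and on $R_1,R_2$.

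Next I would pick any fixed bump function $\chi\in C_0^{\infty}(0,1)$ with $\chi\not\equiv0$ and rescale it to the interval, setting $f(r):=\chi\!\left(\frac{r-R_1}{L}\right)$, which lies in $C_0^\infty(r_+,\infty)$ and is supported in $(R_1,R_2)$. Plugging $f$ into $\LL_a$ and splitting into the two terms, the gradient term is bounded by $C_1 L^{-2}\int_0^1|\chi'|^2$ times $\sup_{[R_1,R_2]}\frac{\Delta_-}{r^2+a^2}$, hence by a constant times $L^{-1}$ after accounting for the Jacobian $L$ of the substitution; meanwhile the potential term, using the bound from Lemma~\ref{lemma:Vneg} together with $\inf_{[R_1,R_2]}\frac{(r^2+a^2)}{\Delta_-}\cdot\frac{\Delta_-m^2\Xi^2}{(r^2+a^2)^2}=\inf\frac{m^2\Xi^2}{r^2+a^2}>0$, is at most $-c\,N m^2\int_{R_1}^{R_2}|f|^2\,\dd r=-c'Nm^2 L$ for a positive constant $c'$ independent of $L$ and $m$. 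Therefore
\begin{align*}
\LL_a(f)\leq C L^{-1}-c'Nm^2L,
\end{align*}
which is strictly negative once $L$ (or $|m|$, or $N$) is chosen large enough; since Lemma~\ref{lemma:Vneg} lets us take $L$ as large as we like while keeping the same parameters, this is immediate.

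The only subtlety — and the point I would be most careful about — is the bookkeeping of which constants depend on which parameters and in which order the choices are made, so that no circularity arises: $\ell,\alpha,r_+,a$ are fixed first (with \eqref{eqn:aRange} for some $N>0$), then $L$ is chosen large, then $|m|\geq m_0(L)$ is chosen from Lemma~\ref{lemma:Vneg}, and the bounds on $\frac{\Delta_-}{r^2+a^2}$ over $[R_1,R_2]$ must be taken after $R_1,R_2$ are pinned down (note $R_1$ itself may grow with $L$, but $\frac{\Delta_-}{r^2+a^2}\to\ell^{-2}$ as $r\to\infty$, so this ratio stays in a fixed compact subset of $(0,\infty)$ uniformly, which is what makes the estimate robust). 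Everything else is a routine one-variable estimate; I would not belabour the computation of $\int|\chi'|^2$ or the exact constants. Note also that the same argument, via Remark~\ref{rk:HR_bound}, applies with $V-\omega^2$ replaced by $V-\omega^2+F$ for any fixed continuous $F$, which is the form needed later in Section~\ref{sec:RealDN}.
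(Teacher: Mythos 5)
There is a genuine gap, and it sits exactly at the point you flag as ``the only subtlety''. Your claim that $\frac{\Delta_-}{r^2+a^2}\to\ell^{-2}$ as $r\to\infty$ is false: since $\Delta_-=(r^2+a^2)(1+r^2/\ell^2)-2Mr$, one has $\frac{\Delta_-}{r^2+a^2}=1+\frac{r^2}{\ell^2}-\frac{2Mr}{r^2+a^2}\sim \frac{r^2}{\ell^2}\to\infty$. Consequently the constants in your final inequality are \emph{not} independent of $L$: in Lemma~\ref{lemma:Vneg} the left endpoint $R_1$ is fixed but $R_2=R_1+L$, so $\sup_{[R_1,R_2]}\frac{\Delta_-}{r^2+a^2}\sim R_2^2/\ell^2$ grows like $L^2$, making the gradient term of order $L$ rather than $L^{-1}$, while the potential-side constant is $\inf_{[R_1,R_2]}\frac{m^2\Xi^2}{r^2+a^2}=\frac{m^2\Xi^2}{R_2^2+a^2}\sim m^2\Xi^2/L^2$, so the potential term is of order $-Nm^2/L$ rather than $-Nm^2L$. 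With the corrected $L$-dependence the bound reads roughly $\LL_a(f)\lesssim C\,L - c\,N m^2/L$, so sending $L\to\infty$ (your primary route, and the order of choices you specify: $L$ large first, then $|m|=m_0(L)$ from Lemma~\ref{lemma:Vneg}) makes things worse, not better. Taking $N$ large does not help either, since $N$ is constrained by (\ref{eqn:aRange}) once $(r_+,a)$ are fixed.

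The conclusion is still reachable, but by the opposite order of quantifiers, which is what the paper does: fix $L$ (any value), so that $[R_1,R_2]$ and hence $\sup_{[R_1,R_2]}\frac{\Delta_-}{r^2+a^2}$ and $\inf_{[R_1,R_2]}\frac{\Xi^2}{r^2+a^2}$ are fixed constants independent of $m$ (note $R_1,R_2$ in Lemma~\ref{lemma:Vneg} do not depend on $m$), and then increase $|m|$: the gradient term is $m$-independent while the potential term scales like $-c\,N m^2$, so negativity follows for $|m|$ large. Your parenthetical ``(or $|m|$)'' gestures at this, but as written you only take $|m|\geq m_0(L)$ from Lemma~\ref{lemma:Vneg} and rely on largeness of $L$, which the degenerating constants do not permit; to close the argument you must state that, with the interval and test function fixed, $|m|$ is increased further until the $m^2$ term dominates. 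The choice of test function (rescaled bump versus the paper's plateau cutoff) is immaterial.
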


\begin{proof}
 We have the following estimate for the functional if $f$ is supported in $(R_1,R_2)$:
\begin{align*}
 \LL_{a}(f)\leq\int_{R_2}^{R_1}\left(\frac{\Delta_-(R_2)}{r_+^2+a^2}\left\lvert\frac{\dd f}{\dd r}\right\lvert^2-N\frac{m^2\Xi^2}{R_2^2+a^2}|f|^2\right)\,\dd r
\end{align*}
Choose an $f$ such that $f$ is 1 on $[R_1+L/4,R_2-L/4]$ and 0 outside of $(R_1,R_2)$. Furthermore we require that
\begin{align*}
 \left\lvert\frac{\dd f}{\dd r}\right\lvert\leq 2\frac{4}{L}.
\end{align*}
Hence
\begin{align*}
 \LL_{a}(f)\leq\frac{64\Delta_-(R_2)}{(r_+^2+a^2)L^2}-N\frac{m^2\Xi^2L}{2(R_2^2+a^2)}.
\end{align*}
If necessary, we can increase $m$ further to make the expression negative. 
\end{proof}

\begin{rk}
	\label{rk:Choice}
	
Fix $\ell>0$ and $0<r_+<\ell$. Choose $\alpha_0<9/4$ and $a=a_0$ such that (\ref{eqn:aRange}) holds. Then there is a non-empty open interval $I\subseteq(-\infty,9/4)$ with $\alpha_0\in I$, a non-empty open Interval $I'$ around $a_0$  and an $m_0$ such that Lemma~\ref{lemma:FunctionalNegative} holds on $[R_1,R_2]$ for all $\alpha\in I$, $a\in I'$ and $|m|\geq m_0$.
\end{rk}

We want to conclude this section by showing that $\LL_a$ is always non-negative if the Hawking-Reall bound is satisfied. We borrow the following Hardy inequality from \citep{HolzegelSmuleviciDecay}.

\begin{lemma}
	\label{lemma:Hardy}
	For any $r_{\mathrm{cut}}\geq r_+$, we have for a smooth function $f$ with $fr^{1/2}=o(1)$ at infinity that
	\begin{align*}
	\frac{1}{4\ell^2}\int_{r_{\mathrm{cut}}}^{\infty}|f|^2\,\dd r\leq \int_{r_{\mathrm{cut}}}^{\infty}\frac{\Delta_-}{r^2+a^2}\left\lvert\frac{\dd f}{\dd r}\right\lvert^2\,\dd r.
	\end{align*}
\end{lemma}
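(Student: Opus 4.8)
The plan is to prove this as a one‑dimensional Hardy inequality, exploiting that the conformal boundary $r=\infty$ lies at finite distance in the tortoise coordinate $\sta r$. Since $\dd\sta r/\dd r=(r^2+a^2)/\Delta_-$, one has $\frac{\Delta_-}{r^2+a^2}\left\lvert\frac{\dd f}{\dd r}\right\rvert^2\dd r=\left\lvert\frac{\dd f}{\dd\sta r}\right\rvert^2\dd\sta r$ and $\dd r=\frac{\Delta_-}{r^2+a^2}\dd\sta r$, so that, writing $W:=\Delta_-/(r^2+a^2)$, the assertion is equivalent to
\[
\frac{1}{4\ell^2}\int_{\sta r(r_{\mathrm{cut}})}^{\pi/2}W\,|f|^2\,\dd\sta r\ \le\ \int_{\sta r(r_{\mathrm{cut}})}^{\pi/2}\left\lvert\frac{\dd f}{\dd\sta r}\right\rvert^2\dd\sta r .
\]
As $\sta r\to\pi/2$ one has $W\sim r^2/\ell^2\sim\ell^2(\pi/2-\sta r)^{-2}$, so the right‑hand side is exactly the Hardy quadratic form with the critical inverse‑square weight and the critical constant $\tfrac14$; the point of the lemma is that the lower‑order behaviour of $W$ does not spoil that constant.

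For the estimate itself I would stay in the variable $r$ and use a virtual ground‑state substitution $f=\phi g$, where $\phi$ is strictly positive, non‑increasing on $(r_+,\infty)$ and behaves like $c\,r^{-1/2}$ at infinity (modelled on the exact ground state $r^{-1/2}$ of the asymptotic operator $-\ell^{-2}(r^2\,\cdot\,)'$), so that $g=f/\phi$ satisfies $|g|=|f|\,r^{1/2}/c\,(1+o(1))=o(1)$ at infinity. Integration by parts yields the identity (with $'$ now meaning $\dd/\dd r$)
\[
\int_{r_{\mathrm{cut}}}^{\infty}\frac{\Delta_-}{r^2+a^2}\left\lvert\frac{\dd f}{\dd r}\right\rvert^2\dd r=\int_{r_{\mathrm{cut}}}^{\infty}W\phi^2\left\lvert\frac{\dd g}{\dd r}\right\rvert^2\dd r+\int_{r_{\mathrm{cut}}}^{\infty}\Big(-\frac{(W\phi')'}{\phi}\Big)|f|^2\,\dd r+\Big[W\phi\phi'|g|^2\Big]_{r_{\mathrm{cut}}}^{\infty}.
\]
Here the first term is $\ge 0$; the boundary contribution at $\infty$ vanishes because $W\phi\phi'$ stays bounded (it tends to $-c^2/(2\ell^2)$) while $|g|^2=o(1)$; and the boundary contribution at $r_{\mathrm{cut}}$ equals $-W(r_{\mathrm{cut}})\phi(r_{\mathrm{cut}})\phi'(r_{\mathrm{cut}})|g(r_{\mathrm{cut}})|^2\ge 0$ precisely because $\phi$ is non‑increasing (and $W(r_{\mathrm{cut}})\ge0$, with $=0$ in the limiting case $r_{\mathrm{cut}}=r_+$). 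Thus the lemma follows as soon as one exhibits a $\phi$ with
\[
-\Big(\frac{\Delta_-}{r^2+a^2}\,\phi'\Big)'\ \ge\ \frac{1}{4\ell^2}\,\phi\qquad\text{on }(r_+,\infty).
\]

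The construction and verification of such a $\phi$ is the step I expect to be the main obstacle. The bare choice $\phi=r^{-1/2}$, which gives equality for the model operator governing $r=\infty$, is not adequate: the correction $-2Mr/(r^2+a^2)$ in $W$ breaks the pointwise inequality for large $r$, and the degeneration $\Delta_-(r_+)=0$ breaks it near $r_+$ as well (all the more so when the Hawking–Reall bound is violated, which is the regime this paper is about). I would therefore look for $\phi$ of the form $r^{-1/2}$ times a positive correction — e.g. a suitable power of $1+r^2/\ell^2$, or a function assembled from $\Delta_-$ and $\int_r^\infty(s^2+a^2)/\Delta_-(s)\,\dd s$ — and then reduce the required differential inequality, after clearing denominators, to a polynomial inequality in $r$ that is checked directly, using $\del_r\Delta_-(r_+)>0$ (Lemma~\ref{lemma:r+}) to control the endpoint $r_+$. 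An essentially equivalent packaging avoids producing $\phi$ explicitly: apply on $(\sta r(r_{\mathrm{cut}}),\pi/2)$ the sharp one‑dimensional Hardy identity $\int\left\lvert\frac{\dd f}{\dd\sta r}\right\rvert^2\dd\sta r=\tfrac14\int\frac{|f|^2}{(\pi/2-\sta r)^2}\,\dd\sta r+(\text{a non‑negative surplus})$, valid because $f\to0$ at $\sta r=\pi/2$, and absorb the surplus against $\frac{1}{4\ell^2}W-\tfrac14(\pi/2-\sta r)^{-2}$; it is precisely in this last step that the fine structure of $\Delta_-$ must be used.
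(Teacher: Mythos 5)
Your reduction is sound as far as it goes: the twisted (ground--state) substitution identity, the sign of the boundary term at $r_{\mathrm{cut}}$ coming from $\phi'\le 0$ and $\Delta_-\ge 0$, and the vanishing of the boundary term at infinity from $fr^{1/2}=o(1)$ are all correct. But the proof is not complete: the entire content of the lemma has been pushed into the existence of a positive, non-increasing $\phi\sim c\,r^{-1/2}$ satisfying $-\bigl(\tfrac{\Delta_-}{r^2+a^2}\phi'\bigr)'\ge\tfrac{1}{4\ell^2}\phi$ on all of $(r_+,\infty)$, and you neither exhibit such a $\phi$ nor verify the resulting polynomial inequality -- you explicitly defer this as ``the main obstacle''. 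The fallback packaging in the tortoise coordinate is likewise not carried out, and as stated it needs a further idea: the pointwise comparison $\tfrac{1}{4\ell^2}W\le\tfrac14(\pi/2-\sta r)^{-2}$ with $W=\Delta_-/(r^2+a^2)$ \emph{fails} near infinity, since $\pi/2-\sta r=\ell^2/r-\ell^4/(3r^3)+\ldots$ gives $\tfrac{W}{4\ell^2}-\tfrac14(\pi/2-\sta r)^{-2}\to\tfrac{1}{12\ell^2}>0$; so the ``non-negative surplus'' would have to absorb a constant excess, i.e.\ you would need an improved (Brezis--V\'azquez type) Hardy inequality, which is again not proved. As it stands, the key estimate is missing.

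For comparison, the paper's proof sidesteps the sharp-constant issue entirely. It uses the elementary shifted Hardy inequality $\int_{r_{\mathrm{cut}}}^{\infty}|f|^2\,\dd r\le 4\int_{r_{\mathrm{cut}}}^{\infty}(r-r_{\mathrm{cut}})^2\bigl\lvert\tfrac{\dd f}{\dd r}\bigr\rvert^2\,\dd r$ (integrate $\tfrac{\dd}{\dd r}(r-r_{\mathrm{cut}})\,|f|^2$ by parts and apply Cauchy--Schwarz; the hypothesis $fr^{1/2}=o(1)$ kills the boundary term), together with the pointwise bound $(r-r_{\mathrm{cut}})^2\le(r-r_+)^2\le\ell^2\tfrac{\Delta_-}{r^2+a^2}$. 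The last inequality reduces, via the factorisation (\ref{eqn:Delta-}), to $2r_+r^2+r(r_+^2+\ell^2)+r_+a^2-a^2\ell^2r_+^{-1}\ge 0$ for $r\ge r_+$, which holds because this expression is increasing in $r$ and equals $\ell^2\,\del_r\Delta_-(r_+)>0$ at $r=r_+$ (Lemma~\ref{lemma:r+}). Because the Hardy weight is the distance to $r_{\mathrm{cut}}$ rather than the exact asymptotic ground state $r^{-1/2}$, there is slack of order $r$ at infinity, so no critical-constant analysis is needed -- precisely the step at which your plan gets stuck (and indeed your observation that the bare $\phi=r^{-1/2}$ fails is correct, which is why matching the asymptotic ground state is the wrong normalisation here).
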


\begin{proof}
	We include a proof for the sake of completeness. Integrating by parts and applying the Cauchy-Schwarz inequality yields
	\begin{align*}
	\int_{r_{\mathrm{cut}}}^{\infty}\frac{\dd}{\dd r}(r-r_{\mathrm{cut}})|f|^2\,\dd r\leq 4\int_{r_{\mathrm{cut}}}^{\infty}(r-r_{\mathrm{cut}})^2\left\lvert\frac{\dd f}{\dd r}\right\lvert^2\,\dd r.
	\end{align*}
	The lemma follows by estimating $r_{\mathrm{cut}}\geq r_+$.
\end{proof}

Thus we can prove the

\begin{lemma}
	\label{lemma:HR_positivity}
	Let $r_+^2\geq |a|\ell$. Then there is an $m_0$ such that, for all $|m|\geq m_0$ and $f\in C_0^{\infty}(r_+,\infty)$,
	\begin{align*}
	\LL_a(f)\geq 0.
	\end{align*}
\end{lemma}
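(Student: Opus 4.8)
The plan is to evaluate the functional at the Hawking--Reall frequency $\omega=\omega_+$ — the only real frequency relevant to the periodic construction — and to combine the rewriting of $V-\omega_+^2$ from the proof of Lemma~\ref{lemma:Vneg} with the Hardy inequality of Lemma~\ref{lemma:Hardy}. The hypothesis $r_+^2\geq|a|\ell$ is used in exactly one place: it makes the boxed term in \eqref{eqn:potential_rewritten} non-negative. Every other negative contribution to $\tilde V\,(r^2+a^2)/\Delta_-$ is then absorbed either by the positive angular term, which at $\omega=\omega_+$ carries a $1/|m|$ gain thanks to \eqref{eqn:bound_below}, or by Hardy; and the constant surviving at the very end is governed precisely by the Breitenlohner--Freedman number $\kappa^2=9/4-\alpha>0$.

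In detail: write the bracket in \eqref{eqn:potential_rewritten} (at $\omega=\omega_+$) as $m^{-2}E(r)+\bigl(\lambda_{ml}/(\Xi^2m^2)-1\bigr)+B+a^2(r-r_+)p(r)/\Delta_-$, where $B:=(r_+^4-a^2\ell^2)/(r_+^2+a^2)^2$ is the boxed term, $E(r)$ collects the terms of order $m^0$, and $p(r)=r(2r_+^2+a^2+\ell^2)-a^2\ell^2r_+^{-1}+r_+^3$. One observes: (i) $B\geq0$ precisely because $r_+^2\geq|a|\ell$; (ii) $p$ is affine with positive slope and $p(r_+)>0$ (as noted in the proof of Lemma~\ref{lemma:Vneg}), so $p\geq0$ on $[r_+,\infty)$ and the last summand is $\geq0$ for $r>r_+$; (iii) at $\omega=\omega_+$ the second inequality of \eqref{eqn:bound_below} reads $\lambda_{ml}\geq\Xi^2|m|(|m|+1)$, whence $\lambda_{ml}/(\Xi^2m^2)-1\geq1/|m|$; (iv) since $r_+^2\geq|a|\ell>a^2$ one has $r_+>|a|$, so for $r>r_+$ both the $\Delta_-$-term and the $(r^2-a^2)\,2Mr$-term of $E(r)$ are non-negative, giving $E(r)\geq\bigl((2-\alpha)r^2-\alpha^+a^2\bigr)/(\Xi^2\ell^2)$ with $\alpha^+:=\max(\alpha,0)$. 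Multiplying the bracket by $m^2\Xi^2/(r^2+a^2)$ and using (i)--(iv),
\begin{align*}
\tilde V\,\frac{r^2+a^2}{\Delta_-}\ \geq\ \frac{(2-\alpha)r^2-\alpha^+a^2}{(r^2+a^2)\ell^2}+\frac{|m|\Xi^2}{r^2+a^2}\ =\ \frac{2-\alpha}{\ell^2}+\frac{1}{r^2+a^2}\left(|m|\Xi^2-\frac{(2-\alpha+\alpha^+)a^2}{\ell^2}\right)
\end{align*}
for all $r>r_+$; hence, fixing $m_0$ with $m_0\Xi^2\geq(2-\alpha+\alpha^+)a^2/\ell^2$, we obtain $\tilde V\,(r^2+a^2)/\Delta_-\geq(2-\alpha)/\ell^2$ on all of $(r_+,\infty)$ whenever $|m|\geq m_0$.

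It then remains only to insert this into $\LL_a$ and apply the Hardy inequality (Lemma~\ref{lemma:Hardy}) with $r_{\mathrm{cut}}=r_+$ — whose decay hypothesis is automatic for $f\in C_0^\infty(r_+,\infty)$:
\begin{align*}
\LL_a(f)\ \geq\ \int_{r_+}^{\infty}\left(\frac{\Delta_-}{r^2+a^2}\left\lvert\frac{\dd f}{\dd r}\right\rvert^2+\frac{2-\alpha}{\ell^2}|f|^2\right)\,\dd r\ \geq\ \left(\frac{1}{4\ell^2}+\frac{2-\alpha}{\ell^2}\right)\int_{r_+}^{\infty}|f|^2\,\dd r\ =\ \frac{\kappa^2}{\ell^2}\int_{r_+}^{\infty}|f|^2\,\dd r\ \geq\ 0,
\end{align*}
using $\alpha<9/4$ (for $\alpha\leq2$ the Hardy step is not needed at all). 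The main obstacle — the reason the statement is not simply pointwise positivity of $\tilde V$ — is the regime $2<\alpha<9/4$: there $\tilde V\to(2-\alpha)r^2/\ell^4<0$ at infinity even under the Hawking--Reall bound, so the Hardy inequality is genuinely needed and it is exactly the Breitenlohner--Freedman bound that renders $\tfrac14+(2-\alpha)=\kappa^2$ positive in the last display. The only slightly delicate bookkeeping is that the $O(m^{-2})$ deficit of $E(r)$ near the horizon must be beaten by the $1/|m|$ gain coming from the angular eigenvalue, which is precisely why the conclusion requires $|m|$ large.
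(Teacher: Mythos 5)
Your proof is correct and follows essentially the same route as the paper: the rewriting \eqref{eqn:potential_rewritten} at $\omega=\omega_+$ with the boxed term controlled by $r_+^2\geq|a|\ell$, the eigenvalue bound \eqref{eqn:bound_below}, and the Hardy inequality of Lemma~\ref{lemma:Hardy} combined with $\alpha<9/4$ so that $\tfrac14+(2-\alpha)=\kappa^2>0$. The only (harmless) differences are that you absorb the borderline case $r_+^2=|a|\ell$ directly rather than by the paper's continuity-in-$a$ step, and you make explicit the choice of $m_0$ needed to absorb the $O(1)$ $a^2$-terms via the $\Xi^2|m|$ gain, which the paper's one-line argument leaves implicit.
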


\begin{proof}
	Let us first assume $r_+^2>|a|\ell$. Then, noting that
	\begin{align*}
	V_+=\frac{2\Delta_-}{(r^2+a^2)^2}\frac{r^2}{\ell^2}+\frac{\Delta_-}{(r^2+a^2)^4}\left(a^4\Delta_-+(r^2-a^2)2Mr\right),
	\end{align*}
	on sees from (\ref{eqn:potential_rewritten}) that
	\begin{align*}
	\tilde V_a>\frac{2-\alpha}{\ell^2}\frac{\Delta_-}{(r^2+a^2)^2}r^2>-\frac{1}{4\ell^2}\frac{\Delta_-}{(r^2+a^2)^2}r^2.
	\end{align*}
	Using Lemma~\ref{lemma:Hardy}, we conclude $\LL_a(f)>0$. By continuity, we obtain $\LL_a(f)\geq 0$ for $r_+^2\geq |a|\ell^2$.
\end{proof}

An analogue of Lemma~\ref{lemma:FunctionalNegative} can be proved for the twisted functional used in Section~\ref{sec:RealDN}. For $0<\kappa<1$, define
\begin{align*}
\tL_a(f):=\int_{r_+}^{\infty}\left(\frac{\Delta_-}{r^2+a^2}r^{-1+2\kappa}\left\lvert\frac{\dd}{\dd r}\left(r^{\frac{1}{2}-\kappa}f\right)\right\lvert^2+\tilde V_a^{h}\frac{r^2+a^2}{\Delta_-}|f|^2\right)\,\dd r
\end{align*}
with $\tilde V_a^{h}$ as in Section~\ref{sec:RealDN}.

\begin{lemma}
	\label{lemma:FunctionalNegativeTwisted}
	Choose $(r_+,a,\ell)$ and $(m,l)$ as in Lemma~\ref{lemma:Vneg}. Then there is a function $f\in C_0^{\infty}(r_+,\infty)$ such that
	\begin{align*}
	\tL_{a}(f)<0.
	\end{align*}
\end{lemma}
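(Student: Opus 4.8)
The plan is to reduce Lemma~\ref{lemma:FunctionalNegativeTwisted} to Lemma~\ref{lemma:FunctionalNegative} by absorbing the twisting into an $m$-independent modification of the potential, to which Remark~\ref{rk:HR_bound} then applies. First I would expand the twisted first-order term. Since
\begin{align*}
r^{-1+2\kappa}\Big|\frac{\dd}{\dd r}\big(r^{1/2-\kappa}f\big)\Big|^2=\Big|\frac{\dd f}{\dd r}\Big|^2+\Big(\tfrac{1}{2}-\kappa\Big)\frac1r\frac{\dd}{\dd r}|f|^2+\Big(\tfrac{1}{2}-\kappa\Big)^2\frac{|f|^2}{r^2},
\end{align*}
and since $f\in C_0^{\infty}(r_+,\infty)$ makes all boundary terms vanish, one integration by parts in the middle term gives
\begin{align*}
\tL_a(f)=\int_{r_+}^{\infty}\frac{\Delta_-}{r^2+a^2}\Big|\frac{\dd f}{\dd r}\Big|^2\,\dd r+\int_{r_+}^{\infty}\big(\tilde V_a^h+F_\kappa\big)\frac{r^2+a^2}{\Delta_-}|f|^2\,\dd r,
\end{align*}
where $F_\kappa$ collects the two explicit twist corrections (rewritten with the weight $\tfrac{r^2+a^2}{\Delta_-}$) and is a continuous function of $r$ on $(r_+,\infty)$ depending on $\kappa$, $a$, $\ell$ but not on $m$.

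Next I would read off from the definition of $\tilde V_a^h$ in Section~\ref{sec:RealDN} that $\tilde V_a^h-\tilde V_a$ is likewise continuous on $(r_+,\infty)$ and independent of $m$: the twist $r^{1/2-\kappa}$ carries no azimuthal dependence, while the only $m$-dependence of $\tilde V_a$ enters through the eigenvalue $\lambda_{ml}$ in $V_0$, which twisting leaves untouched. Putting $F:=F_\kappa+(\tilde V_a^h-\tilde V_a)$, we obtain
\begin{align*}
\tL_a(f)=\LL_a(f)+\int_{r_+}^{\infty}F(r)\,\frac{r^2+a^2}{\Delta_-}|f|^2\,\dd r
\end{align*}
with $F$ continuous on $(r_+,\infty)$ and $m$-free. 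By Remark~\ref{rk:HR_bound}, the conclusion of Lemma~\ref{lemma:Vneg} holds verbatim with $\tilde V_a$ replaced by $\tilde V_a+F$: for the parameters chosen as in Lemma~\ref{lemma:Vneg} and all sufficiently large $|m|$ there is an interval $(R_1,R_2)$ of length $L$ on which $\tilde V_a+F\le -N\,\Delta_- m^2\Xi^2/(r^2+a^2)^2$ at $\omega=\omega_+$.

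Finally I would repeat the bump-function argument in the proof of Lemma~\ref{lemma:FunctionalNegative} unchanged: take $f\in C_0^{\infty}(R_1,R_2)$ with $f\equiv1$ on $[R_1+L/4,R_2-L/4]$ and $|\dd f/\dd r|\le 8/L$, so that on $\supp f$
\begin{align*}
\tL_a(f)=\int_{R_1}^{R_2}\Big(\frac{\Delta_-}{r^2+a^2}\Big|\frac{\dd f}{\dd r}\Big|^2+(\tilde V_a+F)\frac{r^2+a^2}{\Delta_-}|f|^2\Big)\,\dd r;
\end{align*}
here the potential term $(\tilde V_a+F)\tfrac{r^2+a^2}{\Delta_-}|f|^2$ is $\le -Nm^2\Xi^2|f|^2/(r^2+a^2)$ on $(R_1,R_2)$ while the derivative term contributes at most a constant independent of $m$, whence $\tL_a(f)\le C-cNm^2<0$ once $|m|$ is enlarged once more. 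The only step requiring care is the bookkeeping of the first two paragraphs — verifying that every term produced by the twisting (the explicit ones from the expanded twisted derivative and the shift $\tilde V_a^h-\tilde V_a$) is continuous and genuinely $m$-independent — since that is precisely what makes Remark~\ref{rk:HR_bound} applicable; granted this, the argument is identical to the untwisted case.
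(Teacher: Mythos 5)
Your proof is correct, and it rests on the same computation as the paper's: expand the twisted derivative and integrate by parts on the compactly supported $f$ to compare $\tL_a$ with $\LL_a$. What you miss is that the comparison is exact. If you carry out your own bookkeeping to the end, you find $F_\kappa\frac{r^2+a^2}{\Delta_-}=\frac{1}{h}\frac{\dd}{\dd r}\bigl(\frac{\Delta_-}{r^2+a^2}\frac{\dd h}{\dd r}\bigr)$ with $h=r^{-1/2+\kappa}$, which is precisely the negative of $(\Vmod_a-\tilde V_a)\frac{r^2+a^2}{\Delta_-}$ by the definition of $\Vmod_a$ in Appendix~\ref{sec:twisted_derivative}; hence your $F\equiv 0$ and $\tL_a(f)=\LL_a(f)$ for every $f\in C_0^{\infty}(r_+,\infty)$. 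The modified potential was chosen exactly so that this cancellation happens, and the paper's proof is therefore the one-liner: equality of the two functionals on compactly supported functions plus Lemma~\ref{lemma:FunctionalNegative}. Your detour through Remark~\ref{rk:HR_bound} and a re-run of the bump-function estimate is harmless and would even be the right move if the twist corrections did not cancel (so it is the more robust argument), but here it is redundant: once you note $F\equiv 0$ you can cite Lemma~\ref{lemma:FunctionalNegative} directly and you are done.
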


\begin{proof}
	For $f\in C_0^{\infty}(r_+,\infty)$, $\LL_a(f)=\tL_a(f)$ by choice of $\Vmod_a$.
\end{proof}

\subsection{Periodic mode solutions}
\label{sec:RestrReal}

\begin{lemma}
	\label{thm:RestrReal}
 Suppose we have a $\psi\in\Ss_{\mathrm{mod}}(\alpha,\omega,m,l)$ such that $\omega\in\RR$. Then the following statements are true:
\begin{compactenum}
 \item[(i)] We have $m a\Xi-(r_+^2+a^2)\omega=0$, i.\,e. that $\omega=\omega_+(\ell,r_+,a,m)$.
\item[(ii)] We have $am\neq 0$.
\end{compactenum}
\end{lemma}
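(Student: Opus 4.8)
The plan is to extract the statement from the boundary behaviour of the radial function $u$ at the horizon, combined with a conserved-current (Wronskian-type) identity. Recall from Definition~\ref{defn:HRC} and Corollary~\ref{cor:horizon} that, since $\psi$ is a mode solution, its radial part $u$ satisfies the horizon regularity condition, so near $r=r_+$ we have $u=(r-r_+)^{\xi}\rho$ with $\rho$ holomorphic, $\rho(r_+)=1$, and
\begin{align*}
\xi=\im\frac{\Xi am-(r_+^2+a^2)\omega}{\del_r\Delta_-(r_+)}.
\end{align*}
Since $\omega\in\RR$ and $\del_r\Delta_-(r_+)>0$, this $\xi$ is purely imaginary; write $\xi=\im\sigma$ with $\sigma\in\RR$. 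Thus $|u(r)|^2=|\rho(r)|^2\to 1$ as $r\to r_+$, i.e.\ $u$ stays bounded and bounded away from zero at the horizon in Boyer--Lindquist $r$.

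First I would introduce the quantity $Q(r):=\Im(u'\overline u)$, where $'$ is the $r^{\ast}$-derivative. Differentiating and using the radial ODE \eqref{eqn:radial_ODE_prelim} with $\omega\in\RR$ (so that $\omega^2-V$ is real), one finds $Q'=\Im(u''\overline u)=\Im\big((V-\omega^2)|u|^2\big)=0$, so $Q$ is constant on $(r_+,\infty)$. Next I would evaluate the constant at each end. At the horizon, since $\dd r^{\ast}/\dd r=(r^2+a^2)/\Delta_-$ and $u=(r-r_+)^{\im\sigma}\rho$, a direct computation gives $u'=\frac{\Delta_-}{r^2+a^2}\del_r u$ and $\del_r u=\big(\frac{\im\sigma}{r-r_+}+O(1)\big)u$; since $\Delta_-/(r-r_+)\to\del_r\Delta_-(r_+)$, we get $u'\to \im\sigma\,\frac{\del_r\Delta_-(r_+)}{r_+^2+a^2}\,u$ near the horizon, hence
\begin{align*}
\lim_{r\to r_+}Q(r)=\sigma\,\frac{\del_r\Delta_-(r_+)}{r_+^2+a^2}\,\lim_{r\to r_+}|u(r)|^2=\sigma\,\frac{\del_r\Delta_-(r_+)}{r_+^2+a^2}.
\end{align*}
At infinity, the Dirichlet or Neumann boundary condition together with the local analysis of Section~\ref{subsec:local_infinity} forces $u$ to be a real multiple (up to a global constant phase) of a branch with real asymptotics $r^{-1/2\pm\kappa}\rho_{1,2}$, where $\rho_{1,2}\to 1$; since the coefficients of the ODE and the indicial exponents are real when $\omega,\alpha\in\RR$, one may choose $h_1,h_2$ real, so $u=C h_j$ with $C\in\CC$ and $h_j$ real, giving $Q=\Im(C h_j'\,\overline{C}h_j)=|C|^2\Im(\overline{C}C/|C|^2)h_j'h_j\cdot 0$; more carefully $u'\overline u=|C|^2 h_j'h_j\in\RR$, so $\lim_{r\to\infty}Q(r)=0$. (One must check $h_j'h_j$ and the relevant boundary terms actually vanish in the limit; for Dirichlet $u\sim r^{-1/2-\kappa}$ decays, and for Neumann the twisted derivative vanishes — this uses the corollaries after the infinity analysis.) Comparing the two evaluations of the constant $Q$ yields $\sigma=0$, i.e.\ $\Xi am-(r_+^2+a^2)\omega=0$, which is~(i). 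For (ii): if $am=0$, then (i) forces $\omega=0$, and then $V-\omega^2=V\geq 0$ — indeed for $\omega=0$, $\omega_+=0$, and inspecting \eqref{eqn:potential_rewritten} or the bound $\lambda_{ml}+a^2\omega^2\geq\Xi^2|m|(|m|+1)\geq 0$ shows $V_0\geq 0$, while $V_++V_\alpha\geq \frac{2-\alpha}{\ell^2}\frac{\Delta_-}{(r^2+a^2)^2}r^2$ as in the proof of Lemma~\ref{lemma:HR_positivity}; in any case one checks $V\ge 0$ when $\omega=am=0$ — so by the unnamed lemma just before Lemma~\ref{lemma:Vneg} (``$V-\omega_0^2\ge 0$ and Dirichlet $\Rightarrow u=0$''; the analogous statement holds in the Neumann case using twisted derivatives), we would get $u\equiv 0$, contradicting that $\psi$ is a (nontrivial) mode solution. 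Hence $am\neq 0$.

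The main obstacle I anticipate is the careful treatment of the boundary terms at infinity: one must verify that the conserved current $Q$ genuinely vanishes there for \emph{both} the Dirichlet and the Neumann boundary conditions. For Dirichlet this is immediate from the decay $u\sim r^{-1/2-\kappa}$, but for Neumann, where $u\sim r^{-1/2+\kappa}$ does not decay, one instead needs the twisted-derivative formulation: rewrite $Q$ in terms of $r^{\frac12-\kappa}u$ and its derivative and use the definition of the Neumann boundary condition to see the relevant bilinear boundary term still tends to zero. A secondary, more bookkeeping-level point is justifying that the holomorphic $\rho$ at the horizon contributes no imaginary correction to the limit of $Q$ — this is precisely Lemma~\ref{lemma:SmoothHorizon} and Corollary~\ref{cor:horizon}, which give $u=(r-r_+)^{\im\sigma}\rho$ with $\rho$ real-analytic and $\rho(r_+)=1$, so the leading behaviour of $u'\overline u$ at the horizon is exactly as computed. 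Everything else is a routine ODE computation.
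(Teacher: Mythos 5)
For part (i) your argument is essentially the paper's: the same current $Q_T=\Im(u'\overline u)$, constant by the radial ODE because $\omega\in\RR$ makes the potential real, evaluated at the horizon through the regularity condition and shown to vanish at infinity (directly from the decay in the Dirichlet case, and via the rewriting $Q_T=\Im\bigl(r^{-1/2+\kappa}\tfrac{\dd}{\dd r^{\ast}}(r^{1/2-\kappa}u)\,\overline{u}\bigr)$ together with Definition~\ref{defn:BdyCondsNeumann} in the Neumann case). The only point to tidy is at the horizon: the regularity condition gives $u=C(r-r_+)^{\xi}\rho$ with $\rho(r_+)=1$, so instead of normalising $|u(r_+)|=1$ you should note that $u(r_+)=0$ would force $u\equiv0$ by Lemma~\ref{lemma:uniqueness_ODE}; this is exactly how the paper closes that step, and your observation that $u$ is a complex multiple of a single real branch near infinity is a harmless variant of the paper's limit computation.

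Part (ii), however, has a genuine gap. You reduce to $\omega=0$ and then claim $V\geq0$ in order to apply the unnamed lemma from the outline (``$V-\omega_0^2\geq0$ and Dirichlet imply $u=0$''). That positivity is false in general: by (\ref{eqn:Potential_asymptotics}), $(r^2+a^2)\tilde V/\Delta_-\rightarrow\ell^{-2}(2-\alpha)$, so for every $\alpha\in(2,9/4)$ --- a range admissible for both boundary conditions --- one has $V<0$ on a neighbourhood of infinity even when $\omega=am=0$ (already in Schwarzschild--AdS); in addition, when $a>r_+$ (permitted by Lemma~\ref{lemma:r+}) the term $(r^2-a^2)2Mr$ in $V_+$ combined with the mass term can make $V$ negative near the horizon. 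The intermediate inequality you quote, $V_++V_\alpha\geq\frac{2-\alpha}{\ell^2}\frac{\Delta_-}{(r^2+a^2)^2}r^2$, would not help even if it held, since its right-hand side is itself negative for $\alpha>2$. Repairing your route would require replacing pointwise positivity by a functional, Hardy-type positivity statement in the spirit of Lemmas~\ref{lemma:Hardy} and~\ref{lemma:HR_positivity}; but the lower bounds used there rest on the $m^2$-terms in the potential, which are exactly what is lost when $m=0$, so this is not a routine check. Note also that the paper does not argue this way at all: it deduces (ii) directly from the identity $(r_+^2+a^2)\omega=\Xi am$ established in (i), without any appeal to the sign of the potential.
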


\begin{proof}
 We wish to show $(i)$. First let us only deal with the Dirichlet branch.
  Then $u$ is decaying at infinity. Define the microlocal energy current
\begin{align*}
 Q_T:=\Im\left(u'\overline{u}\right).
\end{align*}
We have $Q_T(\infty)=0$. Moreover
\begin{align*}
 \frac{\dd Q_T}{\dd r}&=\frac{\dd\sta r}{\dd r}\Im(u''\overline{u}+|u'|^2)=0
\end{align*}
by the radial ODE.
By Lemma~\ref{lemma:SmoothHorizon}, we obtain
\begin{align*}
 u'&=\frac{\dd r}{\dd\sta r}\left(\frac{\xi}{r-r_+}u(r)+(r-r_+)^{\xi}\rho'(r)\right)=\frac{\Delta_-}{r^2+a^2}\left(\frac{\xi}{r-r_+}u(r)+(r-r_+)^{\xi}\rho'(r)\right)
\end{align*}
and so
\begin{align*}
u'(r_+)=\im\frac{\Xi am-(r_+^2+a^2)\omega}{r_+^2+a^2}u(r_+).
\end{align*}
We conclude
\begin{align}
\label{eqn:microlocal_horizon}
 0=Q_T(r_+)=(r_+^2+a^2)\Im\left(\frac{\dd u}{\dd\sta r}(r_+)\overline{u(r_+)}\right)=(am\Xi-(r_+^2+a^2)\omega)|u(r_+)|^2.
\end{align}
If $u(r_+)=0$, then $u$ vanishes identically by Lemma~\ref{lemma:uniqueness_ODE}. Hence we conclude that
\begin{align*}
m a\Xi-(r_+^2+a^2)\omega=0.
\end{align*}
For the Neumann branch of the solution we observe that
\begin{align*}
Q_T=\Im\left(r^{-\frac{1}{2}+\kappa}\frac{\dd}{\dd\sta r}\left(r^{\frac{1}{2}-\kappa}u\right)\overline u\right).
\end{align*}
From the boundary condition, we immediately get  $Q_T(\infty)=0$ as well and the rest follows as above.

Part (ii) follows immediately from $(r_+^2+a^2)\omega=\Xi am$.
\end{proof}

\section{Growing mode solutions satisfying Dirichlet boundary conditions}
\label{sec:Growing_Dirichlet}

\subsection{Existence of real mode solutions}
\label{sec:Real}

We now fix $\ell>0$, $\alpha<9/4$ and $0<r_+<\ell$. 
 Recall the variational functional
\begin{align*}
\LL_a(f):=\int_{r_+}^{\infty}\left(\frac{\Delta_-}{r^2+a^2}\left\lvert\frac{\dd f}{\dd r}\right\lvert^2+\tilde V_a\frac{r^2+a^2}{\Delta_-}|f|^2\right)\,\dd r.
\end{align*}
for $\omega=\omega_+=am\Xi/(r_+^2+a^2)$.
 Define
\begin{align*}
\Aa:=\{a>0\,:\,\exists f\in C_0^{\infty}:\,\LL_a(f)<0\}.
\end{align*}
By Lemma~\ref{lemma:FunctionalNegative}, there is an $m_0$ such that $\Aa$ is non-empty for all $|m|\geq m_0$ and $l=m$. Fix $m$ and $l$ henceforth.

If the bound $r_+^2\geq |a|\ell$ is satisfied, then $\LL_a(f)\geq 0$ for all compactly supported $f$ by Lemma~\ref{lemma:HR_positivity}. Hence $\Aa$ is bounded below by a strictly positive infimum. Moreover $\Aa$ is open as $a\mapsto \LL_{a}(f)$ is continuous for any fixed $f$.

\begin{rk}
	We restrict ourselves to $a>0$, but we could have defined the set $\Aa$ to also include negative values of $a$.
\end{rk}

Our aim is to show that $\LL_a$ has a minimiser for $a\in \Aa$. We will apply the natural steps of the direct method of the calculus of variations. First, we will specify an appropriate function space, then we will show that the functional obeys a coercivity condition and that the functional is weakly lower semicontinuous. The existence of a minimiser follows by an application of compactness results.

For $U\subseteq(r_+,\infty)$ define the weighted norm
\begin{align*}
\norm{f}^2_{\uL^2(U)}=\int_U\frac{1}{r^2}|f|^2\,\dd r
\end{align*}
and the space
\begin{align*}
\uL^2(U):=\{f\mathrm{~measurable~}:~\norm{f}_{\uL^2(U)}<\infty\}.
\end{align*}
This is clearly a Hilbert space with the natural inner product $(\cdot,\cdot)_{\uL^2(U)}$.

For $U\subseteq(r_+,\infty)$, we define the weighted Sobolev space $\uH^1$ via the norm
\begin{align*}
\norm{f}_{\uH^1(U)}^2:=\int_U\left(|f|^2+r(r-r_+)\left\lvert\frac{\dd f}{\dd r}\right\lvert^2\right)\,\dd r
\end{align*}
Note that for $U\subseteq(r_+,\infty)$ compact, the $\uH^1$ norm is equivalent to the standard Sobolev norm. As usual, let $\uH^1_0(U)$ be the completion of $C_0^{\infty}(U)$ under $\norm{\cdot}_{\uH^1(U)}$.

\begin{lemma}
	\label{lemma:trace}
	Let $u\in\uH_0^1(r_+,\infty)$. Then is $u$ is also in $C(r_++1,\infty)$ (after possibly changing it on a set of measure zero) and 
	\begin{align*}
	\lim_{r\rightarrow\infty } u(r)=\lim_{r\rightarrow\infty }r^{1/2-\kappa} u(r)=0
	\end{align*}
	for all $\kappa>0$.
\end{lemma}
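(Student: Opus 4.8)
The plan is to establish the embedding $\uH^1_0(r_+,\infty)\hookrightarrow C(r_++1,\infty)$ together with the stated decay via a standard weighted fundamental-theorem-of-calculus argument, working first for $u\in C_0^\infty(r_+,\infty)$ and then passing to the limit. First I would note that on the region $r\geq r_++1$ the weight $r(r-r_+)$ is comparable to $r^2$, so that $\int_{r_++1}^\infty r^2|u'|^2\,\dd r<\infty$ for $u\in\uH^1_0$. The elementary inequality, for $r_1<r_2$ in $(r_++1,\infty)$,
\begin{align*}
|u(r_2)-u(r_1)|\leq\int_{r_1}^{r_2}|u'|\,\dd r\leq\left(\int_{r_1}^{r_2}\frac{1}{r^2}\,\dd r\right)^{1/2}\left(\int_{r_1}^{r_2}r^2|u'|^2\,\dd r\right)^{1/2}\leq\left(\frac{1}{r_1}\right)^{1/2}\left(\int_{r_1}^{r_2}r^2|u'|^2\,\dd r\right)^{1/2},
\end{align*}
shows that $u$ (for $u$ smooth, hence after completion for a representative of a general element) is uniformly Cauchy as $r_1,r_2\to\infty$; this simultaneously gives the continuity on $(r_++1,\infty)$ and the existence of $\lim_{r\to\infty}u(r)$.

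Next I would argue that this limit must be $0$. Since $u\in\uL^2(r_+,\infty)$ we have $\int_{r_++1}^\infty r^{-2}|u|^2\,\dd r<\infty$; as $\int^\infty r^{-2}\,\dd r<\infty$ too, if $\lim_{r\to\infty}u(r)=c\neq 0$ then $|u|^2\geq |c|^2/2$ for large $r$ and $\int r^{-2}|u|^2\,\dd r$ would still converge — so this by itself is not enough, and instead I would use $u\in C_0^\infty$ on the dense subset: for such $u$ the limit is literally $0$, and since $u\mapsto(\text{value of the continuous representative at a fixed }\hat r)$ and more relevantly $u\mapsto\sup_{r\geq R}|u(r)|$ are controlled by $\norm{u}_{\uH^1}$ via the Cauchy estimate above (with $r_1=R$, letting $r_2\to\infty$, gives $\sup_{r\geq R}|u(r)-u(R)|\leq R^{-1/2}\norm{u'}_{L^2(R,\infty;r^2\,\dd r)}$, and $|u(R)|$ itself is bounded once one more application pins it down), the map $u\mapsto\lim_{r\to\infty}u(r)$ is a bounded linear functional on $\uH^1_0$ vanishing on the dense set $C_0^\infty$, hence identically zero. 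Finally, $r^{1/2-\kappa}u(r)\to 0$ is immediate from $u(r)\to 0$ when $\kappa\geq 1/2$; for $0<\kappa<1/2$ one writes $r^{1/2-\kappa}u(r)=r^{-\kappa}\cdot r^{1/2}u(r)$ and it suffices to show $r^{1/2}u(r)$ is bounded, which follows by the same Cauchy–Schwarz estimate bounding $|u(r)|\leq r^{-1/2}\norm{u'}_{L^2(r,\infty;s^2\,\dd s)}+0$ since the tail of the $L^2$ norm is bounded — in fact it tends to $0$, giving $r^{1/2}u(r)\to 0$ and hence the stronger conclusion with no restriction on $\kappa$.

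The main obstacle is the bookkeeping of the weight near $r=r_+$: the $\uH^1$ norm degenerates there (the coefficient $r(r-r_+)$ vanishes at $r_+$), so one cannot naively claim continuity or a trace at $r_+$, which is precisely why the statement only asserts continuity on $(r_++1,\infty)$; I would be careful to restrict all the above estimates to $r\geq r_++1$ where the weight is uniformly positive, and to invoke density of $C_0^\infty(r_+,\infty)$ in $\uH^1_0$ only in the form "bounded linear functional vanishing on a dense subspace," rather than trying to pass the pointwise limit through in a region where the norm is weak. A secondary point to get right is that the relevant linear functionals ($u\mapsto u(\hat r)$ for fixed $\hat r>r_++1$, and $u\mapsto\lim_{r\to\infty}u(r)$) are genuinely continuous on $\uH^1_0$; this is exactly what the displayed Cauchy–Schwarz chain delivers, so once that estimate is in place the rest is routine.
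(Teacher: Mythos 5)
Your proposal is correct and follows essentially the same route as the paper: density of $C_0^\infty$ in $\uH^1_0$ combined with the fundamental theorem of calculus and Cauchy--Schwarz against the weight $r(r-r_+)\sim r^2$ on $r\geq r_++1$ (the paper packages this as one estimate of $r^{\beta}(u-u_m)$ for $\beta<1/2$ along an approximating sequence with pointwise convergence at a fixed $R$, whereas you split it into existence of the limit, its vanishing via a bounded functional that is zero on the dense subspace, and a tail estimate). Your final tail estimate even yields the slightly stronger conclusion $r^{1/2}u(r)\to 0$, which implies the lemma for every $\kappa>0$.
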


\begin{proof}
	Establishing the embedding $\uH_0^1(r_++1,\infty)\subseteq C(r_++1,\infty)$ is standard. Now take a sequence $(u_m)$ in $C_0^{\infty}$ such that $u_m\rightarrow u$ in $\uH^1_0$ and pointwise almost everywhere. Choose an $R$ such that $(u_m)$ converges pointwise there. For any $\beta<1/2$, we have
	\begin{align*}
	\left\lvert\lim_{r\rightarrow\infty}r^{\beta} u(r)\right\lvert&\leq R^{\beta}\left\lvert  u-u_m\right\lvert(R)+\int_R^{\infty}\left\lvert\del_r\left(r^{\beta}( u-u_m)\right)\right\lvert\,\dd r\\
		&\leq R^{\beta}\left\lvert  u-u_m\right\lvert(R)+C''\norm{ u-u_m}_{\uH^1}.
	\end{align*}
	Therefore, the claim follows.
\end{proof}

To establish a coercivity-type inequality, we use the Hardy inequality of Lemma~\ref{lemma:Hardy}:

\begin{lemma}
\label{lemma:VariationInequalityD}
Let $a\in \Aa$ be fixed. There exist constants $r_+<B_0<B_1<\infty$ and constants $C_0,C_1,C_2>0$, such that, for sufficiently large $m$, we have for all smooth functions $f$ with $fr^{1/2}=o(1)$ at infinity that
\begin{align*}
\int_{r_+}^{\infty}\left(\frac{\Delta_-}{r^2+a^2}\left\lvert\frac{\dd f}{\dd r}\right\lvert^2+C_01_{[B_0,B_1]^c}|f|^2\right)\,\dd r\leq C_1\int_{B_0}^{B_1}|f|^2\,\dd r+C_2\LL_{a}(f).
\end{align*}
 Here we can choose $C_2=1$ if $\alpha<2$.
\end{lemma}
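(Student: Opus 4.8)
The plan is to split the integral over $(r_+,\infty)$ into a near‑horizon piece $(r_+,B_0)$, a compact middle piece $[B_0,B_1]$, and a far piece $(B_1,\infty)$, and to show that, after possibly enlarging $m$, the part of $\LL_a$ living over the two outer pieces is already coercive, while over $[B_0,B_1]$ the weighted potential is merely bounded, so its (possibly large) negativity gets absorbed into the term $C_1\int_{B_0}^{B_1}|f|^2$. Throughout $\omega=\omega_+$, and we use the rewriting (\ref{eqn:potential_rewritten}), which expresses $\tilde V_a$ as $\frac{\Delta_- m^2\Xi^2}{(r^2+a^2)^2}$ times the bracketed expression there; in particular the weighted potential $\tilde V_a\frac{r^2+a^2}{\Delta_-}=\frac{m^2\Xi^2}{r^2+a^2}[\,\text{bracket}\,]$ extends smoothly up to $r=r_+$ (the only nonsmooth‑looking term, the one with $(r-r_+)/\Delta_-$, is smooth there because $r_-\neq r_+$).

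First I would determine the behaviour of $\tilde V_a\frac{r^2+a^2}{\Delta_-}$ at the two ends. At $r=r_+$, evaluating the bracket and letting $m\to\infty$: every $O(m^{-2})$ term drops out, the term $\lambda_{mm}/(\Xi^2 m^2)-1$ tends to $0$ by Lemma~\ref{lemma:GroundState}, the last term tends to $a^2\ell^2/(r_+^2+a^2)^2$, and the boxed term is $\frac{r_+^4-a^2\ell^2}{(r_+^2+a^2)^2}$, so the bracket at $r_+$ tends to $r_+^4/(r_+^2+a^2)^2>0$. Hence for all sufficiently large $m$ there are $B_0>r_+$ and $c'>0$ with $\tilde V_a\frac{r^2+a^2}{\Delta_-}\geq c'$ on $(r_+,B_0]$. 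As $r\to\infty$, the same rewriting gives $\tilde V_a\frac{r^2+a^2}{\Delta_-}\to\frac{2-\alpha}{\ell^2}$ (the $(1)$‑ and $(4)$‑terms contribute $\frac{2}{\ell^2}$ and $-\frac{\alpha}{\ell^2}$; all others are of lower order in $r$). Since $\alpha<9/4$ we have $\frac{2-\alpha}{\ell^2}>-\frac{1}{4\ell^2}$, so I fix $\eta\in(0,\min\{1,9-4\alpha\})$ with $\frac{2-\alpha}{\ell^2}>-\frac{1-\eta}{4\ell^2}$ and choose $B_1>B_0$ with $\tilde V_a\frac{r^2+a^2}{\Delta_-}\geq-\frac{1-\eta}{4\ell^2}$ on $[B_1,\infty)$; if moreover $\alpha<2$, the limit $\frac{2-\alpha}{\ell^2}$ is positive, so $B_1$ can be taken with $\tilde V_a\frac{r^2+a^2}{\Delta_-}\geq c_\infty>0$ on $[B_1,\infty)$. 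On the compact set $[B_0,B_1]\subset(r_+,\infty)$ the weighted potential is continuous, hence bounded below by some $-M_1\leq 0$.

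Now I would assemble the inequality. On $(r_+,B_0)$ and $[B_0,B_1]$ keep the full gradient term $\frac{\Delta_-}{r^2+a^2}\left|\frac{\dd f}{\dd r}\right|^2$; the potential contributes $\geq c'\int_{r_+}^{B_0}|f|^2$ on the first interval and $\geq -M_1\int_{B_0}^{B_1}|f|^2$ on the second. On $(B_1,\infty)$: if $\alpha<2$, the potential contributes $\geq c_\infty\int_{B_1}^{\infty}|f|^2$ and the full gradient term is retained; if $2\leq\alpha<9/4$, I instead write $\int_{B_1}^{\infty}\frac{\Delta_-}{r^2+a^2}\left|\frac{\dd f}{\dd r}\right|^2$ as $\frac{\eta}{2}$ of itself plus $(1-\frac{\eta}{2})$ of itself and apply the Hardy inequality of Lemma~\ref{lemma:Hardy} (legitimate since $fr^{1/2}=o(1)$ at infinity) to the second summand, obtaining a lower bound $\frac{\eta}{2}\int_{B_1}^{\infty}\frac{\Delta_-}{r^2+a^2}\left|\frac{\dd f}{\dd r}\right|^2+\frac{\eta}{8\ell^2}\int_{B_1}^{\infty}|f|^2$ for the far contribution of $\LL_a$. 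Summing the three pieces (using $\frac{\eta}{2}\leq 1$ to get a uniform gradient coefficient) and rearranging yields the claim: for $\alpha<2$ one may take $C_2=1$, $C_0=\min\{c',c_\infty\}$, $C_1=M_1$; for $2\leq\alpha<9/4$ one takes $C_2=2/\eta$ and rescales $C_0,C_1$ accordingly.

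The compact‑middle bound and the bookkeeping in the last step are routine; the substance is the endpoint analysis of the potential. I expect the main obstacle to be the sign at the horizon — verifying that the bracket of (\ref{eqn:potential_rewritten}) tends to $r_+^4/(r_+^2+a^2)^2>0$ at $r=r_+$ as $m\to\infty$ — since this is exactly what makes the near‑horizon region coercive (and explains the hypothesis "for sufficiently large $m$"), together with the matching at infinity, where the Breitenlohner–Freedman bound $\alpha<9/4$ is precisely what lets the sharp constant $1/(4\ell^2)$ in Lemma~\ref{lemma:Hardy} dominate the asymptotic value $(2-\alpha)/\ell^2$ of the weighted potential.
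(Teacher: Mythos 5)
Your proposal is correct and follows essentially the same route as the paper: positivity of the weighted potential $\tfrac{r^2+a^2}{\Delta_-}\tilde V_a$ near the horizon for large $|m|$, the limit $(2-\alpha)/\ell^2$ at infinity, a case split at $\alpha=2$ with the Hardy inequality of Lemma~\ref{lemma:Hardy} absorbing the asymptotic negativity when $2\le\alpha<9/4$ (your margin $\eta$ playing the r\^ole of the paper's $\epsilon$, giving $C_2=2/\eta$ versus $C_2=1$ for $\alpha<2$), and the compact middle region absorbed into $C_1$. The only cosmetic difference is that you obtain the horizon positivity from the rewritten potential (\ref{eqn:potential_rewritten}) together with Lemma~\ref{lemma:GroundState} (correctly restoring the factor $(r_+^2+a^2)^{-2}$ that is dropped in the last line of that display, so that your limit $r_+^4/(r_+^2+a^2)^2$ agrees with the paper's direct evaluation of $(r_+^2+a^2)(V_0-\omega_+^2)/\Delta_-$ at $r_+$ via (\ref{eqn:bound_below})), which is an equivalent computation.
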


\begin{rk}
Note that the dependence of the expression on $\alpha$ is via $\tilde V$ in $\LL_a$. Recall from Section~\ref{subsec:Spheroidal} that
\begin{align*}
\tilde V=V-\omega^2=V_++V_0+V_{\alpha}-\omega^2.
\end{align*}
\end{rk}

\begin{proof}
First, we have to study the potential again:
\begin{align*}
(r_+^2+a^2)\frac{V_+}{\Delta_-}(r_+)&=\frac{3\frac{r_+^4}{\ell^2}+r_+^2\left(1+\frac{a^2}{\ell^2}\right)-a^2}{(r_+^2+a^2)^2}\\
(r_+^2+a^2)\frac{V_{\alpha}}{\Delta_-}(r_+)&=-\frac{\alpha}{\ell^2}\frac{1}{r_+^2+a^2}(r_+^2+\Theta(\alpha)a^2)\\
(r_+^2+a^2)\frac{V_0-\omega_+^2}{\Delta_-}(r_+)&=\frac{\lambda+\omega_+^2a^2-2m\omega a\Xi}{r_+^2+a^2}\\
	&\geq \frac{\Xi^2m^2}{(r_+^2+a^2)^3}r_+^4
\end{align*}
Thus for sufficiently large $|m|$, the expression is greater than zero. Furthermore, note the asymptotics
\begin{align}
\label{eqn:Potential_asymptotics}
(r^2+a^2)\frac{\tilde V}{\Delta_-}\rightarrow\ell^{-2}(2-\alpha)
\end{align}
as $r\rightarrow\infty$.

We will deal with the cases $\alpha<2$ and $\alpha\geq 2$ separately. First, let $\alpha<2$. The function $\frac{r^2+a^2}{\Delta_-}\tilde V$ is only nonpositive on an interval $[R_1,R_2]$. Choose constants such that $r_+<B_1<R_1<R_2<B_2<\infty$. Set $C_0$ to be the minimum of $\frac{r^2+a^2}{\Delta_-}\tilde V$ on $(r_+,\infty)\backslash[B_1,B_2]$ and set $-C_1$ to be its minimum on $[B_1,B_2]$. This immediately yields the result.

Now let $\alpha \geq 2$. There exist $R_1,R_2$ such that $\frac{r^2+a^2}{\Delta_-}\tilde V$ is positive on $(r_+,R_1)$ and
\begin{align*}
\frac{r^2+a^2}{\Delta_-}\tilde V>-\frac{1}{4\ell^2}(1-\epsilon)
\end{align*}
on $(R_2,\infty)$ for an $\epsilon>0$ because of (\ref{eqn:Potential_asymptotics}). Hence
\begin{align*}
\int_{R_2}^{\infty}\frac{r^2+a^2}{\Delta_-}\tilde V|f|^2\,\dd r&>-\frac{1-\epsilon/2}{4\ell^2}\int_{R_2}^{\infty}|f|^2\,\dd r+\frac{\epsilon}{8\ell^2}\int_{R_2}^{\infty}|f|^2\,\dd r\\
	&\geq -\left(1-\frac{\epsilon}{2}\right)\int_{R_2}^{\infty}\frac{\Delta_-}{r^2+a^2}\left\lvert\frac{\dd f}{\dd r}\right\lvert^2\,\dd r+\frac{\epsilon}{8\ell^2}\int_{R_2}^{\infty}|f|^2\,\dd r
\end{align*}
by Lemma~\ref{lemma:Hardy}. Choose $B_1,B_2$ as before. Let $C$ be the minimum of $\frac{r^2+a^2}{\Delta_-}\tilde V$ on $(r_+,B_1)$. Let $\epsilon C_0/2$ be the minimum of $C$ and $\epsilon/(8\ell^2)$. Moreover, set  $-\epsilon C_1/2$ to be the minimum of $\frac{r^2+a^2}{\Delta_-}\tilde V$ on $[B_1,B_2]$, we obtain
\begin{align*}
\int_{r_+}^{\infty}\left(\epsilon \frac{\Delta_-}{r^2+a^2}\left\lvert\frac{\dd f}{\dd r}\right\lvert^2+\epsilon C_01_{[B_0,B_1]^c}|f|^2\right)\,\dd r\leq \epsilon C_1\int_{B_0}^{B_1}|f|^2\,\dd r+\LL_{a}(f)
\end{align*}
and hence the inequality.
\end{proof}

\begin{lemma}
	\label{lemma:SemictsD}
	The functional $\LL_a$ is weakly lower semicontinuous in $\uH^1(r_+,\infty)$ when restricted to functions of untit $\uL^2$ norm.
\end{lemma}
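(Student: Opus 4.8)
The plan is to split the functional into the gradient term, which is non-negative and convex, and a potential term, which I will show is weakly continuous on the unit $\uL^2$-sphere. For the gradient term, I would write
\begin{align*}
\LL_a^{\mathrm{grad}}(f):=\int_{r_+}^{\infty}\frac{\Delta_-}{r^2+a^2}\left\lvert\frac{\dd f}{\dd r}\right\rvert^2\,\dd r
\end{align*}
and recall that the map $f\mapsto \dd f/\dd r$ is weakly continuous from $\uH^1(r_+,\infty)$ into the weighted $L^2$-space with weight $\Delta_-/(r^2+a^2)$ (which is comparable to $r(r-r_+)$ near the endpoints and bounded below on compacta, so this is the natural pairing for $\uH^1$). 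Since $\LL_a^{\mathrm{grad}}$ is the squared norm of a bounded linear functional composed with this map, it is convex and strongly continuous, hence weakly lower semicontinuous; this is the standard lower-semicontinuity of norms under weak convergence.

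The potential term $\int_{r_+}^{\infty}\tilde V_a\,\frac{r^2+a^2}{\Delta_-}\,|f|^2\,\dd r$ is the one requiring care, and here is where the bulk of the work lies. The weight $w(r):=\tilde V_a(r)(r^2+a^2)/\Delta_-$ is continuous on $(r_+,\infty)$, extends continuously to $r=r_+$ (by the computations of Lemma~\ref{lemma:VariationInequalityD}), and tends to $\ell^{-2}(2-\alpha)$ at infinity by (\ref{eqn:Potential_asymptotics}); in particular $w$ is bounded. Given a sequence $f_n\rightharpoonup f$ weakly in $\uH^1$ with $\norm{f_n}_{\uL^2}=1$, the compact embedding $\uH^1(K)\hookrightarrow L^2(K)$ on compact $K\subseteq(r_+,\infty)$ (equivalence of norms plus Rellich) gives $f_n\to f$ strongly in $L^2_{\mathrm{loc}}$, hence $\int_K w|f_n|^2\to\int_K w|f|^2$ for each compact $K$. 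The issue is the tails near $r_+$ and near $\infty$: I would control the tail near $r=r_+$ using that $w$ is bounded there while the $\uL^2$-norm (which weights by $r^{-2}$, comparable to a constant near $r_+$) controls $\int |f_n|^2$ on $(r_+,r_++\delta)$ uniformly and smallly for $\delta$ small, using $\sup_n\norm{f_n}_{\uH^1}<\infty$ via Lemma~\ref{lemma:trace}-type estimates; and the tail near infinity using the coercivity inequality of Lemma~\ref{lemma:VariationInequalityD} together with the unit-$\uL^2$ constraint and weak convergence to bound $\int_{B_1}^\infty |f_n|^2$, so that $\int_{R}^\infty w|f_n|^2$ is uniformly small for $R$ large. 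Combining these, the potential term converges, so it is in fact weakly \emph{continuous} on the unit sphere, and $\LL_a=\LL_a^{\mathrm{grad}}+(\text{potential term})$ is weakly lower semicontinuous there.

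\textbf{Main obstacle.} The delicate point is the uniform smallness of $\int_{r_+}^{r_++\delta}|f_n|^2\,\dd r$ and $\int_R^\infty |f_n|^2\,\dd r$ for the weakly convergent sequence: weak convergence alone does not prevent mass from escaping to the endpoints. Near infinity this is handled by Lemma~\ref{lemma:VariationInequalityD}, which converts an $L^2$ tail bound into a bound by $\LL_a(f_n)$ plus a compactly-supported piece — and $\LL_a(f_n)$ is bounded along a minimising sequence, while the compact piece is controlled by strong $L^2_{\mathrm{loc}}$ convergence. Near $r_+$, since $w$ is merely bounded (not decaying) but the measure $\dd r$ there carries no singular weight and $\|f_n\|_{\uH^1}$ bounds $\int r(r-r_+)|f_n'|^2$, one uses a one-dimensional Hardy/trace estimate to see $\int_{r_+}^{r_++\delta}|f_n|^2\,\dd r \le C\delta(\,\|f_n\|_{\uH^1}^2 + |f_n(r_++\delta)|^2\,)$, and the boundary value $f_n(r_++\delta)$ is controlled by $L^2_{\mathrm{loc}}$ convergence on a slightly larger compact set. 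Assembling these tail estimates carefully — and verifying that along a minimising sequence $\LL_a(f_n)$ is indeed bounded so the lemma applies — is the technical heart; everything else is the routine convexity-plus-Rellich argument.
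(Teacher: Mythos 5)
There is a genuine gap in your treatment of the tail at infinity. From Lemma~\ref{lemma:VariationInequalityD} together with the unit $\uL^2$ constraint you only obtain a \emph{bound} on $\int_{B_1}^{\infty}|f_n|^2\,\dd r$, not uniform smallness of $\int_R^{\infty}|f_n|^2\,\dd r$ as $R\to\infty$, and since the weight $w=\tilde V_a\,(r^2+a^2)/\Delta_-$ tends to the nonzero constant $\ell^{-2}(2-\alpha)$ (for $\alpha\neq 2$), boundedness of the tail does not make $\int_R^{\infty}w|f_n|^2\,\dd r$ small. Worse, uniform smallness is actually false under your hypotheses: take $g_n$ equal to a plateau of height $n^{-1/2}$ on $[n,2n]$ with ramps of width $\sim n$ on either side. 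Then $\int|g_n|^2\,\dd r\sim 1$, $\int r(r-r_+)|g_n'|^2\,\dd r\sim 1$, while $\norm{g_n}_{\uL^2}\to 0$, so $f_n$ obtained by adding $g_n$ to a fixed compactly supported $f$ (and renormalising) stays bounded in $\uH^1$, has unit $\uL^2$ norm, and converges weakly to $f$, yet its potential term converges to $\int w|f|^2\,\dd r+\ell^{-2}(2-\alpha)\lim\int|g_n|^2\,\dd r$. Hence the potential term is \emph{not} weakly continuous on the unit sphere. For $\alpha<2$ this discrepancy has the harmless sign (the liminf only increases, and you can simply use Fatou on the region where $w\geq 0$ instead of claiming continuity), but for $2<\alpha<9/4$ the weight has a negative tail (bounded below only by $-\tfrac{1}{4\ell^2}$, cf.\ the proof of Lemma~\ref{lemma:VariationInequalityD}), so escaping mass can strictly \emph{decrease} the potential term, and your decomposition ``wlsc gradient $+$ weakly continuous potential'' no longer yields lower semicontinuity of $\LL_a$. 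The missing ingredient is exactly the Hardy inequality of Lemma~\ref{lemma:Hardy}: the escaping negative potential mass must be absorbed into the simultaneously escaping gradient term (as is done in Lemma~\ref{lemma:VariationInequalityD} and Lemma~\ref{lemma:HR_positivity}), i.e.\ one should apply the semicontinuity argument to $\LL_a$ as a whole (or to a Hardy-corrected splitting), not to the two pieces separately.

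Your near-horizon tail estimate and the reduction to sequences with $\LL_a(f_n)$ bounded (so that Lemma~\ref{lemma:VariationInequalityD} applies) are fine, and mirror what the paper does in Lemma~\ref{lemma:RegMinD}. Note, however, that the paper's own proof of this lemma does not decompose the functional at all: it invokes the standard convexity-in-the-derivative argument of Evans \S 8 for the full integrand, with boundedness below supplied by the coercivity estimate under the norm constraint and the $r$-weights compensating the infinite measure of $(r_+,\infty)$; if you repair the tail issue as above, your argument becomes a more explicit version of that, but as written the step asserting weak continuity of the potential term is where it fails.
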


\begin{proof}
	As the functional is convex in the derivative, the statement is standard and a proof can be extracted from \citep[][\textsection 8]{Evans}. We note that the boundedness from below comes from the norm constraint. The $r$ weight deals with $(r_+,\infty)$ having non-finite measure.
\end{proof}

\begin{lemma}
\label{lemma:RegMinD}
Let $a\in\Aa$. Then there exists an $f_{a}\in \uH_{0}^1(r_+,\infty)$ with unit $\uL^2(r_+,\infty)$ norm such that $\LL_{a}$ achieves its infimum over 
\begin{align*}
\{f\in \uH_0^1(r_+,\infty)\,:\,\norm{f}_{\uL^2}=1\}
\end{align*}
on $f_{a}$.
\end{lemma}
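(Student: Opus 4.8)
The plan is to run the direct method of the calculus of variations on the constrained minimisation problem, using the three ingredients established just above: the coercivity-type inequality of Lemma~\ref{lemma:VariationInequalityD}, the weak lower semicontinuity of Lemma~\ref{lemma:SemictsD}, and the trace statement of Lemma~\ref{lemma:trace} which guarantees that elements of $\uH^1_0(r_+,\infty)$ decay fast enough for $\LL_a$ and the Hardy inequality to apply. First I would fix $a\in\Aa$, set $I:=\inf\{\LL_a(f)\,:\,f\in\uH^1_0(r_+,\infty),\,\norm{f}_{\uL^2}=1\}$, and observe that $I<0$: by definition of $\Aa$ there is a $g\in C_0^\infty$ with $\LL_a(g)<0$, and after rescaling $g$ to have unit $\uL^2$ norm we still have $\LL_a(g)<0$ since the functional is $2$-homogeneous. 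Also $I>-\infty$: applying Lemma~\ref{lemma:VariationInequalityD} to a competitor $f$ of unit $\uL^2$ norm gives
\begin{align*}
0\le\int_{r_+}^\infty\frac{\Delta_-}{r^2+a^2}\Bigl\lvert\frac{\dd f}{\dd r}\Bigr\rvert^2\,\dd r\le C_1\int_{B_0}^{B_1}|f|^2\,\dd r+C_2\LL_a(f)\le C_1 B_1^2+C_2\LL_a(f),
\end{align*}
where the last step uses $\int_{B_0}^{B_1}|f|^2\le B_1^2\int_{B_0}^{B_1}r^{-2}|f|^2\le B_1^2$, so $\LL_a(f)\ge -C_1B_1^2/C_2$.

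Next I would take a minimising sequence $(f_n)\subs\uH^1_0(r_+,\infty)$ with $\norm{f_n}_{\uL^2}=1$ and $\LL_a(f_n)\to I$, and show it is bounded in $\uH^1(r_+,\infty)$. The $\uL^2$-weighted norm is controlled by the constraint; the derivative part $\int r(r-r_+)|\dd f_n/\dd r|^2$ is controlled by noting $\Delta_-/(r^2+a^2)\gtrsim r(r-r_+)/(r^2+a^2)^2$ near the horizon and $\Delta_-/(r^2+a^2)\sim r^2/\ell^2$ at infinity — in any case $r(r-r_+)\lesssim C(r)\,\Delta_-/(r^2+a^2)$ with $C$ bounded on $(r_+,\infty)$ after absorbing the $r$-growth, hmm, more carefully: $r(r-r_+)$ and $\Delta_-/(r^2+a^2)$ are comparable up to a constant since both vanish linearly at $r_+$ and both grow like $r^2$ at infinity, so $\int r(r-r_+)|\dd f_n/\dd r|^2\lesssim\int\frac{\Delta_-}{r^2+a^2}|\dd f_n/\dd r|^2$, and this is bounded by Lemma~\ref{lemma:VariationInequalityD} together with $\LL_a(f_n)\to I$. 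Hence $(f_n)$ is bounded in the Hilbert space $\uH^1_0(r_+,\infty)$, so after passing to a subsequence $f_n\rightharpoonup f_a$ weakly in $\uH^1_0$, and the weak limit lies in $\uH^1_0(r_+,\infty)$ since it is a closed subspace. By Lemma~\ref{lemma:SemictsD}, $\LL_a(f_a)\le\liminf\LL_a(f_n)=I$.

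The main obstacle is verifying that $f_a$ still satisfies the constraint $\norm{f_a}_{\uL^2}=1$, i.e.\ ruling out loss of mass, since $\uL^2$ with the $r^{-2}$ weight on the noncompact interval $(r_+,\infty)$ does not embed compactly and mass could in principle escape to $r=r_+$ or to $r=\infty$. To exclude escape to infinity I would use the $\uH^1_0$-trace estimate built into Lemma~\ref{lemma:trace} (its proof gives a uniform tail bound $\int_R^\infty r^{-2}|f_n|^2\,\dd r\le\eps$ for $R$ large, uniformly in $n$, because $\int_R^\infty r^{-2}|f|^2\le\int_R^\infty|\dd f/\dd r|^2\cdot(\text{tail})$ via Cauchy--Schwarz and the $\uH^1$ bound). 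To exclude escape to the horizon I would note that on a fixed compact $[r_+,R]$ the $\uH^1$ norm is equivalent to the usual Sobolev norm, so Rellich gives strong $L^2$, hence strong $\uL^2$, convergence on $[r_++\delta,R]$, and the weight $r^{-2}$ is bounded there; the region $(r_+,r_++\delta)$ contributes at most $C\delta$ to $\int r^{-2}|f_n|^2$ again by the $\uH^1$ bound together with $f_n(r_+)$ being controlled. Combining: for any $\eps>0$ there is a compact $[r_++\delta,R]$ carrying all but $\eps$ of the $\uL^2$ mass of every $f_n$, and on it $f_n\to f_a$ strongly in $\uL^2$; letting $\eps\to0$ yields $\norm{f_a}_{\uL^2}=1$. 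Then $\LL_a(f_a)\le I$ forces $\LL_a(f_a)=I$, so $f_a$ is the desired minimiser. Finally, since $\LL_a(f_a)=I<0$, in particular $f_a\not\equiv0$, consistent with the unit-norm constraint.
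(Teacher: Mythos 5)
Your proposal follows the same route as the paper's proof: direct method, coercivity from Lemma~\ref{lemma:VariationInequalityD}, a bounded minimising sequence, weak convergence plus Rellich--Kondrachov on compact subsets, exclusion of mass escape at the horizon and at infinity, and weak lower semicontinuity. Your treatment of the tail at infinity, namely the uniform bound $\int_R^\infty r^{-2}|f_n|^2\,\dd r\lesssim R^{-2}\norm{f_n}_{\uH^1}^2$ obtained by Cauchy--Schwarz, is a perfectly valid (and arguably cleaner) substitute for the paper's argument, which instead compares the weighted and unweighted $L^2$ norms on $(R_0,\infty)$ and derives a contradiction from $R_0\rho_2\leq C'$; both rest on the same coercivity bound.

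The one step whose justification does not hold as written is the horizon. You bound the contribution of $(r_+,r_++\delta)$ by $C\delta$ ``by the $\uH^1$ bound together with $f_n(r_+)$ being controlled''. But the $\uH^1$ bound gives no uniform pointwise control at $r=r_+$: the weight $r(r-r_+)$ degenerates there, and functions diverging like $\bigl(\log\tfrac{1}{r-r_+}\bigr)^{s}$ with $s<1/2$ belong to $\uH^1_0(r_+,\infty)$, so ``$f_n(r_+)$'' need not even be finite, let alone uniformly bounded. The correct repair (and what the paper does) is a uniform pointwise estimate $|f_n(r)|\leq C\bigl(1+\sqrt{\log\tfrac{1}{r-r_+}}\bigr)$ along the minimising sequence, obtained by writing $f_n(r)$ as an integral of $\dd f_n/\dd r$ and applying Cauchy--Schwarz against the weight $(r'-r_+)$ near the horizon and $r'^2$ away from it; since $\log\tfrac{1}{r-r_+}$ is integrable, this yields $\int_{r_+}^{r_++\delta}r^{-2}|f_n|^2\,\dd r\lesssim\delta\bigl(1+\log\tfrac{1}{\delta}\bigr)\rightarrow 0$, which is weaker than $C\delta$ but entirely sufficient. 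Two smaller points: the equivalence of the $\uH^1$ norm with the standard Sobolev norm holds on compact subsets of the open interval $(r_+,\infty)$, not on $[r_+,R]$ (your Rellich step on $[r_++\delta,R]$ is unaffected); and the zeroth-order part of the $\uH^1$ norm is the \emph{unweighted} $L^2$ norm, so boundedness of the minimising sequence in $\uH^1$ needs the term $C_0 1_{[B_0,B_1]^c}|f|^2$ from Lemma~\ref{lemma:VariationInequalityD} together with the trivial bound on $[B_0,B_1]$, not only the derivative term and the norm constraint.
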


\begin{proof}
By Lemma~\ref{lemma:VariationInequalityD},
\begin{align}
\label{eqn:boundLmuD}
\int_{r_+}^{\infty}\left(\frac{\Delta_-}{r^2+a^2}\left\lvert\frac{\dd f}{\dd r}\right\lvert^2+C_01_{[B_0,B_1]^c}|f|^2\right)\,\dd r\leq C_1\int_{B_0}^{B_1}|f|^2\,\dd r+C_2\LL_{a}(f)
\end{align}
holds for all $f\in \uH_0^{1}$.
From this, it is evident that
\begin{align*}
 \LL_{a}(f)>-\infty
\end{align*}
if $\norm{f}_{\uL^2}=1$,
whence
\begin{align*}
\nu_{a}	&=\inf\{\LL_{a}(f)\,:\,f\in \uH_0^{1},\,\norm{f}_{\uL^2}=1\}>-\infty.
\end{align*}
We can choose a minimising sequence of functions of compact support by density. Thus let $\{f_{a,n}\}$ be a sequence of smooth functions, compactly supported in $(r_+,\infty)$ with $\norm{f_{a,n}}_{\uL^2}=1$, such that
\begin{align*}
\LL_{a}(f_{a,n})\rightarrow\nu_{a}.
\end{align*}
The bound (\ref{eqn:boundLmuD}) implies that $\norm{f_{a,n}}_{\uH^1}$ is uniformly bounded. Thus by the Banach-Alaoglu theorem, it has a weakly convergent subsequence in $\uH_0^1(r_+,\infty)$.
Recall a simple version of Rellich-Kondrachov: $H^1[a,b]$ embeds compactly into $L^2[a,b]$. Hence by the equivalence of norms, the subsequence has a strongly in $L^2$ convergent subsequence on compact subsets of $(r_+,\infty)$.
 Relabelling, we have a sequence $\{f_{a,n}\}$ that converges to $f_{a}$ weakly in $\uH_0^1$ and strongly in $L^2$ on compact subsets of $(r_+,\infty)$. The space $\uH_0^1$ is a linear (hence convex) subspace of $\uH^1$ that is norm-closed. Every convex subset that is norm closed is weakly closed. Therefore, $f_a\in \uH_0^1$.

We claim that $\norm{f_{a}}_{\uL^2}=1$. We have
\begin{align*}
\left\lvert\norm{f_a}_{\uL^2}-1\right\lvert&\leq\left\lvert \norm{f_a}_{\uL^2(r_++1/N,N)}-\norm{f_{a,n}}_{\uL^2(r_++1/N,N)}\right\lvert\\
	&~~~~~~~~~+\left\lvert\norm{f_a}_{\uL^2(r_++1/N,N)^c}-\norm{f_{a,n}}_{\uL^2(r_++1/N,N)^c}\right\lvert
\end{align*}
Due to the $L^2$ convergence on compact subsets, the claim follows if
\begin{align*}
\lim_{N\rightarrow\infty}\lim_{n\rightarrow\infty}\norm{f_n}_{\uL^2((r_+,\infty)\backslash[r_++1/N,N]}=0.
\end{align*}
Suppose not. 
Then there is a $\rho$ such that, for any $N$, there are infinitely many of the $f_{a,n}$ such that
\begin{align*}
\norm{f_{a,n}}_{\uL^2((r_+,\infty)\backslash [r_++1/N,N])}\geq \rho>0,
\end{align*}
i.\,e. the norm must concentrate either near the horizon or near infinity.
Suppose first that
\begin{align*}
\norm{f_{a,n}}_{\uL^2(r_+,r_++\delta)}\geq \rho_1>0
\end{align*}
for infinitely many $f_{a,n}$ and any $\delta>0$. 
By (\ref{eqn:boundLmuD}), we have for $r\in(r_+,r_++1)$:
\begin{align*}
|f_{a,n}(r)|&\leq \int_{r}^{r_++1}\left\lvert\frac{\dd f_{a,n}}{\dd r'}\right\lvert\,\dd r'+\int_{r_++1}^{\infty}\left\lvert\frac{\dd f_{a,n}}{\dd r'}\right\lvert\,\dd r'\\
	&\leq \left(\int_{r}^{r_++1}\frac{1}{r'-r_+}\,\dd r'\right)^{1/2}\left(\int_{r}^{r_++1}(r'-r_+)\left\lvert\frac{\dd f_{a,n}}{\dd r'}\right\lvert^2\,\dd r'\right)^{1/2}\\
	&~~~~~~~~+\left(\int_{r_++1}^{\infty}\frac{1}{(r')^2}\,\dd r'\right)^{1/2}\left(\int_{r_++1}^{\infty}r^2\left\lvert\frac{\dd f_{a,n}}{\dd r'}\right\lvert^2\,\dd r'\right)^{1/2}\\
	&\leq C\left(1+\sqrt{\log\frac{1}{r-r_+}}\right)
\end{align*}
for a constant $C>0$. Since $r\mapsto \sqrt{|\log (r-r_+)|}$ is integrable on compact subsets of $[r_+,\infty)$, we obtain $\norm{f_{a,n}}_{\uL^2(r_+,r_++\delta)}\rightarrow 0$ as $\delta\rightarrow 0$, a contradiction.
Hence we only need to exclude the case that the norm is bounded away from zero for large $r$. Thus, suppose that
\begin{align*}
\norm{f_{a,n}}_{\uL^2(R_0,\infty)}\geq \rho_2>0
\end{align*}
for infinitely many $f_{a,n}$ and any $R_0>0$.  However,
\begin{align*}
R_0\rho_2\leq\norm{f_{a,n}}_{L^2(R_0,\infty)}\leq C'
\end{align*}
for a constant $C'>0$ by (\ref{eqn:boundLmuD}) and any $R_0$, a contradiction. This shows that
\begin{align*}
\nu_a=\inf\{\LL_a(f)\,:\,f\in\uH_0^1,\,\norm{f}_{\uL^2}=1\}.
\end{align*}

By the infimum property, we have
\begin{align*}
\nu_{a}\leq\LL_{a}(f_{a}).
\end{align*}
By Lemma~\ref{lemma:SemictsD}, we get
\begin{align*}
\LL_{a}(f_{a})\leq\liminf_{n\rightarrow\infty}\LL_{a}(f_{a,n}).
\end{align*}
As
\begin{align*}
\LL_{a}(f_{a,n})\rightarrow\nu_{a},
\end{align*}
the latter equals $\nu_{a}$.
Thus the minimum is attained by $f_{a}$.
\end{proof}

We would like to derive the Euler-Lagrange equation corresponding to this minimiser.

\begin{lemma}
	\label{lemma:ELD}
	The minimiser $f_{a}$ satisfies
	\begin{align}
	\label{eqn:ELregD}
	\int_{r_+}^{\infty}\bigg(\frac{\Delta_-}{r^2+a^2}\frac{\dd f_{a}}{\dd r}\frac{\dd\psi}{\dd r}+\tilde V_a\frac{r^2+a^2}{\Delta_-}f_{a}\psi\bigg)\,\dd r=-\nu_a\int_{r_+}^{\infty}\frac{f_a}{r^2}\psi\,\dd r
	\end{align}
	for all $\psi\in\uH_0^1(r_+,\infty)$.
\end{lemma}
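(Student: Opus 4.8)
The plan is to carry out the standard first-variation argument, but paying attention to the norm constraint $\norm{f}_{\uL^2}=1$, which is what produces the Lagrange multiplier $-\nu_a$. First I would fix an arbitrary $\psi\in\uH_0^1(r_+,\infty)$ and, for $t\in\RR$ small, consider the competitor $f_a+t\psi$; since $\norm{f_a+t\psi}_{\uL^2}$ need not be $1$, I renormalise and set
\begin{align*}
g_t:=\frac{f_a+t\psi}{\norm{f_a+t\psi}_{\uL^2}},
\end{align*}
which is well-defined for $|t|$ small because $\norm{f_a}_{\uL^2}=1\neq 0$, lies in $\uH_0^1(r_+,\infty)$ (a linear subspace), and has unit $\uL^2$ norm. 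By Lemma \ref{lemma:RegMinD} the function $t\mapsto\LL_a(g_t)$ has a minimum at $t=0$, so, provided this map is differentiable at $t=0$, its derivative vanishes there.

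Next I would compute that derivative. Write $N(t):=\norm{f_a+t\psi}_{\uL^2}^2=1+2t\,\Re(f_a,\psi)_{\uL^2}+t^2\norm{\psi}_{\uL^2}^2$, which is a smooth function of $t$ with $N(0)=1$ and $N'(0)=2\Re(f_a,\psi)_{\uL^2}$. Since $\LL_a$ is a Hermitian quadratic form, $\LL_a(f_a+t\psi)=\LL_a(f_a)+2t\,\Re\,\mathcal{B}_a(f_a,\psi)+t^2\LL_a(\psi)$, where $\mathcal{B}_a(f_a,\psi)$ denotes the bilinear form
\begin{align*}
\mathcal{B}_a(f_a,\psi):=\int_{r_+}^{\infty}\bigg(\frac{\Delta_-}{r^2+a^2}\frac{\dd f_a}{\dd r}\frac{\dd\overline\psi}{\dd r}+\tilde V_a\frac{r^2+a^2}{\Delta_-}f_a\overline\psi\bigg)\,\dd r,
\end{align*}
all integrals being finite by the $\uH^1$ membership together with the coercivity estimate of Lemma \ref{lemma:VariationInequalityD} (which controls $\int \Delta_-/(r^2+a^2)|\dd\psi/\dd r|^2$ and the weighted $L^2$ mass, and hence also the integral of $\tilde V_a (r^2+a^2)/\Delta_- |f_a\psi|$ after splitting $(r_+,\infty)$ into the compact piece $[B_0,B_1]$ and its complement where $(r^2+a^2)\tilde V_a/\Delta_-$ is bounded). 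Then $\LL_a(g_t)=N(t)^{-1}\LL_a(f_a+t\psi)$, so differentiating at $t=0$ gives
\begin{align*}
0=\frac{\dd}{\dd t}\Big|_{t=0}\LL_a(g_t)=2\,\Re\,\mathcal{B}_a(f_a,\psi)-2\,\Re(f_a,\psi)_{\uL^2}\,\LL_a(f_a)=2\,\Re\Big(\mathcal{B}_a(f_a,\psi)+\nu_a(f_a,\psi)_{\uL^2}\Big),
\end{align*}
using $\LL_a(f_a)=\nu_a$. Replacing $\psi$ by $\im\psi$ kills the real part, so $\mathcal{B}_a(f_a,\psi)=-\nu_a(f_a,\psi)_{\uL^2}$ for all $\psi\in\uH_0^1$, which is exactly \eqref{eqn:ELregD} (up to the harmless complex conjugation convention on $\psi$, which one may absorb since $\uH_0^1$ is closed under conjugation, or simply state the identity with $\overline\psi$).

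The main obstacle I expect is purely a matter of integrability bookkeeping rather than any conceptual difficulty: one must justify that all three integrals defining $\mathcal{B}_a(f_a,\psi)$ converge for $f_a,\psi\in\uH_0^1(r_+,\infty)$ — in particular the potential term, since $\tilde V_a(r^2+a^2)/\Delta_-$ blows up like $(r-r_+)^{-1}$ near the horizon and $f_a,\psi$ are only in a weighted space there. This is handled exactly as in the proof of Lemma \ref{lemma:VariationInequalityD}: near $r_+$ one uses the logarithmic pointwise bound on $\uH_0^1$ functions derived there (which shows $f_a$ stays bounded, indeed $o((r-r_+)^{-\eps})$, near $r_+$) against the $(r-r_+)^{-1}$ weight, and for large $r$ one uses $(r^2+a^2)\tilde V_a/\Delta_-\to\ell^{-2}(2-\alpha)$ together with $\norm{\cdot}_{\uL^2}$, $\norm{\cdot}_{\uH^1}<\infty$. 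Once finiteness and the quadratic expansion of $\LL_a$ are in place, the differentiation of $t\mapsto\LL_a(g_t)$ is immediate and the Lagrange-multiplier identity drops out.
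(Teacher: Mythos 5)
Your overall route---replacing the paper's Evans-style constrained-variation argument (two test directions plus the implicit function theorem, spelled out for the twisted case in the proof of Lemma~\ref{lemma:ELDN})---by the normalised competitor $g_t=(f_a+t\psi)/\norm{f_a+t\psi}_{\uL^2}$ together with the quadratic expansion of $\LL_a$ is legitimate, and arguably cleaner, since it exploits the homogeneity of the quadratic form rather than a general Lagrange-multiplier scheme. But there is a concrete problem in the last display: from $0=2\Re\,\mathcal{B}_a(f_a,\psi)-2\Re(f_a,\psi)_{\uL^2}\,\LL_a(f_a)$ and $\LL_a(f_a)=\nu_a$ you obtain $\Re\bigl(\mathcal{B}_a(f_a,\psi)-\nu_a(f_a,\psi)_{\uL^2}\bigr)=0$, not the $+\nu_a$ you wrote inside the bracket; so your computation, done correctly, yields $\mathcal{B}_a(f_a,\psi)=+\nu_a(f_a,\psi)_{\uL^2}$, i.e.\ the \emph{opposite} sign to (\ref{eqn:ELregD}). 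The sign you silently flipped to ``match'' the statement is in fact the consistent one: taking $\psi=f_a$ (the minimiser may be taken real) in (\ref{eqn:ELregD}) as printed would force $\nu_a=0$, and the paper's own Appendix~\ref{sec:Twisted_Euler_Lagrange} derivation, followed to the end, likewise gives $\lambda=\nu_a$ despite asserting $\lambda=-\nu_a$. The discrepancy is harmless for the paper (the lemma is only invoked at $\hat a$, where $\nu_{\hat a}=0$), but your write-up should state the correct multiplier and note the convention, rather than introduce an algebra error to reproduce the printed sign.

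A second, smaller issue is the integrability bookkeeping near the horizon, which rests on a false premise. At the frequency $\omega=\omega_+$ used here, $\tilde V_a(r^2+a^2)/\Delta_-$ does \emph{not} blow up like $(r-r_+)^{-1}$: $\tilde V_a$ vanishes at $r=r_+$, so this quotient extends continuously to the horizon---this is precisely what the opening computation in the proof of Lemma~\ref{lemma:VariationInequalityD} shows. Moreover, had the weight genuinely been of size $(r-r_+)^{-1}$, the logarithmic pointwise bound on $\uH^1_0$ functions would not save you, since $\int (r-r_+)^{-1}\log\frac{1}{r-r_+}\,\dd r$ diverges; so the mechanism you describe would fail if it were needed. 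The correct justification that $\mathcal{B}_a$ is finite on $\uH^1_0\times\uH^1_0$ is simpler: near $r_+$ the potential factor is bounded and $f_a,\psi$ are locally square integrable; for large $r$ the factor tends to $\ell^{-2}(2-\alpha)$ and one applies Cauchy--Schwarz using the \emph{unweighted} $\int|f|^2\,\dd r$ term contained in the $\uH^1$ norm; and the derivative term is controlled by the $\uH^1$ seminorms because $r(r-r_+)$ and $\Delta_-/(r^2+a^2)$ are comparable on $(r_+,\infty)$. With these two corrections the argument goes through and constitutes a valid alternative to the paper's proof.
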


\begin{proof}
The proof can be extracted from \citep{Evans}. The analogous proof for twisted derivatives is given for Lemma~\ref{lemma:ELDN}. 
	\end{proof}

\begin{propn}
	\label{propn:a_hat}
	There is an $\hat a$ and a corresponding non-zero function $f_{\hat a}\in C^{\infty}(r_+,\infty)$ such that 
	\begin{align*}
	\frac{\Delta_-}{r^2+\hat a^2}\frac{\dd}{\dd r}\left(\frac{\Delta_-}{r^2+\hat a^2}\frac{\dd f_{\hat a}}{\dd r}\right)-\tilde V_{\hat a}f_{\hat a}=0
	\end{align*}
	and $f_{\hat a}$ satisfies the horizon regularity condition and the Dirichlet boundary condition at infinity.
\end{propn}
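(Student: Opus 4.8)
The plan is to set $\hat a:=\inf\Aa$, show that the Lagrange multiplier $\nu_{\hat a}$ vanishes, and read $f_{\hat a}$ off from the corresponding minimiser: once $\nu_{\hat a}=0$ the Euler--Lagrange equation of Lemma~\ref{lemma:ELD} degenerates into the radial ODE, and the variational framework of Section~\ref{sec:Real} forces the boundary behaviour automatically. First I would check $\hat a$ is well defined. By Lemma~\ref{lemma:FunctionalNegative}, $\Aa\neq\emptyset$ for the fixed $m\ge m_0$, $l=m$, and by Lemma~\ref{lemma:HR_positivity} every $a\le r_+^2/\ell$ lies outside $\Aa$, while regularity forces $\Aa\subseteq(0,\ell)$; hence $\hat a\in[r_+^2/\ell,\ell)$, in particular $\hat a>0$. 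Since $a\mapsto\LL_a(f)$ is continuous for each fixed $f$, the set $\Aa$ is open, so $\hat a\notin\Aa$ (otherwise openness would place points below $\hat a$ in $\Aa$). Equivalently $\LL_{\hat a}(f)\ge 0$ for all $f\in C_0^\infty(r_+,\infty)$, and as $\LL_{\hat a}$ is a bounded quadratic form on $\uH^1$ with $C_0^\infty$ dense in $\uH_0^1$, this gives $\nu_{\hat a}:=\inf\{\LL_{\hat a}(f):f\in\uH_0^1,\ \|f\|_{\uL^2}=1\}\ge 0$. The proofs of Lemmas~\ref{lemma:VariationInequalityD} and~\ref{lemma:RegMinD} use only that $m$ is large, not that $a\in\Aa$, so the minimisation still yields an $f_{\hat a}\in\uH_0^1$ with $\|f_{\hat a}\|_{\uL^2}=1$ attaining $\nu_{\hat a}$.

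The heart of the argument is that $\nu_{\hat a}=0$. Pick $a_n\in\Aa$ with $a_n\downarrow\hat a$; then the minimisers $f_{a_n}$ of Lemma~\ref{lemma:RegMinD} satisfy $\nu_{a_n}=\LL_{a_n}(f_{a_n})<0$. The constants in the coercivity estimate of Lemma~\ref{lemma:VariationInequalityD} can be taken uniform for $a$ in a compact neighbourhood of $\hat a$, so $\{f_{a_n}\}$ is bounded in $\uH^1$; passing to a subsequence, $f_{a_n}\rightharpoonup f_\ast$ in $\uH_0^1$, and the exclusion of $\uL^2$-concentration near $r_+$ and near infinity carried out in Lemma~\ref{lemma:RegMinD} — uniform in $a$ near $\hat a$ — forces $\|f_\ast\|_{\uL^2}=1$. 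The coefficients $\Delta_-/(r^2+a^2)$ and $\tilde V_a(r^2+a^2)/\Delta_-$ of $\LL_a$ (here $\omega=\omega_+=am\Xi/(r_+^2+a^2)$) depend on $a$ in a uniformly-in-$r$ Lipschitz way, so $\LL_{\hat a}(f_{a_n})-\LL_{a_n}(f_{a_n})\to 0$, and weak lower semicontinuity of $\LL_{\hat a}$ (Lemma~\ref{lemma:SemictsD}) yields
\begin{align*}
\nu_{\hat a}\le\LL_{\hat a}(f_\ast)\le\liminf_{n\to\infty}\LL_{\hat a}(f_{a_n})=\liminf_{n\to\infty}\nu_{a_n}\le 0 .
\end{align*}
Combined with the previous paragraph, $\nu_{\hat a}=0$, and $f_\ast$ — which we rename $f_{\hat a}$ — is a nonzero minimiser of $\LL_{\hat a}$ under the constraint $\|\cdot\|_{\uL^2}=1$.

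Finally I would extract the conclusion. By the Euler--Lagrange equation of Lemma~\ref{lemma:ELD}, now with $\nu_{\hat a}=0$, the function $f_{\hat a}$ satisfies $\int_{r_+}^{\infty}\big(\frac{\Delta_-}{r^2+\hat a^2}\frac{\dd f_{\hat a}}{\dd r}\frac{\dd\psi}{\dd r}+\tilde V_{\hat a}\frac{r^2+\hat a^2}{\Delta_-}f_{\hat a}\psi\big)\,\dd r=0$ for all $\psi\in\uH_0^1(r_+,\infty)$; since the coefficients are smooth and non-degenerate on $(r_+,\infty)$, a standard bootstrap gives $f_{\hat a}\in C^\infty(r_+,\infty)$, and integrating by parts and multiplying by $\Delta_-/(r^2+\hat a^2)$ turns the weak equation into
\begin{align*}
\frac{\Delta_-}{r^2+\hat a^2}\frac{\dd}{\dd r}\left(\frac{\Delta_-}{r^2+\hat a^2}\frac{\dd f_{\hat a}}{\dd r}\right)-\tilde V_{\hat a}f_{\hat a}=0 .
\end{align*}
As $f_{\hat a}\in\uH_0^1(r_+,\infty)$, Lemma~\ref{lemma:trace} gives $r^{1/2-\kappa}f_{\hat a}\to 0$ as $r\to\infty$, i.e.\ the Dirichlet boundary condition of Definition~\ref{defn:BdyConds}. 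For the horizon regularity condition of Definition~\ref{defn:HRC}: at $\omega=\omega_+$ the exponent is $\xi=0$, so by the local analysis of the regular singularity at $r=r_+$ (Theorem~\ref{thm:RegularSing}) a fundamental system near $r_+$ is $\{\phi_1,\ \phi_1(1+c\log(\cdot-r_+))\}$ with $\phi_1$ holomorphic and $\phi_1(r_+)=1$; the logarithmic branch has $r(r-r_+)|f'|^2\sim(r-r_+)^{-1}$ near $r_+$, hence is not in $\uH^1$, so $f_{\hat a}$ is a scalar multiple of $\phi_1$ and extends smoothly to $r=r_+$. This is precisely the assertion.

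The main obstacle is the uniformity in $a$ used in the second paragraph — equivalently, the (lower semi-)continuity of $a\mapsto\nu_a$ at $\hat a$: one must revisit the coercivity bound of Lemma~\ref{lemma:VariationInequalityD} and the non-concentration argument inside the proof of Lemma~\ref{lemma:RegMinD} and verify that their constants can be chosen uniformly for $a$ in a neighbourhood of $\hat a$; this is where the openness of $\Aa$ and the fixed choice $m\ge m_0$ genuinely enter. (A more delicate version of this uniformity underlies the passage to Theorem~\ref{thm:new} in Section~\ref{subsec:Continuity}, where it is complicated further by $\hat a$ being the infimum of an open set.)
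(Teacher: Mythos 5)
Your proposal is correct and follows essentially the same route as the paper: $\hat a=\inf\Aa$, openness of $\Aa$ giving $\nu_{\hat a}\ge 0$, a weak limit of the minimisers $f_{a_n}$ along $a_n\downarrow\hat a$ giving $\nu_{\hat a}\le 0$ and a non-trivial limit, the Euler--Lagrange identity of Lemma~\ref{lemma:ELD} with $\nu_{\hat a}=0$ upgraded to the classical ODE, and then $\uH^1$-integrability near $r_+$ excluding the logarithmic branch (with $\xi=0$ at $\omega=\omega_+$) together with Lemma~\ref{lemma:trace} for the Dirichlet condition. The only organisational difference is that you obtain $\nu_{\hat a}\le 0$ via weak lower semicontinuity plus the $a$-dependence of the coefficients, whereas the paper packages the very same mean-value/coercivity estimate as an explicit Lipschitz continuity statement for $a\mapsto\nu_a$; the local uniformity in $a$ that you flag as the main point to verify is precisely what that estimate supplies.
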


\begin{proof}
	First we would like to show that $\nu_a$ is continuous in $a$. We will use the notation $\Delta_-^a$ to denote the $\Delta_-$ corresponding to $a$. Given $a_1$ and $a_2$, we have
	\begin{align*}
	\nu_{a_1}&=\LL_{a_1}(f_{a_1})\\
	&=\int_{r_+}^{\infty}\left(\frac{\Delta_-^{a_2}}{r^2+a_2^2}\left\lvert\frac{\dd f_{a_1}}{\dd r}\right\lvert^2+\tilde V_{a_2}\frac{r^2+a_2^2}{\Delta_-^{a_2}}|f_{a_1}|^2\right)\,\dd r\\
	&~~~~+\int_{r_+}^{\infty}\left[\left(\frac{\Delta_-^{a_1}}{r^2+a_1^2}-\frac{\Delta_-^{a_2}}{r^2+a_2^2}\right)\left\lvert\frac{\dd f_{a_1}}{\dd r}\right\lvert^2+\left(\tilde V_{a_1}\frac{r^2+a_1^2}{\Delta_-^{a_1}}-\tilde V_{a_2}\frac{r^2+a_2^2}{\Delta_-^{a_2}}\right)|f_{a_1}|^2\right]\,\dd r.
	\end{align*}
	Due to the continuity of $\LL_a(f)$ in $a$, the first line is greater or equal than $\nu_{a_2}$ if $a_1$ is sufficiently close to $a_2$. Since the coefficients in the second line are continuously differentiable in $a$, we can use the mean value theorem to obtain
	\begin{align*}
	\nu_{a_1}\geq \nu_{a_2}-C|a_1-a_2|\int_{r_+}^{\infty}\left((r-r_+)\left\lvert\frac{\dd f_{a_1}}{\dd r}\right\lvert^2+|f_{a_1}|^2\right)\,\dd r
	\end{align*}
	for some constant $C>0$. We obtain an analogous inequality reversing the r\^{o}les of $a_1$ and $a_2$. Using (\ref{eqn:boundLmuD}) and $\norm{f_{a}}_{\uL^2}=1$ yields 
	\begin{align*}
	|\nu_{a_1}-\nu_{a_2}|&\leq C|a_1-a_2|\int_{r_+}^{\infty}\left((r-r_+)\left\lvert\frac{\dd f_{a_1}}{\dd r}\right\lvert^2+|f_{a_1}|^2\right)\,\dd r\leq C'|a_1-a_2|.
	\end{align*}
	Since $\Aa\neq \emptyset$, we  set
	\begin{align*}
	\hat a:=\inf\Aa.
	\end{align*}
	As stated in the introduction to this section, $\Aa$ is open, so $\hat a\notin\Aa$. By continuity of $\nu_a$, this implies that $\nu_{\hat a}=0$.
	
	Now choose a sequence $a_n\rightarrow \hat a$ and corresponding minimisers $f_{a_n}\in\uH^1_0$ satisfying $\norm{f_{a_n}}_{L^2}=1$. Then, as in the proof of Lemma~\ref{lemma:RegMinD}, by Lemma~\ref{lemma:VariationInequalityD}, $f_{a_n}$ is bounded in $\uH^1$ and there is a subsequence (also denoted $(a_n)$) such that $f_{a_n}\rightarrow f_{\hat a}$ weakly in $\uH^1$ and strongly in $L^2$ on compact subsets for a $f_{\hat a}\in\uH^1_0$. Again by Lemma~\ref{lemma:VariationInequalityD} and the strong $L^2$ convergence on compact subsets, we see that $f_{\hat a}$ is non-zero. Moreover, we have sufficient decay towards infinity by Lemma~\ref{lemma:trace}. Hence we get the desired asymptotics.
	
	From the weak convergence of $(f_{a_n})$, Lemma~\ref{lemma:ELD} yields that $f_{\hat a}$ satisfies
	\begin{align*}
	\int_{r_+}^{\infty}\bigg(\frac{\Delta_-}{r^2+\hat a^2}\frac{\dd f_{\hat a}}{\dd r}\frac{\dd\psi}{\dd r}+\tilde V_{\hat a}\frac{r^2+\hat a^2}{\Delta_-}f_{\hat a}\psi\bigg)\,\dd r=0
	\end{align*}
	for all $\psi\in\uH_0^1(r_+,\infty)$.  For ordinary differential equations, weak solutions are classical solutions -- see for example \citep[][Ch.~1]{TaoDispersive}; so
	\begin{align*}
	\frac{\Delta_-}{r^2+\hat a^2}\frac{\dd}{\dd r}\left(\frac{\Delta_-}{r^2+\hat a^2}\frac{\dd f_{\hat a}}{\dd r}\right)-\tilde V_{\hat a}f_{\hat a}=0,
	\end{align*}
	from which we obtain that $f_{\hat a}\in C^{\infty}$.
	
	It remains to check the boundary condition at the horizon. The lower semi-continuity of convex functionals with respect to weak convergence implies that
	\begin{align*}
	\int_{r_+}^{\infty}\left(\frac{\Delta_-}{r^2+a_n^2} \left\lvert\frac{\dd f_{a_n}}{\dd r}\right\lvert^2+\tilde V_{a_n}\frac{r^2+a_n^2}{\Delta_-}|f_{a_n}|^2\right)\,\dd r\leq\nu_{a_n},
	\end{align*}
	whence
	\begin{align*}
	\int_{r_+}^{\infty}\left(\frac{\Delta_-}{r^2+\hat a^2} \left\lvert\frac{\dd f_{\hat a}}{\dd r}\right\lvert^2+\tilde V_{\hat a}\frac{r^2+\hat a^2}{\Delta_-}|f_{\hat a}|^2\right)\,\dd r\leq 0.
	\end{align*}
	Hence
	\begin{align}
	\label{eqn:Boundfa}
	\int_{r_+}^{\infty}\frac{\Delta_-}{r^2+\hat a^2}\left\lvert\frac{\dd f_{\hat a}}{\dd r}\right\lvert^2\,\dd r<\infty.
	\end{align}
	Near $r_+$, the local theory (Theorem~\ref{thm:RegularSing}) implies that there exist constants $A$, $B$ and non-zero analytic functions $\phi_i$ such that
	\begin{align*}
	f_{\hat a}=A\phi_1+B(\log(r-r_+)\phi_2+\phi_3).
	\end{align*}
	If $B\neq 0$, then
	\begin{align*}
	\int_{r_+}^{\infty}\frac{\Delta_-}{r^2+\hat a^2}\left\lvert\frac{\dd f_{\hat a}}{\dd r}\right\lvert^2\,\dd r=\infty,
	\end{align*}
	whence $B=0$. Hence $f_{\hat a}$ satisfies the horizon regularity condition.
\end{proof}

\begin{rk}
	\label{rk:HR_positivity}
	From Lemma~\ref{lemma:HR_positivity}, we already know that $|\hat a|\geq r_+^2/\ell$. In \citep{HolzegelSmuleviciDecay}, it is shown directly that, if the Hawking-Reall bound is satisfied, there are no periodic solutions. One can easily see that the proof generalises to the case when the Hawking-Reall bound is saturated. Thus we even obtain $|\hat a|>r_+^2/\ell$.
\end{rk}

\begin{cor}
	Assume our choice of parameters, $a=\hat a$ and $\omega=\omega_+$. Let $C_0\in\CC$. Then the radial ODE (\ref{eqn:radial_ODE_prelim}) has a unique solution satisfying $u(r_+)=C_0$ and the Dirichlet boundary condition at infinity.
\end{cor}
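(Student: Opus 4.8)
The plan is to read the Corollary off Proposition~\ref{propn:a_hat} together with the uniqueness Lemma~\ref{lemma:uniqueness_ODE}. First I would record that, because $\omega=\omega_+$, the horizon exponent $\xi$ from Definition~\ref{defn:HRC} vanishes; hence the horizon regularity condition is precisely the statement that $u$ extends smoothly to $[r_+,\infty)$, and the value $f_{\hat a}(r_+)$ of the function produced by Proposition~\ref{propn:a_hat} is a genuine complex number rather than merely the leading coefficient of a $(r-r_+)^{\xi}$ expansion. The next step is to check that $f_{\hat a}(r_+)\neq 0$: if it vanished, then $f_{\hat a}$ would be a classical solution of the radial ODE (\ref{eqn:radial_ODE_prelim}) extending smoothly to $r_+$ with $f_{\hat a}(r_+)=0$, so Lemma~\ref{lemma:uniqueness_ODE} would force $f_{\hat a}\equiv 0$, contradicting the non-triviality asserted in Proposition~\ref{propn:a_hat}.

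For existence, given $C_0\in\CC$ I would set $u:=(C_0/f_{\hat a}(r_+))\,f_{\hat a}$. By linearity of the radial ODE this is again a classical solution; it extends smoothly to $r_+$ with $u(r_+)=C_0$, and, being a scalar multiple of $f_{\hat a}$, it inherits the Dirichlet boundary condition at infinity. For uniqueness, if $v$ is any solution extending smoothly to $r_+$ with $v(r_+)=C_0$ and satisfying the Dirichlet condition, then $v-u$ is a solution extending smoothly to $r_+$ with $(v-u)(r_+)=0$, so $v=u$ by Lemma~\ref{lemma:uniqueness_ODE}. One can alternatively phrase the uniqueness more invariantly: by the local analysis at infinity in Section~\ref{subsec:local_infinity}, the space of solutions satisfying both the horizon regularity condition and the Dirichlet condition is at most one-dimensional, hence equals $\spann\{f_{\hat a}\}$, and evaluation at $r_+$ is an isomorphism of this line onto $\CC$ since $f_{\hat a}(r_+)\neq 0$.

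I do not anticipate any real obstacle: essentially all of the work has already been done in Proposition~\ref{propn:a_hat}, and the Corollary is the bookkeeping observation that the single solution constructed there spans the Dirichlet-compatible, horizon-regular solution space and has non-vanishing horizon value. The only subtlety worth flagging is the use of $\xi=0$ at $\omega=\omega_+$, which is what makes the pointwise condition $u(r_+)=C_0$ the natural one and lets the scaling argument go through cleanly.
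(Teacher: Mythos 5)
Your proposal is correct and follows exactly the route the paper leaves implicit (it states the corollary without proof as an immediate consequence of the preceding results): at $\omega=\omega_+$ the exponent $\xi$ vanishes, so Proposition~\ref{propn:a_hat} supplies a nontrivial solution, smooth at the horizon with $f_{\hat a}(r_+)\neq 0$, satisfying the Dirichlet condition, which you rescale to attain $u(r_+)=C_0$, with uniqueness coming from Lemma~\ref{lemma:uniqueness_ODE}. Your observation that $f_{\hat a}(r_+)\neq 0$ follows from non-triviality via the same uniqueness lemma is the only point needing an argument, and you have it right.
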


\subsection{Perturbing the Dirichlet modes into the complex plane}
\label{subsec:PerturbingD}

We have shown that, for given $\ell>0$ and $\alpha<9/4$, there exists a real mode solution in a Kerr-AdS spacetime with parameters $(\ell,r_+,\hat a)$ and $\omega=\omega_R(0):=\Xi \hat am/(r_+^2+\hat a^2)$. Henceforth, we shall denote the chosen $\hat a$ simply by $a$.
Now we wish to vary $\omega$ and $\alpha$, keeping all the other parameters constant. 
Keeping $u(r_+,\omega,\alpha)$ fixed, satisfying $|u|(r_+,\omega,\alpha)=1$, the local theory yields a unique solution to the radial ODE of the form
\begin{align*}
u(r,\alpha,\omega)=A(\alpha,\omega)h_1(r,\alpha,\omega)+B(\alpha,\omega)h_2(r,\alpha,\omega)
\end{align*}
for large $r$, cf. Lemma~\ref{lemma:uniqueness_ODE} and (\ref{eqn:reflection_transmission}).
The functions $A$ and $B$ are smooth in $\omega$ and $\alpha$. Finding a mode solution is equivalent to finding a zero of $A$. We already have $A(\alpha(0),\omega_R(0))=0$. Write $A=A_R+\im A_I$.
Recall
\begin{align*}
 Q_T(r)=\Im(u'\overline u)
\end{align*}
and that
\begin{align*}
 \frac{\dd Q_T}{\dd r}(r)&=\frac{r^2+a^2}{\Delta_-}\Im\left(V-\omega^2\right)\\
 Q_T(r_+)&=\Xi am-\omega_R(r_+^2+a^2),
\end{align*}
where we have used $|u(r_+)|=1$.
We have
\begin{align*}
 Q_T(r)&=|A|^2\frac{\Delta_-}{r^2+a^2}\Im\left(\frac{\dd h_1}{\dd r}\overline{h_1}\right)+\frac{\Delta_-}{r^2+a^2}\Im\left(A\frac{\dd h_1}{\dd r}\overline{B h_2}\right)\\
    &~~~~~~~~~+\frac{\Delta_-}{r^2+a^2}\Im\left(B\frac{\dd h_2}{\dd r}\overline{A h_1}\right)+|B|^2\frac{\Delta_-}{r^2+a^2}\Im\left(\frac{\dd h_2}{\dd r}\overline{h_2}\right)
\end{align*}
and hence
\begin{align*}
  Q_T(\infty)&=\frac{1}{\ell^2}\left(-\frac{1}{2}+\sqrt{\frac{9}{4}-\alpha}\right)\Im(A\overline B)+\frac{1}{\ell^2}\left(-\frac{1}{2}-\sqrt{\frac{9}{4}-\alpha}\right)\Im(B\overline A)\\
  &=\frac{2}{\ell^2}\left(\frac{9}{4}-\alpha\right)^{1/2}\Im(A\overline B).
\end{align*}
due to the asymptotics of the $h_i$. We obtain
\begin{align*}
 \Xi am-(r_+^2+a^2)\omega_R+\int_{r_+}^{\infty}\frac{r^2+a^2}{\Delta_-}\Im\left(V-\omega^2\right)\,\dd r=\frac{2}{\ell^2}\left(\frac{9}{4}-\alpha\right)^{1/2}\Im(A\overline B).
\end{align*}
Now we differentiate at $\omega_R=\omega_R(0)$ and $\alpha=\alpha_0$ with respect to $\omega_R$ and $\alpha$:
\begin{align}
\label{eqn:det1}
 -(r_+^2+a^2)&=\frac{2}{\ell^2}\left(\frac{9}{4}-\alpha\right)^{1/2}\Im\left(\frac{\del A}{\del\omega_R}\overline B\right)
	=\frac{2}{\ell^2}\left(\frac{9}{4}-\alpha\right)^{1/2}\left(\frac{\del A_I}{\del\omega_R}B_R-\frac{\del A_R}{\del\omega_R}B_I\right)\\
\label{eqn:det2}
	    0&=\frac{2}{\ell^2}\left(\frac{9}{4}-\alpha\right)^{1/2}\Im\left(\frac{\del A}{\del\alpha}\overline B\right)
	    =\frac{2}{\ell^2}\left(\frac{9}{4}-\alpha\right)^{1/2}\left(\frac{\del A_I}{\del\alpha}B_R-\frac{\del A_R}{\del\alpha}B_I\right)
\end{align}
To extend the coefficient $A(\alpha,\omega_R)=0$ to complex $\omega$, we want to appeal to the implicit function theorem establishing
\begin{align*}
\det\begin{pmatrix}
      \frac{\del A_R}{\del\omega_R}	& \frac{\del A_R}{\del\alpha}\\
      \frac{\del A_I}{\del\omega_R}	& \frac{\del A_I}{\del\alpha}
    \end{pmatrix}
\neq 0.
\end{align*}
From equations (\ref{eqn:det1}) and (\ref{eqn:det2}) we see that this holds if
\begin{align*}
 \frac{\del A}{\del\alpha}(\alpha(0),\omega_R(0))\neq 0.
\end{align*}
This is true indeed:
\begin{lemma}
	\label{lemma:del_A__del_alphe}
\begin{align*} \frac{\del A}{\del\alpha}(\alpha(0),\omega_R(0))\neq 0.\end{align*}
\end{lemma}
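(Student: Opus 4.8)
The plan is to represent $\partial A/\partial\alpha$ at the base point $(\alpha_0,\omega_R(0))$ as an explicit weighted $L^2$-pairing of the periodic mode against $\del_\alpha\tilde V$, and then to exhibit a definite sign for $\del_\alpha\tilde V$. Write $w$ for the real periodic mode from Proposition~\ref{propn:a_hat}, normalised by $w(r_+)=1$. At the base point $\alpha=\alpha_0$, $\omega=\omega_R(0)=\omega_+$ one has $\xi=0$, so $w$ solves the radial ODE $w''-\tilde V w=0$ (derivatives in $\sta r$) with real coefficients, extends smoothly through $r_+$, and equals $B\,h_2$ for large $r$ with $B\in\RR\setminus\{0\}$ (since $A(\alpha_0,\omega_R(0))=0$). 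Keeping $\omega=\omega_R(0)$ fixed, let $u(\cdot,\alpha)$ be the smooth-at-horizon solution with $u(r_+,\alpha)=1$; by Lemma~\ref{lemma:continuous_parameters} and the standard theory of regular singular points depending smoothly on a parameter, $u$ is smooth in $\alpha$, and $\dot u:=\del_\alpha u|_{\alpha_0}$ is smooth at $r_+$ with $\dot u(r_+)=0$ and hence $\dot u'(r_+)=0$ (because $\del_{\sta r}=\tfrac{\Delta_-}{r^2+a^2}\del_r$ and $\Delta_-(r_+)=0$). Differentiating the radial ODE gives
\begin{align*}
\dot u''-\tilde V\dot u=(\del_\alpha V)\,w,
\end{align*}
with $\del_\alpha V=\del_\alpha\tilde V$ since $\omega$ is frozen.

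Next I would pair this with $w$, pair $w''-\tilde V w=0$ with $\dot u$, subtract, and integrate over $(r_+,R)$ in $\sta r$; the bulk terms form a total derivative, so
\begin{align*}
\big[\dot u'w-w'\dot u\big]_{r_+}^{R}=\int_{r_+}^{R}(\del_\alpha V)\,w^2\,\dd\sta r.
\end{align*}
The boundary term at $r_+$ vanishes since $w'(r_+)=\dot u(r_+)=\dot u'(r_+)=0$. For the term at $R$, substitute $w=B\,h_2$ and $\dot u=\dot A\,h_1+\dot B\,h_2+B\,\dot h_2$ (valid for large $r$, with $\dot h_i=\del_\alpha h_i$, using Lemma~\ref{lemma:continuous_AB} and Theorem~\ref{thm:RegularSing}). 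Using that the $\sta r$-Wronskian $W(h_1,h_2)=h_1h_2'-h_2h_1'$ is constant and nonzero (from $h_1\sim c_1(\tfrac\pi2-\sta r)^{1/2-\kappa}$, $h_2\sim c_2(\tfrac\pi2-\sta r)^{1/2+\kappa}$ with $c_1,c_2\neq0$, one gets $W(h_1,h_2)=-2\kappa c_1c_2$), while $W(h_2,\dot h_2)=h_2^2\,\del_{\sta r}(\dot h_2/h_2)\to0$ as $r\to\infty$ (from $h_2=r^{-1/2-\kappa}\rho_2$, $\rho_2(\infty)=1$, one has $\dot h_2/h_2=\tfrac{1}{2\kappa}\log r+O(1)$, so $W(h_2,\dot h_2)=O(r^{-2\kappa})$), the bracket at $R$ tends to $-\dot A\,B\,W(h_1,h_2)$. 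After checking that the integrand on the right is $O(1)$ near $r_+$ and $O(r^{-1-2\kappa})$ near infinity (so the integral converges), letting $R\to\infty$ yields
\begin{align*}
\frac{\del A}{\del\alpha}(\alpha_0,\omega_R(0))=\frac{-1}{B\,W(h_1,h_2)}\int_{r_+}^{\infty}(\del_\alpha V)\,\frac{r^2+a^2}{\Delta_-}\,w^2\,\dd r.
\end{align*}

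It then remains to see $\del_\alpha V<0$ on $(r_+,\infty)$. Since $V_{+}$ is $\alpha$-independent and $V_{0}$ depends on $\alpha$ only through $\lambda_{ml}$,
\begin{align*}
\del_\alpha V=\frac{\Delta_-}{(r^2+a^2)^2}\left(\frac{\del\lambda_{ml}}{\del\alpha}-\frac{1}{\ell^2}\big(r^2+\Theta(\alpha)a^2\big)\right).
\end{align*}
For $\alpha>0$, $P_\alpha=P+\tfrac{\alpha}{\ell^2}a^2\sin^2\theta$, so the min-max principle (applied to the difference quotients of $\lambda_{ml}$) gives $0\leq\del_\alpha\lambda_{ml}\leq\tfrac{a^2}{\ell^2}$; for $\alpha<0$, $P_\alpha=P-\tfrac{\alpha}{\ell^2}a^2\cos^2\theta$ gives $\del_\alpha\lambda_{ml}\leq0$. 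In both cases $\del_\alpha\lambda_{ml}\leq\tfrac{a^2}{\ell^2}\Theta(\alpha)$, whence the bracket is $\leq-r^2/\ell^2<0$ for $r>r_+>0$; since $\Delta_->0$ on $(r_+,\infty)$, indeed $\del_\alpha V<0$ there. As $w\not\equiv0$, the integral above is strictly negative, and with $B\neq0$ and $W(h_1,h_2)\neq0$ we conclude $\del A/\del\alpha\neq0$.

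I expect the last sign computation to be the main obstacle: the worry is that $\del_\alpha\lambda_{ml}$, which is positive when $\alpha>0$, might overcome the $-r^2/\ell^2$ term for $r$ close to $r_+$, and it is exactly the uniform a priori bound $\del_\alpha\lambda_{ml}\leq a^2/\ell^2$ that rules this out. The second delicate point is the boundary term at infinity, which needs the Frobenius expansions of $h_1,h_2$ at the regular singularity $r=\infty$ in order to see that $W(h_2,\dot h_2)$ leaves no residue. Finally, at the single value $\alpha_0=0$, where $\Theta$ jumps and $A$ is only one-sidedly differentiable in $\alpha$, one runs the argument with one-sided $\alpha$-derivatives (both nonzero, of the same sign) or replaces $\alpha_0$ by a nearby value before perturbing.
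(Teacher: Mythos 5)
Your proposal is correct and is essentially the paper's own argument: differentiate the radial ODE in $\alpha$ at the real mode, pair with the mode through the Green/Wronskian identity, and use that $\del_\alpha\lambda_{ml}-\ell^{-2}\left(r^2+\Theta(\alpha)a^2\right)\leq -r^2/\ell^2<0$ (the paper obtains this from the exact formula of Proposition~\ref{propn:LambdaDeriv}, you from a min--max Lipschitz bound). The only structural difference is that the paper argues by contradiction, assuming $\del A/\del\alpha=0$ so that all boundary terms vanish, whereas you evaluate the boundary Wronskian at infinity explicitly (needing $W(h_2,\del_\alpha h_2)\rightarrow 0$), which yields the same conclusion together with an explicit formula and sign for $\del A/\del\alpha$.
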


\begin{proof}
 Suppose $\del A/\del\alpha=0$. Then we have
\begin{align*}
 \frac{\del u}{\del\alpha}(r,\omega_R(0),\alpha(0))&=\frac{\del B}{\del\alpha}(\omega_R(0),\alpha(0))h_2(r,\omega_R(0),\alpha(0))\\
 &~~~~~~~+B(\alpha(0),\omega_R(0))\frac{\del h_2}{\del\alpha}(r,\omega_R(0),\alpha(0)).
\end{align*}
Thus $\del u/\del \alpha$ is polynomially decreasing at infinity as $r^{-1/2-\sqrt{9/4-\alpha(0)}}$ and extends smoothly to $r=r_+$.
Defining the derivative $u_{\alpha}:=\del u/\del\alpha$, we get from the radial ODE
\begin{align*}
\frac{\Delta_-}{r^2+a^2}\frac{\dd}{\dd r}\left(\frac{\Delta_-}{r^2+a^2}\frac{\dd u_{\alpha}}{\dd r}\right)-\tilde Vu_{\alpha}=\left[\frac{\Delta_-}{(r^2+a^2)^2}\frac{\del\lambda}{\del\alpha}-\frac{1}{\ell^2}\frac{\Delta_-}{(r^2+a^2)^2}(r^2+\Theta(\alpha)a^2)\right]u.
\end{align*}
Multiplying by $\overline u$ and integrating by parts, we obtain at $\omega_R(0)$ and $\alpha(0)$
\begin{align}
\label{eqn:IBPLambda}
 \int_{r_+}^{\infty}\frac{\Delta_-}{(r^2+a^2)^2}\left(\frac{\del\lambda}{\del\alpha}-\frac{1}{\ell^2}(r^2+\Theta(\alpha)a^2)\right)|u|^2\,\dd r=0.
\end{align}
Now the two cases $\alpha\leq 0$ and $0<\alpha<9/4$ have to be treated separately. If $\alpha\leq 0$, then Proposition~\ref{propn:LambdaDeriv} readily gives $\del\lambda/\del\alpha<0$, so that $u$ would vanish identically.

For $\alpha>0$, we need to use the formula for $\del\lambda/\del\alpha$
from Proposition~\ref{propn:LambdaDeriv}.
Together with (\ref{eqn:IBPLambda}), this yields
\begin{align*}
\int_{r_+}^{\infty}\frac{\Delta_-}{(r^2+a^2)^2}\int_0^{\pi}\frac{1}{\ell^2}\left(-r^2-a^2\cos^2\theta\right)|S|^2\sin\theta|u|^2\,\dd\theta\,\dd r=0,
\end{align*}
whence we get the same contradiction.
\end{proof}

\subsection{Behaviour for small $\epsilon>0$ for Dirichlet boundary conditions}
\label{subsec:Crossing}

From the analysis of the previous section, we have a family of mode solutions $u(r,\epsilon)$ to the radial ODE parameters $(\omega(\epsilon),m,l,\alpha(\epsilon))$, where 
\begin{align*}
 \omega(\epsilon)=\omega_R(\epsilon)+\im\epsilon.
\end{align*}
The mode $u$ satisfies the horizon regularity condition and the Dirichlet boundary condition at infinity.
This proves the first part of Theorem~\ref{thm:oldD}. To prove the second part, we would like to study the behaviour of $\omega(\epsilon)$ and $\alpha(\epsilon)$ for small $\epsilon>0$.

To obtain the following statements, we potentially need to make $|m|$ even larger than in the previous sections.

\begin{propn}
	\label{lemma:DelOmega}
 If $|m|$ is sufficiently large, we have
\begin{align*}
 \omega_R(0)\frac{\del\omega_R}{\del\epsilon}(0)< 0.
\end{align*}
\end{propn}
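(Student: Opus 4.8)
The plan is to start from the conserved-current identity of Section~\ref{subsec:PerturbingD}. Recall that for the mode $u(\cdot,\epsilon)$ with parameters $(\omega(\epsilon),m,l,\alpha(\epsilon))$, $\omega(\epsilon)=\omega_R(\epsilon)+\im\epsilon$, integrating $\dd Q_T/\dd r$ for the microlocal current $Q_T=\Im(u'\overline u)$ (normalised by $|u(r_+)|=1$) yields
\begin{align*}
\Xi am-(r_+^2+a^2)\omega_R(\epsilon)+\int_{r_+}^{\infty}\frac{r^2+a^2}{\Delta_-}\,\Im\big(V-\omega^2\big)\,|u(r,\epsilon)|^2\,\dd r=\frac{2}{\ell^2}\Big(\tfrac94-\alpha(\epsilon)\Big)^{1/2}\Im(A\overline B).
\end{align*}
Along the constructed curve $A\equiv0$, so the right-hand side vanishes identically; differentiating the remaining identity at $\epsilon=0$ — where $\omega=\omega_+$ and $\alpha=\alpha_0$ are real, hence $\lambda_{mm}$ and $V$ are real and the integrand $\Im(V-\omega^2)$ vanishes pointwise — kills every term except the one in which $\del_\epsilon$ falls on $\Im(V-\omega^2)$. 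Writing $u_0:=u(\cdot,0)$ this gives
\begin{align*}
(r_+^2+a^2)\frac{\del\omega_R}{\del\epsilon}(0)=\int_{r_+}^{\infty}\frac{r^2+a^2}{\Delta_-}\,\frac{\del}{\del\epsilon}\Big[\Im\big(V-\omega^2\big)\Big]_{\epsilon=0}\,|u_0(r)|^2\,\dd r.
\end{align*}

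Next I would compute the weight $\frac{r^2+a^2}{\Delta_-}\,\del_\epsilon\Im(V-\omega^2)|_{\epsilon=0}$. Only $V_0$ and $-\omega^2$ depend on $\omega$, and the $\alpha$-dependence of $V$ (in $V_\alpha$ and through $\lambda_{mm}$) is real-valued, so it does not enter $\del_\epsilon\Im(\cdot)|_0$. From $\Im(\omega^2)=2\omega_R\epsilon$ one gets $\del_\epsilon\Im(\omega^2)|_0=2\omega_+$; since $\lambda_{mm}(\omega)$ is holomorphic and real on the real axis, $\del_\epsilon\Im\lambda_{mm}|_0=\frac{\dd\lambda_{mm}}{\dd\omega}(\omega_+)$, which by differentiating (\ref{eqn:AngularODE1})--(\ref{eqn:AngularODE2}) and pairing with the normalised eigenfunction $S_{mm}$ (Hellmann--Feynman, valid since the eigenvalue is simple) equals $-2\Xi a^2\big(\omega_++\tfrac{am}{\ell^2}\big)\int_0^{\pi}\frac{\cos^2\theta}{\Delta_{\theta}}|S_{mm}|^2\sin\theta\,\dd\theta$. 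Substituting into $V_0$ and simplifying with $\omega_+=\Xi am/(r_+^2+a^2)$ (and $\omega_++\tfrac{am}{\ell^2}=\frac{ma(\ell^2+r_+^2)}{\ell^2(r_+^2+a^2)}$) collapses the expression to
\begin{align*}
\frac{r^2+a^2}{\Delta_-}\,\frac{\del}{\del\epsilon}\Big[\Im\big(V-\omega^2\big)\Big]_{\epsilon=0}=\frac{1}{r^2+a^2}\left(\frac{\dd\lambda_{mm}}{\dd\omega}(\omega_+)-\frac{2ma\Xi r_+^2}{r_+^2+a^2}\right)-\frac{2ma\Xi\,(r^2-r_+^2)}{\Delta_-\,(r_+^2+a^2)}.
\end{align*}

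The sign is then transparent: for $r>r_+$ one has $r^2-r_+^2>0$, $\Delta_->0$, $\Xi>0$, $a>0$, and $\int_0^{\pi}\frac{\cos^2\theta}{\Delta_{\theta}}|S_{mm}|^2\sin\theta\,\dd\theta>0$ (as $\Delta_{\theta}\geq\Xi>0$ and $S_{mm}\not\equiv0$ is analytic), so $\frac{\dd\lambda_{mm}}{\dd\omega}(\omega_+)$ and every term on the right-hand side displayed above carries the sign of $-m$. Hence the weight is $\leq0$ for $m>0$ and $\geq0$ for $m<0$, and not identically zero; since $|u_0|^2\geq0$ and $u_0\not\equiv0$, the integral — and therefore $\frac{\del\omega_R}{\del\epsilon}(0)$ — has the sign of $-m$, i.e. of $-\omega_+$. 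Thus $\omega_R(0)\,\frac{\del\omega_R}{\del\epsilon}(0)=\omega_+\frac{\del\omega_R}{\del\epsilon}(0)<0$. (The hypothesis that $|m|$ be large is only inherited from the earlier construction — existence of the real mode, of $\hat a$, and of the implicit-function-theorem curve; the sign computation itself is robust in $m$.)

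The main obstacle I anticipate is twofold. First, one must legitimately differentiate under the integral near the horizon, where $\frac{r^2+a^2}{\Delta_-}$ blows up like $(r-r_+)^{-1}$; this is saved by the cancellation noted above — at $r=r_+$ the horizon value $2\omega_+$ of $\del_\epsilon\Im V_0|_0$ is exactly cancelled by the $-\omega^2$ contribution, so $\del_\epsilon\Im(V-\omega^2)|_0$ vanishes there linearly — together with the decay $u_0\sim r^{-1/2-\kappa}$ forced by the Dirichlet condition, giving an integrable dominating function on $(r_+,\infty)$. Second, and more essentially, one must control the $\omega$-derivative of the angular eigenvalue and check that its contribution does not spoil the sign; the Hellmann--Feynman computation shows it in fact reinforces it, but this is exactly the step where an unfavourable sign would have forced a genuine largeness condition on $|m|$.
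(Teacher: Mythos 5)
Your evaluation of the weight $\tfrac{r^2+a^2}{\Delta_-}\del_\epsilon\Im(V-\omega^2)\big|_{\epsilon=0}$ and the Hellmann--Feynman computation of $\del\lambda/\del\omega$ are correct, but the identity you differentiate is not available along the curve, and this is a genuine gap. The relation
\begin{align*}
\Xi am-(r_+^2+a^2)\omega_R+\int_{r_+}^{\infty}\frac{r^2+a^2}{\Delta_-}\Im\big(V-\omega^2\big)|u|^2\,\dd r=\frac{2}{\ell^2}\Big(\tfrac{9}{4}-\alpha\Big)^{1/2}\Im(A\overline B)
\end{align*}
rests on the normalisation $|u(r_+)|=1$. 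For $\epsilon>0$ the horizon regularity condition forces $u=(r-r_+)^{\xi}\rho$ with $\Re\xi=\epsilon(r_+^2+a^2)/\del_r\Delta_-(r_+)>0$, so $u(r_+)=0$: no horizon-regular solution has $|u(r_+)|=1$, and with any admissible normalisation the horizon flux term is $Q_T(r_+)=0$, not $\Xi am-(r_+^2+a^2)\omega_R(\epsilon)$. Hence the identity you propose to differentiate holds only at the single point $\epsilon=0$, and ``differentiating it at $\epsilon=0$'' has no content. The identity that does hold for $\epsilon>0$ is $0=\int_{r_+}^{\infty}\tfrac{r^2+a^2}{\Delta_-}\Im(V-\omega^2)|u|^2\,\dd r$, and here your proposed remedy -- a dominating function together with the cancellation $\del_\epsilon\Im(V-\omega^2)|_0(r_+)=0$ -- proves too much: if differentiation under the integral were legitimate it would give $\int_{r_+}^{\infty}\tfrac{r^2+a^2}{\Delta_-}\del_\epsilon\Im(V-\omega^2)|_0\,|u_0|^2\,\dd r=0$, which by your own strict-sign analysis of the weight forces $u_0\equiv0$. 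So limit and integral genuinely do not commute: the term proportional to $\del_\epsilon\omega_R(0)$ that you want on the left-hand side must be extracted from a boundary layer at the horizon, where $|u|^2\sim(r-r_+)^{2\Re\xi}$ with $\Re\xi\propto\epsilon$ converts the $O(\epsilon)$ quantity $\Im(V-\omega^2)(r_+)=2\epsilon\big(\omega_+-\omega_R(\epsilon)\big)$ into an $O(1)$ contribution, since $\int_{r_+}(r-r_+)^{2\Re\xi-1}\,\dd r\sim(2\Re\xi)^{-1}\sim\epsilon^{-1}$. Your ``main obstacle'' paragraph therefore mislocates the difficulty; carrying out this limiting argument (or an equivalent careful treatment of the $\log(r-r_+)$ behaviour of $\epsilon$-derivatives near the horizon, as the paper does for the analogous issue in the proof of Proposition~\ref{propn:alpha_Dirichlet}) is the actual work, and it is missing.

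For comparison, the paper's proof avoids differentiation in $\epsilon$ altogether: it works at fixed $\epsilon>0$ with the current $\tilde Q_T=\Im(u'\overline{\omega u})$, whose boundary terms vanish at both ends, bounds $-\Im\big((V_a-\omega^2)\overline\omega\big)$ from below using $-\Im(\lambda\overline\omega)>0$, and shows -- via the monotone quantity $K(r)=|\omega|^2(r^2+a^2)^2-\Xi^2a^2m^2-\Delta_-a^2|\omega|^2$, the Hardy inequality, and the largeness of $|m|$ -- that $K(r_+)\geq0$ would force $u\equiv0$; the conclusion $K(r_+)<0$ is precisely $|\omega(\epsilon)|^2<\omega_+^2$, which yields the claim. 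If you completed the boundary-layer step, your route would give an attractive, essentially $m$-independent sign identity; as written, however, its central identity is unjustified.
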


\begin{proof}
 Define
\begin{align*}
 \tilde Q_T:=\Im\left(u'\overline{\omega u}\right).
\end{align*}
Let $\epsilon>0$. We have $\tilde Q_T(\infty)=0$. Moreover
\begin{align*}
  \tilde Q_T(r_+)=\Im\left(\frac{\xi}{r_+^2+a^2}\overline{\omega}\right)|u|^2(r_+)=0
\end{align*}
since $\xi$ has a positive real part (see (\ref{eqn:defn_xi})), $u\sim(r-r_+)^{\xi}$ and hence $|u|(r_+)=0$. Furthermore, using the radial ODE, one computes
\begin{align*}
 \frac{\dd\tilde Q_T}{\dd r}=-\epsilon\frac{\Delta_-}{r^2+a^2}\left\lvert\frac{\dd u}{\dd r}\right\lvert^2+\frac{r^2+a^2}{\Delta_-}\Im\left((V_{a}-\omega^2)\overline{\omega}\right)|u|^2.
\end{align*}
Hence
\begin{align}
\label{eqn:int_exact_equality}
 \int_{r_+}^{\infty}\left(\epsilon\frac{\Delta_-}{r^2+a^2}\left\lvert\frac{\dd u}{\dd r}\right\lvert^2-\frac{r^2+a^2}{\Delta_-}\Im\left((V_{a}-\omega^2)\overline{\omega}\right)|u|^2\right)\,\dd r=0
\end{align}
with
\begin{align*}
 -\Im((V_{a}-\omega^2)\overline{\omega})&=\frac{\epsilon}{(r^2+a^2)^2} \Big(V_+(r^2+a^2)^2+|\omega|^2(r^2+a^2)^2-\Xi^2a^2m^2\\&~~~~~~-\frac{\alpha}{\ell^2}\Delta_-(r^2+a^2\Theta(\alpha))\Big)-\frac{\Delta_-}{(r^2+a^2)^2}\Im((\lambda+a^2\omega^2)\overline\omega).
\end{align*}
From Proposition~\ref{propn:ImOmega}, we know that $-\Im(\lambda\overline\omega)>0$. Hence
\begin{align}
\begin{split}
\label{eqn:Im_V}
-\Im((V_{a}-\omega^2)\overline{\omega})&>\frac{\epsilon}{(r^2+a^2)^2} \Big(V_+(r^2+a^2)^2+|\omega|^2(r^2+a^2)^2-\Xi^2a^2m^2\\&~~~~~~~-\frac{\alpha}{\ell^2}\Delta_-(r^2+a^2\Theta(\alpha))\Big)-\frac{\Delta_-}{(r^2+a^2)^2}a^2\epsilon|\omega|^2.
\end{split}
\end{align}
We set
\begin{align*}
K(r):=|\omega|^2(r^2+a^2)^2-\Xi^2a^2m^2-\Delta_-a^2|\omega|^2.
\end{align*}
We have
\begin{align}
\begin{split}
\label{eqn:derivative_K}
\frac{\dd}{\dd r}K(r) 
&=|\omega|^2\left(4\left(1-\frac{a^2}{\ell^2}\right)r^3+2a^2M+2a^2\left(1-\frac{a^2}{\ell^2}\right)r\right)>0.
\end{split}
\end{align}
As already used in Section~\ref{sec:Real}, there is an $R>r_+$ such that, for $r\geq R$,
\begin{align*}
V_++V_{\alpha}>-\frac{1}{4\ell^2}\frac{\Delta_-}{r^2+a^2}.
\end{align*}
By an application of Lemma~\ref{lemma:Hardy}, we conclude
\begin{align}
\begin{split}
\label{eqn:int_K_large_R}
&\int_{R}^{\infty}\left(\epsilon\frac{\Delta_-}{r^2+a^2}\left\lvert\frac{\dd u}{\dd r}\right\lvert^2-\frac{r^2+a^2}{\Delta_-}\Im\left((V_{a}-\omega^2)\overline{\omega}\right)|u|^2\right)\,\dd r\\&~~~~~~~~~~~~>\int_R^{\infty}\frac{\epsilon}{(r^2+a^2)^2}K(r)|u|^2\,\dd r.
\end{split}
\end{align}
For the sake of contradition, suppose $K(r_+)\geq 0$. Then, by (\ref{eqn:derivative_K}), $K>0$ on $(r_+,\infty)$, whence we obtain strict positivity for (\ref{eqn:int_K_large_R}). 
As
\begin{align*}
|\omega(0)|^2=\frac{m^2a^2\Xi^2}{(r_+^2+a^2)^2}
\end{align*}
and as, for fixed $r_+$, $\hat a$ is bounded away from zero for all $m$,
\begin{align*}
|\omega(0)|^2\geq Cm^2.
\end{align*}
Since $\epsilon\mapsto\omega(\epsilon)$ is continuous, $|\omega|^2$ scales as $m^2$ for small $\epsilon$, so $\dd K/\dd r$ can be chosen as large as possible at $r=r_+$, in particular, it can be used to overcome the potentially non-positive derivative of the remaining terms of the right hand side of (\ref{eqn:Im_V}) on $(r_+,R)$. Then, (\ref{eqn:int_exact_equality}) implies $u=0$, a contradiction.

Hence $K(r_+)<0$ which is equivalent to
\begin{align*}
\omega_R(\epsilon)^2+\epsilon^2<\left(\frac{am}{r_+^2+a^2}\right)^2.
\end{align*}
This in turn is equivalent to the claim.
\end{proof}

In the following, we will fix an $|m|\geq m_0$ such that Proposition~\ref{lemma:DelOmega} holds.

\begin{rk}
	\label{rk:large_m}
	The choice of $m$ could have been made right at the beginning as the choice of $m_0$ in Lemma~\ref{lemma:Vneg} is independent of the largeness required for Proposition~\ref{lemma:DelOmega}.
\end{rk}

The next proposition shows that the mass $\alpha$ is at first increasing along the curve obtained by the implicit function theorem. The proof requires a technical lemma which is given at the end of this section.

\begin{propn}
	\label{propn:alpha_Dirichlet}
	Let $\alpha(0)<9/4$. Then
 \begin{align*}
  \frac{\del\alpha}{\del\epsilon}(0)>0.
 \end{align*}
\end{propn}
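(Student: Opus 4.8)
\emph{Strategy.} The curve $\epsilon\mapsto(\alpha(\epsilon),\omega_R(\epsilon))$ comes from the implicit function theorem applied to $A(\alpha,\omega)=0$ with $\omega=\omega_R+\im\epsilon$, so the plan is to differentiate this relation at $\epsilon=0$, express $\dd\alpha/\dd\epsilon(0)$ through the partials of $A$ at $(\alpha(0),\omega_R(0))$, and pin down its sign using (\ref{eqn:det1}), (\ref{eqn:det2}) together with one extra monotonicity input, namely $\del_\alpha A\cdot\overline{B}<0$. Since the coefficients of the radial ODE, and hence the local solutions $h_1,h_2$ and the coefficient $A$, depend holomorphically on $\omega$ (Theorem~\ref{thm:RegularSing} and the local theory of Section~\ref{subsec:LocalAna}; this slightly strengthens Lemma~\ref{lemma:continuous_AB}), we have $\del_{\omega_I}A=\im\,\del_{\omega_R}A$. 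Differentiating $A(\alpha(\epsilon),\omega_R(\epsilon)+\im\epsilon)=0$ at $\epsilon=0$ gives
\begin{align*}
\frac{\del A}{\del\alpha}\frac{\dd\alpha}{\dd\epsilon}(0)+\frac{\del A}{\del\omega_R}\left(\frac{\dd\omega_R}{\dd\epsilon}(0)+\im\right)=0 .
\end{align*}
Multiply by $\overline B$ and write $c:=\del_\alpha A\,\overline B$ and $\del_{\omega_R}A\,\overline B=:p+\im q$. By (\ref{eqn:det2}) the number $c$ is real, and by (\ref{eqn:det1}) $q=\Im(\del_{\omega_R}A\,\overline B)=-\tfrac12\ell^{2}(r_+^2+a^2)\left(9/4-\alpha(0)\right)^{-1/2}<0$; in particular $p+\im q\neq 0$ and $c\neq 0$. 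The imaginary part of the displayed relation yields $\dd\omega_R/\dd\epsilon(0)=-p/q$, and its real part then gives $c\,\dd\alpha/\dd\epsilon(0)=(p^2+q^2)/q<0$. Thus $\dd\alpha/\dd\epsilon(0)>0$ provided $c=\del_\alpha A\,\overline B<0$.

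\emph{A Wronskian identity for $\del_\alpha A\,\overline B$.} Work at the base point, where $\omega=\omega_R(0)=\omega_+$ is real, so $\xi=0$ and $u=Bh_2$ is the Dirichlet real mode. Put $u_\alpha:=\del u/\del\alpha$; differentiating the radial ODE in $\alpha$ with $\omega_+$ held fixed gives
\begin{align*}
u_\alpha''+(\omega_+^2-V)u_\alpha=(\del_\alpha V)\,u,\qquad \del_\alpha V=\frac{\Delta_-}{(r^2+a^2)^2}\left(\frac{\del\lambda}{\del\alpha}-\frac1{\ell^2}\bigl(r^2+\Theta(\alpha)a^2\bigr)\right).
\end{align*}
As $V$ and $\omega_+$ are real, the current $W:=u_\alpha'\,\overline u-u_\alpha\,\overline{u}'$ (with $'$ the $r^{\ast}$-derivative) satisfies $W'=(\del_\alpha V)|u|^2$, and integrating and using $\dd\sta r/\dd r=(r^2+a^2)/\Delta_-$,
\begin{align*}
W(\infty)-W(r_+)=\int_{r_+}^{\infty}\frac1{r^2+a^2}\left(\frac{\del\lambda}{\del\alpha}-\frac1{\ell^2}\bigl(r^2+\Theta(\alpha)a^2\bigr)\right)|u|^2\,\dd r .
\end{align*}

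\emph{Boundary terms and conclusion.} At the horizon, $\omega_R(0)=\omega_+$ forces $\xi=0$, so every member of the family $\alpha\mapsto u(\cdot,\alpha,\omega_+)$ extends smoothly to $r_+$ with $u(r_+)=1$; hence $u_\alpha(r_+)=0$ and $u_\alpha'(r_+)=\tfrac{\Delta_-(r_+)}{r_+^2+a^2}\del_r u_\alpha(r_+)=0$, so $W(r_+)=0$. At infinity, $u_\alpha=(\del_\alpha A)h_1+(\del_\alpha B)h_2+B\,\del_\alpha h_2$; using that for real $\omega_+$ the $h_i$ may be taken real with constant Wronskian $h_1'h_2-h_1h_2'=2\kappa/\ell^2$ (where $\kappa=\sqrt{9/4-\alpha(0)}$), that $h_2'h_2-h_2h_2'=0$, and that the $\del_\alpha h_2$ terms contribute nothing in the limit (the relevant products decay like $r^{-2\kappa}\log r$), one obtains $W(\infty)=(\del_\alpha A)\,\overline B\cdot 2\kappa/\ell^{2}$. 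Finally the integrand on the right is strictly negative: by Proposition~\ref{propn:LambdaDeriv}, $\del\lambda/\del\alpha<0$ when $\alpha\leq0$, and when $\alpha>0$ one has $\del\lambda/\del\alpha-\ell^{-2}(r^2+a^2)=\ell^{-2}\int_0^\pi(-r^2-a^2\cos^2\theta)|S|^2\sin\theta\,\dd\theta<0$ — precisely the computation in the proof of Lemma~\ref{lemma:del_A__del_alphe}. Since $u\not\equiv0$, the integral is $<0$, hence $(\del_\alpha A)\,\overline B<0$, and the first step then yields $\dd\alpha/\dd\epsilon(0)>0$.

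\emph{Main obstacle.} The routine-looking but genuinely delicate point — presumably the content of the technical lemma alluded to — is the asymptotic bookkeeping in the last step: that $h_1,h_2$ can be chosen real with the stated constant Wronskian, and that the $\del_\alpha h_2$ contribution to $W$ really vanishes at infinity, which requires the local analysis of Section~\ref{subsec:LocalAna}, including the borderline exponents $\alpha\in\mathcal E$. One also needs the holomorphic, not merely smooth, dependence of $A$ on $\omega$, a mild strengthening of Lemma~\ref{lemma:continuous_AB}.
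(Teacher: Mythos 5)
Your argument is correct, but it follows a genuinely different route from the paper. The paper proves this by differentiating the radial ODE in $\epsilon$ along the implicit-function-theorem curve itself: since $\xi(\epsilon)$ varies, $u_\epsilon$ picks up a $\log(r-r_+)$ at the horizon, and the careful bookkeeping of that boundary term produces the identity (\ref{eqn:IntV}) with the term $-|u(r_+)|^2$ on the left; the sign of $\del\alpha/\del\epsilon(0)$ is then extracted from (\ref{eqn:real_part_V_epsilon}) after eliminating $\Re(\del\lambda/\del\epsilon)$ via the angular identity of Lemma~\ref{propn:IfThen} and feeding in Proposition~\ref{lemma:DelOmega} (so the paper's proof uses $\omega_R(0)\,\del_\epsilon\omega_R(0)<0$ and, implicitly, the largeness of $|m|$ behind it). You instead reduce everything to the single signed quantity $\del_\alpha A\,\overline B$: using (\ref{eqn:det1}), (\ref{eqn:det2}) and the Cauchy--Riemann relation $\del_{\omega_I}A=\im\,\del_{\omega_R}A$, the chain rule along the curve gives $(\del_\alpha A\,\overline B)\,\alpha'(0)=(p^2+q^2)/q<0$, and you fix the sign of $\del_\alpha A\,\overline B$ by a Wronskian identity at the fixed real mode, differentiating only in $\alpha$ at $\omega=\omega_+$ (where $\xi\equiv 0$, so no horizon log appears and $W(r_+)=0$ is immediate), with the angular input being Proposition~\ref{propn:LambdaDeriv} exactly as in Lemma~\ref{lemma:del_A__del_alphe} — in effect a signed sharpening of that lemma. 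What your route buys: the horizon analysis is trivialised, Lemma~\ref{propn:IfThen} and Proposition~\ref{lemma:DelOmega} are not needed (so the sign of $\alpha'(0)$ is decoupled from that of $\omega_R'(0)$ and from any further largeness of $|m|$), and you get $\omega_R'(0)=-p/q$ as a by-product. What it costs: you genuinely need holomorphy of $A$ in $\omega$, which the paper only states as smoothness (Lemma~\ref{lemma:continuous_AB}); this is standard — the coefficients, including $\lambda(\omega,\alpha)$, are holomorphic, the horizon-regular Frobenius branch remains holomorphic through the degenerate indicial root $\xi=0$ (the recursion denominators $n(n+2\xi)$ do not vanish), and at the zero locus of $A$ the relation survives any smooth rescaling of the normalisation — but it is an input the paper never establishes and should be spelled out; the infinity bookkeeping (the $\del_\alpha h_2$ contribution decaying like $r^{-2\kappa}\log r$, reality of $h_1,h_2$ and the constant Wronskian $2\kappa/\ell^2$) is of the same nature as what Lemma~\ref{lemma:del_A__del_alphe} already uses implicitly, so it imposes no new burden.
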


\begin{proof}
 Define
\begin{align*}
u_{\epsilon}=\frac{\del u}{\del\epsilon}.
\end{align*}
Then
\begin{align*}
 \frac{\del}{\del r}\left(\frac{\Delta_-}{r^2+a^2}\frac{\del u_{\epsilon}}{\del r}\right)-\frac{(V-\omega^2)(r^2+a^2)}{\Delta_-}u_{\epsilon}=\frac{\del}{\del\epsilon}\left(\frac{r^2+a^2}{\Delta_-}\tilde V\right)u.
\end{align*}
We would like to multiply this equation by $\overline u$ and then integrate by parts, but, at $r=r_+$, $u_{\epsilon}$ does not satisfy the boundary conditions of a mode solution. However, using $u\sim (r-r_+)^{-1/2-\kappa(\epsilon)}$ for all $\epsilon$ by Section~\ref{subsec:PerturbingD}, $u_{\epsilon}\sim \log r(r-r_+)^{-1/2-\kappa}$ and hence satisfies the Dirichlet boundary condition at infinity.

We know that
\begin{align*}
f(r,\epsilon):=\exp\left(-\im\frac{\Xi am-(r_+^2+a^2)\omega(\epsilon)}{\del_r\Delta_-}\log(r-r_+)\right)u(r,\epsilon)
\end{align*}
is smooth,
whence
\begin{align*}
\frac{\del f}{\del\epsilon}&=\im(r_+^2+a^2)\frac{\log(r-r_+)}{\del_r\Delta_-(r_+)}\left(\frac{\del\omega_R}{\del\epsilon}+\im\right)f(r,\epsilon)\\&~~~~~~~+\exp\left(-\im\frac{\Xi am-(r_+^2+a^2)\omega(\epsilon)}{\del_r\Delta_-(r_+)}\log(r-r_+)\right)u_{\epsilon}
\end{align*}
and
\begin{align*}
u_{\epsilon}(r,0)=\frac{r_+^2+a^2}{\del_r\Delta_-(r_+)}\left(1-\im\frac{\del\omega_R}{\del\epsilon}\right)\log(r-r_+)u+\frac{\del f}{\del\epsilon}(r,0).
\end{align*}
We have
\begin{align*}
\frac{\dd}{\dd r}\left(\frac{\Delta_-}{r^2+a^2}\frac{\dd u_{\epsilon}}{\dd r}\right)\overline u&=\frac{\dd}{\dd r}\left(\frac{\Delta_-}{r^2+a^2}\frac{\dd u_{\epsilon}}{\dd r}\overline u\right)-\frac{\dd}{\dd r}\left(\frac{\Delta_-}{r^2+a^2}u_{\epsilon}\frac{\dd \overline u}{\dd r}\right)\\&~~~~~~~~~~+u_{\epsilon}\frac{\dd}{\dd r}\left(\frac{\Delta_-}{r^2+a^2}\frac{\dd\overline u}{\dd r}\right)
\end{align*}
and
\begin{align*}
\frac{\dd u_{\epsilon}}{\dd r}\overline u-u_{\epsilon}\frac{\dd\overline u}{\dd r}&=\frac{r_+^2+a^2}{\del_r\Delta_-(r_+)}\left(1-\im\frac{\del\omega_R}{\del\epsilon}\right)\frac{1}{r-r_+}|u|^2\\&~~~~~~~~~~+\frac{r_+^2+a^2}{\del_r\Delta_-(r_+)}\left(1-\im\frac{\del\omega_R}{\del\epsilon}\right)\log(r-r_+)\Im\left(\frac{\dd u}{\dd r}\overline u\right).
\end{align*}
We conclude that
\begin{align*}
\frac{\Delta_-}{r^2+a^2}\log(r-r_+)\Im\left(\frac{\dd u}{\dd r}\overline u\right)
\end{align*}
is zero at $r=r_+$.
Thus evaluating the radial ODE at $\epsilon=0$, multiplying it by $\overline u$, taking real parts and integrating by parts yields
\begin{align}
\label{eqn:IntV}
-|u(r_+)|^2&=\int_{r_+}^{\infty}\frac{r^2+a^2}{\Delta_-}\Re\left(\frac{\del\tilde V}{\del\epsilon}\right)\Bigg\lvert_{\epsilon=0}|u|^2\,\dd r
\end{align}
For $\alpha\neq 0$, the derivative is given by
\begin{align}
\begin{split}
\label{eqn:real_part_V_epsilon}
\Re\left(\frac{\del\tilde V}{\del\epsilon}\right)\Bigg\lvert_{\epsilon=0} &=\frac{\Delta_-}{(r^2+a^2)^2}\left[\Re\left(\frac{\del\lambda}{\del\epsilon}\right)+2a^2\omega_R(0)\frac{\del\omega_R}{\del\epsilon}(0)\right]\\
	&~~~~-\frac{\Delta_-}{(r^2+a^2)^2}2ma\Xi\frac{\del\omega_R}{\del\epsilon}(0)-2\omega_R(0)\frac{\del\omega_R}{\del\epsilon}(0)\frac{r^2-r_+^2}{r^2+a^2}\\
	&~~~~-\frac{1}{\ell^2}\frac{\del\alpha}{\del\epsilon}\frac{\Delta_-}{(r^2+a^2)^2}(r^2+\Theta(\alpha)a^2)\\
	&=\frac{\Delta_-}{(r^2+a^2)^2}\left[\Re\left(\frac{\del\lambda}{\del\epsilon}-2r_+^2\omega_R(0)\frac{\del\omega_R}{\del\epsilon}(0)\right)-\frac{1}{\ell^2}\frac{\del\alpha}{\del\epsilon}(r^2+\Theta(\alpha)a^2)\right]\\
	&~~~~~~~~~~~-2\omega_R(0)\frac{\del\omega_R}{\del\epsilon}(0)\frac{r^2-r_+^2}{r^2+a^2}.
\end{split}
\end{align}
Noting that $\int_0^{\pi}|S|^2\sin\theta\,\dd\theta=1$ and using Lemma~\ref{propn:IfThen} to eliminate the dependence on $\lambda$ and then Proposition~\ref{lemma:DelOmega}, we conclude that $\del\alpha/\del\epsilon(0)$ needs to be positive to make the integrand of (\ref{eqn:IntV}) negative. The restriction to $\alpha\neq 0$ can by removed by continuity of the reflection and transmission coefficients $A$ and $B$.
\end{proof}

\begin{lemma}
	\label{propn:IfThen}
	At $\epsilon=0$, for $\alpha\leq 0$,
	\begin{align*}
	&\int_0^{\pi}\Bigg(2\left[\Xi a^2+(r_+^2+a^2)\frac{a^2}{\ell^2}\right]\frac{\cos^2\theta}{\Delta_{\theta}}\omega_R\frac{\del\omega_R}{\del\epsilon}+\frac{a^2}{\ell^2}\cos^2\theta\frac{\del\alpha}{\del\epsilon}+\Re\left(\frac{\del\lambda}{\del\epsilon}\right)\Bigg)|S|^2\sin\theta\,\dd\theta=0
	\end{align*}
	and, for $\alpha>0$,
	\begin{align*}
	&\int_0^{\pi}\Bigg(2\left[\Xi a^2+(r_+^2+a^2)\frac{a^2}{\ell^2}\right]\frac{\cos^2\theta}{\Delta_{\theta}}\omega_R\frac{\del\omega_R}{\del\epsilon}-\frac{a^2}{\ell^2}\sin^2\theta\frac{\del\alpha}{\del\epsilon}+\Re\left(\frac{\del\lambda}{\del\epsilon}\right)\Big)|S|^2\sin\theta\,\dd\theta=0.
	\end{align*}
\end{lemma}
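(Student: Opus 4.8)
The plan is to treat $\lambda=\lambda_{ml}(\omega,\ell,a,\alpha)$ along the curve as a simple eigenvalue of $P_{\alpha}$ and to differentiate the eigenvalue relation (a first-variation argument). Recall from Section~\ref{subsec:Potential} that $\lambda_{ml}$ is a \emph{simple} eigenvalue of $P_{\alpha}(\omega,\ell,a,\alpha)$, with real eigenfunction $\e^{\im m\tilde\phi}S_{ml}(\cos\theta)$ at real $\omega$, and that $P_{\alpha}$ depends analytically on $\omega$ and, away from $\alpha=0$, on $\alpha$. Hence along the curve $\epsilon\mapsto(\omega(\epsilon),\alpha(\epsilon))$, $\omega(\epsilon)=\omega_R(\epsilon)+\im\epsilon$, standard perturbation theory for simple eigenvalues yields a smooth branch $\epsilon\mapsto(\lambda(\epsilon),S(\epsilon))$ with $S(0)$ the real, $L^2(\sin\theta\,\dd\theta\,\dd\tilde\phi)$-normalised eigenfunction; the split into $\alpha\leq0$ and $\alpha>0$ is forced by the mass term of $P_{\alpha}$ being only piecewise-analytic in $\alpha$ (at $\alpha=0$ the identity is read for one-sided $\epsilon$-derivatives). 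Differentiating $P_{\alpha}S=\lambda S$ at $\epsilon=0$, pairing in $L^2(\sin\theta\,\dd\theta\,\dd\tilde\phi)$ with $S(0)$, and using that $P_{\alpha}$ is self-adjoint \emph{at} $\epsilon=0$ (there $\omega=\omega_R(0)\in\RR$), the $\del S/\del\epsilon$-terms cancel and one obtains
\begin{align*}
\frac{\del\lambda}{\del\epsilon}(0)=\int_{0}^{2\pi}\!\!\int_{0}^{\pi}\left(\frac{\del P_{\alpha}}{\del\epsilon}\Big\lvert_{\epsilon=0}S\right)\overline S\,\sin\theta\,\dd\theta\,\dd\tilde\phi .
\end{align*}

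The next step is to evaluate $\del P_{\alpha}/\del\epsilon$ at $\epsilon=0$ explicitly. The only $\epsilon$-dependence of $P_{\alpha}$ sits in the terms $\Xi\tfrac{a^2\omega^2}{\Delta_{\theta}}\cos^2\theta$ and $-2\im a\omega\tfrac{\Xi}{\Delta_{\theta}}\tfrac{a^2}{\ell^2}\cos^2\theta\,\del_{\tilde\phi}$ of $P$, and in the mass term $\tfrac{\alpha}{\ell^2}a^2\sin^2\theta$ (for $\alpha>0$), resp.\ $-\tfrac{\alpha}{\ell^2}a^2\cos^2\theta$ (for $\alpha\leq0$), of $P_{\alpha}$. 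Inserting $\del_{\tilde\phi}(\e^{\im m\tilde\phi}S_{ml})=\im m\,\e^{\im m\tilde\phi}S_{ml}$ and $\dd\omega/\dd\epsilon=\del\omega_R/\del\epsilon+\im$, a short computation gives at $\epsilon=0$
\begin{align*}
\frac{\del P_{\alpha}}{\del\epsilon}\Big\lvert_{\epsilon=0}S=-2\left(\frac{\del\omega_R}{\del\epsilon}(0)+\im\right)\frac{a^2\Xi\cos^2\theta}{\Delta_{\theta}}\left(\omega_R(0)+\frac{am}{\ell^2}\right)S-\frac{a^2}{\ell^2}\frac{\del\alpha}{\del\epsilon}(0)\,g_{\alpha}(\theta)\,S ,
\end{align*}
where $g_{\alpha}(\theta)=\cos^2\theta$ for $\alpha\leq0$ and $g_{\alpha}(\theta)=-\sin^2\theta$ for $\alpha>0$.

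Taking real parts, using that $S(0)$ is real so $|S|^2=S\overline S$, and dividing out the trivial $\tilde\phi$-integration leaves
\begin{align*}
\Re\Big(\frac{\del\lambda}{\del\epsilon}(0)\Big)+\int_{0}^{\pi}\!\left(2\,\frac{a^2\Xi\cos^2\theta}{\Delta_{\theta}}\Big(\omega_R(0)+\frac{am}{\ell^2}\Big)\frac{\del\omega_R}{\del\epsilon}(0)+\frac{a^2}{\ell^2}g_{\alpha}(\theta)\frac{\del\alpha}{\del\epsilon}(0)\right)\!|S|^2\sin\theta\,\dd\theta=0 .
\end{align*}
To reach the stated form it only remains to rewrite the coefficient of $\del\omega_R/\del\epsilon$ using (\ref{eqn:HRfrequency}): from $\omega_R(0)=\Xi am/(r_+^2+a^2)$, i.e.\ $a\Xi m=(r_+^2+a^2)\omega_R(0)$, one gets $a^2\Xi\big(\omega_R(0)+am/\ell^2\big)=a^2\Xi\,\omega_R(0)+\tfrac{a^2}{\ell^2}a\Xi m=\big(\Xi a^2+(r_+^2+a^2)\tfrac{a^2}{\ell^2}\big)\omega_R(0)$. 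Substituting this and specialising $g_{\alpha}$ to the two cases produces precisely the two displayed identities.

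The calculations here are routine; what needs care is the sign bookkeeping — the operator being defined through $-P_{\alpha}f=\dots$, the single factor $\im m$ produced by $\del_{\tilde\phi}$, and the change of the mass term across $\alpha=0$ — together with the structural point that self-adjointness of $P_{\alpha}$ is invoked only at $\epsilon=0$ (real $\omega$), while persistence and smoothness of the simple eigenvalue for $\epsilon\neq0$ is supplied by perturbation theory. Alternatively one could write $\del\lambda/\del\epsilon=\del_{\omega}\lambda\cdot(\del\omega_R/\del\epsilon+\im)+\del_{\alpha}\lambda\cdot\del\alpha/\del\epsilon$ and substitute the expression for $\del\lambda/\del\alpha$ from Proposition~\ref{propn:LambdaDeriv}, but the unified perturbation computation above produces both derivatives at once.
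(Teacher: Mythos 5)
Your proof is correct and is essentially the paper's own argument: the paper differentiates the angular ODE in $\epsilon$, evaluates at $\epsilon=0$, multiplies by $\overline S$, integrates by parts and takes real parts, which is exactly your Hellmann--Feynman first-variation computation (pairing with $S(0)$ and self-adjointness of $P_\alpha$ at real $\omega$ is what cancels the $\del S/\del\epsilon$ terms). Your explicit evaluation of $\del_\epsilon P_\alpha$ and the rewriting of $a^2\Xi\left(\omega_R(0)+am/\ell^2\right)$ via $\omega_R(0)=\Xi am/(r_+^2+a^2)$ reproduce the stated identities with the correct signs in both mass ranges.
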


\begin{proof}
	Let $\alpha\geq 0$. Set $S_{\epsilon}:=\del S/\del\epsilon$. Then, differentiating the angular ODE with respect to $\epsilon$,
	evaluating at $\epsilon=0$, multiplying by $\overline S$, taking the real part and integrating by part yields the claimed identity.
	An analogous computation yields the result for $\alpha>0$.
\end{proof}

\subsection{A continuity argument}
\label{subsec:Continuity}

We now deduce Theorem~\ref{thm:new} from Theorem~\ref{thm:oldD}. In this section, we fix $\ell>0$ and $\alpha_0<9/4$. In the previous sections, we have produced a curve $\epsilon\mapsto\alpha_0(\epsilon)$ of masses with $\del\alpha_0(0)/\del\epsilon>0$. This means that the constructed mode solutions will solve a radial ODE with a different scalar mass. 
This section formalises the intuitive idea of ``following up" the curves $\epsilon\mapsto\alpha(\epsilon)$ starting at an $\alpha$ close to $\alpha_0$ until one ``hits" the desired mass, which is made possible by $\del\alpha(0)/\del\epsilon>0$. The proof consists simply in establishing necessary continuity and carefully choosing neighbourhoods. This can be divided into two independent steps.
\begin{enumerate}
	\item We show that the function mapping $\alpha$ to the corresponding $\hat a$ is left-continuous.
	\item We show that, for $\alpha$ and corresponding $\hat a$ sufficiently close to $\alpha_0$ and the corresponding $\hat a_0$, the implicit function theorem guarantees a curve, starting at  $\alpha$ and the corresponding $\hat a$ and real frequency $\omega_+$, which exists ``long enough" to ``hit" $\alpha_0$.
\end{enumerate}

Note that, for any $f\in C_0^{\infty}$,$(\alpha,r_+,a)\mapsto \LL_{\alpha,r_+,a}(f)$
defines a continuous function. For a given $f\in C_0^{\infty}$, define the family of sets
\begin{align*}
\Aa_{\alpha,r_+}(f):=\{a>0\,:\,\LL_{\alpha,r_+,a}(f)<0\}
\end{align*}
and
\begin{align*}
\Aa_{\alpha,r_+}:=&\bigcup_{f\in C_0^{\infty}}\Aa_{\alpha,r_+}(f)\\
=&\{a>0\,:\,\exists f\in C_0^{\infty}:\,\LL_{\alpha,r_+,a}(f)<0\}.
\end{align*}

\begin{rk}
	$\Aa_{\alpha,r_+}$ corresponds to the set $\Aa$ from Section~\ref{sec:Real}. 
\end{rk}

Define the function
\begin{align*}
\Phi:\,(-\infty,9/4)\times(0,\infty)\rightarrow(0,\infty),~\Phi(\alpha,r_+):=\inf\Aa_{\alpha,r_+}
\end{align*}
if $\Aa_{\alpha,r_+}\neq\emptyset$.

\begin{lemma}
	\label{lemma:well-defined}
	Let $0<r_+<\ell$. Then there is an interval $I\subseteq(-\infty,9/4)$ with $\alpha_0\in I$  and an $m_0$ such that $\Phi(\cdot,r_+)$ is well-defined for all $\alpha\in I$ and $|m|\geq m_0$ .
\end{lemma}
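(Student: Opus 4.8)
The statement asks for well-definedness of $\Phi(\cdot,r_+)$ as a map $I\to(0,\infty)$, so for all $\alpha\in I$ and $|m|\ge m_0$ two things must be shown: that $\Aa_{\alpha,r_+}$ is non-empty (so the infimum exists), and that $\inf\Aa_{\alpha,r_+}>0$ (so the value lands in the codomain; finiteness is automatic once the set is non-empty). The plan is to read off non-emptiness from Lemma~\ref{lemma:FunctionalNegative} (via Remark~\ref{rk:Choice}) and positivity of the infimum from Lemma~\ref{lemma:HR_positivity}.

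For non-emptiness, I would exhibit a single $a_0$ lying in $\Aa_{\alpha,r_+}$ for every $\alpha$ near $\alpha_0$. Since $0<r_+<\ell$, Lemma~\ref{lemma:r+} shows the admissible range of $a$ reaches past $r_+^2/\ell$: with $x:=r_+^2/\ell^2\in(0,1)$ the inequality $x<(3x+1)/(1-x)$, i.e.\ $(x+1)^2>0$, gives $r_+^2/\ell<r_+\sqrt{(3r_+^2/\ell^2+1)/(1-r_+^2/\ell^2)}$, and also $r_+^2/\ell<\ell$, so there is an admissible $a_0$ with $r_+^2/\ell<a_0<\ell$. Then $\frac{r_+^4-a_0^2\ell^2}{(r_+^2+a_0^2)^2}<0$; putting $N:=-\tfrac12\frac{r_+^4-a_0^2\ell^2}{(r_+^2+a_0^2)^2}>0$ and fixing any $L>0$, the hypothesis (\ref{eqn:aRange}) of Lemma~\ref{lemma:Vneg} holds at $a_0$. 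Remark~\ref{rk:Choice}, applied with $\alpha_0$ and $a_0$, then produces a non-empty open interval $I\subseteq(-\infty,9/4)$ with $\alpha_0\in I$, an open interval $I'\ni a_0$ and an $m_0^{(1)}$ such that Lemma~\ref{lemma:FunctionalNegative} applies for every $\alpha\in I$, $a\in I'$, $|m|\ge m_0^{(1)}$; in particular some $f\in C_0^{\infty}(r_+,\infty)$ has $\LL_{\alpha,r_+,a_0}(f)<0$, so $a_0\in\Aa_{\alpha,r_+}$.

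For positivity of the infimum, I would invoke Lemma~\ref{lemma:HR_positivity}: there is an $m_0^{(2)}$ such that for $0<a\le r_+^2/\ell$ (equivalently $r_+^2\ge a\ell$) and $|m|\ge m_0^{(2)}$ one has $\LL_{\alpha,r_+,a}(f)\ge 0$ for all $f\in C_0^{\infty}(r_+,\infty)$, hence no such $a$ lies in $\Aa_{\alpha,r_+}$; therefore $\Aa_{\alpha,r_+}\subseteq(r_+^2/\ell,\infty)$ and $\inf\Aa_{\alpha,r_+}\ge r_+^2/\ell>0$. Taking $m_0:=\max\{m_0^{(1)},m_0^{(2)}\}$ then makes $\Phi(\alpha,r_+)=\inf\Aa_{\alpha,r_+}$ a well-defined element of $(0,\infty)$ for all $\alpha\in I$ and $|m|\ge m_0$.

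The main obstacle I anticipate is that Lemma~\ref{lemma:HR_positivity} is phrased for a fixed pair $(\alpha,a)$, whereas here $a$ ranges over $(0,r_+^2/\ell]$ and $\alpha$ over $I$, so its threshold $m_0^{(2)}$ must be chosen uniformly. I expect to dispatch this by inspecting that lemma's proof: the pointwise estimate $\tilde V_a>-\tfrac{1}{4\ell^2}\frac{\Delta_-r^2}{(r^2+a^2)^2}$, which is the only input into the application of the Hardy inequality of Lemma~\ref{lemma:Hardy}, follows from (\ref{eqn:potential_rewritten}) once $|m|$ is large enough for (\ref{eqn:bound_below}) to render all the $m$-dependent terms there non-negative, and that largeness is uniform over the compact set $\{a:0<a\le r_+^2/\ell\}$ and over $I$ (shrinking $I$, if necessary, to stay bounded away from $9/4$ so that $2-\alpha$ stays bounded below).
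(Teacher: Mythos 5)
Your proposal is correct and follows essentially the same route as the paper, which simply cites Remark~\ref{rk:Choice} (i.e.\ Lemma~\ref{lemma:Vneg}/Lemma~\ref{lemma:FunctionalNegative}) for non-emptiness of $\Aa_{\alpha,r_+}$ and relies on Lemma~\ref{lemma:HR_positivity} for the set being bounded away from zero, exactly as you do. Your additional care about choosing an admissible $a_0>r_+^2/\ell$ and about the uniformity in $m$ of the threshold in Lemma~\ref{lemma:HR_positivity} over $a\in(0,r_+^2/\ell]$ and $\alpha\in I$ is a sound elaboration of what the paper leaves implicit.
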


\begin{proof}
	The set $\Aa_{\alpha,r_+}$ non-empty, open and bounded away from zero for all $\alpha\in I\subseteq (-\infty,9/4)$ by Remark~\ref{rk:Choice}, whence $\Phi(\cdot,r_+)$ is well-defined.
\end{proof}

We shall fix $r_+$ now. Moreover, we shall fix an $m\geq m_0>0$.

\begin{lemma}
	\label{lemma:non-increasing}
	The function $\Phi(\cdot,r_+)$ is non-increasing in $\alpha\in I$.
\end{lemma}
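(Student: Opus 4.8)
The plan is to deduce this from a monotonicity property of the functional itself. Suppose $\alpha_1<\alpha_2$ both lie in $I$ and suppose we can show that, for every fixed $a>0$ and every fixed $f\in C_0^{\infty}(r_+,\infty)$, $\LL_{\alpha_2,r_+,a}(f)\le\LL_{\alpha_1,r_+,a}(f)$. Then $\Aa_{\alpha_1,r_+}(f)\subseteq\Aa_{\alpha_2,r_+}(f)$ for each such $f$, hence $\Aa_{\alpha_1,r_+}\subseteq\Aa_{\alpha_2,r_+}$, and taking infima (both sets are non-empty by Lemma~\ref{lemma:well-defined}) yields $\Phi(\alpha_1,r_+)=\inf\Aa_{\alpha_1,r_+}\ge\inf\Aa_{\alpha_2,r_+}=\Phi(\alpha_2,r_+)$, which is the assertion. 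So everything reduces to showing that $\alpha\mapsto\LL_{\alpha,r_+,a}(f)$ is non-increasing on $I$ for fixed $a$ and $f$.

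To this end I would isolate the $\alpha$-dependence of the integrand. Since $\omega=\omega_+=ma\Xi/(r_+^2+a^2)$ and $\Delta_-$, $V_+$ do not depend on $\alpha$, the only $\alpha$-dependent part of $\tilde V_a=V_++V_0+V_\alpha-\omega^2$ is the explicit term $V_\alpha=-\tfrac{\alpha}{\ell^2}\tfrac{\Delta_-}{(r^2+a^2)^2}(r^2+\Theta(\alpha)a^2)$ together with the ground-state eigenvalue $\lambda=\lambda_{ml}$ (with $l=m$, as fixed above) appearing in $V_0$. Differentiating under the integral sign --- legitimate on $I\setminus\{0\}$ because $f$ is compactly supported and the integrand is smooth in $\alpha$ there --- gives
\begin{align*}
\frac{\del}{\del\alpha}\LL_{\alpha,r_+,a}(f)=\int_{r_+}^{\infty}\frac{1}{r^2+a^2}\left(\frac{\del\lambda}{\del\alpha}-\frac{1}{\ell^2}\bigl(r^2+\Theta(\alpha)a^2\bigr)\right)|f|^2\,\dd r .
\end{align*}
The point I expect to be the main obstacle is that neither summand is separately monotone: by Proposition~\ref{propn:LambdaDeriv}, $\del\lambda/\del\alpha>0$ when $\alpha>0$, so $V_0$ actually increases there. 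This is exactly the difficulty already met in the proof of Lemma~\ref{lemma:del_A__del_alphe}, and it is overcome by the same device of estimating the $\lambda$-term and the $V_\alpha$-term together.

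Concretely, I would plug in the formula for $\del\lambda/\del\alpha$ from Proposition~\ref{propn:LambdaDeriv}. For $\alpha\le0$ one has $\Theta(\alpha)=0$ and $\del\lambda/\del\alpha<0$, so the bracket equals $\del\lambda/\del\alpha-r^2/\ell^2<0$. For $\alpha>0$ one has $\Theta(\alpha)=1$ and $\del\lambda/\del\alpha=\tfrac{a^2}{\ell^2}\int_0^{\pi}\sin^2\theta\,|S_{ml}|^2\sin\theta\,\dd\theta$; using $\int_0^{\pi}|S_{ml}|^2\sin\theta\,\dd\theta=1$ the bracket becomes $\tfrac1{\ell^2}\int_0^{\pi}(a^2\sin^2\theta-r^2-a^2)|S_{ml}|^2\sin\theta\,\dd\theta=-\tfrac1{\ell^2}\int_0^{\pi}(r^2+a^2\cos^2\theta)|S_{ml}|^2\sin\theta\,\dd\theta<0$. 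Hence $\del_\alpha\LL_{\alpha,r_+,a}(f)<0$ on $I\setminus\{0\}$. Finally, $\lambda_{ml}$ depends continuously on $\alpha$ across $\alpha=0$ (the perturbation of the angular operator is $O(|\alpha|)$ in operator norm there) and $V_\alpha\to0$ as $\alpha\to0$, so $\alpha\mapsto\LL_{\alpha,r_+,a}(f)$ is continuous on $I$; being continuous and having negative derivative off the single point $0$, it is strictly decreasing, in particular non-increasing. This proves the lemma.
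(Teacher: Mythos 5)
Your proof is correct and rests on the same idea as the paper's: the monotonicity of $\alpha\mapsto\LL_{\alpha,r_+,a}(f)$ for fixed $a$ and $f$, from which the comparison of the infima of the sets $\Aa_{\alpha,r_+}$ follows immediately (the paper phrases this as a contradiction argument, you as a set inclusion — an immaterial difference). The only added value of your write-up is that you actually verify the asserted monotonicity of the functional, via Proposition~\ref{propn:LambdaDeriv} and the normalisation of $S_{ml}$, which is the same computation the paper performs elsewhere (in the proof of Lemma~\ref{lemma:del_A__del_alphe}) and is carried out correctly here, including the case distinction at $\alpha=0$.
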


\begin{proof}
	Suppose $\Phi(\cdot,r_+)$ was not non-increasing. Then
	\begin{align*}
	\inf\Aa_{\alpha,r_+}<\inf\Aa_{\alpha',r_+}
	\end{align*}
	for some $\alpha<\alpha'$. Hence there is an $a>0$ with
	\begin{align*}
	\inf\Aa_{\alpha,r_+}<a<\inf\Aa_{\alpha',r_+}
	\end{align*}
	and an $f\in C_0^{\infty}$ such that
	\begin{align*}
	\LL_{\alpha,r_+,a}(f)<0\leq\LL_{\alpha',r_+,a}(f).
	\end{align*}
	This contradicts that $\LL_{\alpha,r_+,a}(f)>\LL_{\alpha',r_+,a}(f)$ for all $f$ and $a$ if $\alpha<\alpha'$.
\end{proof}

\begin{lemma}
	\label{lemma:CL}
	The function $\Phi(\cdot,r_+)$ is left-continuous at $\alpha_0$, i.\,e.
	\begin{align*}
	\lim_{\alpha\uparrow\alpha_0}\Phi(\alpha,r_+)=\Phi(\alpha_0,r_+).
	\end{align*}
\end{lemma}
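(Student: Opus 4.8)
The plan is to leverage two facts about $\LL_{\alpha,r_+,a}(f)$ that have already been recorded in this section: for a fixed nonzero $f\in C_0^\infty$ and fixed $a$, the map $\alpha\mapsto\LL_{\alpha,r_+,a}(f)$ is strictly decreasing (this is exactly the monotonicity exploited in the proof of Lemma~\ref{lemma:non-increasing}) and, for fixed $f$, it is continuous in $\alpha$ (noted at the start of the section). Together with Lemma~\ref{lemma:non-increasing}, these will pin down the left limit of $\Phi(\cdot,r_+)$ at $\alpha_0$.

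First I would note that, since $\Phi(\cdot,r_+)$ is non-increasing on $I$ by Lemma~\ref{lemma:non-increasing}, the left limit
\begin{align*}
L:=\lim_{\alpha\uparrow\alpha_0}\Phi(\alpha,r_+)=\inf_{\alpha\in I,\,\alpha<\alpha_0}\Phi(\alpha,r_+)
\end{align*}
exists and automatically satisfies $L\geq\Phi(\alpha_0,r_+)$; the content of the lemma is the reverse inequality $L\leq\Phi(\alpha_0,r_+)$.

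The key step is the set identity $\Aa_{\alpha_0,r_+}=\bigcup_{\alpha\in I,\,\alpha<\alpha_0}\Aa_{\alpha,r_+}$. The inclusion ``$\supseteq$'' is immediate from monotonicity: if $\alpha<\alpha_0$ and $\LL_{\alpha,r_+,a}(f)<0$ for some $f$ (necessarily nonzero), then $\LL_{\alpha_0,r_+,a}(f)<\LL_{\alpha,r_+,a}(f)<0$, so $\Aa_{\alpha,r_+}\subseteq\Aa_{\alpha_0,r_+}$. For ``$\subseteq$'', given $a\in\Aa_{\alpha_0,r_+}$ I would pick $f\in C_0^\infty$ with $\LL_{\alpha_0,r_+,a}(f)<0$ and invoke continuity of $\alpha\mapsto\LL_{\alpha,r_+,a}(f)$ at $\alpha_0$: since $\alpha_0$ lies in the interior of $I$, there is some $\alpha<\alpha_0$ in $I$ with $\LL_{\alpha,r_+,a}(f)<0$, i.e. $a\in\Aa_{\alpha,r_+}$. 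Taking infima in this identity, and using that the infimum of a union equals the infimum of the infima,
\begin{align*}
\Phi(\alpha_0,r_+)=\inf\Aa_{\alpha_0,r_+}=\inf_{\alpha\in I,\,\alpha<\alpha_0}\inf\Aa_{\alpha,r_+}=\inf_{\alpha\in I,\,\alpha<\alpha_0}\Phi(\alpha,r_+)=L,
\end{align*}
which is the desired left-continuity.

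I do not anticipate a substantive obstacle here; the argument is essentially bookkeeping built from monotonicity plus continuity of the functional in $\alpha$. The two points needing mild care are: (i) making sure the $\alpha$ produced in the ``$\subseteq$'' step stays inside $I$, which is fine because $\alpha_0$ is interior to the interval $I$ supplied by Lemma~\ref{lemma:well-defined}/Remark~\ref{rk:Choice}; and (ii) the continuity of $\alpha\mapsto\LL_{\alpha,r_+,a}(f)$ across $\alpha=0$, which does hold because the potential term $V_\alpha$ carries an explicit prefactor $\alpha$ and the angular eigenvalue $\lambda_{ml}$ depends continuously on $\alpha$ there, so the case split in the definition of $P_\alpha$ introduces no discontinuity.
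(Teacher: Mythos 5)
Your proposal is correct and rests on exactly the same ingredients as the paper's proof: the monotonicity of $\Phi(\cdot,r_+)$ from Lemma~\ref{lemma:non-increasing} to one-side the limit, and the continuity of $\alpha\mapsto\LL_{\alpha,r_+,a}(f)$ for fixed $f$ and $a$ to show that membership $a\in\Aa_{\alpha_0,r_+}$ propagates to nearby $\alpha<\alpha_0$. The only difference is presentational: you argue directly via the set identity $\Aa_{\alpha_0,r_+}=\bigcup_{\alpha<\alpha_0}\Aa_{\alpha,r_+}$ and infima, whereas the paper runs the same continuity argument as a proof by contradiction on a supposed jump of size $\epsilon$.
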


\begin{proof}
	Suppose $\Phi(\cdot,r_+)$ was not left-continuous at $\alpha_0$. Then there is an $\epsilon>0$ such that, for all $\delta>0$, there is an $\alpha<\alpha_0$ with
	\begin{align*}
	\alpha_0-\alpha<\delta
	\end{align*}
	and
	\begin{align*}
	\Phi(\alpha,r_+)-\Phi(\alpha_0,r_+)\geq \epsilon.
	\end{align*}
	Then there is an $a$ between $\Phi(\alpha_0,r_+)$ and $\Phi(\alpha,r_+)$ such that there is an $f$ with $a\in\Aa_{\alpha_0,r_+}(f)$,
	but, for each $\delta$, there is an $\alpha$ with $a\notin\Aa_{\alpha,r_+}(f)$.
	Since $\LL_{\alpha_0,r_+,a}(f)<0$ and due to the continuity of $\LL_{\cdot}(f)$, there is a $\delta>0$ such that for all $\alpha_0-\alpha<\delta$, we have $\LL_{\alpha,r_+,a}(f)<0$, i.\,e. $a\in\Aa_{\alpha,r_+}(f)$, a contradiction.
\end{proof}

For $\alpha\in I$, we define
\begin{align}
\label{eqn:defn_omega_+}
\Omega_R(\alpha):=\frac{m\Phi(\alpha,r_+)\left(1-\frac{\Phi(\alpha,r_+)^2}{\ell^2}\right)}{r_+^2+\Phi(\alpha,r_+)^2}.
\end{align}
As shown, this is left-continuous at $\alpha_0$.

Now we turn to the second step. Recall that, for all $\alpha\in I$, there is a periodic Dirichlet mode with frequency $\omega=\Omega_R(\alpha)\in\RR$ in a Kerr-AdS spacetime with parameters $(\ell,r_+,\Phi(\alpha,r_+))$ by Proposition~\ref{propn:a_hat}. Using Section~\ref{subsec:PerturbingD}, we can find unstable Dirichlet mode solutions with frequency $\omega=\omega_R+\im\omega_I=\omega_R(\epsilon)+\im\epsilon$ (where $\omega_R(0)=\Omega_R(\alpha)$) to the Klein-Gordon equation with mass $\alpha(\epsilon)$ (where $\alpha(0)=\alpha_0$). As by Lemma~\ref{lemma:well-defined} the results of Section~\ref{subsec:Crossing} hold, we know that
\begin{align}
\label{eqn:sign_omega_alpha}
\frac{\del\alpha}{\del\epsilon}(0)>0,~~~~~~~~\frac{\del\omega_R}{\del\epsilon}(0)<0.
\end{align}

In this section, $B_{\rho}(x)$ will denote an open $\ell^{\infty}$ ball of radius $\rho$ centered around $x\in\RR^4$, i.\,e.
\begin{align*}
B_{\rho}(x):=\left\{y\in\RR^4\,:\,\max_{j=1,\ldots,4}|x_j-y_j|<\rho\right\}.
\end{align*}
We view the column vectors $(\alpha,\omega_R,\omega_I,a)^t$ as points in $\RR^4$.

Consider
\begin{align*}
D:=\det\begin{pmatrix}
\frac{\del A_R}{\del\omega_R}	& \frac{\del A_R}{\del\alpha}\\
\frac{\del A_I}{\del\omega_R}	& \frac{\del A_I}{\del\alpha}
\end{pmatrix}.
\end{align*}
It was shown in Section~\ref{subsec:PerturbingD} that $D(\alpha,\Omega_R(\alpha),0,\Phi(\alpha,r_+))\neq 0$ for all $\alpha\in I$. 

From Lemma~\ref{lemma:continuous_AB}, we know that $A$ is smooth in $\alpha$, $\omega$ and $a$. Hence there is an $L>0$ such that $D\neq 0$ in $B_L(\alpha,\Omega_R(\alpha_0),0,\Phi(\alpha_0,r_+))$ and such that, for all values $(\alpha,\Omega_R(\alpha),0,\Phi(\alpha,r_+))\in B_L(\alpha,\Omega_R(\alpha_0),0,\Phi(\alpha_0,r_+))$, we have $\alpha\in I$.

 Hence in this neighbourhood, the vector field
	\begin{align*}
	W:=-\begin{pmatrix}
	\frac{\del A_R}{\del\alpha} & \frac{\del A_R}{\del\omega_R} & 0 & 0\\
	\frac{\del A_I}{\del \alpha} & \frac{\del A_I}{\del\omega_R} & 0 & 0\\
	0 & 0 & -1 & 0\\
	0 & 0 & 0 & -1
	\end{pmatrix}^{-1}
	\begin{pmatrix}
	\frac{\del A_R}{\del\omega_I}\\
	\frac{\del A_I}{\del\omega_I}\\
	1\\
	0
	\end{pmatrix}
	\end{align*}
	is well-defined. It is this vector field whose integral curves describe the solutions given by the implicit function theorem as applied in Section~\ref{subsec:PerturbingD}. In particular, solving the ODE
	\begin{align*}
	\frac{\dd}{\dd \epsilon}
	(\alpha(\epsilon),\omega_R(\epsilon),\omega_I(\epsilon),a(\epsilon))^t
	=W(\alpha(\epsilon),\omega_R(\epsilon),\omega_I(\epsilon),a(\epsilon))
	\end{align*}
	with initial conditions $(\alpha,\Omega_R(\alpha),0,\Phi(\alpha,r_+))$ ($\alpha\in I$) gives the previously introduced $\alpha(\epsilon)$ and $\omega_R(\epsilon)$.
	
	Set $W:=(W^{\alpha},W^{\omega_R},W^{\omega_I},W^a)^t$. By (\ref{eqn:sign_omega_alpha}),
	\begin{align*}
	W^{\alpha}(\alpha_0,\Omega_R(\alpha_0),0,\Phi(\alpha_0,r_+))>0
	\end{align*}
	and
	\begin{align*}
	W^{\omega_R}(\alpha_0,\Omega_R(\alpha_0),0,\Phi(\alpha_0,r_+))<0.
	\end{align*}
	Let $\delta>0$. Again by smoothness of $A$, there are $\rho>0$ and $L'\leq L$ such that
	\begin{align*}
	\norm{W-W(\alpha_0,\Omega_R(\alpha_0),0,\Phi(\alpha_0,r_+))}_{\infty}<\delta~~\mathrm{and}~~
	W^{\alpha}\geq\rho,~W^{\omega_R}\leq -\rho
	\end{align*}
	in $B_{L'}(\alpha_0,\Omega_R(\alpha_0),0,\Phi(\alpha_0,r_+))$.
	
	We now study integral curves of $W$ in $B_{L'}(\alpha_0,\omega_R(\alpha_0),0,\Phi(\alpha_0,r_+))$. Let
	\begin{align*}
	\tau\mapsto\gamma(\tau,p)
	\end{align*}
	be the integral curve of $W$ with $\gamma(0,p)=p\in B_{L'}(\alpha_0,\Omega_R(\alpha_0),0,\Phi(\alpha_0,r_+))$.
	Define the map
	\begin{align*}
	&T:\,B_{2L'/3}(\alpha_0,\Omega_R(\alpha_0),0,\Phi(\alpha_0,r_+))\rightarrow\RR,\\
	&T(p)=\inf\left\{\tau>0\,:\,\gamma(\tau,p)\in\overline{B_{2L'/3}(\alpha_0,\Omega_R(\alpha_0),0,\Phi(\alpha_0,r_+))}^c\right\}.
	\end{align*}
	The set $\left\{\tau>0\,:\,\gamma(\tau,p)\in\overline{B_{2L'/3}(\alpha_0,\Omega_R(\alpha_0),0,\Phi(\alpha_0,r_+))}^c\right\}$ is non-empty since
	\begin{align}
	\label{eqn:del_omega_I}
	\frac{\dd}{\dd \tau}\omega_I(\tau)=1.
	\end{align}
	Therefore $T$ is well-defined.
	
	\begin{lemma}
		T is continuous.
	\end{lemma}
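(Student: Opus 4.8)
The plan is to exhibit $T(p)$ as the unique zero of a scalar function $g(\cdot,p)$ which is continuous in $(\tau,p)$ jointly, \emph{strictly} increasing in the flow parameter $\tau$, and negative exactly for $\tau<T(p)$; continuity of $T$ then follows at once from continuous dependence of the flow on initial data plus a sign-change argument. The real work lies in setting up $g$, which exploits the sign structure of the components of $W$. Throughout write $c:=(\alpha_0,\Omega_R(\alpha_0),0,\Phi(\alpha_0,r_+))$ for the centre, $\Omega:=B_{2L'/3}(c)$, and $\gamma(\tau,p)=(\alpha(\tau,p),\omega_R(\tau,p),\omega_I(\tau,p),a(\tau,p))$.

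First I would record that, reading off the definition of $W$, one has $W^{\omega_I}\equiv 1$ (equivalently (\ref{eqn:del_omega_I})) and $W^{a}\equiv 0$ on $B_{L'}(c)$, while by the choice of $L'$ one has $W^{\alpha}\geq\rho>0$ and $W^{\omega_R}\leq-\rho<0$ there. Hence, along any integral curve of $W$ starting at $p\in\Omega$, and for as long as it remains in $B_{L'}(c)$, the component $\alpha(\cdot,p)$ is strictly increasing with rate $\geq\rho$, $\omega_R(\cdot,p)$ is strictly decreasing with rate $\geq\rho$, $\omega_I(\tau,p)=\omega_I(p)+\tau$, and $a(\cdot,p)\equiv a(p)$. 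In particular, of the eight scalar inequalities defining $\overline\Omega$, the five comprising $\alpha\geq\alpha_0-\tfrac{2L'}{3}$, $\omega_R\leq\Omega_R(\alpha_0)+\tfrac{2L'}{3}$, $\omega_I\geq-\tfrac{2L'}{3}$ and $|a-\Phi(\alpha_0,r_+)|\leq\tfrac{2L'}{3}$ are automatically inherited from the initial point. Consequently, for $\tau\geq 0$ with $\gamma(\tau,p)\in B_{L'}(c)$ one has $\gamma(\tau,p)\in\overline\Omega$ if and only if $g(\tau,p)\leq 0$, where
\begin{align*}
g(\tau,p):=\max\Big(\alpha(\tau,p)-\alpha_0-\tfrac{2L'}{3},~\Omega_R(\alpha_0)-\tfrac{2L'}{3}-\omega_R(\tau,p),~\omega_I(p)+\tau-\tfrac{2L'}{3}\Big).
\end{align*}

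Next I would note that $g(\cdot,p)$ is continuous, that $g(0,p)<0$ since $p$ lies in the \emph{open} ball $\Omega$, and that $g(\cdot,p)$ satisfies $g(\tau_2,p)-g(\tau_1,p)\geq\rho'(\tau_2-\tau_1)$ whenever $0\leq\tau_1<\tau_2$, where $\rho':=\min(\rho,1)>0$ (a maximum of functions each obeying this lower Lipschitz bound obeys it too). Since in addition $g(\tau,p)\geq\omega_I(p)+\tau-\tfrac{2L'}{3}$ is positive once $\tau>\tfrac{2L'}{3}-\omega_I(p)$, the function $g(\cdot,p)$ has a unique zero $\tau^\ast(p)\in\big(0,\tfrac{4L'}{3}\big)$, with $g(\cdot,p)<0$ on $[0,\tau^\ast(p))$ and $g(\cdot,p)>0$ afterwards. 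A short standard argument — the curve stays in the compact set $\overline\Omega\subset B_{L'}(c)$ on $[0,\tau^\ast(p)]$, so the maximal solution extends strictly past $\tau^\ast(p)$ — then shows that $\gamma(\cdot,p)$ is defined on a neighbourhood of $[0,\tau^\ast(p)]$ and, combining this with the equivalence above, that $T(p)=\tau^\ast(p)$.

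Finally, continuity of $T=\tau^\ast$ follows in the familiar way. Fix $p_0\in\Omega$ and let $\epsilon>0$ be small, small enough that $[0,T(p_0)+\epsilon]$ lies in the interval of definition of $\gamma(\cdot,p)$ for all $p$ near $p_0$ (which is possible by the previous paragraph). By the strict lower Lipschitz bound, $g(T(p_0)-\epsilon,p_0)\leq-\rho'\epsilon<0$ and $g(T(p_0)+\epsilon,p_0)\geq\rho'\epsilon>0$; since integral curves of a smooth vector field depend continuously (indeed smoothly) on their initial conditions, the map $(\tau,p)\mapsto g(\tau,p)$ is jointly continuous, so there is $\delta>0$ with $\norm{p-p_0}_\infty<\delta\Rightarrow p\in\Omega$ and $g(T(p_0)-\epsilon,p)<0<g(T(p_0)+\epsilon,p)$, whence $T(p)=\tau^\ast(p)\in(T(p_0)-\epsilon,T(p_0)+\epsilon)$. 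I expect the only genuine obstacle to be the equivalence $\gamma(\tau,p)\in\overline\Omega\Leftrightarrow g(\tau,p)\leq 0$ displayed above: one must argue that no face of the box other than the three occurring in $g$ can be responsible for the exit, which is precisely what the sign and constancy data $W^{\alpha}>0$, $W^{\omega_R}<0$, $W^{\omega_I}=1$, $W^{a}=0$ deliver. Everything else is continuous dependence of ODE solutions on parameters together with elementary one-variable analysis.
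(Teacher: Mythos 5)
Your proof is correct, but it is organised genuinely differently from the paper's. The paper works directly with the reference trajectory $\gamma(\cdot,p_0)$: it shows that on $[0,\tau_0-\epsilon]$ the curve keeps a positive distance $d_1$ from the boundary of the box (arguing by contradiction that touching a face would, by the transversality encoded in $W^{\alpha}\geq\rho$, $W^{\omega_R}\leq-\rho$, $W^{\omega_I}=1$, $W^{a}=0$, force an exit before $\tau_0$), that on $[\tau_0+\epsilon,\tau_0+2\epsilon]$ it keeps a positive distance $d_2$ outside, and then traps nearby trajectories in a $d/2$-tube via continuous dependence, concluding $|T(p)-T(p_0)|<\epsilon$. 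You instead use the sign structure once and for all to show that only the three faces $\alpha=\alpha_0+\tfrac{2L'}{3}$, $\omega_R=\Omega_R(\alpha_0)-\tfrac{2L'}{3}$, $\omega_I=\tfrac{2L'}{3}$ can be responsible for the exit, package them into the scalar function $g$, identify $T(p)$ as the unique zero of the strictly increasing function $g(\cdot,p)$, and get continuity of $T$ from joint continuity of $g$ together with the sign change at $T(p_0)\mp\epsilon$. This buys a quantitative margin ($\rho'\epsilon$) and avoids the paper's slightly loose claim that $W$ is ``not parallel to any side'' (literally false for the $a$-faces, though harmless there since $a$ is conserved, exactly as you observe); the paper's route, in turn, never needs to identify which faces are active.

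Two points you should make explicit to close the argument fully: the lower Lipschitz bound for $g(\cdot,p)$, and hence uniqueness of its zero and the identification $T(p)=\tau^{\ast}(p)$, are valid only while the whole segment $\gamma([0,\tau],p)$ remains in $B_{L'}(c)$, where $W^{\alpha}\geq\rho$ and $W^{\omega_R}\leq-\rho$ hold; this is automatic up to, and slightly beyond, the exit time because $\overline{B_{2L'/3}(c)}\subset B_{L'}(c)$ and $W$ is bounded there, but it is not the unrestricted statement ``for all $0\leq\tau_1<\tau_2$'' that you wrote. Likewise, your choice of $\epsilon$ and $\delta$ must also guarantee that the perturbed curves $\gamma(\cdot,p)$ stay in $B_{L'}(c)$ on all of $[0,T(p_0)+\epsilon]$ (so that strict monotonicity, and thus uniqueness of the zero, applies to them as well); this again follows from continuous dependence on initial data, by the same smallness argument you already invoke for the interval of existence. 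With these caveats spelled out, the proof is complete.
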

	
\begin{proof} For this proof use the abbreviation $B:=B_{2L'/3}(\alpha_0,\Omega_R(\alpha_0),0,\Phi(\alpha_0,r_+))$.
	Let $p_0\in B$, $\tau_0:=T(p_0)>0$. Let $0<\epsilon<\tau_0$ such that $\gamma(\tau,p_0)\in \overline{B}^c$ for $\tau_0+\epsilon\leq \tau\leq \tau_0+2\epsilon$, which exists by (\ref{eqn:del_omega_I}). Define
	\begin{align*}
	d_1:=\min\{\mathrm{dist}(\gamma(\tau,p_0),\del B)\,:\,0\leq \tau\leq \tau_0-\epsilon\}
	\end{align*}
	
	We claim that $d_1>0$. Suppose not. Then there is a $\tau'\in(0,\tau_0-\epsilon]$ such that $\mathrm{dist}(\gamma(\tau',p_0),\del B)=0$. Since
	\begin{align}
	\label{eqn:Vectorfield}
	W^{\alpha}\geq\rho,~~~W^{\omega_R}\leq-\rho,~~~W^{\omega_I}=1,~~~W^{a}=0,
	\end{align}
	whence $W$ is not parallel to any side of the boundary $\del B$ of the $\ell^{\infty}$ ball, this would imply that $\gamma(\tau,p_0)\in\overline B^c$ for a range of $\tau$'s in a small neighbourhood of $\tau'$. This, however, contradicts $\tau_0=T(p_0)$. Hence $d_1>0$.
	
	Furthermore define
	\begin{align*}
	d_2:=\min\{\mathrm{dist}(\gamma(\tau,p_0),\del B)\,:\,\tau_0+\epsilon\leq \tau\leq \tau_0+2\epsilon\}.
	\end{align*}
	Using (\ref{eqn:Vectorfield}), we can see again that $d_2>0$. Set $d:=\min(d_1,d_2)$.
	
	Set
	\begin{align*}
	G:=\{\gamma(\tau,p_0)\,:\,\tau\in[0,\tau_0+2\epsilon]\backslash[\tau_0-\epsilon,\tau_0+\epsilon]\}
	\end{align*}
	Since the solutions of linear ODEs depend continuously on the initial data, there is a $\delta>0$ such that, for all $p\in B_{\delta}(p_0)$, $\gamma(\tau,p)$ is in a $d/2$-neighbourhood of $G$ for all $\tau\in[0,\tau_0+2\epsilon]\backslash(\tau_0-\epsilon,\tau_0+\epsilon)$. Thus for all $p\in B_{\delta}(p_0)$, $T(p)>\tau_0-\epsilon$ and $T(p)<\tau_0+\epsilon$, i.\,e. $|T(p)-T(p_0)|<\epsilon$.
\end{proof}

	Hence there exists a $T_0\geq0$ such that
	\begin{align*}
	T(p)\geq T_0
	\end{align*}
	for all $p\in \overline{B_{L'/3}(\alpha_0,\Omega_R(\alpha_0),0,\Phi(\alpha_0,r_+))}$. As continuous functions attain their minimum on compact sets, $T_0$ can be chosen to be positive. This shows that all integral curves of $W$ starting in $B_{L'/3}(\alpha_0,\Omega_R(\alpha_0),0,\Phi(\alpha_0,r_+))$ exist for $0\leq \tau\leq T_0$ and remain in $B_{L'}(\alpha_0,\Omega_R(\alpha_0),0,\Phi(\alpha_0,r_+))$.
	
	We can prove the following
	\begin{lemma} Given $(\alpha(0),\omega_R(0),\omega_I(0),a(0))^t\in B_{L'/3}(\alpha_0,\Omega_R(\alpha_0),0,\Phi(\alpha_0,r_+))$ with
	\begin{align*}
	\alpha_0-T_0\rho\leq\alpha(0)<\alpha_0,
	\end{align*}
	let
	\begin{align*}
	s\mapsto (\alpha(s),\omega_R(s),\omega_I(s),a(s))^t
	\end{align*}
	be the integral curve starting at $(\alpha(0),\omega_R(0),\omega_I(0),a(0))^t$. Then there is a $\tau\in (0,T_0]$ such that $\alpha(\tau)=\alpha_0$.
	\end{lemma}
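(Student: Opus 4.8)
The plan is to exploit the uniform lower bound $W^{\alpha}\geq\rho$ on the relevant ball together with the intermediate value theorem. First I would invoke the preceding paragraph: since $(\alpha(0),\omega_R(0),\omega_I(0),a(0))^t\in B_{L'/3}(\alpha_0,\Omega_R(\alpha_0),0,\Phi(\alpha_0,r_+))$, the integral curve $s\mapsto(\alpha(s),\omega_R(s),\omega_I(s),a(s))^t$ is defined on $[0,T_0]$ and stays inside $B_{L'}(\alpha_0,\Omega_R(\alpha_0),0,\Phi(\alpha_0,r_+))$ for all $s\in[0,T_0]$. On that ball the estimate (\ref{eqn:Vectorfield}), i.\,e. $W^{\alpha}\geq\rho$, holds, so along the curve
\begin{align*}
\frac{\dd}{\dd s}\alpha(s)=W^{\alpha}(\alpha(s),\omega_R(s),\omega_I(s),a(s))\geq\rho>0.
\end{align*}

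Integrating this differential inequality from $0$ to $s$ gives $\alpha(s)\geq\alpha(0)+\rho s$ for every $s\in[0,T_0]$; in particular $\alpha$ is continuous and strictly increasing on $[0,T_0]$. Evaluating at $s=T_0$ and using the hypothesis $\alpha_0-T_0\rho\leq\alpha(0)$ yields
\begin{align*}
\alpha(T_0)\geq\alpha(0)+\rho T_0\geq(\alpha_0-T_0\rho)+\rho T_0=\alpha_0.
\end{align*}
Since also $\alpha(0)<\alpha_0$ by assumption, we have $\alpha(0)<\alpha_0\leq\alpha(T_0)$, so by continuity of $s\mapsto\alpha(s)$ and the intermediate value theorem there is some $\tau\in(0,T_0]$ with $\alpha(\tau)=\alpha_0$, as claimed.

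There is essentially no obstacle here; the only point requiring care is that the argument is valid only as long as the curve remains in the region where $W^{\alpha}\geq\rho$, which is precisely $B_{L'}(\ldots)$, and this containment is exactly what was established in the paragraph immediately preceding the statement (via the definition of $T_0$ as the positive minimum of the escape time $T$ over the compact set $\overline{B_{L'/3}(\ldots)}$). One might also remark that strict monotonicity of $\alpha$ makes the time $\tau$ unique, though this is not needed for the statement.
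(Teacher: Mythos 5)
Your proposal is correct and follows essentially the same argument as the paper: integrate the lower bound $W^{\alpha}\geq\rho$ (valid since the curve stays in $B_{L'}$ up to time $T_0$) to get $\alpha(T_0)\geq\alpha(0)+\rho T_0\geq\alpha_0$, then apply the intermediate value theorem. No issues.
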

	\begin{proof} The ODE yields
	\begin{align*}
	\alpha(T_0)&=\alpha(0)+\int_0^{T_0}W^{\alpha}(\alpha(s),\omega_R(s),\omega_I(s),a(s))\,\dd s\\
		&\geq \alpha(0)+T_0\rho.
	\end{align*}
	If $\alpha_0-T_0\rho\leq\alpha(0)<\alpha_0$, then $\alpha(T_0)\geq\alpha_0$ and, by the intermediate value theorem, there is a $\tau\in (0,T_0]$ such that $\alpha(\tau)=\alpha_0$.
	\end{proof}
	
	The function $\Phi(\cdot,r_+)$ induces the curve
	\begin{align*}
	\Gamma:\,\alpha\mapsto(\alpha,\Omega_R(\alpha),0,\Phi(\alpha,r_+))^t
	\end{align*}
	for $\alpha\in I$; it is continuous on the left at $\alpha_0$. The result of the previous section says that along this curve, the implicit function theorem produces parameter curves that correspond to superradiant modes; these parameter curves are exactly the integral curves of $W$ starting on a point of $\Gamma$. Since $\Gamma$ is left-continuous,
	\begin{align*}
	\Gamma\cap B_{L'/3}(\alpha_0,\Omega_R(\alpha_0),0,\Phi(\alpha_0,r_+))\cap \{\alpha_0-T_0\rho\leq\alpha(0)<\alpha_0\}\neq\emptyset.
	\end{align*}
This shows Theorem~\ref{thm:new}

\section{Growing mode solutions satisfying Neumann boundary conditions}
\label{sec:Neumann_construction}

\subsection{Existence of real mode solutions}
\label{sec:RealDN}

In this section, we will construct growing mode solutions satisfying Neumann boundary conditions. Every result has a counterpart in Section~\ref{sec:Growing_Dirichlet}. In the following, whenever proofs will be short in detail, the reader can extract those from Section~\ref{sec:Growing_Dirichlet}. The two novel techniques in this section are the use of twisted derivatives with appropriately modified Sobolev spaces and a new Hardy inequality (Lemma~\ref{lemma:Hardy_twisted}). 

Fix $\ell>0$ and $r_+$.  In this section, we look at the range $5/4<\alpha<9/4$, i.\,e. $0<\kappa<1$, for Neumann boundary conditions.

To treat the Neumann case variationally, we need to modify the functional, so it becomes finite for Neumann modes. We achieve this by conjugating the derivatives by a power of r; more precisely, we consider the twisted derivative $h\frac{\dd}{\dd r}\left(h^{-1}\cdot\right)$, where $h=r^{-1/2+\kappa}$. This ``kills off" the highest order term of the Neumann branch. Moreover, squaring the twisted derivative term does not introduce any ``mixed terms" in $f$ and its derivative; it only produces a zeroth order term that also makes the potential finite.

Thus introduce the twisted variational functional
\begin{align*}
\tL_a(f):=\int_{r_+}^{\infty}\left(\frac{\Delta_-}{r^2+a^2}h^2\left\lvert\frac{\dd}{\dd r}\left(h^{-1}f\right)\right\lvert^2+\Vmod_a\frac{r^2+a^2}{\Delta_-}|f|^2\right)\,\dd r,
\end{align*}
where $\Vmod_a$ as in Appendix~\ref{sec:twisted_derivative}, i.\,e.
\begin{align*}
\Vmod_a=\tilde V_a+\left(\frac{1}{2}-\kappa\right)\frac{\Delta_-}{r^2+a^2}r^{\frac{1}{2}-\kappa}\frac{\dd}{\dd r}\left(\frac{\Delta_-}{r^2+a^2}r^{-\frac{3}{2}+\kappa}\right).
\end{align*}
By Lemma~\ref{lemma:twisting_positive_g}, $\Vmod_a=\OO(1)$ and $\Vmod_a$ is positive near infinity for sufficiently large $|m|$, which shall be assumed henceforth. Moreover, $\Vmod_a$ is chosen such that the twisted variational problem leads to the same Euler-Lagrange equation as the untwisted one.

For $U\subseteq(r_+,\infty)$, we define the twisted Sobolev norm
\begin{align*}
\norm{f}_{\uH_{\kappa}^1(U)}^2:=\int_U\left(\frac{1}{r^2}|f|^2+r(r-r_+)h^2\left\lvert\frac{\dd}{\dd r}\left(h^{-1}f\right)\right\lvert^2\right)\,\dd r.
\end{align*}
Note that for $U\subseteq(r_+,\infty)$ compact, the $\uH_{\kappa}^1$ norm is equivalent to the standard Sobolev norm.
For $U=(r_+,\infty)$, let $\uH^1_{\kappa}(U)$ be the completion of functions of the form
\begin{align}
\label{eqn:basis_twisting}
f(r)=r^{-\frac{1}{2}+\kappa}g(r)
\end{align}
under $\norm{\cdot}_{\uH_{\kappa}^1(U)}$, where $(x\mapsto g(1/x))\in C_0^{\infty}[0,1/r_+)$. Henceforth, we will sometimes refer to such a function $g$ as being ``compactly supported around infinity".

\begin{lemma}
	\label{lemma:SobolevNeumann}
	Let $f\in\uH^1_{\kappa}(r_+,\infty)$, then $f$ is also in $C(r_++1,\infty)$ and $r^{1/2-\kappa}f(r)$ is bounded.
\end{lemma}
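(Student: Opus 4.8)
The plan is to imitate the proof of Lemma~\ref{lemma:trace}, but working throughout with the \emph{untwisted} quantity $g:=h^{-1}f=r^{1/2-\kappa}f$, since this is what appears inside the twisted norm. Indeed, rewriting $\norm{f}_{\uH_{\kappa}^1(U)}^2$ in terms of $g$, the second summand is $r(r-r_+)h^2\left\lvert\frac{\dd}{\dd r}(h^{-1}f)\right\rvert^2=(r-r_+)r^{2\kappa}|g'|^2$, so the twisted norm controls $\int_{r_+}^{\infty}(r-r_+)r^{2\kappa}|g'|^2\,\dd r$ (and $\int_{r_+}^\infty r^{-3+2\kappa}|g|^2\,\dd r$). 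Fix a sequence $f_n=r^{-1/2+\kappa}g_n$ from the dense class \eqref{eqn:basis_twisting} with $f_n\to f$ in $\uH^1_\kappa$, and set $g_n=h^{-1}f_n$.

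First I would establish continuity. On any compact $[a,b]\subset(r_+,\infty)$ the $\uH^1_\kappa$-norm is equivalent to the standard $H^1[a,b]$-norm (as recorded in the excerpt), so $f_n\to f$ in $H^1[a,b]$, hence in $C[a,b]$ by the one-dimensional Sobolev embedding; thus $f\in C(r_+,\infty)$, in particular $f\in C(r_++1,\infty)$, and likewise $g=r^{1/2-\kappa}f\in C(r_++1,\infty)$ with $g_n\to g$ and $g_n'\to g'$ in $L^2[a,b]$. Next I would obtain the global weighted bound on $g'$: passing to the limit in $\int_a^b(r-r_+)r^{2\kappa}|g_n'|^2\,\dd r\leq\norm{f_n}_{\uH^1_\kappa}^2$ and then letting $[a,b]$ exhaust $(r_+,\infty)$ gives $\int_{r_+}^\infty(r-r_+)r^{2\kappa}|g'|^2\,\dd r\leq\norm{f}_{\uH^1_\kappa}^2<\infty$. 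Setting $R_0:=r_++1$ and applying Cauchy--Schwarz,
\begin{align*}
\int_{R_0}^\infty|g'|\,\dd r\leq\left(\int_{R_0}^\infty\frac{\dd r}{(r-r_+)r^{2\kappa}}\right)^{1/2}\left(\int_{R_0}^\infty(r-r_+)r^{2\kappa}|g'|^2\,\dd r\right)^{1/2}\leq C_{R_0}\norm{f}_{\uH^1_\kappa},
\end{align*}
where $C_{R_0}<\infty$ \emph{precisely because} $\kappa>0$ makes $(r-r_+)^{-1}r^{-2\kappa}\sim r^{-1-2\kappa}$ integrable at infinity. Hence $g'\in L^1(R_0,\infty)$, so for $r\geq R_0$ one has $|g(r)|\leq|g(R_0)|+C_{R_0}\norm{f}_{\uH^1_\kappa}$, which together with continuity on $[R_0,\infty)$ shows that $g=r^{1/2-\kappa}f$ is bounded on $(r_++1,\infty)$.

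The only slightly delicate point is the transfer from the dense class \eqref{eqn:basis_twisting} to the abstract completion: one must check that the weighted estimate $\int(r-r_+)r^{2\kappa}|g'|^2\leq\norm{f}_{\uH^1_\kappa}^2$ survives the limit. This is what the compact-subinterval norm equivalence buys us — it upgrades $\uH^1_\kappa$-convergence to $H^1_{\mathrm{loc}}$-convergence of $f_n$, hence local $L^2$-convergence of $g_n'$ to $g'$ — after which monotone convergence over an exhaustion of $(r_+,\infty)$ finishes it; the rest is routine one-dimensional Sobolev theory. (One could in fact conclude from $g'\in L^1(R_0,\infty)$ that $\lim_{r\to\infty}r^{1/2-\kappa}f(r)$ exists, but only boundedness is needed here.)
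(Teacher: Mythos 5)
Your proposal is correct and is essentially the paper's argument: both rest on the dense class \eqref{eqn:basis_twisting}, the compact-subset norm equivalence plus Sobolev embedding for continuity, and the same Cauchy--Schwarz estimate pairing the weight $(r-r_+)r^{2\kappa}$ against its reciprocal, whose integrability at infinity is exactly the point where $\kappa>0$ enters. The only (harmless) difference is bookkeeping: the paper applies that estimate to $r^{1/2-\kappa}(f_n-f)$ to get uniform convergence on $[R,\infty)$ and then uses boundedness of each $g_n$, whereas you pass the weighted derivative bound to the limit and apply the estimate to $g'$ directly via the fundamental theorem of calculus.
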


\begin{proof}
	The existence of a continuous version follows from Sobolev embedding as in Lemma~\ref{lemma:trace}. Then, there exists a sequence $(f_n)\in C^{\infty}$ as in the definition such that $f_n\rightarrow f$ in $\uH^1_{\kappa}$. Let $\tilde R>R>r_+$ and let $f_n(R)$ converge to $f(R)$:
	\begin{align*}
	\left\lvert \tilde R^{1/2-\kappa}(f_n(\tilde R)-f(\tilde R))\right\lvert&\leq \left\lvert r^{1/2-\kappa}(f_n(R)-f(R))\right\lvert+\int_R^{\tilde R}\left\lvert\frac{\dd}{\dd r}\left(r^{1/2-\kappa}\left(f_n-f\right)\right)\right\lvert\,\dd r\\
	&\leq \left\lvert r^{1/2-\kappa}(f_n(R)-f(R))\right\lvert\\&~~~~~+\left(\int_R^{\infty}r(r-r_+)r^{-1+2\kappa}\left\lvert\frac{\dd}{\dd r}\left(f_n-f\right)\right\lvert^2\,\dd r\right)^{1/2}\times\\&~~~~~~~~~~\times\left(\int_R^{\infty}\frac{1}{r(r-r_+)r^{-1+2\kappa}}\,\dd r\right)^{1/2}
	\end{align*}
	Hence $r^{1/2-\kappa}\left(f_n(r)-f(r)\right)$ converges uniformly for all $r\geq R$. Hence we even have convergence at $r=\infty$. Since $\lim_{r\rightarrow\infty}r^{1/2-\kappa}f_n(r)\neq \infty$ for all $n$, we obtain the result. 
\end{proof}

As in Section~\ref{sec:Real}, choose mode parameters such that the conditions for Lemma~\ref{lemma:FunctionalNegativeTwisted} are satisfied.
Let
\begin{align}
\label{eqn:defn_A}
\Aa:=\{a>0\,:\,\exists\,(x\mapsto g(1/x))\in C_0^{\infty}[0,1/r_+):\,\tilde{\LL}_a(r^{-1/2+\kappa}g)<0\}.
\end{align}
Note that $\Aa$ is non-empty, open and bounded below.

\begin{lemma}
	\label{lemma:Hardy_twisted}
	For $r_{\mathrm{cut}}> r_++1$, $0<\kappa<1$ and a smooth function $f$ with $fr^{1/2-\kappa}=\OO(1)$ at infinity, we have that
	\begin{align*}
	\int_{r_{\mathrm{cut}}}^{\infty}\frac{|f|^2}{r^2}\,\dd r&=\frac{1}{2(1-\kappa)}r_{\mathrm{cut}}^{-1}\left(\frac{c}{2}\int_{r_{\mathrm{cut}}-1}^{r_{\mathrm{cut}}}|f|^2\,\dd r+\frac{1}{2c}\int_{r_{\mathrm{cut}}-1}^{r_{\mathrm{cut}}}\left\lvert\frac{\dd f}{\dd r}\right\lvert^2\,\dd r\right)\\&~~~~~+\frac{1}{(1-\kappa)^2}\int_{r_{\mathrm{cut}}}^{\infty}r^{-1+2\kappa}\left\lvert\frac{\dd}{\dd r}\left(r^{1/2-\kappa}f\right)\right\lvert^2\,\dd r
	\end{align*}
	for any $c>0$ sufficiently large.
\end{lemma}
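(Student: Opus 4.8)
The plan is to reduce everything to a weighted one-dimensional Hardy inequality by the substitution $g := r^{1/2-\kappa}f$. With this notation $\tfrac{|f|^2}{r^2} = r^{-3+2\kappa}|g|^2$, $\;r^{-1+2\kappa}\big|\tfrac{\dd}{\dd r}(r^{1/2-\kappa}f)\big|^2 = r^{-1+2\kappa}|g'|^2$, and $r_{\mathrm{cut}}^{-1}|f(r_{\mathrm{cut}})|^2 = r_{\mathrm{cut}}^{2\kappa-2}|g(r_{\mathrm{cut}})|^2$; moreover the hypothesis $fr^{1/2-\kappa}=\OO(1)$ becomes $g=\OO(1)$ at infinity.

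The core step is the identity obtained by writing $r^{-3+2\kappa} = -\tfrac{1}{2(1-\kappa)}\tfrac{\dd}{\dd r}\big(r^{2\kappa-2}\big)$ — legitimate precisely because $\kappa<1$, so $2\kappa-2<0$ — and integrating by parts on $[r_{\mathrm{cut}},R]$, then letting $R\to\infty$:
\[
\int_{r_{\mathrm{cut}}}^{\infty} r^{-3+2\kappa}|g|^2\,\dd r = \frac{1}{2(1-\kappa)}\,r_{\mathrm{cut}}^{2\kappa-2}|g(r_{\mathrm{cut}})|^2 + \frac{1}{1-\kappa}\int_{r_{\mathrm{cut}}}^{\infty} r^{2\kappa-2}\,\Re\big(g'\bar g\big)\,\dd r .
\]
Here the boundary contribution at $R$ is $\tfrac{1}{2(1-\kappa)}R^{2\kappa-2}|g(R)|^2 = \tfrac{1}{2(1-\kappa)}R^{-1}|f(R)|^2$, which tends to $0$ using $g=\OO(1)$ and $\kappa<1$; this is the only place the asymptotic hypothesis enters, and $\OO(1)$ (rather than $o(1)$) suffices only because the decay is supplied by the weight. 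This identity is precisely why the coefficient $\tfrac{1}{2(1-\kappa)}$ shows up in the lemma.

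It then remains to estimate the two terms on the right. For the bulk term one bounds $\Re(g'\bar g)\le|g'||g|$, splits the weight as $r^{2\kappa-2}=r^{(2\kappa-1)/2}\cdot r^{(2\kappa-3)/2}$, applies Cauchy–Schwarz to obtain $\le\big(\int_{r_{\mathrm{cut}}}^\infty r^{2\kappa-1}|g'|^2\big)^{1/2}\big(\int_{r_{\mathrm{cut}}}^\infty r^{2\kappa-3}|g|^2\big)^{1/2}$, and then, via a weighted Young inequality, absorbs a portion of $\int_{r_{\mathrm{cut}}}^\infty r^{2\kappa-3}|g|^2\,\dd r$ into the left-hand side, leaving the sharp constant $\tfrac{1}{(1-\kappa)^2}$ in front of $\int_{r_{\mathrm{cut}}}^\infty r^{-1+2\kappa}|g'|^2\,\dd r$; if that last integral is infinite there is nothing to prove, which also legitimises the integrations by parts. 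For the trace term one uses, for $s\in[r_{\mathrm{cut}}-1,r_{\mathrm{cut}}]$, the fundamental theorem of calculus $|f(r_{\mathrm{cut}})|^2 = |f(s)|^2 + \int_s^{r_{\mathrm{cut}}}(|f|^2)'\,\dd r \le |f(s)|^2 + \int_{r_{\mathrm{cut}}-1}^{r_{\mathrm{cut}}} 2|f'||f|\,\dd r$, averages over $s$ (recall $r_{\mathrm{cut}}>r_++1$, so everything here is on a fixed bounded interval), and applies Young's inequality to $2|f'||f|$ with a weight tuned to $c$; taking $c$ large absorbs the remaining additive $\int_{r_{\mathrm{cut}}-1}^{r_{\mathrm{cut}}}|f|^2$ and yields a bound of the shape $\tfrac{c}{2}\int_{r_{\mathrm{cut}}-1}^{r_{\mathrm{cut}}}|f|^2\,\dd r + \tfrac{1}{2c}\int_{r_{\mathrm{cut}}-1}^{r_{\mathrm{cut}}}|f'|^2\,\dd r$. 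Recombining the two estimates with the identity above gives the assertion (with ``$=$'' read as ``$\le$'').

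The argument is entirely elementary — one integration by parts plus Cauchy–Schwarz and Young — so the only genuine work is bookkeeping: tuning the two Young parameters so that all the constants come out as stated simultaneously, and making sure that the boundary term at infinity really vanishes under the mere $\OO(1)$ assumption, which is exactly the point where $\kappa<1$ is used.
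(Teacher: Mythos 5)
Your proposal is correct and follows essentially the same route as the paper: the same integration by parts against the weight $r^{2\kappa-2}$ (using $\kappa<1$ and $r^{1/2-\kappa}f=\OO(1)$ to kill the boundary term at infinity), the same Cauchy--Schwarz/Young absorption yielding the $\tfrac{1}{(1-\kappa)^2}$ coefficient, and an elementary trace bound on $[r_{\mathrm{cut}}-1,r_{\mathrm{cut}}]$ (the paper uses a cutoff function where you average via the fundamental theorem of calculus, an immaterial difference). You are also right that the stated ``$=$'' should be read as ``$\leq$''.
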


\begin{proof}
	We compute:
	\begin{align*}
	\int_{r_{\mathrm{cut}}}^{\infty}\frac{|f|^2}{r^2}\,\dd r&=\frac{1}{2(1-\kappa)}r_{\mathrm{cut}}^{-1}|f|^2(r_{\mathrm{cut}})\\&~~~~~+\frac{1}{1-\kappa}\int_{r_{\mathrm{cut}}}^{\infty}r^{2\kappa-2}\Re\left(r^{1/2-\kappa}f\frac{\dd}{\dd r}\left( r^{1/2-\kappa}\overline f\right)\right)\,\dd r.
	\end{align*}
	Using the Cauchy-Schwarz inequality, one easily sees that
	\begin{align*}
	\int_{r_{\mathrm{cut}}}^{\infty}\frac{|f|^2}{r^2}\,\dd r&\leq \frac{1}{1-\kappa}r_{\mathrm{cut}}^{-1}|f|^2(r_{\mathrm{cut}})\\&~~~~~+\frac{1}{(1-\kappa)^2}\int_{r_{\mathrm{cut}}}^{\infty}r^{-1+2\kappa}\left\lvert\frac{\dd}{\dd r}\left(r^{1/2-\kappa}f\right)\right\lvert^2\,\dd r.
	\end{align*}
	Let $\chi\geq 0$ be a smooth function of compact support with $\chi(r_{\mathrm{cut}})=1$ and $\chi=0$ for $r\leq r_{\mathrm{cut}-1}$. Then
	\begin{align*}
	|f|^2(r_{\mathrm{cut}})&\leq|f|^2(r_{\mathrm{cut}})\chi(r_{\mathrm{cut}})\\
	&\leq \frac{c}{2}\int_{r_{\mathrm{cut}}-1}^{r_{\mathrm{cut}}}|f|^2\,\dd r+\frac{1}{2c}\int_{r_{\mathrm{cut}}-1}^{r_{\mathrm{cut}}}\left\lvert\frac{\dd f}{\dd r}\right\lvert^2\,\dd r
	\end{align*}
	for any $c>0$ sufficiently large.
\end{proof}

\begin{lemma}
	\label{lemma:VariationInequalityND}
	Let $a\in \Aa$ be fixed. There exist constants $r_+<B_0<B_1<\infty$ and $C_0,C_1>0$, such that, for large enough $m$, we have for all smooth functions $f$ for which the following integrals are defined that
	\begin{align*}
	\int_{r_+}^{\infty}\left(\frac{\Delta_-}{r^2+a^2}h^2\left\lvert\frac{\dd}{\dd r}\left(h^{-1}f\right)\right\lvert^2+C_01_{[B_0,B_1]^c}\frac{|f|^2}{r^2}\right)\,\dd r\leq C_1\int_{B_0}^{B_1}|f|^2\,\dd r+2\tL_{a}(f).
	\end{align*}
\end{lemma}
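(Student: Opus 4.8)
The plan is to mimic the proof of Lemma~\ref{lemma:VariationInequalityD}, but with the untwisted derivative replaced by the twisted one $h\frac{\dd}{\dd r}(h^{-1}\cdot)$, and with the new Hardy inequality of Lemma~\ref{lemma:Hardy_twisted} used in place of Lemma~\ref{lemma:Hardy}. First I would rewrite the quantity to be controlled: for $f$ as in the statement one has, by definition of $\tL_a$,
\begin{align*}
2\tL_a(f)=2\int_{r_+}^{\infty}\frac{\Delta_-}{r^2+a^2}h^2\left\lvert\frac{\dd}{\dd r}\left(h^{-1}f\right)\right\lvert^2\,\dd r+2\int_{r_+}^{\infty}\Vmod_a\frac{r^2+a^2}{\Delta_-}|f|^2\,\dd r,
\end{align*}
so that the heart of the matter is to absorb the possibly-negative part of the zeroth-order term into the (nonnegative) twisted gradient term, up to an error supported on a compact interval $[B_0,B_1]$. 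By Lemma~\ref{lemma:twisting_positive_g}, $\Vmod_a=\OO(1)$ globally and is \emph{positive near infinity} once $|m|$ is large; also $\frac{r^2+a^2}{\Delta_-}\sim\ell^2$ as $r\to\infty$, so $\Vmod_a\frac{r^2+a^2}{\Delta_-}$ is bounded and eventually nonnegative. As in Lemma~\ref{lemma:VariationInequalityD}, I would split according to the sign and size of $\Vmod_a\frac{r^2+a^2}{\Delta_-}$: it is positive near the horizon (using the same computation of $(r_+^2+a^2)\tfrac{\tilde V}{\Delta_-}(r_+)$ plus the twisting correction, which is $\OO(1)$ and hence dominated for large $|m|$ by the $\Xi^2m^2$ term), and it is controlled from below near infinity by the Hardy bound.

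The key step is the treatment of the region near infinity. Here I would use Lemma~\ref{lemma:Hardy_twisted} with some cut $r_{\mathrm{cut}}>r_++1$ chosen large enough that $\Vmod_a\frac{r^2+a^2}{\Delta_-}\geq 0$ on $(r_{\mathrm{cut}},\infty)$; since $h^2|\tfrac{\dd}{\dd r}(h^{-1}f)|^2=r^{-1+2\kappa}|\tfrac{\dd}{\dd r}(r^{1/2-\kappa}f)|^2$, the inequality of Lemma~\ref{lemma:Hardy_twisted} gives, after multiplying by $\tfrac{1}{r^2}$ weights appropriately,
\begin{align*}
\int_{r_{\mathrm{cut}}}^{\infty}\frac{|f|^2}{r^2}\,\dd r\leq C\left(\int_{r_{\mathrm{cut}}-1}^{r_{\mathrm{cut}}}\bigl(|f|^2+\bigl\lvert\tfrac{\dd f}{\dd r}\bigr\rvert^2\bigr)\,\dd r+\int_{r_{\mathrm{cut}}}^{\infty}\frac{\Delta_-}{r^2+a^2}h^2\left\lvert\frac{\dd}{\dd r}\left(h^{-1}f\right)\right\lvert^2\,\dd r\right),
\end{align*}
where I have used $\tfrac{\Delta_-}{r^2+a^2}$ bounded away from $0$ on $(r_{\mathrm{cut}},\infty)$ to restore the $\tfrac{\Delta_-}{r^2+a^2}$ factor in the gradient integral. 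On the compact window $[r_{\mathrm{cut}}-1,r_{\mathrm{cut}}]$ the $\uH^1_\kappa$ and standard Sobolev norms are equivalent, so the first-derivative term there is in turn bounded by the twisted gradient over that window plus a zeroth-order term; thus that window is absorbed into $[B_0,B_1]$ (take $B_1$ slightly above $r_{\mathrm{cut}}$). This produces, for a small $\epsilon>0$ coming from the positivity margin near infinity, a bound of the shape $\epsilon\int_{r_+}^{\infty}\bigl(\tfrac{\Delta_-}{r^2+a^2}h^2|\tfrac{\dd}{\dd r}(h^{-1}f)|^2+C_0 1_{[B_0,B_1]^c}\tfrac{|f|^2}{r^2}\bigr)\leq \epsilon C_1\int_{B_0}^{B_1}|f|^2+2\tL_a(f)$, whence the claim after rescaling constants.

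I expect the main obstacle to be purely bookkeeping of the constants: unlike the Dirichlet case, one cannot take $C_2=1$ for $\alpha<2$ here, because the twisting correction in $\Vmod_a$ is a genuine $\OO(1)$ term with no definite sign and the Hardy constant $\tfrac{1}{(1-\kappa)^2}$ is not sharp enough to avoid the factor $2$; consequently one has to carry the constant $2$ in front of $\tL_a$ and absorb the deficit using largeness of $|m|$ (which makes $\Vmod_a\frac{r^2+a^2}{\Delta_-}$ near the horizon as large as needed via the $\Xi^2m^2 r_+^4/(r_+^2+a^2)^3$ lower bound, exactly as in Lemma~\ref{lemma:VariationInequalityD}). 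No delicate new estimate is needed beyond Lemma~\ref{lemma:Hardy_twisted}; the subtlety is only to check that everything the twisting introduces stays $\OO(1)$ and hence is beaten by the spectral gap in $|m|$ and by the Hardy term near infinity.
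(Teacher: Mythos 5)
Your argument is correct and is essentially the paper's own proof: follow Lemma~\ref{lemma:VariationInequalityD}, use that the twisting correction to the potential is independent of $m$ so the near-horizon positivity survives for large $|m|$, and near infinity invoke the twisted Hardy inequality of Lemma~\ref{lemma:Hardy_twisted} with the window terms pushed into $[B_0,B_1]$ and the spare copy of the twisted gradient (the factor $2$ in front of $\tL_a$) doing the absorption — the only cosmetic difference being that the paper bounds $\Vmod_a\frac{r^2+a^2}{\Delta_-}$ below by $-C/(2r^2)$ while you use its eventual positivity from Lemma~\ref{lemma:twisting_positive_g}. One harmless slip: $\frac{r^2+a^2}{\Delta_-}\sim\ell^2/r^2$ (not $\ell^2$) as $r\to\infty$, which only strengthens your claim that $\Vmod_a\frac{r^2+a^2}{\Delta_-}$ is bounded and eventually nonnegative (it is in fact $\OO(r^{-2})$).
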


\begin{proof}
	The proof follows the strategy of Lemma~\ref{lemma:VariationInequalityD}. The analysis of the potential goes through as in Section~\ref{sec:Real} as the twisting part of $\Vmod_a$ does not depend on $m$ and has the right asymptotics. Thus we know that there is an $R_1$ such that
	\begin{align*}
	\frac{r^2+a^2}{\Delta_-}\Vmod_a>0
	\end{align*}
	on $(r_+,R_1)$. Moreover, there is an $R_2>R_1$ such that
	\begin{align*}
	\frac{r^2+a^2}{\Delta_-}\Vmod_a>-\frac{C}{2r^2}~~~\mathrm{and}~~~
	\frac{C}{(1-\kappa)^2}<\frac{\Delta_-}{r^2+a^2}
	\end{align*}
	for $r\geq R_2$.
	Hence
	\begin{align*}
	\int_{R_2}^{\infty}\frac{r^2+a^2}{\Delta_-}\Vmod_a|f|^2\,\dd r&\geq -\frac{1}{2}\int_{R_2-1}^{\infty}\frac{\Delta_-}{r^2+a^2}r^{-1+2\kappa}\left\lvert\frac{\dd}{\dd r}\left(r^{1/2-\kappa}f\right)\right\lvert^2\,\dd r-C'\int_{R_2-1}^{R_2}|f|^2\,\dd r
	\end{align*}
	for some large constant $C'>0$ by the Hardy inequality of Lemma~\ref{lemma:Hardy_twisted}. Choosing $B_0$,$ B_1$, $C_0$ and $C_1$ appropriately (as in the proof of Lemma~\ref{lemma:VariationInequalityD}), we obtain the inequality.
\end{proof}

\begin{lemma}
	\label{lemma:Semicts}
	The functional $\tL_a$ is weakly lower semicontinuous in $\uH_{\kappa}^1(r_+,\infty)$.
\end{lemma}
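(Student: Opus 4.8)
The plan is to follow the proof of Lemma~\ref{lemma:SemictsD} essentially verbatim, the only new inputs being that the twisted gradient term is still convex in the derivative and that, thanks to the twisting, the potential part is better behaved at infinity than in the Dirichlet case. Write $\tL_a=\tL_a^{(1)}+\tL_a^{(2)}$ with
\begin{align*}
\tL_a^{(1)}(f):=\int_{r_+}^{\infty}\frac{\Delta_-}{r^2+a^2}h^2\left\lvert\frac{\dd}{\dd r}\left(h^{-1}f\right)\right\lvert^2\,\dd r,\qquad\tL_a^{(2)}(f):=\int_{r_+}^{\infty}\Vmod_a\frac{r^2+a^2}{\Delta_-}|f|^2\,\dd r,
\end{align*}
and let $f_n\rightharpoonup f$ weakly in $\uH_{\kappa}^1(r_+,\infty)$; one treats the two pieces separately.

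For the gradient term, I would expand $h^2\left\lvert\frac{\dd}{\dd r}\left(h^{-1}f\right)\right\lvert^2=\left\lvert\frac{\dd f}{\dd r}-h^{-1}\frac{\dd h}{\dd r}f\right\lvert^2$, so that the integrand is a nonnegative, jointly convex quadratic form in $\bigl(f,\tfrac{\dd f}{\dd r}\bigr)$, and observe that the weight $\Delta_-/(r^2+a^2)$ is bounded by a fixed multiple of the weight $r(r-r_+)$ appearing in $\norm{\cdot}_{\uH_{\kappa}^1}$ (both vanish to first order at $r_+$ and grow like $r^2$ at infinity, and are continuous and positive in between). Hence $\tL_a^{(1)}$ is a bounded, convex quadratic functional on the Hilbert space $\uH_{\kappa}^1(r_+,\infty)$, and such functionals are weakly lower semicontinuous; alternatively this is the standard convexity criterion of the calculus of variations, a proof of which can be extracted from \citep[][\textsection 8]{Evans}.

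For the potential term the key structural observation is that $\Vmod_a\,(r^2+a^2)/\Delta_-$ is continuous on $(r_+,\infty)$, strictly positive on $(r_+,R_1)$ for some $R_1>r_+$ (shown in the proof of Lemma~\ref{lemma:VariationInequalityND}), and, since $\Vmod_a=\OO(1)$ is strictly positive near infinity by Lemma~\ref{lemma:twisting_positive_g}, it is of order $\OO(r^{-2})$ and strictly positive for all large $r$. Therefore $K:=\overline{\{r\in(r_+,\infty):\Vmod_a(r)<0\}}$ is a compact subset of $(r_+,\infty)$ on which the weight is bounded. Now set $L:=\liminf_n\tL_a(f_n)$; assuming $L<\infty$, pass to a subsequence along which $\tL_a(f_n)\to L$, and --- using that $\norm{\cdot}_{\uH_{\kappa}^1}$ is equivalent to the standard Sobolev norm on compact subintervals, together with Rellich-Kondrachov --- pass to a further subsequence with $f_n\to f$ strongly in $L^2_{\mathrm{loc}}(r_+,\infty)$ and pointwise almost everywhere. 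On $K$, boundedness of the weight and strong $L^2$ convergence give $\int_K\Vmod_a\frac{r^2+a^2}{\Delta_-}|f_n|^2\,\dd r\to\int_K\Vmod_a\frac{r^2+a^2}{\Delta_-}|f|^2\,\dd r$, whereas on $(r_+,\infty)\backslash K$ the integrand is nonnegative, so Fatou's lemma gives $\liminf$ in the right direction. Adding these to the weak lower semicontinuity of $\tL_a^{(1)}$ along the subsequence yields $L\geq\tL_a(f)$, which is the claim (and, in contrast to Lemma~\ref{lemma:SemictsD}, no norm constraint is needed, precisely because the negative part of the weighted potential is now compactly supported).

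The only genuine issue --- and hence the thing to get right --- is the compactness of $K$: without it one would have to worry about concentration of $\lvert f_n\rvert^2$ at the horizon or at infinity, as in the proof of Lemma~\ref{lemma:RegMinD}. But this compactness is exactly what Lemma~\ref{lemma:VariationInequalityND} (positivity near $r_+$) and Lemma~\ref{lemma:twisting_positive_g} (positivity near infinity) already provide, so no further estimate is needed and the rest is routine.
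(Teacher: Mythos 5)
Your proposal is correct and takes essentially the same route as the paper, which for both the Dirichlet and Neumann functionals simply invokes the standard convexity-in-the-derivative criterion of the calculus of variations (Evans, \textsection 8); your splitting into the twisted gradient term (a bounded, convex quadratic form on $\uH^1_{\kappa}$, hence weakly lower semicontinuous) and the potential term (handled by Rellich--Kondrachov on compact subintervals plus Fatou, using positivity of $\Vmod_a$ near $r_+$ and near infinity for large $|m|$) is just that criterion written out in detail. Your remark that no $\uL^2$-norm constraint is needed here, because the negative part of the twisted weighted potential is compactly supported, is consistent with the paper's statement of the lemma and correctly identifies where the Neumann case is easier than the Dirichlet one.
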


\begin{proof}
	See the comments to Lemma~\ref{lemma:SemictsD}.
\end{proof}

\begin{lemma}
	\label{lemma:RegMinim}
	Let $a\in\Aa$. 	
	There exists an $f_{a}\in\uH_{\kappa}^1(r_+,\infty)$ with norm  $\norm{f_{a}}_{\uL^2(r_+,\infty)}=1$ such that $\tL_a$ achieves its infimum in
	\begin{align*}
	\uH_{\kappa}^1(r_+,\infty)\cap\{\norm{f}_{\uL^2(r_+,\infty)}=1\} 
	\end{align*}
	on $f_{a}$.
\end{lemma}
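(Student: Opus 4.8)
The plan is to run the direct method of the calculus of variations, mirroring the proof of Lemma~\ref{lemma:RegMinD} with the untwisted objects $\uH^1$, $\LL_a$ and the Hardy inequality of Lemma~\ref{lemma:Hardy} replaced throughout by the twisted objects $\uH^1_{\kappa}$, $\tL_a$ and Lemma~\ref{lemma:Hardy_twisted}. Write
\[
\nu_a:=\inf\left\{\tL_a(f)\,:\,f\in\uH^1_{\kappa}(r_+,\infty),\ \norm{f}_{\uL^2(r_+,\infty)}=1\right\}.
\]
Since $a\in\Aa$ — and $\Aa$ is non-empty by Lemma~\ref{lemma:FunctionalNegativeTwisted} — there is a function $r^{-1/2+\kappa}g$, with $g$ compactly supported around infinity, on which $\tL_a$ is negative, so after $\uL^2$-normalisation one gets $\nu_a<0$. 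Choose a minimising sequence $(f_n)$ of such functions, $\norm{f_n}_{\uL^2}=1$, $\tL_a(f_n)\to\nu_a$. Estimating $\int_{B_0}^{B_1}|f_n|^2\,\dd r\le B_1^2\norm{f_n}_{\uL^2}^2=B_1^2$, the coercivity estimate of Lemma~\ref{lemma:VariationInequalityND} shows simultaneously that $\nu_a>-\infty$ and that $\norm{f_n}_{\uH^1_{\kappa}}$ is uniformly bounded. By Banach--Alaoglu a subsequence converges weakly in $\uH^1_{\kappa}$ to some $f_a$, and by Rellich--Kondrachov together with the equivalence of the twisted and standard Sobolev norms on compact subsets of $(r_+,\infty)$ (there $h=r^{-1/2+\kappa}$ is smooth and bounded away from $0$ and $\infty$) a further subsequence converges strongly in $L^2$, hence a.e., on every such compact set.

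As in Lemma~\ref{lemma:RegMinD}, $\uH^1_{\kappa}$ is a norm-closed convex subspace, hence weakly closed, so $f_a\in\uH^1_{\kappa}$; and the weak lower semicontinuity of Lemma~\ref{lemma:Semicts} — valid because the negative part of $\Vmod_a$ is supported on a fixed compact interval, where strong $L^2$-convergence applies, while the gradient term and the non-negative part of the potential are handled by the usual convexity/Fatou arguments — gives $\tL_a(f_a)\le\liminf_n\tL_a(f_n)=\nu_a<0$; in particular $f_a\neq 0$.

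It remains to show $\norm{f_a}_{\uL^2}=1$. Writing $c:=\norm{f_a}_{\uL^2}\in(0,1]$, if $c<1$ then $f_a/c$ is admissible and $\tL_a(f_a/c)=c^{-2}\tL_a(f_a)\le c^{-2}\nu_a<\nu_a$, contradicting the definition of $\nu_a$; hence $c=1$. Alternatively, one rules out loss of $\uL^2$-mass directly, exactly as in Lemma~\ref{lemma:RegMinD}: near the horizon, Sobolev embedding on a fixed compact neighbourhood of an interior point $\hat r$ bounds $|f_n(\hat r)|$ uniformly, and combined with the $\uH^1_{\kappa}$-bound on $\int_{r_+}^{\hat r}(r-r_+)|f_n'|^2\,\dd r$ this yields $|f_n(r)|\le C\bigl(1+\sqrt{|\log(r-r_+)|}\,\bigr)$, whose square is integrable near $r_+$; near infinity the twisted Hardy inequality of Lemma~\ref{lemma:Hardy_twisted} with $r_{\mathrm{cut}}=R_0$ bounds its tail term $\int_{R_0}^{\infty}r^{-1+2\kappa}\bigl\lvert\tfrac{\dd}{\dd r}(r^{1/2-\kappa}f_n)\bigr\rvert^2\,\dd r=\int_{R_0}^{\infty}\tfrac{r^2+a^2}{\Delta_-}\cdot\tfrac{\Delta_-}{r^2+a^2}h^2\bigl\lvert\tfrac{\dd}{\dd r}(h^{-1}f_n)\bigr\rvert^2\,\dd r$ by $O(R_0^{-2})$, using $\tfrac{r^2+a^2}{\Delta_-}=O(r^{-2})$ and the uniformly bounded first term of $\tL_a$, the boundary terms at $R_0$ being absorbed via Lemma~\ref{lemma:VariationInequalityND} and the positivity of $\Vmod_a$ near infinity. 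In either case $\norm{f_a}_{\uL^2}=1$; since $f_a$ is admissible, $\nu_a\le\tL_a(f_a)$, which together with the previous paragraph gives $\tL_a(f_a)=\nu_a$, so the infimum is attained at $f_a$. (That $f_a$ has the decay implicit in membership of $\uH^1_{\kappa}$ is Lemma~\ref{lemma:SobolevNeumann}.)

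The step I expect to be the main obstacle is pinning down the behaviour of the minimising sequence near infinity: in the Dirichlet case the coercivity estimate already bounds the unweighted $L^2$-norm on $(R_0,\infty)$, but here the minimisers only decay like $r^{-1/2+\kappa}$, so one genuinely needs the twisted Hardy inequality of Lemma~\ref{lemma:Hardy_twisted}, with careful bookkeeping of its $c$-dependent boundary terms at $r_{\mathrm{cut}}$, to control the weighted $\uL^2$-mass in the tail — or, as above, to lean on the strict negativity $\nu_a<0$ (and the fact that only weak lower semicontinuity, not boundedness below, is needed at $f_a$) to bypass the full concentration-compactness argument.
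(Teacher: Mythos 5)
Your argument is correct, but at the decisive step --- recovering the constraint $\norm{f_a}_{\uL^2}=1$ for the weak limit --- it takes a genuinely different route from the paper. The paper never uses strict negativity of the infimum: it excludes loss of $\uL^2$-mass directly, near the horizon via the $\sqrt{|\log(r-r_+)|}$ pointwise bound (as in Lemma~\ref{lemma:RegMinD}), and near infinity not through Lemma~\ref{lemma:Hardy_twisted} but through the elementary estimate $r^{1/2-\kappa}|f_{a,n}(r)|\le\int_{r_+}^{r}\bigl\lvert\tfrac{\dd}{\dd r'}\bigl(r'^{1/2-\kappa}f_{a,n}\bigr)\bigr\rvert\,\dd r'\le C$ (the minimising sequence vanishes near $r_+$ and the $\uH^1_{\kappa}$-bound controls the weighted derivative), which gives $\int_{R_0}^{\infty}r^{-2}|f_{a,n}|^2\,\dd r\lesssim\int_{R_0}^{\infty}r^{-3+2\kappa}\,\dd r\to 0$ uniformly in $n$; only afterwards is weak lower semicontinuity invoked. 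You instead invoke lower semicontinuity first, get $\tL_a(f_a)\le\nu_a<0$ (strict negativity is exactly what $a\in\Aa$ provides), and then use quadratic homogeneity $\tL_a(f_a/c)=c^{-2}\tL_a(f_a)$ together with $c:=\norm{f_a}_{\uL^2}\le 1$ to force $c=1$; this is shorter and bypasses the tail analysis entirely, at the price of leaning on $\nu_a<0$ and on $c\le 1$, which you should justify explicitly (Fatou along an a.e.-convergent diagonal subsequence), whereas the paper's mass-control argument is insensitive to the sign of $\nu_a$. Your use of lower semicontinuity for a limit of possibly smaller norm is legitimate for the reason you indicate: under the standing largeness assumption on $|m|$, the negative part of $\tfrac{r^2+a^2}{\Delta_-}\Vmod_a$ is supported in a fixed compact subinterval of $(r_+,\infty)$, where the strong local $L^2$ convergence applies. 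One side remark: your sketched ``alternative'' tail estimate misattributes the paper's mechanism --- Lemma~\ref{lemma:Hardy_twisted} is not what is used there, and its boundary contributions on $[R_0-1,R_0]$ carry only a factor $R_0^{-1}$, so they are not absorbed by ``positivity of $\Vmod_a$'' but require the uniform pointwise bound $|f_{a,n}|\lesssim r^{-1/2+\kappa}$ (i.e.\ essentially the paper's computation); since your primary argument is complete, this does not affect the correctness of the proposal.
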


\begin{proof} 
	The proof is similar to the one in Section~\ref{sec:Real}. We obtain a minimising sequence $(f_{a,n})$ that converges weakly in $\uH_{\kappa}^1$ and strongly in $L^2$ on compact subsets of $(r_+,\infty)$. In analogy to the Dirichlet, the $f_{a,n}$ are can be taken from a dense subset and can be chosen to be of the form $f_{a,n}=r^{-1/2+\kappa}g_{a,n}$ for $g_{a,n}$ smooth and compactly supported around infinity.
	
	We will show that the norm is conserved. Suppose not. 
	Then, for any $N$, there are infinitely many of the $f_{a,n}$ such that
	\begin{align*}
	\norm{f_{a,n}}_{\uL^2((r_+,\infty)\backslash [r_++1/N,N])}\geq \rho>0.
	\end{align*}
	Suppose
	\begin{align*}
	\norm{f_{a,n}}_{\uL^2(r_+,r_++\delta)}\geq \rho_1>0
	\end{align*}
	for infinitely many $f_{a,n}$ and any $\delta>0$.
	Because of the $L^2$ convergence on compact subsets, there is an $R$ such that $f_{a,n}(R)\rightarrow f_a(R)$ as $n\rightarrow\infty$, in particular $f_{a,n}(R)$ is bounded for all $n$.	
	By Lemma~\ref{lemma:VariationInequalityND}, we have for $r\in(r_+,R)$:
	\begin{align*}
	|r^{1/2-\kappa}f_{a,n}(r)|&\leq \int_{r}^{R}\left\lvert\frac{\dd }{\dd r'}\left(r'^{1/2-\kappa}f_{a,n}\right)\right\lvert\,\dd r'+R^{1/2-\kappa}f_{a,n}(R)\\
	&\leq \left(\int_{r}^{R}\frac{1}{r'-r_+}\,\dd r'\right)^{1/2}\times\\&~~~~~~~~~~\times\left(\int_{r}^{R}(r'-r_+)r'^{-1+2\kappa}\left\lvert\frac{\dd}{\dd r'}\left(r'^{1/2-\kappa}f_{a,n}\right)\right\lvert^2\,\dd r'\right)^{1/2}\\&~~~~~~~~~~~~~~~+R^{1/2-\kappa}f_{a,n}(R)\\
	&\leq C\left(1+\sqrt{\log\frac{R-r_+}{r-r_+}}\right)
	\end{align*}
	for a constant $C>0$. Since $r\mapsto \sqrt{|\log (r-r_+)|}$ is integrable on compact subsets of $[r_+,\infty)$, we obtain $\norm{f_{a,n}}_{\uL^2(r_+,r_++\delta)}\rightarrow 0$ as $\delta\rightarrow 0$, a contradiction.

	Hence we only need to exclude the case that the norm is bounded away from zero for large $r$. Thus, suppose that
	\begin{align*}
	\norm{f_{a,n}}_{\uL^2(R_0,\infty)}\geq \rho_2>0
	\end{align*}
	for infinitely many $f_{a,n}$ and any $R_0>0$.	
	Since $f_{a,n}(r_+)=0$, we have
	\begin{align*}
	r^{\frac{1}{2}-\kappa}|f_{a,n}|(r)&\leq \int_{r_+}^{r}\left\lvert\frac{\dd}{\dd r}\left(r'^{\frac{1}{2}-\kappa}f_{a,n}\right)\right\lvert\,\dd r'\\&\leq \left(\int_{r_+}^{\infty}\frac{1}{r^{1+2\kappa}}\,\dd r\right)^{1/2}\left(\int_{r_+}^{\infty}r^{1+2\kappa}\left\lvert\frac{\dd}{\dd r}\left(r^{\frac{1}{2}-\kappa}f_{a,n}\right)\right\lvert\,\dd r\right)^{1/2},
	\end{align*}
	which is uniformly bounded for all $n$. Hence
	\begin{align*}
	\int_{R_0}^{\infty}\frac{|f_{a,n}|^2}{r^2}\,\dd r\leq C'\int_{R_0}^{\infty}r^{-3+2\kappa}\,\dd r\rightarrow 0
	\end{align*}
	as $R_0\rightarrow\infty$, a contradiction.
	
	As in the proof of Lemma~\ref{lemma:RegMinD}, we have
	\begin{align*}
	\nu_a\leq\LL_a(f_a)\leq\liminf_{n\rightarrow\infty}\LL_a(f_{a,n})=\nu_a
	\end{align*}
	and the rest follows.
\end{proof}

We would like to derive the Euler-Lagrange equation corresponding to this minimiser.

\begin{lemma}
	\label{lemma:ELDN}
	The minimiser  $f_{a}$ satisfies
	\begin{align}
	\begin{aligned}
	\label{eqn:ELregNeumann}
	\int_{r_+}^{\infty}\bigg(\frac{\Delta_-}{r^2+a^2}h^2\frac{\dd}{\dd r}\left(h^{-1}f_{a}\right)\frac{\dd}{\dd r}\left(h^{-1}\psi\right)+\Vmod_a\frac{r^2+a^2}{\Delta_-}f_{a}\psi\bigg)\,\dd r=-\nu_a\int_{r_+}^{\infty}\frac{f_{a}}{r^2}\psi\,\dd r
	\end{aligned}
	\end{align}
	for all $\psi\in\uH_{\kappa}^1(r_+,\infty)$.
\end{lemma}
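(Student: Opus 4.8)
The plan is to run the standard first--variation argument for a constrained minimiser, as in \citep[][\textsection 8]{Evans}, taking care of the twisted derivative and of the weighted spaces. It is convenient to work throughout with real--valued functions: since $\omega=\omega_+\in\RR$, the angular operator restricts in the $m$--sector to a real operator, so that $\lambda_{ml}$, $\tilde V_a$ and $\Vmod_a$ are real; and since $\tL_a(f)=\tL_a(\Re f)+\tL_a(\Im f)$ while $\norm{f}_{\uL^2}^2=\norm{\Re f}_{\uL^2}^2+\norm{\Im f}_{\uL^2}^2$, the minimiser $f_a$ of Lemma~\ref{lemma:RegMinim} may be taken real--valued. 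By the two--homogeneity of $\tL_a$ and of $\norm{\cdot}_{\uL^2}^2$, the equality $\nu_a=\tL_a(f_a)=\inf\{\tL_a(f):\norm{f}_{\uL^2}=1\}$ upgrades to $\tL_a(f)\ge\nu_a\norm{f}_{\uL^2}^2$ for \emph{every} $f\in\uH_{\kappa}^1(r_+,\infty)$; hence the quadratic functional $J:=\tL_a-\nu_a\norm{\cdot}_{\uL^2}^2$ is non--negative on $\uH_{\kappa}^1$ and vanishes at $f_a$, so that $f_a$ is an \emph{unconstrained} minimiser of $J$ (equivalently, the Rayleigh--type quotient $t\mapsto\tL_a(f_a+t\psi)/\norm{f_a+t\psi}_{\uL^2}^2$ has a minimum at $t=0$).

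Next I would take variations along a test function $\psi$ of the generating form $\psi=r^{-1/2+\kappa}g$ with $(x\mapsto g(1/x))\in C_0^\infty[0,1/r_+)$; such a $\psi$ vanishes in a neighbourhood of $r_+$, so that all integrals below converge absolutely. Because $J$ is a real quadratic form --- here one uses that the twisted kinetic term $h^2\lvert\frac{\dd}{\dd r}(h^{-1}f)\rvert^2=\lvert h\frac{\dd}{\dd r}(h^{-1}f)\rvert^2$ is the modulus squared of a linear first--order operator applied to $f$, so squaring produces no mixed terms --- one has $J(f_a+t\psi)=J(f_a)+2t\,\mathcal B(f_a,\psi)+t^2J(\psi)$ for $t\in\RR$, where $\mathcal B$ is the symmetric bilinear form polarising $J$. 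Since $J\ge0$ everywhere and $J(f_a)=0$, this non--negative quadratic in $t$ has a zero at $t=0$, forcing its linear coefficient to vanish, i.e. $\mathcal B(f_a,\psi)=0$; written out, this is the weak Euler--Lagrange equation~(\ref{eqn:ELregNeumann}) for $\psi$ in the generating class. To obtain it for arbitrary $\psi\in\uH_{\kappa}^1(r_+,\infty)$ I would argue by density (the generating functions are dense by the very definition of $\uH_{\kappa}^1$) and continuity of $\psi\mapsto\mathcal B(f_a,\psi)$: the kinetic contribution is controlled by the $\uH_{\kappa}^1$ norm because $\Delta_-/(r^2+a^2)\lesssim r(r-r_+)$ on $(r_+,\infty)$, while for the potential contribution one notes that $\Vmod_a\frac{r^2+a^2}{\Delta_-}$ is \emph{bounded} near $r_+$ --- both $\tilde V_a$ and the twisting correction vanish there to the order of $\Delta_-$ --- and is $\OO(r^{-2})$ near infinity, since by Lemma~\ref{lemma:twisting_positive_g} the twisting is engineered precisely so as to cancel the $\OO(r^2)$ growth of the bare potential; thus it is dominated by the $\uL^2$ weight $r^{-2}$ and $\mathcal B(f_a,\cdot)$ extends continuously.

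This argument involves no integration by parts, so no boundary terms appear at this stage (they arise only later, in the analogue of Proposition~\ref{propn:a_hat}, when the weak equation is upgraded to the classical radial ODE). The only point that is not entirely automatic is therefore a structural one: one must know that the first variation of the twisted kinetic term, after the conjugation by $h=r^{-1/2+\kappa}$ is undone, reproduces the \emph{untwisted} second--order operator $\frac{\Delta_-}{r^2+a^2}\frac{\dd}{\dd r}\bigl(\frac{\Delta_-}{r^2+a^2}\frac{\dd}{\dd r}\,\cdot\,\bigr)$ of the radial ODE together with the zeroth--order correction $\Vmod_a-\tilde V_a$ --- which is exactly the identity by which $\Vmod_a$ was defined in Appendix~\ref{sec:twisted_derivative}, so I would invoke that computation rather than repeat it. Everything else --- the reduction to an unconstrained quadratic, the vanishing of the first variation, and the density extension --- is routine and parallels the proof of Lemma~\ref{lemma:ELD}.
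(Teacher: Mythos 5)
Your proposal is correct, but it takes a genuinely different route from the paper's. The paper (Appendix~\ref{sec:Twisted_Euler_Lagrange}) runs the general Lagrange-multiplier argument for a constrained minimiser: fix $\psi_1$, choose an auxiliary $\psi_2$ with $\int g(r,f_a)\psi_2\,\dd r\neq 0$, use the implicit function theorem to build a constraint-preserving curve $\tau\mapsto f_a+\tau\psi_1+\kappa(\tau)\psi_2$, differentiate $\tL_a$ along it, and read off the multiplier from an explicit quotient formula (with a separate treatment of the degenerate case $g(\cdot,f_a)=0$ a.e.). You instead exploit the two-homogeneity of $\tL_a$ and $\norm{\cdot}_{\uL^2}^2$ to upgrade the definition of $\nu_a$ in Lemma~\ref{lemma:RegMinim} to the Rayleigh-type bound $\tL_a(f)\geq\nu_a\norm{f}_{\uL^2}^2$ on all of $\uH^1_{\kappa}$, so that $f_a$ is an unconstrained minimiser of the non-negative quadratic form $J=\tL_a-\nu_a\norm{\cdot}_{\uL^2}^2$ and polarisation yields the weak equation at once, with the multiplier identified automatically. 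This is shorter, needs no implicit function theorem, no auxiliary $\psi_2$ and no degenerate case, and your supporting checks are the right ones: the reduction to real-valued minimisers, the comparability $\frac{\Delta_-}{r^2+a^2}\simeq r(r-r_+)$ for the kinetic cross term, and the bounds on $\Vmod_a\frac{r^2+a^2}{\Delta_-}$ (bounded at $r_+$ since $\omega=\omega_+$, and $\OO(r^{-2})$ at infinity by Lemma~\ref{lemma:twisting_positive_g}), which make $\mathcal B(f_a,\cdot)$ continuous on $\uH^1_{\kappa}$ so that the restriction to generating test functions plus density is legitimate. The price is that your argument is special to quadratic functionals with quadratic constraints, whereas the paper's scheme applies to general constraints; here that generality buys nothing.

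One point worth flagging explicitly: your derivation produces the identity with $+\nu_a$ on the right-hand side of (\ref{eqn:ELregNeumann}), not $-\nu_a$, and that is in fact the internally consistent sign. Testing the equation as printed with $\psi=f_a$ would force $\nu_a=-\nu_a$, i.e. $\nu_a=0$, contradicting $\nu_a<0$ for $a\in\Aa$; and the paper's own computation in Appendix~\ref{sec:Twisted_Euler_Lagrange} likewise gives multiplier $\nu_a$ when tested with $f_a$. So the discrepancy is a sign slip in the statement (and in its Dirichlet analogue, Lemma~\ref{lemma:ELD}), harmless downstream since only $\nu_{\hat a}=0$ is used in Proposition~\ref{propn:a_hat}; but you should have noted it rather than silently asserting agreement with the printed formula.
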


The proof of Lemma~\ref{lemma:ELDN} can be found in Appendix~\ref{sec:Twisted_Euler_Lagrange}.

\begin{propn}
	There is an $\hat a$ and a corresponding non-zero function $f_{\hat a}\in C^{\infty}(r_+,\infty)$ such that
	\begin{align*}
	\frac{\Delta_-}{r^2+\hat a^2}\frac{\dd}{\dd r}\left(\frac{\Delta_-}{r^2+\hat a^2}\frac{\dd f_{\hat a}}{\dd r}\right)-\tilde V_{\hat a}f_{\hat a}=0
	\end{align*}
	and $f_a$ satisfies the horizon regularity condition and the Neumann boundary condition at infinity.
\end{propn}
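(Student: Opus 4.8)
The plan is to run the argument of Proposition~\ref{propn:a_hat} verbatim, with the untwisted functional replaced by $\tL_a$, the space $\uH^1_0$ replaced by $\uH^1_\kappa$, and the Hardy inequality of Lemma~\ref{lemma:Hardy} replaced by Lemma~\ref{lemma:Hardy_twisted}; the only genuinely new point is the recovery of the Neumann condition at infinity, which is no longer built into the function space. First I would set $\nu_a:=\inf\{\tL_a(f)\,:\,f\in\uH^1_\kappa(r_+,\infty),\ \norm{f}_{\uL^2}=1\}$, attained by Lemma~\ref{lemma:RegMinim}, and show that $a\mapsto\nu_a$ is locally Lipschitz exactly as in Proposition~\ref{propn:a_hat}: writing $\nu_{a_1}=\tL_{a_1}(f_{a_1})$, one compares the $a_2$-functional evaluated on $f_{a_1}$ (which is $\geq\nu_{a_2}$, since $f_{a_1}\in\uH^1_\kappa$ still has unit $\uL^2$-norm and the twist $h=r^{-1/2+\kappa}$ does not depend on $a$, so the competitor class does not move with $a$) and estimates the difference of the two integrands by the mean value theorem together with the coercivity bound of Lemma~\ref{lemma:VariationInequalityND} and $\norm{f_{a_1}}_{\uL^2}=1$. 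Since $\Aa$ in~(\ref{eqn:defn_A}) is non-empty, open and bounded away from zero, $\hat a:=\inf\Aa$ satisfies $\hat a\notin\Aa$, hence $\nu_{\hat a}\geq 0$; but $\nu_a<0$ for $a\in\Aa$ and $\nu_a\to\nu_{\hat a}$ as $a\downarrow\hat a$, so $\nu_{\hat a}=0$.

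Next I would extract the minimiser. Choosing $a_n\in\Aa$ with $a_n\downarrow\hat a$ and corresponding minimisers $f_{a_n}$ with $\norm{f_{a_n}}_{\uL^2}=1$, Lemma~\ref{lemma:VariationInequalityND} bounds $(f_{a_n})$ in $\uH^1_\kappa$, so along a subsequence $f_{a_n}\rightharpoonup f_{\hat a}$ weakly in $\uH^1_\kappa$ and strongly in $L^2$ on compact subsets of $(r_+,\infty)$, with $f_{\hat a}\in\uH^1_\kappa$. The no-concentration argument in the proof of Lemma~\ref{lemma:RegMinim} --- the $\sqrt{|\log(r-r_+)|}$ bound near the horizon and the $\int r^{-3+2\kappa}$ bound near infinity, finite because $\kappa<1$ --- shows $\norm{f_{\hat a}}_{\uL^2}=1$, so $f_{\hat a}\neq 0$. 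Passing to the limit in~(\ref{eqn:ELregNeumann}) (weak convergence, continuity of the $a$-dependent coefficients, dominated convergence, and $\nu_{a_n}\to 0$) yields that $f_{\hat a}$ satisfies~(\ref{eqn:ELregNeumann}) with $a=\hat a$ and right-hand side $0$. Testing against $C_0^\infty(r_+,\infty)\subset\uH^1_\kappa$, $f_{\hat a}$ is a weak solution of the radial ODE on $(r_+,\infty)$ --- recall that, by the choice of $\Vmod_a$ in Appendix~\ref{sec:twisted_derivative}, the twisted and untwisted Euler--Lagrange equations coincide --- and weak solutions of ODEs are classical \citep[Ch.~1]{TaoDispersive}, so $f_{\hat a}\in C^\infty(r_+,\infty)$ and $\frac{\Delta_-}{r^2+\hat a^2}\frac{\dd}{\dd r}\big(\frac{\Delta_-}{r^2+\hat a^2}\frac{\dd f_{\hat a}}{\dd r}\big)-\tilde V_{\hat a}f_{\hat a}=0$.

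It remains to check the two boundary conditions. For the horizon: weak lower semicontinuity of the convex functional $\tL$ gives $\tL_{\hat a}(f_{\hat a})\leq\liminf_n\nu_{a_n}=0$; since $\Vmod_{\hat a}=\OO(1)$ and is positive near infinity (Lemma~\ref{lemma:twisting_positive_g}) and $\int\Vmod_{\hat a}\frac{r^2+\hat a^2}{\Delta_-}|f_{\hat a}|^2\,\dd r$ converges at infinity because $\kappa<1$, the potential part is bounded below, whence the twisted Dirichlet energy $\int\frac{\Delta_-}{r^2+\hat a^2}h^2|\frac{\dd}{\dd r}(h^{-1}f_{\hat a})|^2\,\dd r$ is finite. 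Near $r_+$ the local analysis (Theorem~\ref{thm:RegularSing}) gives $f_{\hat a}=A\phi_1+B(\log(r-r_+)\phi_2+\phi_3)$, and since $h$ is smooth and non-vanishing there, $B\neq 0$ would make that energy diverge, so $B=0$ and $f_{\hat a}$ satisfies the horizon regularity condition. For infinity: integrating by parts in~(\ref{eqn:ELregNeumann}) (right-hand side $0$), the bulk term reproduces the ODE and the $r_+$-boundary term vanishes (the factor $\Delta_-/(r^2+\hat a^2)$ vanishes at $r_+$, $\frac{\dd}{\dd r}(h^{-1}f_{\hat a})$ is bounded by horizon regularity, and $h^{-1}\psi$ grows at most like $\sqrt{|\log(r-r_+)|}$ by the estimate used in Lemma~\ref{lemma:RegMinim}); hence the boundary term $\lim_{r\to\infty}\frac{\Delta_-}{r^2+\hat a^2}h^2\frac{\dd}{\dd r}(h^{-1}f_{\hat a})\,h^{-1}\psi$ must vanish for every $\psi\in\uH^1_\kappa$. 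Choosing $\psi=\chi h_1$, a smooth cutoff of the slowly decaying branch $h_1$ (which lies in $\uH^1_\kappa$ and satisfies $h^{-1}\psi=r^{1/2-\kappa}\psi\to 1$), forces $\frac{\Delta_-}{r^2+\hat a^2}h^2\frac{\dd}{\dd r}(h^{-1}f_{\hat a})\to 0$; since $\Delta_-/\big((r^2+\hat a^2)r^2\big)\to\ell^{-2}\neq 0$, this is exactly $r^{1+2\kappa}\frac{\dd}{\dd r}(r^{1/2-\kappa}f_{\hat a})\to 0$, the Neumann condition of Definition~\ref{defn:BdyCondsNeumann} (equivalently, $f_{\hat a}=C_1h_1$ by the corollary in Section~\ref{subsec:local_infinity}).

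The main obstacle is this last step. In the Dirichlet case the boundary condition at infinity is \emph{essential} --- it is forced by membership in $\uH^1_0$ via Lemma~\ref{lemma:trace} --- whereas here the Neumann condition is only the \emph{natural} boundary condition of the twisted problem, so one must keep the twisted Dirichlet energy finite through the limit $a_n\to\hat a$ (this is where $\kappa<1$ and the sign of $\Vmod_{\hat a}$ near infinity from Lemma~\ref{lemma:twisting_positive_g} are used --- both for the horizon regularity and to guarantee that the boundary term at infinity actually has a limit) and one must legitimately exhibit a test function in $\uH^1_\kappa$ with nonzero twisted trace at infinity, using Lemma~\ref{lemma:SobolevNeumann}. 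The coercivity estimate and the no-concentration argument also carry the extra $h$-weights relative to Section~\ref{sec:Real}, but these are precisely what Lemmas~\ref{lemma:Hardy_twisted}, \ref{lemma:VariationInequalityND} and~\ref{lemma:RegMinim} are designed to absorb.
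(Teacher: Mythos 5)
Your argument is correct and follows the paper's own route: the variational machinery (continuity of $\nu_a$ in $a$, $\hat a=\inf\Aa$, weak limit of minimisers, the twisted Euler--Lagrange identity, weak-implies-classical, and horizon regularity via finiteness of the twisted energy) is carried over from Proposition~\ref{propn:a_hat} exactly as in the paper, and your recovery of the Neumann condition as the natural boundary condition---integrating by parts against a test function in $\uH^1_\kappa$ whose twisted trace $r^{1/2-\kappa}\psi$ is nonzero at infinity---is the same mechanism the paper uses with $\psi=r^{-1/2+\kappa}g$, $g$ compactly supported around infinity; your choice $\psi=\chi h_1$ is just a particular such test function. No gaps beyond the level of detail the paper itself leaves implicit.
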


\begin{proof}
	As in Proposition~\ref{propn:a_hat}, we find an $f_{\hat a}\in\uH_{\kappa}^1$ such that
	\begin{align*}
	\int_{r_+}^{\infty}\bigg(\frac{\Delta_-}{r^2+\hat a^2}r^{-1+2\kappa}\frac{\dd}{\dd r}\left(r^{\frac{1}{2}-\kappa}f_{\hat a}\right)\frac{\dd}{\dd r}\left(r^{\frac{1}{2}-\kappa}\psi\right)+\Vmod_{\hat a}\frac{r^2+\hat a^2}{\Delta_-}f_{\hat a}\psi\bigg)\,\dd r=0.
	\end{align*}
	Choosing $\psi(r)=r^{-\frac{1}{2}+\kappa}g(r)$ with $g$ having compact support around infinity and integrating by parts, we obtain
	\begin{align*}
	\frac{\Delta_-}{r^2+\hat a^2}r^{-1+2\kappa}\frac{\dd}{\dd r}\left(r^{\frac{1}{2}-\kappa}f_{\hat a}\right)g\rightarrow 0
	\end{align*}
	as $r=\infty$ for all $g$ as in (\ref{eqn:basis_twisting}). This yields the asymptotics.
	
	Moreover, as in the proof of Proposition~\ref{propn:a_hat}, we retrieve the ODE.
	The boundary condition at the horizon follows analogously to Section~\ref{sec:Real}.
\end{proof}

\subsection{Perturbing the Neumann modes into the complex plane}
\label{subsec:PerturbingN}

In Section~\ref{sec:RealDN}, we constructed real mode solutions for $5/4<\alpha<9/4$ satisfying Neumann conditions.
For the growing radial parts, we proceed as in Section~\ref{sec:Real} with the difference that here, finding a mode solution is equivalent to finding a zero of $B$. The present case is considerably more difficult than the Dirichlet case. A first manifest difference is the asymmetry in the definitions of Dirichlet and Neumann boundary conditions since a Dirichlet mode has more decay than required by Definition~\ref{defn:BdyConds}. This means that if a function satisfies the Dirichlet boundary condition for a mass $\alpha_1$, it also does so for every $\alpha_2$ sufficiently close to $\alpha_1$. As Definition~\ref{defn:BdyCondsNeumann} is tighter, this is not true in the Neumann case.  Another difficulty stems from twisting as the dependence of the equations on $\alpha$ becomes more complicated.

We have already chosen $B(\alpha(0),\omega_R(0))=0$. Recall from Section~\ref{sec:RestrReal} that
\begin{align*}
Q_T=\Im\left(r^{-\frac{1}{2}+\kappa}\frac{\dd}{\dd\sta r}\left(r^{\frac{1}{2}-\kappa}u\right)\overline u\right),~~~~~ Q_T(r_+)=\Xi am-\omega_R(r_+^2+a^2).
\end{align*}
Hence, analogously to Section~\ref{subsec:PerturbingD}, the problem reduces to showing that
\begin{align*}
\frac{\del B}{\del\alpha}(\alpha(0),\omega_R(0))\neq 0.
\end{align*}
Again, for the sake of contradiction, suppose that this is not the case. 
Then, near infinity, we have
\begin{align*}
u_{\alpha}(r,\alpha(0),\omega_R(0))&=\frac{\del A}{\del\alpha}(\alpha(0),\omega_R(0)) h_1(r,\alpha(0),\omega_R(0))\\
&~~~~~~~+A(\alpha(0),\omega_R(0))\frac{\del h_1}{\del\alpha}(r,\alpha(0),\omega_R(0)).
\end{align*}
By the horizon regularity condition, $u\sim(r-r_+)^{\xi}$ near the horizon, $u_{\alpha}$ is smooth at $r=r_+$. However, $u_{\alpha}$ does not satisfy the Neumann condition at infinity as the second term behaves as $r^{-1/2+\kappa}u\log r$.

Let $f:\,(r_+,\infty)\rightarrow\CC$ be $C^1$ and piecewise $C^2$. Then the function
\begin{align*}
v(r):=u_{\alpha}(r)-\frac{\del\kappa}{\del\alpha}f(r)u(r)=u_{\alpha}(r)+\frac{1}{2\kappa}f(r)u(r),
\end{align*}
does satisfy the Neumann boundary condition if, for large $r$, $f(r)=\log r+\OO(r^{-\gamma})$, where $\gamma>0$.

From the radial ODE, we obtain
\begin{align*}
&\frac{\dd}{\dd r}\left(\frac{\Delta_-}{r^2+a^2}\frac{\dd u_{\alpha}}{\dd r}\right)-\frac{r^2+a^2}{\Delta_-}\tilde V_au_{\alpha}=\frac{1}{r^2+a^2}\left[\frac{\del\lambda}{\del\alpha}-\frac{1}{\ell^2}(r^2+a^2)\right]u.
\end{align*}	
Lemma~\ref{lemma:EL_twisted_g} yields a twisted version
\begin{align}
\begin{split}
\label{eqn:twisted_del_alpha_perturbation}
\frac{1}{h}\frac{\dd}{\dd r}\left(\frac{\Delta_-}{r^2+a^2}h^2\frac{\dd}{\dd r}\left(\frac{u_{\alpha}}{h}\right)\right)+\tilde V^h_a\frac{r^2+a^2}{\Delta_-}u_{\alpha}=\frac{1}{r^2+a^2}\left[\frac{\del\lambda}{\del\alpha}-\frac{1}{\ell^2}(r^2+a^2)\right]u.
\end{split}
\end{align}
We will use the previous twisting, i.\,e. $h=r^{-1/2+\kappa}$.

For the second term of $v$, we compute:
\begin{align*}
\frac{1}{h}\frac{\dd}{\dd r}\left(\frac{\Delta_-}{r^2+a^2}h^2\frac{\dd}{\dd r}\left(f\frac{u}{h}\right)\right) &=h^{-1}\frac{\dd}{\dd r}\left(\frac{\Delta_-}{r^2+a^2}h^2\frac{\dd f}{\dd r}\right)(h^{-1}u)\\&~~~~~+2h^{-1}\frac{\Delta_-}{r^2+a^2}h^2\frac{\dd f}{\dd r}\frac{\dd}{\dd r}(h^{-1}u)\\&~~~~~+h^{-1}f\frac{\dd}{\dd r}\left(\frac{\Delta_-}{r^2+a^2}h^2\frac{\dd}{\dd r}(h^{-1}u)\right)
\end{align*}
We add this to the equation to (\ref{eqn:twisted_del_alpha_perturbation}) multiplied by $2\kappa$. Then we multiply the resulting equation by $\overline u$ and integrate by parts, noting that $v$ satisfies the Neumann boundary condition. Hence we obtain
\begin{align}
\begin{split}
\label{eqn:Bzero_contradiction}
0&=\int_{r_+}^{\infty}\frac{2\kappa}{r^2+a^2}\left(\frac{\del\lambda}{\del\alpha}-\frac{a^2}{\ell^2}\right)|u|^2\,\dd r\\
&~~~~~-\int_{r_+}^{\infty}\left(\frac{2\kappa}{\ell^2}\frac{r^2}{r^2+a^2}|u|^2-\frac{\dd}{\dd r}\left(\frac{\Delta_-}{r^2+a^2}h^2\frac{\dd f}{\dd r}\right)\left\lvert h^{-1}u\right\lvert^2\right)\,\dd r\\
&~~~~~-2\int_{r_+}^{\infty}f\left(\frac{\Delta_-}{r^2+a^2}h^2\left\lvert\frac{\dd}{\dd r}\left(h^{-1}u\right)\right\lvert^2+\tilde V^h_a\frac{r^2+a^2}{\Delta_-}|u|^2\right)\,\dd r.
\end{split}
\end{align}
Our aim is to show that the right hand side of (\ref{eqn:Bzero_contradiction}) is negative, which yields the desired contradiction.

From Section~\ref{subsec:PerturbingD}, we already know that
\begin{align*}
\int_{r_+}^{\infty}\frac{2\kappa}{r^2+a^2}\left(\frac{\del\lambda}{\del\alpha}-\frac{a^2}{\ell^2}\right)|u|^2\,\dd r=-\int_{r_+}^{\infty}\frac{2\kappa}{r^2+a^2}\int_0^{\pi}\frac{a^2}{\ell^2}\cos^2\theta|S|^2\sin\theta|u|^2\,\dd\theta\,\dd r
\end{align*}
has the right sign. We set
\begin{align}
\label{f_perturbation}
f(r):=\begin{cases}
\log r+\frac{1}{2\kappa}\frac{R^{2\kappa}}{r^{2\kappa}},	& r\geq R\\
\log R+\frac{1}{2\kappa},	& r<R
\end{cases}
\end{align}
for an $R>r_++1$ to be determined. Note that $f$ is continuously differentiable. For $r> R$,
\begin{align*}
\frac{\dd f}{\dd r}(r)=\frac{1}{r}\left(1-\frac{R^{2\kappa}}{r^{2\kappa}}\right)>0,
\end{align*}
whence $f$ is monotonic.

First, we choose $R$ sufficiently large such that $\tilde V^h_a>0$ for $r>R$ according to Lemma~\ref{lemma:twisting_positive_g}, whence
\begin{align*}
&\int_{r_+}^{\infty}f\left(\frac{\Delta_-}{r^2+a^2}h^2\left\lvert\frac{\dd}{\dd r}\left(h^{-1}u\right)\right\lvert^2+\tilde V^h_a\frac{r^2+a^2}{\Delta_-}|u|^2\right)\,\dd r\\=&\left(\log R+\frac{1}{2\kappa}\right)\int_{r_+}^{\infty}\left(\frac{\Delta_-}{r^2+a^2}h^2\left\lvert\frac{\dd}{\dd r}\left(h^{-1}u\right)\right\lvert^2+\tilde V^h_a\frac{r^2+a^2}{\Delta_-}|u|^2\right)\,\dd r\\
&~~~~~~~+\int_R^{\infty}\left(\log\frac{r}{R}+\frac{1}{2\kappa}\left(\frac{R^{2\kappa}}{r^{2\kappa}}-1\right)\right)\left(\frac{\Delta_-}{r^2+a^2}h^2\left\lvert\frac{\dd}{\dd r}\left(h^{-1}u\right)\right\lvert^2+\tilde V^h_a\frac{r^2+a^2}{\Delta_-}|u|^2\right)\,\dd r,
\end{align*}
which is non-negative since the first integral with the constant coefficient is zero and the second integral is positive.
For $r>R$, one easily computes
\begin{align*}
\frac{2\kappa}{\ell^2}\frac{r^2}{r^2+a^2}-h^{-2}\frac{\dd}{\dd r}\left(\frac{\Delta_-}{r^2+a^2}h^2\frac{\dd f}{\dd r}\right)
=-\frac{2\kappa a^2/\ell^2+(2\kappa-2)}{r^2}-2\frac{R^{2\kappa}}{r^{2\kappa}}\frac{1}{r^2}+\OO(r^{-3})
\end{align*}
Therefore, there is a $C_1>0$ such that
\begin{align*}
\int_R^{\infty}\left(\frac{2\kappa}{\ell^2}\frac{r^2}{r^2+a^2}|u|^2-h^{-2}\frac{\dd}{\dd r}\left(\frac{\Delta_-}{r^2+a^2}h^2\frac{\dd f}{\dd r}\right)\right)|u|^2\,\dd r>-C_1\int_R^{\infty}\frac{|u|^2}{r^2}\,\dd r.
\end{align*}
We can prove the following Hardy inequality:
\begin{lemma}
	Let $u$ satisfy the Neumann boundary condition at infinity and let $\beta>0$. Then 
	\begin{align*}
	\int_R^{\infty}\frac{1}{r^{1+\beta}}\left\lvert r^{\frac{1}{2}-\kappa}u\right\lvert^2\,\dd r&\leq \lim_{r\rightarrow\infty}\frac{2\beta}{R^{\beta}}\left\lvert  r^{\frac{1}{2}-\kappa}u(r)\right\lvert^2\\&~~~~~~+4\beta^2 \int_R^{\infty}r^{2-2\kappa-\beta}\left(1-\left(\frac{r}{R}\right)^{\beta}\right)^2h^2\left\lvert\frac{\dd}{\dd r}\left(\frac{u}{h}\right)\right\lvert^2\,\dd r.
	\end{align*}
\end{lemma}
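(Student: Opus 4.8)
The plan is to prove this as a textbook weighted Hardy inequality on the half–line $[R,\infty)$: one integration by parts against a weight that vanishes at $R$, then a single application of Young's inequality. It is worth noting at the outset that the estimate the argument produces carries the factors $2/(\beta R^\beta)$ and $4/\beta^2$ rather than the displayed $2\beta/R^\beta$ and $4\beta^2$; the displayed version is in fact too strong, since for $u=r^{-1/2+\kappa}$ (which satisfies the Neumann condition) the left side equals $1/(\beta R^\beta)$ while the derivative term vanishes, so $2\beta/R^\beta$ fails once $\beta<1/\sqrt{2}$. The corrected factors $2/(\beta R^\beta)$ and $4/\beta^2$ are the natural ones, and under the substitution $\beta=2(1-\kappa)$ relevant in the application they reproduce the constant $(1-\kappa)^{-2}$ of Lemma~\ref{lemma:Hardy_twisted}. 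So I would prove the inequality with these constants.

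Writing $w:=u/h=r^{1/2-\kappa}u$ and using $h^2=r^{-1+2\kappa}$, hence $r^{2-2\kappa-\beta}h^2=r^{1-\beta}$, the target becomes
\[
\int_R^\infty r^{-1-\beta}|w|^2\,\dd r\ \le\ \frac{2}{\beta R^\beta}\,|w_\infty|^2+\frac{4}{\beta^2}\int_R^\infty r^{1-\beta}\bigl(1-(r/R)^\beta\bigr)^2|w'|^2\,\dd r,
\]
with $w_\infty:=\lim_{r\to\infty}w(r)$. First I would check that $w_\infty$ exists: the Neumann condition is exactly $r^{1+2\kappa}w'\to 0$, so $w'=o(r^{-1-2\kappa})$ is integrable near infinity, whence $w(r)=w(R)+\int_R^r w'$ has a finite limit and, in particular, $w$ is bounded. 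If the derivative integral on the right is $+\infty$ there is nothing to prove, so I may assume it finite; together with the existence of $w_\infty$ this makes every integral below absolutely convergent and legitimises the integration by parts (in the application $u$, and hence $w$, is $C^1$ and piecewise $C^2$ on $[R,\infty)$).

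Next I would set $\psi(r):=\tfrac{1}{\beta}(R^{-\beta}-r^{-\beta})$, so that $\psi(R)=0$, $\psi'=r^{-1-\beta}$, and $\psi$ increases to $\tfrac{1}{\beta R^\beta}$ at infinity. Integrating $\tfrac{\dd}{\dd r}(\psi|w|^2)=\psi'|w|^2+2\psi\,\Re(\overline{w}\,w')$ over $(R,\infty)$ and using $\psi(R)=0$ gives
\[
\int_R^\infty r^{-1-\beta}|w|^2\,\dd r=\frac{|w_\infty|^2}{\beta R^\beta}-2\int_R^\infty\psi\,\Re(\overline{w}\,w')\,\dd r .
\]
I then bound $2\psi|w||w'|\le\tfrac{1}{2}r^{-1-\beta}|w|^2+2\psi^2r^{1+\beta}|w'|^2$ by Young's inequality, absorb the first piece into the left side, and use the identity $\psi(r)^2r^{1+\beta}=\tfrac{1}{\beta^2}r^{1-\beta}\bigl(1-(r/R)^\beta\bigr)^2$ — which is just $(R^{-\beta}-r^{-\beta})^2=r^{-2\beta}(1-(r/R)^\beta)^2$ — to recognise the weight of the derivative term. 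Rearranging yields the inequality.

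There is no genuine obstacle here: the only nontrivial input is the Neumann condition, used purely to guarantee the finite limit $w_\infty$ in the boundary term, and the rest is the standard half–line Hardy computation. The one place that requires a moment's care is producing the weight $\bigl(1-(r/R)^\beta\bigr)^2$ with the correct power of $r$, which is exactly the bookkeeping identity $\psi^2r^{1+\beta}=\beta^{-2}r^{1-\beta}(1-(r/R)^\beta)^2$. If a sharper constant in the derivative term were wanted, one could optimise the split $2ab\le t^{-1}a^2+tb^2$ over $t>1$, but $t=2$ suffices for all later uses.
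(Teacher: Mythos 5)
Your proof is correct and is essentially the paper's own argument: integrate the derivative of an antiderivative of $r^{-1-\beta}$ vanishing at $r=R$ against $\lvert r^{1/2-\kappa}u\rvert^2$, use the Neumann condition only to ensure the finite boundary limit, apply Young's inequality and absorb, and rewrite the remaining weight as $r^{1-\beta}\bigl(1-(r/R)^{\beta}\bigr)^2$. Your correction of the constants is also warranted: the paper's proof takes the weight $\beta\left(R^{-\beta}-r^{-\beta}\right)$ but treats its derivative as $r^{-1-\beta}$ when it is in fact $\beta^{2}r^{-1-\beta}$, so the stated factors $2\beta/R^{\beta}$ and $4\beta^{2}$ should indeed be $2/(\beta R^{\beta})$ and $4/\beta^{2}$, as your test function $u=r^{-1/2+\kappa}$ (which makes the statement fail for $\beta<1/\sqrt{2}$) confirms. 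This slip is harmless for the rest of Section~\ref{subsec:PerturbingN}, since the subsequent lemmas only require some finite constant in front of the derivative term and a boundary contribution that tends to zero as $R\rightarrow\infty$, both of which your corrected constants provide (and they match the $(1-\kappa)^{-2}$ of Lemma~\ref{lemma:Hardy_twisted} under $\beta=2-2\kappa$).
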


\begin{proof} We compute
	\begin{align*}
	\int_R^{\infty}\frac{1}{r^{1+\beta}}\left\lvert r^{\frac{1}{2}-\kappa}u\right\lvert^2\,\dd r&=\int_R^{\infty}\del_r\left(-\frac{\beta}{r^{\beta}}+\frac{\beta}{R^{\beta}}\right)\left\lvert r^{\frac{1}{2}-\kappa}u\right\lvert^2\,\dd r\\ 
	&\leq\frac{\beta}{R^{\beta}}\left\lvert r^{\frac{1}{2}-\kappa}u\right\lvert^2(\infty)+\int_R^{\infty}\frac{1}{2}\frac{1}{r^{1+\beta}}\left\lvert r^{\frac{1}{2}-\kappa}u\right\lvert^2\,\dd r\\&~~~~~~~~~~+2\beta^2 \int_R^{\infty}r^{1+\beta}\left(r^{-\beta}-R^{-\beta}\right)^2\left\lvert\frac{\dd}{\dd r}\left(r^{\frac{1}{2}-\kappa}u\right)\right\lvert^2\,\dd r,
	\end{align*}
		yielding the result.
\end{proof}

Since 
\begin{align*}
C_1\lim_{r\rightarrow\infty}\frac{2\beta}{R^{\beta}}\left\lvert  r^{\frac{1}{2}-\kappa}u(r)\right\lvert^2\rightarrow 0
\end{align*}
as $R\rightarrow\infty$, 
by choosing $R$ possibly larger, we obtain
\begin{align*}
C_1\frac{2\beta}{R^{\beta}}\lim_{r\rightarrow\infty}\left\lvert  r^{\frac{1}{2}-\kappa}u(r)\right\lvert^2<\int_{r_+}^{\infty}\frac{2\kappa}{r^2+a^2}\left(\frac{\del\lambda}{\del\alpha}-\frac{a^2}{\ell^2}\right)|u|^2\,\dd r.
\end{align*}
For convenience, set $C_2:=4\beta^2 C_1$. Since $\tilde V_a^h\sim r^{-2}$, we need to show that
\begin{align}
\begin{split}
\label{eqn:goal_perturbation}
C_2\int_R^{\infty}\frac{|u|^2}{r^2}\,\dd r<&\int_R^{\infty}\left(\log\frac{r}{R}+\frac{1}{2\kappa}\left(\frac{R^{2\kappa}}{r^{2\kappa}}-1\right)\right)\frac{\Delta_-}{r^2+a^2}h^2\left\lvert\frac{\dd}{\dd r}\left(h^{-1}u\right)\right\lvert^2\,\dd r.
\end{split}
\end{align}

We will deal with the two cases $0<\kappa\leq 1/2$ and $1/2<\kappa<1$ separately. Let us first consider $0<\kappa\leq 1/2$. We choose $\beta=2\kappa$. Note that in this case
\begin{align*}
\int_R^{\infty}\frac{|u|^2}{r^2}\,\dd r\leq \int_R^{\infty}\frac{1}{r^{1+2\kappa}}\left\lvert r^{1/2-\kappa}u\right\lvert^2\,\dd r.
\end{align*}

\begin{lemma}
	Let $C>0$. There is an $R$ such that, for all $r>R$,
	\begin{align*}
	Cr^{2-4\kappa}\left(1-\left(\frac{r}{R}\right)^{2\kappa}\right)^2\leq r^2\left(\log\frac{r}{R}+\frac{1}{2\kappa}\left(\left(\frac{R}{r}\right)^{2\kappa}-1\right)\right).
	\end{align*}
\end{lemma}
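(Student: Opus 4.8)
The plan is to rescale so that the claim becomes the boundedness of a single function. Substituting $t:=r/R\geq 1$ and dividing by $R^2$, the asserted inequality is equivalent to
\begin{align*}
C R^{-4\kappa}\,t^{2-4\kappa}\left(t^{2\kappa}-1\right)^2\leq g(t),\qquad g(t):=t^2\left(\log t+\frac{1}{2\kappa}\left(t^{-2\kappa}-1\right)\right),
\end{align*}
and the requirement that this hold for all $r>R$ becomes the requirement that it hold for all $t>1$. Hence it suffices to show that
\begin{align*}
h(t):=\frac{t^{2-4\kappa}\left(t^{2\kappa}-1\right)^2}{g(t)}
\end{align*}
is bounded on $(1,\infty)$; then any $R$ with $R^{4\kappa}\geq C\sup_{t>1}h(t)$ works.

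First I would check that $g(t)>0$ for $t>1$, so that $h$ is well defined there. Writing $\phi(t):=g(t)/t^2=\log t+\tfrac{1}{2\kappa}(t^{-2\kappa}-1)$, one has $\phi(1)=0$ and $\phi'(t)=t^{-1}(1-t^{-2\kappa})>0$ for $t>1$, hence $\phi>0$ on $(1,\infty)$. Thus $h$ is continuous and finite on the open interval $(1,\infty)$, and it only remains to control its behaviour at the two endpoints.

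As $t\to 1^+$, set $s:=t-1$; expanding $\log t=s-\tfrac{1}{2}s^2+O(s^3)$ and $t^{-2\kappa}=1-2\kappa s+\kappa(2\kappa+1)s^2+O(s^3)$ one finds that the first-order terms in $\phi$ cancel, leaving $\phi(t)=\kappa s^2+O(s^3)$ and hence $g(t)=\kappa s^2+O(s^3)$, while the numerator equals $4\kappa^2 s^2+O(s^3)$; therefore $h(t)\to 4\kappa$. As $t\to\infty$, the numerator is asymptotic to $t^2$ whereas $g(t)$ is asymptotic to $t^2\log t$, so $h(t)\to 0$. Consequently $h$ is a continuous function on $(1,\infty)$ with finite limits at both ends, hence bounded, say by $M<\infty$; choosing $R:=(CM)^{1/(4\kappa)}$, or any larger value, proves the lemma.

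The only mildly delicate point is the expansion near $t=1$: both $g$ and the numerator vanish to second order there, so the expansions of $\log t$ and $t^{-2\kappa}$ must be carried to order $s^2$ and one must observe the cancellation of the $O(s)$ contributions to $\phi$, which is precisely what produces the nonzero second-order leading term $\kappa s^2$ and keeps $h$ bounded at the left endpoint. Everything else is elementary asymptotics.
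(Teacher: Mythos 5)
Your proof is correct, and it takes a genuinely different route from the paper's. You rescale via $t=r/R$, observe that the $R$-dependence factors out completely into the prefactor $CR^{-4\kappa}$, and then reduce the lemma to the boundedness of the $R$-independent ratio $h(t)=t^{2-4\kappa}(t^{2\kappa}-1)^2/g(t)$ on $(1,\infty)$, which you obtain from the positivity of $g$ (monotonicity of $\phi$), the second-order Taylor cancellation at $t=1$ giving $h(1^+)=4\kappa$, and the logarithmic growth of $g$ at infinity giving $h(\infty)=0$; any $R$ with $R^{4\kappa}\geq C\sup h$ then works. The paper instead divides the inequality by $r^2$, notes that equality holds at $r=R$, and compares derivatives: after the substitution $x=r^{2\kappa}$ the derivative inequality becomes $0\leq(x-R^{2\kappa})(x-4\kappa CR^{-2\kappa})$, so the explicit choice $R^{4\kappa}>4\kappa C$ suffices. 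The two arguments are essentially equivalent in content — the paper's factored quadratic is exactly the statement that $\sup_{t>1}h(t)=4\kappa$ — but they trade off differently: your compactness-type argument is more robust (it needs no algebraic factoring and works verbatim for any $\kappa>0$, and the same template handles the companion lemma for $1/2<\kappa<1$), while the paper's derivative comparison yields the sharp, fully explicit threshold for $R$, which your argument leaves as an unspecified finite supremum. Since the lemma only asserts the existence of some $R$, your non-explicit bound is entirely sufficient.
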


\begin{proof}
	It suffices to show that
	\begin{align*}
	Cr^{-4\kappa}\left(1-\left(\frac{r}{R}\right)^{2\kappa}\right)^2\leq \log\frac{r}{R}+\frac{1}{2\kappa}\left(\left(\frac{R}{r}\right)^{2\kappa}-1\right).
	\end{align*}
	As this holds at $r=R$, it suffices to show the statement for the derivatives. Substituting $x:=r^{2\kappa}$, we need to show
	\begin{align*}
	0\leq x^2-(R^{2\kappa}+4\kappa C R^{-2\kappa})x+4\kappa C=(x-R^{2\kappa})(x-4\kappa C R^{-2\kappa}).
	\end{align*}
	Therefore, the result holds if $R^{4\kappa}>4\kappa C$.
\end{proof}

This lemma immediately yields (\ref{eqn:goal_perturbation}) for $0<\kappa\leq 1/2$. Let us now turn to $1/2<\kappa<1$. Here we choose $\beta=2-2\kappa$.

\begin{lemma}
	Let $C>0$ and $1/2<\kappa<1$. There is an $R$ such that, for all $r>R$,
	\begin{align*}
	C\left(1-\left(\frac{r}{R}\right)^{2-2\kappa}\right)^2\leq r^2\left(\log\frac{r}{R}+\frac{1}{2\kappa}\left(\left(\frac{R}{r}\right)^{2\kappa}-1\right)\right).
	\end{align*}
\end{lemma}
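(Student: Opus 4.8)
The plan is to mimic the reduction used for the case $0<\kappa\le1/2$ in the preceding lemma: after a rescaling, the inequality becomes the statement that an explicit continuous function of one variable is bounded on $[1,\infty)$, and then $R$ is chosen accordingly.

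First I would substitute $y:=r/R\ge 1$ and set
\[
q(y):=\log y+\frac{1}{2\kappa}\left(y^{-2\kappa}-1\right),
\]
so that the assertion is equivalent to
\[
C\left(1-y^{2-2\kappa}\right)^2\le R^2\,y^2\,q(y)\qquad\text{for all }y\ge 1.
\]
Next I would record the elementary properties of $q$ on $[1,\infty)$: one has $q(1)=0$, $q'(y)=\tfrac1y\bigl(1-y^{-2\kappa}\bigr)>0$ for $y>1$ (so $q>0$ on $(1,\infty)$), and moreover $q'(1)=0$, $q''(1)=2\kappa\neq0$, whence $q(y)=\kappa(y-1)^2+o\bigl((y-1)^2\bigr)$ as $y\to1$. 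In particular $y^2q(y)>0$ on $(1,\infty)$, so
\[
F(y):=\frac{C\left(1-y^{2-2\kappa}\right)^2}{y^2\,q(y)}
\]
is well defined and continuous on $(1,\infty)$, and the claim reduces to showing $\sup_{y\ge1}F(y)\le R^2$.

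Then I would show $\sup_{y\ge1}F(y)<\infty$. Near $y=1$ one has $1-y^{2-2\kappa}=-(2-2\kappa)(y-1)+O\bigl((y-1)^2\bigr)$, so the numerator of $F$ is $C(2-2\kappa)^2(y-1)^2+o\bigl((y-1)^2\bigr)$ while the denominator is $\kappa(y-1)^2+o\bigl((y-1)^2\bigr)$; thus $F$ extends continuously to $y=1$ with $F(1)=C(2-2\kappa)^2/\kappa$. As $y\to\infty$ the numerator grows like $Cy^{4-4\kappa}$ and $y^2q(y)$ like $y^2\log y$; since $\kappa>1/2$ gives $4-4\kappa<2$, it follows that $F(y)\to0$. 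A continuous function on $[1,\infty)$ with finite limits at both ends is bounded, so $M:=\sup_{y\ge1}F(y)<\infty$, and any $R$ with $R^2\ge M$ satisfies the desired inequality.

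There is no genuine obstacle here; the only points requiring care are the $\tfrac{0}{0}$ behaviour of $F$ at the endpoint $y=1$, which is settled by the Taylor expansions above (in particular $q''(1)=2\kappa\neq0$ shows the double zero of the denominator is genuine), and the decay of $F$ at infinity, which is precisely where the hypothesis $1/2<\kappa<1$ is used — mirroring the rôle of $0<\kappa\le1/2$ in the companion lemma.
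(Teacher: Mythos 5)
Your proof is correct, but it is not the argument the paper gives. You exploit the homogeneity of the inequality in $R$: after setting $y=r/R$, the parameter $R$ survives only as the factor $R^2$ on the right-hand side, so the lemma reduces to the boundedness on $[1,\infty)$ of the single function $F(y)=C\bigl(1-y^{2-2\kappa}\bigr)^2/\bigl(y^2q(y)\bigr)$ with $q(y)=\log y+\tfrac{1}{2\kappa}(y^{-2\kappa}-1)$; your Taylor expansion at $y=1$ (using $q(1)=q'(1)=0$, $q''(1)=2\kappa>0$) resolves the removable singularity there, and the decay at infinity is exactly where $1/2<\kappa<1$ enters via $4-4\kappa<2$. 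The paper instead argues by an iterated derivative comparison: both sides agree at $r=R$, so it suffices to compare $r$-derivatives; the resulting inequality again holds at $r=R$, so after dividing by $r$ one compares derivatives once more and bounds the final expression below by $\bigl(\tfrac12-4C\tfrac1R\bigr)\tfrac1r$, which is positive for $R$ large. The two routes buy slightly different things: the paper's double-monotonicity argument is entirely explicit (it yields a concrete admissible threshold $R\gtrsim C$ with no sup to evaluate and no limit to justify), whereas your scaling-plus-compactness argument is shorter, makes transparent where each hypothesis is used, in principle gives the sharp threshold $R^2=\sup_{y\ge1}F(y)$, and adapts uniformly to the companion lemma for $0<\kappa\le1/2$ (there the rescaled left-hand side carries a factor $R^{2-4\kappa}$, and the same boundedness argument goes through). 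Both are complete proofs; yours is a genuinely different and somewhat cleaner reduction.
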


\begin{proof}
	As equality holds for $r=R$, it suffices to consider the derivatives, i.\,e. we would like to establish
	\begin{align*}
	&2r\left(\log\frac{r}{R}+\frac{1}{2\kappa}\left(\left(\frac{R}{r}\right)^{2\kappa}\right)\right)+r^2\left(\frac{1}{r}-\left(\frac{R}{r}\right)^{2\kappa}\frac{1}{r}\right)\\
	&~~~~~~~~~~~~~~~~~~~-2C\left(\left(\frac{r}{R}\right)^{2-2\kappa}-1\right)(2-2\kappa)R^{-2+2\kappa}r^{1-2\kappa}\geq 0.
	\end{align*}
	This again holds for $r=R$, so, after dividing the inequality by $r$, it suffices to prove the corresponding inequality for the derivatives, i.\,e.
	\begin{align*}
	&\frac{1}{r}\left(2-2\left(1-\kappa\right)\left(\frac{R}{r}\right)^{2\kappa}\right)-4C\kappa(2-2\kappa)\frac{1}{R^2}\left(\frac{r}{R}\right)^{-1-2\kappa}\\&~~~~~~~~~~~+2C(2-2\kappa)(4\kappa-2)R^{-4-4\kappa}r^{1-4\kappa}\geq 0.
	\end{align*}
	The last term on the left hand side is always positive. Thus the left hand side is greater than
	\begin{align*}
	\frac{\kappa}{r}-4C\kappa(2-2\kappa)\frac{1}{R}\left(\frac{R}{r}\right)^{2\kappa}\geq \left(\frac{1}{2}-4C\frac{1}{R}\right)\frac{1}{r},
	\end{align*}
	which is positive for sufficienlty large $R$.
\end{proof}

Therefore, for both ranges of $\kappa$, the right hand side of (\ref{eqn:Bzero_contradiction}) is bounded below by
\begin{align*}
-\int_{r_+}^R\frac{1}{\ell^2}\frac{r^2}{r^2+a^2}|u|^2\,\dd r-2\int_R^{\infty}\left(\log\frac{r}{R}+\frac{1}{2\kappa}\left(\frac{R^{2\kappa}}{r^{2\kappa}}-1\right)\right)\tilde V_a^h\frac{r^2+a^2}{\Delta_-}|u|^2\,\dd r<0
\end{align*}
for non-trivial $u$,
a contradiction. Thus we have shown the following

\begin{lemma}
	\begin{align*}
	\frac{\del B}{\del\alpha}(\alpha(0),\omega_R(0))\neq 0.
	\end{align*}
\end{lemma}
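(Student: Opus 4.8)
The plan is to argue by contradiction: assume $\del B/\del\alpha$ vanishes at $(\alpha(0),\omega_R(0))$ and show that the underlying real mode $u$ must then vanish identically, contrary to $|u|(r_+)=1$.

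First I would differentiate the asymptotic decomposition $u=Ah_1+Bh_2$ with respect to $\alpha$. Since $B(\alpha(0),\omega_R(0))=0$ and, by assumption, $\del_\alpha B=0$ there, the function $u_\alpha:=\del_\alpha u$ is near infinity a linear combination of $h_1$ and $\del_\alpha h_1$; because the twisting exponent $\kappa=\kappa(\alpha)$ depends on $\alpha$, the second term produces a $r^{-1/2+\kappa}\log r$ tail, so $u_\alpha$ does \emph{not} satisfy the Neumann condition. Near the horizon, the horizon regularity condition $u\sim(r-r_+)^\xi$ makes $u_\alpha$ smooth. To restore the boundary behaviour at infinity I would pass to $v:=u_\alpha+\tfrac{1}{2\kappa}fu$ with $f$ a $C^1$ and piecewise $C^2$ function, constant near $r_+$ and satisfying $f(r)=\log r+\OO(r^{-\gamma})$ as $r\to\infty$ for some $\gamma>0$; then $v$ satisfies the Neumann condition.

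Next I would combine the twisted form of the equation for $u_\alpha$ coming from the radial ODE via Lemma~\ref{lemma:EL_twisted_g} with the expansion of $h^{-1}\del_r(\tfrac{\Delta_-}{r^2+a^2}h^2\del_r(fu/h))$, multiply the resulting equation for $v$ by $\overline u$, and integrate by parts. As $v$ obeys the Neumann condition there is no boundary term at infinity, and the horizon term vanishes by regularity, which yields the integral identity (\ref{eqn:Bzero_contradiction}). Its first term, involving $\del_\alpha\lambda$, already has a definite sign by the Dirichlet computation in Section~\ref{subsec:PerturbingD}: it equals $-\int_{r_+}^\infty\!\int_0^\pi \tfrac{2\kappa}{r^2+a^2}\tfrac{a^2}{\ell^2}\cos^2\theta\,|S|^2\sin\theta\,|u|^2\,\dd\theta\,\dd r$.

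The main obstacle is then showing that the whole right-hand side of (\ref{eqn:Bzero_contradiction}) is strictly negative, forcing $u\equiv0$. For this I would choose $f$ explicitly as in (\ref{f_perturbation}) with a free parameter $R>r_++1$, exploiting that $f$ is monotone for $r>R$. Taking $R$ large, Lemma~\ref{lemma:twisting_positive_g} gives $\tilde V_a^h>0$ on $(R,\infty)$, so the term weighted by the nonnegative, monotone factor $\log(r/R)+\tfrac{1}{2\kappa}((R/r)^{2\kappa}-1)$ is nonnegative; the horizon contribution is $\OO(1)$ and is dominated by the sign-definite term; and the remaining error from $\del_r(\tfrac{\Delta_-}{r^2+a^2}h^2\del_r f)$ decays like $r^{-2}$ and must be absorbed via a Hardy inequality tailored to the Neumann boundary condition, at the cost of constants depending on $R$. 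To close this one needs two elementary estimates treating $0<\kappa\le 1/2$ (take $\beta=2\kappa$) and $1/2<\kappa<1$ (take $\beta=2-2\kappa$); each reduces, after differentiating in $r$ and substituting $x=r^{2\kappa}$, to the positivity of an explicit quadratic factoring as $(x-R^{2\kappa})(x-4\kappa CR^{-2\kappa})$ (or its analogue), valid once $R^{4\kappa}>4\kappa C$. Enlarging $R$ finitely many times as needed completes the contradiction.
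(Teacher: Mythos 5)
Your proposal follows the paper's own proof essentially step for step: the same contradiction setup, the same corrected function $v=u_{\alpha}+\tfrac{1}{2\kappa}fu$, the same integration-by-parts identity (\ref{eqn:Bzero_contradiction}), the explicit choice of $f$ from (\ref{f_perturbation}), the same Hardy inequality at infinity, and the same two elementary estimates with $\beta=2\kappa$ for $0<\kappa\le 1/2$ and $\beta=2-2\kappa$ for $1/2<\kappa<1$, including the factorisation $(x-R^{2\kappa})(x-4\kappa C R^{-2\kappa})$ under $R^{4\kappa}>4\kappa C$. The only point to state more precisely is that the constant part $\log R+\tfrac{1}{2\kappa}$ of $f$ multiplies the full twisted functional of the actual solution $u$, which vanishes exactly (it is not merely an $\OO(1)$ contribution to be dominated, since this constant grows with $R$), while the Hardy boundary term at infinity carries a factor $R^{-\beta}$ and is absorbed into the sign-definite $\cos^2\theta$ term by enlarging $R$.
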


\subsection{Behaviour for small $\epsilon>0$ for Neumann boundary conditions}
\label{subsec:small_behaviour_Neumann}

The main new idea of this section can be found in the proof of Proposition~\ref{propn:alpha_decreasing_Neumann}, where the insights of Section~\ref{subsec:PerturbingN} are essential to overcome the difficulties outlined at the beginning of the previous section.

\begin{propn}
	\label{propn:del_omega_Neumann}
	For sufficiently large $|m|$,
	\begin{align*}
	\omega_R(0)\frac{\del\omega_R}{\del\epsilon}(0)< 0.
	\end{align*}
\end{propn}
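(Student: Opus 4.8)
The plan is to mimic the proof of Proposition~\ref{lemma:DelOmega}, now using the twisted microlocal current adapted to the Neumann branch. Concretely, define
\[
\tilde Q_T:=\Im\left(r^{-\frac{1}{2}+\kappa}\frac{\dd}{\dd\sta r}\left(r^{\frac{1}{2}-\kappa}u\right)\overline{\omega u}\right),
\]
which is the natural twisted analogue of the current used in the Dirichlet case (compare the expression for $Q_T$ recalled in Section~\ref{subsec:PerturbingN}). First I would check the boundary terms: at infinity, the Neumann boundary condition (Definition~\ref{defn:BdyCondsNeumann}) forces $r^{1+2\kappa}\frac{\dd}{\dd r}(r^{1/2-\kappa}u)\to 0$ while $r^{1/2-\kappa}u$ stays bounded, so $\tilde Q_T(\infty)=0$; at the horizon, the horizon regularity condition gives $u\sim(r-r_+)^{\xi}$ with $\Re\xi>0$ (by (\ref{eqn:defn_xi}) and $\epsilon>0$), so $|u|(r_+)=0$ and hence $\tilde Q_T(r_+)=0$. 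Then I would compute $\dd\tilde Q_T/\dd r$ using the radial ODE in twisted form; as in the Dirichlet computation this produces a term $-\epsilon\frac{\Delta_-}{r^2+a^2}h^2|\frac{\dd}{\dd r}(h^{-1}u)|^2$ from the $\Im\omega=\epsilon$ part together with $\frac{r^2+a^2}{\Delta_-}\Im((\tilde V_a^h-\omega^2+\omega^2)\overline\omega)|u|^2$-type terms, and integrating from $r_+$ to $\infty$ yields an exact identity analogous to (\ref{eqn:int_exact_equality}).

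Next, as in the Dirichlet proof, I would isolate the sign-determining quantity. The key algebraic observation there was that, after using $-\Im(\lambda\overline\omega)>0$ (Proposition~\ref{propn:ImOmega}) and the Hardy inequality (now the twisted Hardy inequality, Lemma~\ref{lemma:Hardy_twisted}, to handle the $\alpha\geq 2$ regime near infinity), the integral identity forces positivity of the function
\[
K(r):=|\omega|^2(r^2+a^2)^2-\Xi^2a^2m^2-\Delta_-a^2|\omega|^2
\]
at $r=r_+$ to fail; note $K$ is exactly the same as in the Dirichlet case since the twisting modification of the potential is $\OO(1)$ and independent of $m$, so it does not affect the leading $m^2$ behaviour. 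The monotonicity $\dd K/\dd r>0$ from (\ref{eqn:derivative_K}) is unchanged, and the bound $|\omega(0)|^2\geq Cm^2$ (using that $\hat a$ is bounded away from zero for all $m$, which holds in the Neumann construction of Section~\ref{sec:RealDN} as well) lets me choose $|m|$ large enough that $\dd K/\dd r$ at $r_+$ dominates the possibly-negative contributions of the remaining terms on a compact interval $(r_+,R)$. Assuming $K(r_+)\geq0$ then forces $u\equiv0$ via the exact identity, a contradiction, so $K(r_+)<0$.

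Finally, $K(r_+)<0$ unwinds to
\[
\omega_R(\epsilon)^2+\epsilon^2<\left(\frac{am}{r_+^2+a^2}\right)^2=\omega_+^2=\omega_R(0)^2,
\]
using $\omega_R(0)=\Xi\hat a m/(r_+^2+\hat a^2)=\omega_+$; since $\epsilon\mapsto\omega_R(\epsilon)$ is smooth with $\omega_R(0)=\omega_+\neq0$ (the latter by $am\neq0$), differentiating this inequality at $\epsilon=0$ gives $\omega_R(0)\,\del\omega_R/\del\epsilon(0)\leq0$, and strictness follows because the inequality is strict for all small $\epsilon>0$. The main obstacle I expect is the bookkeeping in the derivative-of-current computation: the twisted derivative generates extra zeroth-order terms (precisely those absorbed into $\Vmod_a$), and one must verify carefully that these are $\OO(1)$, $m$-independent, and therefore harmless after multiplying by $\overline\omega$ with $\Im\omega=\epsilon$ — i.e. that they contribute only to the part of the integrand that can be dominated by $\dd K/\dd r$ at the horizon for large $|m|$, exactly as in Remark~\ref{rk:HR_bound}. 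Everything else is a direct transcription of Section~\ref{subsec:Crossing} with $h=r^{-1/2+\kappa}$.
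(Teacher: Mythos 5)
Your proposal is correct and follows essentially the same route as the paper: the twisted current $\tilde Q_T$, vanishing boundary terms at $r_+$ and infinity, the resulting integral identity, Proposition~\ref{propn:ImOmega}, the same function $K(r)$ with $\dd K/\dd r>0$, the twisted Hardy inequality of Lemma~\ref{lemma:Hardy_twisted} near infinity, largeness of $|m|$ via $|\omega(0)|^2\geq Cm^2$, and the contradiction forcing $K(r_+)<0$, which (since $\Delta_-(r_+)=0$) unwinds to $\omega_R(\epsilon)^2+\epsilon^2<\omega_R(0)^2$. The only point where the paper is slightly more explicit is in tracking the boundary-layer term $\int_{R-1}^{R}\epsilon C'|u|^2\,\dd r$ produced by Lemma~\ref{lemma:Hardy_twisted}, which is then also absorbed by the large $\dd K/\dd r$ on $(R-1,R)$ — exactly the kind of bookkeeping you flag at the end, so nothing essential is missing.
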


\begin{proof}
	We define an appropriate modified microlocal energy current
	\begin{align*}
	\tilde Q_T:=\Im\left(r^{-\frac{1}{2}+\kappa}\left(r^{\frac{1}{2}-\kappa}\right)'\overline{\omega u}\right).
	\end{align*}
	Let $\epsilon>0$, then $\tilde Q_T(r_+)=\tilde Q_T(\infty)=0$. This yields
	\begin{align}
	\label{eqn:int_whole}
	\int_{r_+}^{\infty}\left(\epsilon\frac{\Delta_-}{r^2+a^2}h^2\left\lvert\frac{\dd }{\dd r}\left(h^{-1}u\right)\right\lvert^2-\frac{r^2+a^2}{\Delta_-}\Im(\Vmod_a\overline\omega)|u|^2\right)\,\dd r=0.
	\end{align}
	Similarly to Section~\ref{subsec:Crossing}, we obtain
	\begin{align}
	\begin{aligned}
	\label{eqn:Im_ineq}
	-\Im\left(\Vmod_a\overline\omega\right)&>\frac{\epsilon}{(r^2+a^2)^2}\left(K(r)+V_+(r^2+a^2)^2-\frac{\alpha}{\ell^2}\Delta_-(r^2+a^2)\right)\\&~~~~~~~~+\epsilon \left(\frac{1}{2}-\kappa\right)\frac{\Delta_-}{r^2+a^2}r^{1/2-\kappa}\frac{\dd}{\dd r}\left(\frac{\Delta_-}{r^2+a^2}r^{-3/2+\kappa}\right)
	\end{aligned}
	\end{align}
	with the additional term due to the twisting. Again
	\begin{align*}
	K(r)=|\omega|^2(r^2+a^2)^2-\Xi^2 a^2m^2-\Delta_-a^2|\omega|^2.
	\end{align*}
	Recall from Section~\ref{subsec:Crossing} that
	\begin{align}
	\label{eqn:K_growth}
	\frac{\dd K}{\dd r}(r)=|\omega|^2\left(4\left(1-\frac{a^2}{\ell^2}\right)r^3+2a^2M+2a^2\left(1-\frac{a^2}{\ell^2}\right)\right)>0.
	\end{align}
	By Lemma~\ref{lemma:twisting_positive_g}, there is an $R>r_+$ such that
	\begin{align*}
	\frac{r^2+a^2}{\Delta_-}\left\lvert V_++V_{\alpha}+\left(\frac{1}{2}-\kappa\right)\frac{\Delta_-}{r^2+a^2}r^{1/2-\kappa}\frac{\dd}{\dd r}\left(\frac{\Delta_-}{r^2+a^2}r^{-3/2+\kappa}\right)\right\lvert<\frac{C}{2r^2}
	\end{align*}
	for any $C>0$. Thus, by an application of Lemma~\ref{lemma:Hardy_twisted} as in the proof of Lemma~\ref{lemma:VariationInequalityND}, we have
	\begin{align*}
	\begin{split}
	&\int_R^{\infty}\left(\epsilon\frac{\Delta_-}{r^2+a^2}h^2\left\lvert\frac{\dd }{\dd r}\left(h^{-1}u\right)\right\lvert^2-\frac{r^2+a^2}{\Delta_-}\Im(\Vmod_a\overline\omega)|u|^2\right)\,\dd r\\&~~~~~~~~~~~~~~~~~~~~~~>\int_R^{\infty}\frac{\epsilon}{(r^2+a^2)^2}K(r)|u|^2-\int_{R-1}^R\epsilon C'|u|^2\,\dd r
	\end{split}
	\end{align*}
	for sufficiently large $R$ and a large constant $C'>0$.
	
	For the sake of contradiction, suppose that $K(r_+)\geq 0$. Then, of course,
	\begin{align}
	\begin{split}
	\label{eqn:int_R}
	&\int_R^{\infty}\left(\epsilon\frac{\Delta_-}{r^2+a^2}h^2\left\lvert\frac{\dd }{\dd r}\left(h^{-1}u\right)\right\lvert^2-\frac{r^2+a^2}{\Delta_-}\Im(\Vmod_a\overline\omega)|u|^2\right)\,\dd r\\&~~~~~~~~>\int_R^{\infty}\frac{\epsilon}{(r^2+a^2)^2}K(r)|u|^2+\int_{R-1}^R\epsilon\left(\frac{K(r)}{(r^2+a^2)^2}-C'\right) |u|^2\,\dd r
	\end{split}
	\end{align}
		By (\ref{eqn:K_growth}), this means that $K>0$ on $(r_+,\infty)$. Since $\epsilon\mapsto\omega(\epsilon)$ is continuous and
	\begin{align*}
	|\omega(0)|^2\geq Cm^2,
	\end{align*}
	$|\omega|^2$ scales as $m^2$, so $\dd K/\dd r$ can be chosen to be as large as possible by increasing $m^2$, in particular, it can be used to overcome the potentially negative derivative of the remaining terms of the right hand side of (\ref{eqn:Im_ineq}) on $(r_+,R)$ and in (\ref{eqn:int_R}) on $(R-1,R)$. Using (\ref{eqn:int_whole}) and (\ref{eqn:int_R}), we conclude $u=0$, a contradiction.
\end{proof}

From now on we fix $m$ -- see Remark~\ref{rk:large_m}.

\begin{propn}
	\label{propn:alpha_decreasing_Neumann}
	\begin{align*}
	\frac{\del\alpha}{\del\epsilon}(0)>0.
	\end{align*}
\end{propn}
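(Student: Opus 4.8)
The plan is to run the $\epsilon$-perturbation argument of Proposition~\ref{propn:alpha_Dirichlet}, now with twisted derivatives and corrected for the fact that the twisting weight $h=r^{-1/2+\kappa}$ depends on $\alpha$, and hence on $\epsilon$ --- the difficulty already met and resolved in Section~\ref{subsec:PerturbingN}. Write $u(r,\epsilon)$ for the family of Neumann modes constructed above, normalised so that $|u(r_+,\epsilon)|=1$, and put $u_\epsilon:=\del u/\del\epsilon$. As in Proposition~\ref{propn:alpha_Dirichlet}, the function $(r-r_+)^{-\xi(\epsilon)}u(r,\epsilon)$ is smooth at the horizon, so near $r_+$ one has $u_\epsilon(r,0)=\tfrac{r_+^2+a^2}{\del_r\Delta_-(r_+)}\bigl(1-\im\tfrac{\del\omega_R}{\del\epsilon}(0)\bigr)\log(r-r_+)\,u(r,0)+(\text{smooth})$, while near infinity $u\sim r^{-1/2+\kappa(\epsilon)}$ forces $u_\epsilon$ to contain a term $\sim r^{-1/2+\kappa}\log r$, so $u_\epsilon$ violates the Neumann condition of Definition~\ref{defn:BdyCondsNeumann}. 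Following Section~\ref{subsec:PerturbingN}, I would work with $v:=u_\epsilon+\tfrac{1}{2\kappa}\tfrac{\del\alpha}{\del\epsilon}(0)\,f_\ast u$, where $f_\ast$ is a $C^1$, piecewise-$C^2$ function, constant near $r_+$ and equal to $\log r+\OO(r^{-2\kappa})$ near infinity --- for instance the function in~(\ref{f_perturbation}), with a radius $R>r_++1$ to be fixed later. Since $\del\kappa/\del\epsilon=-\tfrac{1}{2\kappa}\del\alpha/\del\epsilon$, the logarithmic terms at infinity cancel and $v$ satisfies the Neumann condition, while near $r_+$ it equals $u_\epsilon+\mathrm{const}\cdot u$.

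Differentiating the radial ODE in $\epsilon$ and applying the twisted reformulation of Lemma~\ref{lemma:EL_twisted_g} gives a twisted ODE for $u_\epsilon$ with right-hand side $\tfrac{r^2+a^2}{\Delta_-}\bigl(\tfrac{\del\tilde V_a}{\del\epsilon}+\mathcal T\bigr)u$, where $\mathcal T$ is the $\epsilon$-derivative of the twisting term $(\tfrac12-\kappa)\tfrac{\Delta_-}{r^2+a^2}r^{1/2-\kappa}\tfrac{\dd}{\dd r}\!\bigl(\tfrac{\Delta_-}{r^2+a^2}r^{-3/2+\kappa}\bigr)$ and hence carries a factor $\del\kappa/\del\epsilon=-\tfrac{1}{2\kappa}\del\alpha/\del\epsilon$; adding $\tfrac{1}{2\kappa}\tfrac{\del\alpha}{\del\epsilon}(0)$ times the $f_\ast u$-identity from Section~\ref{subsec:PerturbingN} turns this into a twisted ODE for $v$. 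Pairing it with $\overline u$, taking real parts and integrating by parts over $(r_+,\infty)$: the boundary term at infinity vanishes because $v$ satisfies the Neumann condition; the $f_\ast u$ piece contributes nothing at $r_+$ (it is a constant times $u$ there); and the $u_\epsilon$ piece contributes exactly the $-|u(r_+)|^2$ of Proposition~\ref{propn:alpha_Dirichlet}. After eliminating $\del\lambda/\del\epsilon$ through Lemma~\ref{propn:IfThen} and using $\int_0^{\pi}|S|^2\sin\theta\,\dd\theta=1$, one is left with an identity of the form
\begin{align*}
-|u(r_+)|^2=\mathcal P+\frac{\del\alpha}{\del\epsilon}(0)\,\mathcal R,
\end{align*}
where $\mathcal P$ collects all the $\omega_R$-linear terms --- the term $-2\omega_R(0)\tfrac{\del\omega_R}{\del\epsilon}(0)\tfrac{r^2-r_+^2}{r^2+a^2}$ of~(\ref{eqn:real_part_V_epsilon}) and the $\omega_R$-part supplied by Lemma~\ref{propn:IfThen} --- while $\mathcal R$ collects the $\del\alpha/\del\epsilon$-linear contributions: the mass-type term, the $-\tfrac{a^2}{\ell^2}\sin^2\theta$ term from Lemma~\ref{propn:IfThen} (recall $\alpha>0$), the twisting term $\mathcal T$, and the three $f_\ast u$-correction integrals appearing in~(\ref{eqn:Bzero_contradiction}).

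Suppose, for contradiction, that $\del\alpha/\del\epsilon(0)\le 0$. By Proposition~\ref{propn:del_omega_Neumann} we have $\omega_R(0)\tfrac{\del\omega_R}{\del\epsilon}(0)<0$, so both contributions to $\mathcal P$ are pointwise non-negative and, integrated against the weight $\tfrac{r^2+a^2}{\Delta_-}|u|^2$ of~(\ref{eqn:IntV}), yield $\mathcal P>0$ for nontrivial $u$. It remains to establish $\mathcal R<0$, for then $\tfrac{\del\alpha}{\del\epsilon}(0)\,\mathcal R\ge 0$ and the right-hand side is strictly positive whereas the left-hand side is $-1$, a contradiction. The proof of $\mathcal R<0$ runs exactly as in Section~\ref{subsec:PerturbingN}: the mass-type term and the $-\tfrac{a^2}{\ell^2}\sin^2\theta$ term are manifestly negative and play the role of the ``driving'' term there; choosing $R$ in $f_\ast$ large enough that $\tilde V^h_a>0$ on $(R,\infty)$ by Lemma~\ref{lemma:twisting_positive_g} makes the $f_\ast$-weighted twisted energy density a non-positive contribution (its constant-coefficient part integrates to zero and $f_\ast$ is monotone); and the remaining error integrals --- those coming from $\tfrac{\dd}{\dd r}\!\bigl(\tfrac{\Delta_-}{r^2+a^2}h^2\tfrac{\dd f_\ast}{\dd r}\bigr)$, from the $\tfrac{2\kappa}{\ell^2}\tfrac{r^2}{r^2+a^2}$ term, and from $\mathcal T$ --- are absorbed via the twisted Hardy inequality of Lemma~\ref{lemma:Hardy_twisted} and the two elementary estimates of Section~\ref{subsec:PerturbingN}, treating $0<\kappa\le\tfrac12$ (with $\beta=2\kappa$) and $\tfrac12<\kappa<1$ (with $\beta=2-2\kappa$) separately. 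Hence $\del\alpha/\del\epsilon(0)>0$.

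The main obstacle is precisely the verification $\mathcal R<0$. Because $\mathcal R$ assembles several competing pieces of a priori uncontrolled size, one has to choose the single radius $R$ so as to dominate the $f_\ast u$-correction integrals, the twisting term $\mathcal T$ and the Hardy error terms all at once, and one must pin the sign of $\mathcal R$ independently of the sign of $\del\alpha/\del\epsilon$, which is the very quantity the argument is trying to determine; once this is done the contradiction closes as in the Dirichlet case. No separate treatment of $\alpha=0$ is needed here, since Neumann conditions are only imposed for $5/4<\alpha<9/4$.
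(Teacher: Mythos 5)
Your strategy is the paper's own: replace $u_{\epsilon}$ by the corrected function $v=u_{\epsilon}+\tfrac{1}{2\kappa}\tfrac{\del\alpha}{\del\epsilon}(0)fu$ with $f$ as in (\ref{f_perturbation}) so that $v$ satisfies the Neumann condition, integrate the twisted identity against $\overline u$ (the horizon boundary term coming from $u_{\epsilon}$ alone, since $f$ is constant near $r_+$), eliminate $\del\lambda/\del\epsilon$ via Lemma~\ref{propn:IfThen}, control the $\omega_R$--terms with Proposition~\ref{propn:del_omega_Neumann}, and, assuming $\del\alpha/\del\epsilon(0)\leq 0$, conclude from the negativity established in Section~\ref{subsec:PerturbingN} that the right-hand side is positive while the left-hand side is a negative multiple of $|u(r_+)|^2$; this is exactly the derivation of (\ref{eqn:integral_var_epsilon}) and the contradiction in the paper.

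The one concrete flaw is your term $\mathcal T$. Lemma~\ref{lemma:EL_twisted_g} is an identity for a \emph{fixed} twisting function: differentiating the untwisted radial ODE in $\epsilon$ and then twisting at the fixed value $\kappa(0)$ gives the right-hand side $\tfrac{r^2+a^2}{\Delta_-}\tfrac{\del\tilde V_a}{\del\epsilon}u$ with no contribution from the $\epsilon$-dependence of the twisting (compare (\ref{eqn:twisted_del_alpha_perturbation})); the $\alpha$-dependence of $h$ enters only through the correction $fu$ needed for the boundary condition. Equivalently, if you differentiate the $\epsilon$-dependent twisted equation, the derivative of the twisted operator acting on $u$ cancels $\mathcal T\tfrac{r^2+a^2}{\Delta_-}u$ identically, again by Lemma~\ref{lemma:EL_twisted_g}; keeping $\mathcal T$ while dropping those compensating operator terms, as you do, yields an incorrect identity. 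Moreover, were $\mathcal T$ genuinely present it could not be ``absorbed via the twisted Hardy inequality'': the twisting part of $\tilde V^h_a$ grows like $r^2$ (it is precisely what cancels the $r^2$-growth of $\tilde V_a$, cf.\ Lemma~\ref{lemma:twisting_positive_g}), so $\tfrac{r^2+a^2}{\Delta_-}\mathcal T$ tends to a nonzero multiple of $\del\alpha/\del\epsilon$ at infinity, and its integral against $|u|^2\sim r^{-1+2\kappa}$ for a Neumann mode is not even finite --- it would be a leading-order term, not an error. Once $\mathcal T$ is deleted, your $\mathcal R$ is exactly the combination of mass-type and $f$-correction integrals whose negativity is the content of Section~\ref{subsec:PerturbingN}, and the argument closes as you describe, in the same way as Proposition~\ref{propn:alpha_Dirichlet}.
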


\begin{proof}
	The proof proceeds as in Section~\ref{subsec:Crossing}, adapting the idea used already in Section~\ref{subsec:PerturbingN}. 
		Set $u_{\epsilon}:=\del u/\del\epsilon$. For an $f$ as in (\ref{f_perturbation}) with an $R$ to be determined,
	\begin{align*}
	v(r):=u_{\epsilon}(r)-\frac{\del\kappa}{\del\epsilon}f(r)u(r)=u_{\epsilon}(r)+\frac{1}{2\kappa}\frac{\del\alpha}{\del\epsilon}f(r)u(r)
	\end{align*}
	satisfies the Neumann boundary condition at infinity. As $fu$ extends smoothly to the horizon, the behaviour of $v$ at $r=r_+$ is dominated by $u_{\epsilon}$.
	Using the $h$ of Lemma~\ref{lemma:twisting_positive_g} yields the ODE
	\begin{align*}
	& h^{-1}\frac{\dd}{\dd r}\left(\frac{\Delta_-}{r^2+a^2}h^2\frac{\dd }{\dd r}\left(\frac{v}{h}\right)\right)-\frac{r^2+a^2}{\Delta_-}\tilde V^h_a v\\&~~~~~~~=\frac{r^2+a^2}{\Delta_-}\frac{\del\tilde V_a}{\del\epsilon}u+\frac{1}{2\kappa}\frac{\del\alpha}{\del\epsilon}h^{-1}\frac{\dd}{\dd r}\left(\frac{\Delta_-}{r^2+a^2}h^2\frac{\dd f}{\dd r}\right)\left(h^{-1}u\right)\\
	&~~~~~~~~~~~+2\frac{1}{2\kappa}\frac{\del\alpha}{\del\epsilon}h^{-1}\frac{\Delta_-}{r^2+a^2}h^2\frac{\dd f}{\dd r}\frac{\dd}{\dd r}\left(h^{-1}u\right)+\frac{1}{2\kappa}\frac{\del\alpha}{\del\epsilon}fh^{-1}\frac{\dd}{\dd r}\left(\frac{\Delta_-}{r^2+a^2}h^2\frac{\dd}{\dd r}\left(h^{-1}u\right)\right).
	\end{align*}
	Observe that as in the proof of Proposition~\ref{propn:alpha_Dirichlet}
	\begin{align*}
	\frac{\dd}{\dd r}\left(\frac{\Delta_-}{r^2+a^2}h^2\frac{\dd}{\dd r}\left(h^{-1}u_{\epsilon}\right)\right)h^{-1}\overline u&=\frac{\dd}{\dd r}\left(\frac{\Delta_-}{r^2+a^2}h\left[\frac{\dd}{\dd r}\left(h^{-1}u_{\epsilon}\right)\overline u-u_{\epsilon}\frac{\dd}{\dd r}\left(h^{-1}\overline u\right)\right]\right)\\
	&~~~~~~~~+u_{\epsilon}h^{-1}\frac{\dd}{\dd r}\left(\frac{\Delta_-}{r^2+a^2}h^2\frac{\dd}{\dd r}\left(h^{-1}\overline u\right)\right).
	\end{align*}
	This yields
	\begin{align}
	\begin{split}
	\label{eqn:integral_var_epsilon}
	-2\kappa|u(r_+)|^2&=\int_{r_+}^{\infty}\left(\frac{r^2+a^2}{\Delta_-}2\kappa\Re\left(\frac{\del\tilde V}{\del\epsilon}\right)\Bigg\lvert_{\epsilon=0}+\frac{\del\alpha}{\del\epsilon}\frac{\dd}{\dd r}\left(\frac{\Delta_-}{r^2+a^2}h^2\frac{\dd f}{\dd r}\right)\left\lvert h^{-1}u\right\lvert^2\right)\,\dd r\\
	&~~~~~-\frac{\del\alpha}{\del\epsilon}\int_{r_+}^{\infty}f\left(\frac{\Delta_-}{r^2+a^2}h^2\left\lvert\frac{\dd}{\dd r}\left(h^{-1}u\right)\right\lvert^2+\tilde V^h_a\frac{r^2+a^2}{\Delta_-}|u|^2\right)\,\dd r\\
	&~~~~~-\frac{\del\alpha}{\del\epsilon}\int_{r_+}^{\infty}f\frac{\Delta_-}{r^2+a^2}h^2\left\lvert\frac{\dd}{\dd r}\left(h^{-1}u\right)\right\lvert^2\,\dd r.
	\end{split}
	\end{align}
	The expression for $\Re\left(\frac{\del\tilde V}{\del\epsilon}\right)\Big\lvert_{\epsilon=0}$ can be taken from (\ref{eqn:real_part_V_epsilon}). First one can eliminate the explicit $\lambda$ dependence via Lemma~\ref{propn:IfThen} and one obtains a lower bound on the right hand side using Proposition~\ref{propn:del_omega_Neumann}.
	Then suppose for the sake of contradiction that $\del\alpha/\del\epsilon\leq 0$. It follows immediately from Section~\ref{subsec:PerturbingN} that the right hand side of (\ref{eqn:integral_var_epsilon}) is positive, a contradiction.
\end{proof}

\subsection{The continuity argument for Neumann boundary conditions}
\label{subsec:Continuity_Neumann}

To apply the continuity argument to the Neumann case, we need to take the two steps outlined in the introduction to Section~\ref{subsec:Continuity}. The second step merely relied on continuity properties of $A$ and the monotonicity properties of $\omega(\epsilon)$ and $\alpha(\epsilon)$ established in Sections~\ref{subsec:Crossing} and \ref{subsec:small_behaviour_Neumann}, respectively; in particular, it did not rely directly on properties of the functional. Hence this part of the argument can be carried out almost verbatim. Therefore, we only need to deal with the first step here.

We make the analogous definitions for $\Aa_{\alpha,r_+}$ and $\Phi$ as in Section~\ref{subsec:Continuity}. For $f=r^{-1/2+\kappa}g$, $x\mapsto g(1/x)\in C_0^{\infty}[0,1/r_+)$, we define
\begin{align*}
\Aa_{\alpha,r_+}(f):=\{a>0\,:\,\tL_{\alpha,r_+,a}(f)<0\}
\end{align*}
and
\begin{align*}
\Aa_{\alpha,r_+}:=&\bigcup_{g\in C_0^{\infty}[0,1/r_+)}\Aa_{\alpha,r_+}(r\mapsto r^{-1/2+\kappa}g(1/r)))\\
=&\{a>0\,:\,\exists (x\mapsto g(1/x))\in C_0^{\infty}[0,1/r_+):\,\tL_{\alpha,r_+,a}(r^{-1/2+\kappa}g)<0\}.
\end{align*}
Moreover, we define
\begin{align*}
\Phi:\,(5/4,9/4)\times(0,\infty)\rightarrow(0,\infty),~\Phi(\alpha,r_+):=\inf\Aa_{\alpha,r_+}
\end{align*}
if $\Aa_{\alpha,r_+}\neq\emptyset$.
Instead of showing monotonicity for $\Phi$, we will define a left-continuous function $\Psi$ that can play the r\^ole of $\Phi$ in the continuity argument. For each $\alpha\in (5/4,9/4)$, there will be a value $\Phi(\alpha,r_+)$ for $a$ such that there is a real mode solution satisfying the Neumann boundary condition for $\alpha$ and this $a$.  The function $\Psi(\cdot,r_+)$ will essentially look like $\Phi(\cdot,r_+)$, but will be modified on potential jump points to achieve left-continuity. The arguments of Sections~\ref{subsec:PerturbingN} and~\ref{subsec:small_behaviour_Neumann} (which depend only on the existence of a Neumann mode solution) can be repeated for $\Psi(\alpha,r_+)$ instead of $\Phi(\alpha,r_+)$, thus we can substitue $\Phi$ by $\Psi$ in the remainder of the proof of Section~\ref{subsec:Continuity}.

\begin{lemma}
	\label{lemma:Psi_for_Phi}
	There is a left-continuous function $\Psi(\cdot,r_+)$ such that there is a real mode solution satisfying the Neumann boundary condition for each $5/4<\alpha<9/4$ and each $a=\Psi(\alpha,r_+)$. 
\end{lemma}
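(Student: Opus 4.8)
The plan is to build $\Psi(\cdot,r_+)$ by starting from $\Phi(\cdot,r_+)$ and correcting it only at its (at most countably many) discontinuities. First I would establish that $\Phi(\cdot,r_+)$ is well-defined on an interval $I\subseteq(5/4,9/4)$ and non-increasing: well-definedness follows exactly as in Lemma~\ref{lemma:well-defined} (using Lemma~\ref{lemma:FunctionalNegativeTwisted} and the analogue of Remark~\ref{rk:Choice} for the twisted functional, noting that for $f\in C_0^\infty(r_+,\infty)$ one has $\tL_a(f)=\LL_a(f)$). For monotonicity, the key point is that, unlike the Dirichlet case, the twisted functional $\tL_{\alpha,r_+,a}$ depends on $\alpha$ both through $\tilde V_a$ and through the twisting weight $h=r^{-1/2+\kappa}$, so $\alpha\mapsto\tL_{\alpha,r_+,a}(f)$ is no longer manifestly monotone for a fixed $f$. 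I would get around this by testing only on functions of the form $f=r^{-1/2+\kappa}g$ with $g$ compactly supported around infinity (as in the definition of $\Aa_{\alpha,r_+}$), noting that $h^2|\frac{\dd}{\dd r}(h^{-1}f)|^2=|g'|^2$ is then independent of $\kappa$, and computing $\tL_{\alpha,r_+,a}(f)$ explicitly in terms of $g$; after integration by parts the $\kappa$-dependence reduces to that of $\tilde V_a$ plus an explicit boundary/zeroth-order term, which I expect to be monotone in $\alpha$ on $I$ (shrinking $I$ if necessary). This gives: $\Phi(\cdot,r_+)$ is monotone, hence has at most countably many jump discontinuities, and is left- or right-continuous at every point.

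Next I would show that at each $\alpha\in I$ there is a genuine periodic Neumann mode with spacetime parameter $a=\Phi(\alpha,r_+)$. This is the content of the proposition following Lemma~\ref{lemma:ELDN}: the set $\Aa_{\alpha,r_+}$ is open (by continuity of $a\mapsto\tL_a(f)$) and bounded below by a positive constant (by Lemma~\ref{lemma:HR_positivity} together with the fact that $\tL_a=\LL_a$ on $C_0^\infty$), so $\Phi(\alpha,r_+)=\inf\Aa_{\alpha,r_+}\notin\Aa_{\alpha,r_+}$, whence $\nu_{\Phi(\alpha,r_+)}=0$ and the minimiser $f_{\Phi(\alpha,r_+)}$ solves the radial ODE with the correct boundary conditions. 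The subtlety is that a jump of $\Phi$ may hide the following phenomenon: at a jump point $\alpha_*$ the value $\Phi(\alpha_*,r_+)$ produced by one one-sided limit need not coincide with a value for which a mode exists \emph{continuously} from the correct side. To handle this I would define $\Psi(\cdot,r_+)$ to agree with $\Phi(\cdot,r_+)$ at all points of continuity, and at each jump point $\alpha_*$ set $\Psi(\alpha_*,r_+):=\lim_{\alpha\uparrow\alpha_*}\Phi(\alpha,r_+)$ (the left limit, which exists by monotonicity). By construction $\Psi(\cdot,r_+)$ is left-continuous everywhere on $I$. It remains to verify that a real Neumann mode still exists at $(\alpha_*,\Psi(\alpha_*,r_+))$: one approximates by $\alpha_n\uparrow\alpha_*$, takes the corresponding minimisers $f_{\Phi(\alpha_n,r_+)}$ (which satisfy $\norm{f}_{\uL^2}=1$ and are uniformly bounded in $\uH^1_\kappa$ by Lemma~\ref{lemma:VariationInequalityND}), extracts a weakly convergent subsequence converging strongly in $L^2_{\mathrm{loc}}$ to a nonzero limit $f_*$ (nonzero by the concentration-compactness argument of Lemma~\ref{lemma:RegMinim}), and passes to the limit in the weak Euler--Lagrange equation of Lemma~\ref{lemma:ELDN}, using continuity of $\Vmod_a$ and $a\mapsto\Phi$-values in the relevant coefficients; the horizon regularity condition follows as in Proposition~\ref{propn:a_hat} from the $\uH^1_\kappa$-bound, and the Neumann boundary condition follows from the structure of $\uH^1_\kappa$ as in the proof of the proposition after Lemma~\ref{lemma:ELDN}.

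The main obstacle I anticipate is precisely the monotonicity step: verifying that, after restricting the test functions to $f=r^{-1/2+\kappa}g$ and integrating by parts, the remaining $\kappa$-dependent zeroth-order contribution to $\tL_{\alpha,r_+,a}(f)$ has a definite sign (so that $\tL_{\alpha,r_+,a}(f)>\tL_{\alpha',r_+,a}(f)$ for $\alpha<\alpha'$ uniformly in $g$). This is more delicate than in the Dirichlet case because $\Vmod_a-\tilde V_a$ is an explicit but not obviously sign-definite expression in $\kappa$; I expect one must use the explicit form of $\Vmod_a$ from Appendix~\ref{sec:twisted_derivative} and possibly the asymptotic negativity results of Lemma~\ref{lemma:Vneg} (valid for large $|m|$, cf.\ Remark~\ref{rk:HR_bound}) to absorb any bad terms, shrinking $I$ around $\alpha_0$ and enlarging $m_0$ as needed. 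Once monotonicity is in hand, the passage to $\Psi$ and the limiting argument for existence of the mode are routine adaptations of Section~\ref{subsec:Continuity} and Lemma~\ref{lemma:RegMinim}, and the remainder of the proof of Theorem~\ref{thm:newN} then follows by substituting $\Psi$ for $\Phi$ verbatim in the vector-field/integral-curve argument of Section~\ref{subsec:Continuity}, since that argument used only left-continuity of the curve $\Gamma$ and the sign conditions $\del\alpha/\del\epsilon(0)>0$, $\del\omega_R/\del\epsilon(0)<0$ from Propositions~\ref{propn:del_omega_Neumann} and \ref{propn:alpha_decreasing_Neumann}.
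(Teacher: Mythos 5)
There is a genuine gap, and it sits exactly at the step you flag as your ``main obstacle'': the claim that $\alpha\mapsto\tL_{\alpha,r_+,a}(r^{-1/2+\kappa(\alpha)}g)$ is monotone for every fixed $g$ (uniformly enough to make $\Phi(\cdot,r_+)$ globally non-increasing, as in Lemma~\ref{lemma:non-increasing}). First, your simplification is incorrect: with $h=r^{-1/2+\kappa}$ and $f=hg$ one has $h^2\lvert\frac{\dd}{\dd r}(h^{-1}f)\rvert^2=r^{-1+2\kappa}\lvert\frac{\dd g}{\dd r}\rvert^2$, not $\lvert\frac{\dd g}{\dd r}\rvert^2$, so the derivative term is $\kappa$-dependent as well. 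More seriously, the $\kappa$-derivative of the zeroth-order term produces, among other things, $2\log r\,\Vmod_a\frac{r^2+a^2}{\Delta_-}r^{-1+2\kappa}\lvert g\rvert^2$ together with the $\kappa$-derivative of the twisting correction $(\tfrac12-\kappa)\frac{\Delta_-}{r^2+a^2}r^{1/2-\kappa}\frac{\dd}{\dd r}\bigl(\frac{\Delta_-}{r^2+a^2}r^{-3/2+\kappa}\bigr)$; neither is sign-definite, and the indefinite region is precisely the potential well on which the witnesses $g$ of $\Aa_{\alpha,r_+}$ concentrate, so it cannot be absorbed uniformly in $g$ by Lemma~\ref{lemma:Vneg} or by enlarging $m$. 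This is not a technicality: the paper states that only weaker monotonicity is available in the Neumann case, and indeed, had your monotonicity held, $\Phi$ itself would be left-continuous by the Lemma~\ref{lemma:CL}-type argument (as in Dirichlet) and no function $\Psi$ would be needed at all --- the very existence of this lemma signals that your key input is not there. In the Dirichlet case monotonicity comes pointwise from $V_\alpha$ and $\del\lambda/\del\alpha$ (Proposition~\ref{propn:LambdaDeriv}); nothing analogous is proved, or sketched by you, for the twisted functional on arbitrary test functions.

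What the paper does instead is to prove monotonicity only \emph{at mode solutions}: Lemma~\ref{lemma:negative_derivative_twisted_functional} shows $\frac{\del}{\del\kappa}\tL_{\alpha,r_+,a}(r^{-1/2+\kappa}g_0)\big\lvert_{\kappa=\kappa_0}>0$ when $r^{-1/2+\kappa_0}g_0$ actually solves the radial ODE, and its proof is the full machinery of Section~\ref{subsec:PerturbingN} (the comparison function $f$ of (\ref{f_perturbation}), the twisted Hardy inequalities, the sign of $\del\lambda/\del\alpha$), using the ODE essentially, e.g.\ through the vanishing of the Euler--Lagrange density and of $\tL_a(u_0)$. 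This yields only Corollary~\ref{cor:local_monotonicity_right} (local decrease of $\Phi(\cdot,r_+)$ to the right of each point); the paper then decomposes $[5/4+\epsilon,9/4)$ into half-open intervals of decrease, sets $\Psi=\Phi$ in their interiors, defines $\Psi$ at the remaining points as the left limit along the preceding interval, and obtains the Neumann mode there by passing to the limit via continuity of the reflection and transmission coefficients (Lemma~\ref{lemma:continuous_AB}), with the Lemma~\ref{lemma:CL}-type argument giving left-continuity at interior points. Your patching of $\Phi$ into a left-continuous $\Psi$ and your limiting argument at the exceptional points (a variational compactness version, which could replace the paper's use of $A$ and $B$ if you track the $\kappa_n$-dependent spaces) follow the same strategy; but without Lemma~\ref{lemma:negative_derivative_twisted_functional}, or an honest proof of your global monotonicity claim, the proof is incomplete at its central step.
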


To prove this lemma, we need a monotonicity result about the twisted functional. Note that for a fixed $g$ (where $(x\mapsto g(1/x))\in C_0^{\infty}[0,1/r_+))$, the function
\begin{align}
\label{eqn:continuity_functional_Neumann}
(\alpha,a)\mapsto\tL_{\alpha,r_+,a}(r^{-1/2+\kappa}g)
\end{align}
is continuous.

\begin{lemma}
	\label{lemma:negative_derivative_twisted_functional}
	Let $5/4<\kappa_0<9/4$. Fix all spacetime parameters. Let $u_0:=r^{-1/2+\kappa_0}g_0$ be a solution to the radial ODE at $\kappa_0$. Define $u(r,\kappa):=r^{-1/2+\kappa}g_0$. Then
	\begin{align*}
	\frac{\del}{\del\kappa}\tL_{\alpha,r_+,a}(u(r,\kappa))\bigg\lvert_{\kappa=\kappa_0}>0.
	\end{align*}
\end{lemma}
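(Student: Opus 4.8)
The plan is to compute the $\kappa$-derivative of $\tL_{\alpha,r_+,a}(u(r,\kappa))$ explicitly at $\kappa=\kappa_0$ and to identify the result as a positive multiple of a quantity we can control using that $u_0$ solves the radial ODE. First I would note that, because $u(r,\kappa)=r^{-1/2+\kappa}g_0$ with $g_0$ fixed, the twisting function $h=r^{-1/2+\kappa}$ is \emph{the same} $\kappa$-dependent weight that multiplies $g_0$; hence $h^{-1}u(r,\kappa)=g_0$ is independent of $\kappa$, and the twisted derivative term $\tfrac{\Delta_-}{r^2+a^2}h^2\bigl\lvert\tfrac{\dd}{\dd r}(h^{-1}u)\bigr\rvert^2 = \tfrac{\Delta_-}{r^2+a^2}r^{-1+2\kappa}\lvert g_0'\rvert^2$ depends on $\kappa$ only through the explicit power $r^{-1+2\kappa}$. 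Differentiating that factor gives $2\log r$ times itself. Similarly the zeroth-order term $\Vmod_a\tfrac{r^2+a^2}{\Delta_-}\lvert u\rvert^2$: both $\lvert u\rvert^2=r^{-1+2\kappa}\lvert g_0\rvert^2$ and $\Vmod_a$ (which contains $\kappa$ through $h$, i.e.\ through $\tilde V_a$ plus the twisting correction $(\tfrac12-\kappa)\tfrac{\Delta_-}{r^2+a^2}r^{1/2-\kappa}\tfrac{\dd}{\dd r}(\tfrac{\Delta_-}{r^2+a^2}r^{-3/2+\kappa})$, and $\tilde V_a$ depends on $\alpha=9/4-\kappa^2$, hence on $\kappa$) must be differentiated; I would carry this out using $\Vmod_a = V_a-\omega^2 + (\text{twisting correction})$ and the explicit formula for $V_\alpha$ from Section~\ref{subsec:Spheroidal}.

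Next, the key simplification: since $u_0$ solves the radial ODE at $\kappa_0$ (equivalently, $f_{\hat a}$ in the preceding proposition solves $\frac{\Delta_-}{r^2+\hat a^2}\frac{\dd}{\dd r}(\frac{\Delta_-}{r^2+\hat a^2}\frac{\dd f}{\dd r})-\tilde V_{\hat a}f=0$ with the Neumann boundary condition), an integration by parts shows that the first-order variation of $\tL$ at $u_0$ in \emph{any} direction vanishes up to boundary terms; but the direction $\partial_\kappa u = (\log r)\,u$ does not itself lie in the admissible space, so the boundary terms at infinity must be tracked. This is exactly the situation handled in Section~\ref{subsec:PerturbingN} and in the proof of Proposition~\ref{propn:alpha_decreasing_Neumann}: one writes $\partial_\kappa\bigl(\tL(u(r,\kappa))\bigr) = (\text{explicit bulk terms from differentiating the }\kappa\text{-weights}) + (\text{terms proportional to }\LL\text{-type expressions that vanish by the ODE}) + (\text{boundary term})$, using that $u_0$ is a critical point. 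After this reduction, what remains is $\partial_\kappa\tL|_{\kappa_0} = \int_{r_+}^\infty 2\log r\left(\frac{\Delta_-}{r^2+a^2}h^2\lvert\frac{\dd}{\dd r}(h^{-1}u)\rvert^2 + \tilde V^h_a\frac{r^2+a^2}{\Delta_-}\lvert u\rvert^2\right)\dd r$ plus a term coming from $\partial_\kappa\Vmod_a$ and a boundary contribution; but $\int(\frac{\Delta_-}{r^2+a^2}h^2\lvert\cdot\rvert^2 + \tilde V^h_a\frac{r^2+a^2}{\Delta_-}\lvert u\rvert^2)=\tL(u_0)=0$ at the minimiser, so the constant-$\log$ part drops, and one is left with a $\log$-weighted positive integral of the gradient term (dominant near infinity) against a controlled potential term — this is precisely the mechanism of the function $f(r)=\log r + \tfrac{1}{2\kappa}\tfrac{R^{2\kappa}}{r^{2\kappa}}$ in Section~\ref{subsec:PerturbingN}, so I would reuse those two elementary calculus lemmas (the ones comparing $Cr^{2-4\kappa}(1-(r/R)^{2\kappa})^2$ and $Cr^{2-2\kappa}(\cdots)^2$ with $r^2(\log\frac rR + \frac1{2\kappa}((R/r)^{2\kappa}-1))$) to absorb the error terms for large $R$.

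Concretely the steps are: (1) write out $\partial_\kappa\tL(u(r,\kappa))$ by brute differentiation of all explicit powers of $r$ and of $\tilde V_a$ through $\alpha=9/4-\kappa^2$ and of the twisting correction in $\Vmod_a$; (2) use that $u_0$ is a weak solution of the Euler–Lagrange equation to cancel the "first-variation" pieces, keeping careful track of the boundary term at $r=\infty$ (which one shows vanishes, using that $h^{-1}u_0\to 1$ and the Neumann decay of $u_0$, as in Lemma~\ref{lemma:SobolevNeumann}); (3) observe that the surviving bulk expression is, up to the $\partial_\kappa\Vmod_a$ correction and the constant part of $\log r$ (which integrates against $\tL(u_0)=0$), the same $\log$-weighted positive quadratic form in $\frac{\dd}{\dd r}(h^{-1}u_0)$ that appears in Section~\ref{subsec:PerturbingN}; (4) conclude positivity by the same Hardy inequality (the one stated just before \eqref{eqn:goal_perturbation}) and the two elementary comparison lemmas, choosing $R$ large. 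The main obstacle I anticipate is step (2): making the integration by parts rigorous given that the variation direction $(\log r)u_0$ fails the Neumann boundary condition, so that one cannot simply invoke criticality — one has to split off the logarithmic growth exactly as in the construction of $v$ in Section~\ref{subsec:PerturbingN}, and verify that the resulting boundary term has the right (or zero) sign; handling the $\partial_\kappa\lambda_{ml}$ contribution coming from the $\alpha$-dependence of $\tilde V_a$ via Lemma~\ref{propn:IfThen} is the other bookkeeping nuisance.
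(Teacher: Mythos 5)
Your plan is essentially the paper's own proof: the paper likewise reduces $\frac{\del}{\del\kappa}\tL_{\alpha,r_+,a}(u(r,\kappa))\big\lvert_{\kappa=\kappa_0}$ --- via the identity of Lemma~\ref{lemma:EL_twisted_g}, the fact that $u_0$ solves the ODE at $\kappa_0$, a vanishing boundary term at infinity, and the splitting $v=u_{\kappa}-fu$ with $f(r)=\log r+\tfrac{1}{2\kappa}(R/r)^{2\kappa}$ for $r\geq R$ --- to exactly the expression of Section~\ref{subsec:PerturbingN}, whose negativity is established with the same Hardy inequality and the two comparison lemmas for large $R$. Only two minor corrections: the $\del\lambda/\del\alpha$ contribution is controlled by Proposition~\ref{propn:LambdaDeriv} (as in Section~\ref{subsec:PerturbingN}), not Lemma~\ref{propn:IfThen}, since $\omega$ is held fixed here and there is no $\del\omega_R/\del\epsilon$ term; and the vanishing of $\int_{r_+}^{\infty}\left(\frac{\Delta_-}{r^2+a^2}h^2\left\lvert\frac{\dd}{\dd r}\left(h^{-1}u_0\right)\right\lvert^2+\tilde V^h_a\frac{r^2+a^2}{\Delta_-}|u_0|^2\right)\dd r$ follows from the ODE together with the Neumann and horizon regularity conditions, not from $u_0$ being a minimiser.
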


\begin{proof}
	We start from the identity
	\begin{align*}
	h^{-1}\frac{\dd}{\dd r}\left(\frac{\Delta_-}{r^2+a^2}h^2\frac{\dd}{\dd r}\left(\frac{u}{h}\right)\right)-\tilde V_a^h\frac{r^2+a^2}{\Delta_-}u=\frac{\dd}{\dd r}\left(\frac{\Delta_-}{r^2+a^2}\frac{\dd u}{\dd r}\right)-\tilde V_a\frac{r^2+a^2}{\Delta_-}u
	\end{align*}
	where we always take $h=r^{-1/2+\kappa}$. Set $u_{\kappa}:=\del u/\del\kappa$. Let $f$ be as in (\ref{f_perturbation}) with an $R>r_++1$ to be determined and set $v:=u_{\kappa}-f u$. Then we have
	\begin{align*}
	&\frac{\del}{\del\kappa}\left(h^{-1}\frac{\dd}{\dd r}\left(\frac{\Delta_-}{r^2+a^2}h^2\frac{\dd}{\dd r}\left(\frac{u}{h}\right)\right)-\tilde V_a^h\frac{r^2+a^2}{\Delta_-}u\right)\\&~~~~~~~~=\frac{\dd}{\dd r}\left(\frac{\Delta_-}{r^2+a^2}\frac{\dd u_{\kappa}}{\dd r}\right)-\tilde V_a\frac{r^2+a^2}{\Delta_-}u_\kappa-\frac{\del\tilde V_a}{\del\kappa}\frac{r^2+a^2}{\Delta_-}u
	\\&~~~~~~~~=h^{-1}\frac{\dd}{\dd r}\left(\frac{\Delta_-}{r^2+a^2}h^2\frac{\dd}{\dd r}\left(\frac{u_{\kappa}}{h}\right)\right)-\tilde V_a^h\frac{r^2+a^2}{\Delta_-}u_{\kappa}-\frac{\del\tilde V_a}{\del\kappa}\frac{r^2+a^2}{\Delta_-}u
	\\&~~~~~~~~=h^{-1}\frac{\dd}{\dd r}\left(\frac{\Delta_-}{r^2+a^2}h^2\frac{\dd}{\dd r}\left(\frac{v}{h}\right)\right)-\tilde V_a^h\frac{r^2+a^2}{\Delta_-}v
	-\frac{\del\tilde V_a}{\del\kappa}\frac{r^2+a^2}{\Delta_-}u\\
	&~~~~~~~~~~~+h^{-1}\frac{\dd}{\dd r}\left(\frac{\Delta_-}{r^2+a^2}h^2\frac{\dd}{\dd r}\left(h^{-1}fu\right)\right)-\tilde V^h_a\frac{r^2+a^2}{\Delta_-}fu.
	\end{align*}
	Multiplying by $\overline u$, integrating over $(r_+,\infty)$, integrating by parts as in Section~\ref{subsec:PerturbingN} and evaluating at $\kappa=\kappa_0$ yields:
	\begin{align*}
	&\int_{r_+}^{\infty}\frac{\del}{\del\kappa}\left(h^{-1}\frac{\dd}{\dd r}\left(\frac{\Delta_-}{r^2+a^2}h^2\frac{\dd}{\dd r}\left(\frac{u}{h}\right)\right)-\tilde V_a^h\frac{r^2+a^2}{\Delta_-}u\right)\overline u\,\dd r\,\bigg\lvert_{\kappa=\kappa_0}\\
	=&2\kappa\int_{r_+}^{\infty}\frac{1}{r^2+a^2}\left(\frac{\del\lambda}{\del\alpha}-\frac{a^2}{\ell^2}\right)|u|^2\,\dd r\,\bigg\lvert_{\kappa=\kappa_0}\\
	&-\int_{r_+}^{\infty}\left(\frac{2\kappa}{\ell^2}\frac{r^2}{r^2+a^2}|u|^2-\frac{\dd}{\dd r}\left(\frac{\Delta_-}{r^2+a^2}h^2\frac{\dd f}{\dd r}\right)\left\lvert h^{-1}u\right\lvert^2\right)\,\dd r\,\bigg\lvert_{\kappa=\kappa_0}\\
	&-2\int_{r_+}^{\infty}f\left(\frac{\Delta_-}{r^2+a^2}h^2\left\lvert\frac{\dd}{\dd r}\left(h^{-1}u\right)\right\lvert^2+\tilde V^h_a\frac{r^2+a^2}{\Delta_-}|u|^2\right)\,\dd r\,\bigg\lvert_{\kappa=\kappa_0}.
	\end{align*}
	By repeating the proof of Section~\ref{subsec:PerturbingN}, one shows that the right hand side is negative. For the left hand side, we compute:
	\begin{align*}
	&\frac{\del}{\del\kappa}\left(\frac{\Delta_-}{r^2+a^2}h^2\left\lvert\frac{\dd}{\dd r}\left(h^{-1}u\right)\right\lvert^2+\tilde V^h_a\frac{r^2+a^2}{\Delta_-}|u|^2\right)\\
	=&\frac{\del^2}{\del r\del\kappa}\left(\frac{\Delta_-}{r^2+a^2}h^2\frac{\dd}{\dd r}\left(h^{-1}u\right)h^{-1}\overline u\right)\\
	&~~~-\frac{\del}{\del\kappa}\left(h^{-1}\frac{\dd}{\dd r}\left(\frac{\Delta_-}{r^2+a^2}h^2\frac{\dd}{\dd r}\left(\frac{u}{h}\right)\right)\overline u-\tilde V_a^h\frac{r^2+a^2}{\Delta_-}u\overline u\right)\\
	=&-\frac{\del}{\del\kappa}\left(h^{-1}\frac{\dd}{\dd r}\left(\frac{\Delta_-}{r^2+a^2}h^2\frac{\dd}{\dd r}\left(\frac{u}{h}\right)\right)-\tilde V_a^h\frac{r^2+a^2}{\Delta_-}u\right)\overline u\\
	&~~~-\left(h^{-1}\frac{\dd}{\dd r}\left(\frac{\Delta_-}{r^2+a^2}h^2\frac{\dd}{\dd r}\left(\frac{u}{h}\right)\right)-\tilde V_a^h\frac{r^2+a^2}{\Delta_-}u\right)\frac{\del\overline u}{\del\alpha}\\
	&~~~+\frac{\del^2}{\del r\del\kappa}\left(\frac{\Delta_-}{r^2+a^2}h^2\frac{\dd}{\dd r}\left(h^{-1}u\right)h^{-1}\overline u\right)
	\end{align*}
	Again we have
	\begin{align*}
	\left(h^{-1}\frac{\dd}{\dd r}\left(\frac{\Delta_-}{r^2+a^2}h^2\frac{\dd}{\dd r}\left(\frac{u}{h}\right)\right)-\tilde V_a^h\frac{r^2+a^2}{\Delta_-}u\right)\,\bigg\lvert_{\kappa=\kappa_0}=0.
	\end{align*}
	Moreover, 
	\begin{align*}
	\frac{\del}{\del\kappa}\left(\frac{\Delta_-}{r^2+a^2}h^2\frac{\dd}{\dd r}\left(h^{-1}u\right)h^{-1}\overline u\right)\,\bigg\lvert_{\kappa=\kappa_0}\sim r^{1+2\kappa_0}\frac{\dd}{\dd r}\left(r^{1/2-\kappa_0}u_0\right)\log r.
	\end{align*}
	Therefore,
	\begin{align*}
	\frac{\del}{\del\kappa}\tL_{\alpha,r_+,a}(u(r,\kappa))\bigg\lvert_{\kappa=\kappa_0}=-\int_{r_+}^{\infty}\frac{\del}{\del\kappa}\left(h^{-1}\frac{\dd}{\dd r}\left(\frac{\Delta_-}{r^2+a^2}h^2\frac{\dd}{\dd r}\left(\frac{u}{h}\right)\right)-\tilde V_a^h\frac{r^2+a^2}{\Delta_-}u\right)\overline u\,\dd r,
	\end{align*}
	whence positivity.
\end{proof}

\begin{cor}
	\label{cor:local_monotonicity_right}
	For all $\alpha\in (5/4,9/4)$, there is a $\delta>0$ such that $\Phi(\cdot,r_+)$ is decreasing in $[\alpha,\alpha+\delta)$
\end{cor}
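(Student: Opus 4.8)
The plan is to transfer the mechanism of Lemma~\ref{lemma:non-increasing} to the twisted setting. There the functional is no longer monotone in $\alpha$ for \emph{all} competitors, but Lemma~\ref{lemma:negative_derivative_twisted_functional} provides monotonicity \emph{at the solution}, and it is this that we exploit. Fix $\alpha_0\in(5/4,9/4)$ with $\Aa_{\alpha_0,r_+}\neq\emptyset$, write $\kappa(\alpha):=\sqrt{9/4-\alpha}$, $\kappa_0:=\kappa(\alpha_0)$ and $\hat a_0:=\Phi(\alpha_0,r_+)$. By the analogue of Proposition~\ref{propn:a_hat} proved in Section~\ref{sec:RealDN} there is a real mode solution $u_0=r^{-1/2+\kappa_0}g_0$ at parameters $(\alpha_0,\hat a_0)$ and frequency $\omega_+$, satisfying the horizon regularity condition and the Neumann boundary condition, with $\tL_{\alpha_0,r_+,\hat a_0}(u_0)\le 0$; in particular $g_0$ is smooth on $[r_+,\infty)$, bounded near the horizon, and $u_0\in\uH^1_{\kappa_0}(r_+,\infty)$.

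First I would apply Lemma~\ref{lemma:negative_derivative_twisted_functional} with this $u_0$ and the fixed spacetime parameter $a=\hat a_0$. The scalar map $\kappa\mapsto\tL_{9/4-\kappa^2,r_+,\hat a_0}(r^{-1/2+\kappa}g_0)$ is $C^1$ near $\kappa_0$: its integrand depends smoothly on $\kappa$, and the integrals converge uniformly for $\kappa$ in a compact subset of $(0,1)$, because $\tilde V^h_a\sim r^{-2}$ at infinity while $\Vmod_a(r^2+a^2)/\Delta_-=O(1)$ near $r_+$ (the latter since $\tilde V_a(r_+)=0$ at the frequency $\omega_+$). By Lemma~\ref{lemma:negative_derivative_twisted_functional} its $\kappa$-derivative at $\kappa_0$ is strictly positive; since its value there is $\le 0$, it is $<0$ on some interval $(\kappa_0-\eta,\kappa_0)$, i.e.\ there is $\delta>0$ with $\tL_{\alpha,r_+,\hat a_0}(r^{-1/2+\kappa(\alpha)}g_0)<0$ for every $\alpha\in(\alpha_0,\alpha_0+\delta)$.

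The delicate step is that $r^{-1/2+\kappa(\alpha)}g_0$ is \emph{not} an admissible competitor in the definition~(\ref{eqn:defn_A}) of $\Aa_{\alpha,r_+}$: a mode solution does not vanish near $r=r_+$, whereas the functions in~(\ref{eqn:defn_A}) must. I would regularise $g_0$ near the horizon by a logarithmic cutoff $\chi_n$ (vanishing for $r-r_+<n^{-2}$, equal to $1$ for $r-r_+>n^{-1}$, with $|\chi_n'|\lesssim\big((r-r_+)\log n\big)^{-1}$ on the transition region). As in the proof of Lemma~\ref{lemma:RegMinim}, boundedness of $g_0$ near $r_+$ gives $\chi_n g_0\to g_0$ in $\uH^1_{\kappa}$, uniformly for $\kappa$ in the compact range at hand; since $f\mapsto\tL_{\alpha,r_+,\hat a_0}(f)$ is a bounded quadratic form on $\uH^1_{\kappa(\alpha)}$ with constants locally uniform in $\alpha$ (using the asymptotics of $\Vmod$ noted above and $\uH^1_{\kappa}\subset\uL^2$, cf.\ Lemma~\ref{lemma:VariationInequalityND}), for $n$ large enough $\tL_{\alpha,r_+,\hat a_0}(r^{-1/2+\kappa(\alpha)}\chi_n g_0)<0$ for all $\alpha\in(\alpha_0,\alpha_0+\delta)$ at once. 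As $(x\mapsto\chi_n(1/x)g_0(1/x))\in C_0^\infty[0,1/r_+)$, this exhibits $\hat a_0\in\Aa_{\alpha,r_+}$; since $\Aa_{\alpha,r_+}$ is open we conclude $\Phi(\alpha,r_+)=\inf\Aa_{\alpha,r_+}<\hat a_0=\Phi(\alpha_0,r_+)$ for every $\alpha\in(\alpha_0,\alpha_0+\delta)$. Running the same argument from every base point $\alpha_1\in[\alpha_0,\alpha_0+\delta)$, using its solution at $(\alpha_1,\Phi(\alpha_1,r_+))$, then yields the asserted decrease of $\Phi(\cdot,r_+)$ on $[\alpha_0,\alpha_0+\delta)$.

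I expect the hard part to be exactly this horizon regularisation together with the uniformity in $\alpha$: one must verify that a single cutoff index $n$ works on the whole window $(\alpha_0,\alpha_0+\delta)$, which rests on continuity of~(\ref{eqn:continuity_functional_Neumann}) in $(\alpha,a)$ and on the locally uniform equivalence of the $\uH^1_{\kappa(\alpha)}$ norm with the standard Sobolev norm on compacta. A secondary point needing care is the $C^1$-dependence of $\tL$ on $\kappa$, needed to pass from the derivative inequality of Lemma~\ref{lemma:negative_derivative_twisted_functional} to a strict inequality for $\alpha$ just above $\alpha_0$, since differentiating $r^{-1/2+\kappa}$ produces the logarithmic weight $\log r$ that must be integrated against the (integrable, but only power-law decaying) twisted potential.
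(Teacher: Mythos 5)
Your write-up is essentially the paper's own proof: the paper disposes of this corollary in one line by citing Lemma~\ref{lemma:negative_derivative_twisted_functional} (``monotonicity is an open property''), and you supply exactly the details left implicit there --- that $\tL_{\alpha_0,r_+,\hat a_0}(u_0)=0$ for the real Neumann mode at $\hat a_0=\Phi(\alpha_0,r_+)$, that the strictly positive $\kappa$-derivative of Lemma~\ref{lemma:negative_derivative_twisted_functional} therefore makes $\tL_{\alpha,r_+,\hat a_0}\bigl(r^{-1/2+\kappa(\alpha)}g_0\bigr)<0$ for $\alpha$ slightly above $\alpha_0$ (only differentiability at $\kappa_0$ is needed, so your $C^1$ worry is harmless), and that a logarithmic cutoff near the horizon (admissible because $u_0$ is bounded there, $\Delta_-/(r^2+a^2)\sim r-r_+$, and $\Vmod_a(r^2+a^2)/\Delta_-=\OO(1)$ at $\omega=\omega_+$) turns this into a competitor of the form required in (\ref{eqn:defn_A}), whence openness of $\Aa_{\alpha,r_+}$ gives $\Phi(\alpha,r_+)<\Phi(\alpha_0,r_+)$; the uniformity of the cutoff index in $\alpha$ that you labour over is not even needed, since a separate competitor for each $\alpha$ suffices. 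The one step you do not fully justify is the last sentence --- passing from a strict drop on \emph{some} right-neighbourhood of every base point to monotone decrease on a \emph{fixed} interval $[\alpha_0,\alpha_0+\delta)$ is not a purely formal implication (it would require a one-sided lower semicontinuity of $\Phi$ from the left, precisely what the paper lacks and what motivates the construction of $\Psi$ in Lemma~\ref{lemma:Psi_for_Phi}) --- but the paper's one-line proof makes exactly the same leap, so on this point your proposal faithfully reproduces both the substance and the (shared) imprecision of the original argument.
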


\begin{proof}
	Lemma~\ref{lemma:negative_derivative_twisted_functional} shows that monotonicity is an open property.
\end{proof}

Now we can prove the main lemma.

\begin{proof}[Proof of Lemma~\ref{lemma:Psi_for_Phi}]
	Let $\epsilon>0$. By Corollary~\ref{cor:local_monotonicity_right}, there is an indexing set $X$ and disjoint half-open intervals $I_{\beta}$, $\beta\in X$, containing their left endpoints, such that one has $\cup_{\beta\in X}I_{\beta}=[5/4+\epsilon,9/4)$ and $\Phi(\cdot,r_+)\big\lvert_{ I_\beta}$ is decreasing for all $\beta\in X$.
	For $\alpha\in\cup_{\beta\in X}\mathring I_{\beta}$, set
	\begin{align*}
	\Psi(\alpha,r_+):=\Phi(\alpha,r_+).
	\end{align*}
	Let $\alpha_0\in\del I_{\beta_1}\cap I_{\beta_2}$. Choose a sequence $(\alpha_k)\subseteq I_{\beta_1}$ such that $\alpha_k\rightarrow\alpha_0$. As the sequence $(\Phi(\alpha_k,r_+))$ is monotonically decreasing and bounded below, it is convergent. We set $a_0:=\lim_k\Phi(\alpha_k,r_+)$. Let $f_{\alpha_k}$ be the unique solution to the radial ODE with parameters $\alpha_k$, $a_k=\Phi(\alpha_k,r_+)$ and $\omega_k=\Omega_R(\alpha_k)$ -- see definition (\ref{eqn:defn_omega_+}). Let $f_{\alpha_0}$ be the unique solution corresponding to the parameters $\alpha_0$ and $a_0$. Since all $f_{\alpha_k}$ satisfy the Neumann boundary condition, continuity of the reflection and transmission coefficients yields that $f_{\alpha_0}$ satisfies the Neumann boundary condition as well. We set
	\begin{align*}
	\Psi(\alpha_0,r_+):=a_0.
	\end{align*}
	As we can repeat this construction for all $\epsilon>0$ and all jump points $\alpha_0$, we obtain a function $\Psi(\cdot,r_+)$ defined in $(5/4,9/4)$, whose values correspond to parameters $a$ with periodic mode solutions.
		
		Since we have left-continuity at the jump points by construction, it remains to show that $\Psi(\cdot,r_+)$ is left-continuous in $\alpha\in\cup_{\beta\in X}\mathring I_{\beta}$, which can be proved as Lemma~\ref{lemma:CL}: Suppose not.
	Then there is an $\epsilon>0$ such that, for all $\delta>0$, there is an $\alpha'<\alpha$ with
	\begin{align*}
	\alpha-\alpha'<\delta
	\end{align*}
	and
	\begin{align*}
	\Psi(\alpha',r_+)-\Psi(\alpha,r_+)\geq \epsilon.
	\end{align*}
	Then there is an $a$ between $\Psi(\alpha,r_+)$ and $\Psi(\alpha',r_+)$ such that there is an $g$ with $a\in\Aa_{\alpha,r_+}(r^{-1/2+\kappa}g)$,
	but, for each $\delta$, there is an $\alpha'$ with $a\notin\Aa_{\alpha',r_+}(r^{-1/2+\kappa}g)$.
	Therefore, since $\tL_{\alpha,r_+,a}(r^{-1/2+\kappa}g)<0$ and due to the continuity (\ref{eqn:continuity_functional_Neumann}),
	 there is a $\delta>0$ such that for all $\alpha-\alpha'<\delta$, we have $\tL_{\alpha',r_+,a}(r^{-1/2+\kappa}g)<0$, i.\,e. $a\in\Aa_{\alpha',r_+}(r^{-1/2+\kappa}g)$, a contradiction.
\end{proof}

\section{Acknowledgements}

I would like to thank my supervisors Mihalis Dafermos and Gustav Holzegel for proposing this project, helpful discussions, guidance and advice on the exposition. I would also like to thank Yakov Shlapentokh-Rothman for useful conversations and advice. I gratefully acknowledge the financial support of EPSRC, the Cambridge Trust and the Studienstiftung des deutschen Volkes.

\begin{appendices}

\section{The angular ODE}
\label{sec:AngularODE}

Assume throughout the section that $m\neq0$. Recall equations (\ref{eqn:AngularODE1}) and (\ref{eqn:AngularODE2}). 
We will only give details for $\alpha\leq 0$. The other case can be treated analogously.
Define $x:=\cos\theta$. Then the equation becomes
\begin{align*}
 \frac{\dd}{\dd x}\left(\Delta_{\theta}(1-x^2)\frac{\dd S}{\dd x}\right)-&\left(\frac{\Xi^2}{\Delta_{\theta}}\frac{m^2}{1-x^2}-\left(\frac{\Xi}{\Delta_{\theta}}a^2\omega^2-2ma\omega\frac{\Xi}{\Delta_{\theta}}\frac{a^2}{\ell^2}-\frac{\alpha}{\ell^2}a^2\right)x^2\right)S+\lambda S=0.
\end{align*}
Set
\begin{align*}
 K(x):=\frac{\dd}{\dd x}\left(\left(1-\frac{a^2}{\ell^2}x^2\right)\left(1-x^2\right)\right)=4\frac{a^2}{\ell^2}x^3-2x\left(1+\frac{a^2}{\ell^2}\right).
\end{align*}
Using the language of Theorem~\ref{thm:RegularSing}, we see that at $\pm 1$, we have
\begin{align*}
 f_0=1,~~~~~~~~~~g_0=-m^2/4.
\end{align*}
Thus for $m\neq 0,1$, we have two zeros which do not differ by an integer. Then we know that solutions are linear combinations of $(x\mp 1)^{-|m|/2}$ and $(x\mp 1)^{|m|/2}$ near $\pm 1$.

\begin{propn}
Suppose that for some fixed $\omega_0,\alpha_0\in\RR$, we have an eigenvalue $\lambda_0$. Then, for $\kappa$ sufficiently close to $\kappa_0$, we can uniquely find a complex analytic function $\lambda(\omega,\alpha)$ of eigenvalues for the angular ODE with parameter $(\omega,\alpha)\in\CC\times\RR$ such that $\lambda_0=\lambda(\omega,\alpha)$.
\end{propn}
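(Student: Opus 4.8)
The plan is to reduce the assertion to a one-parameter eigenvalue-continuation problem and then apply the holomorphic implicit function theorem. First I would pass to the $m$-th Fourier sector in $\tilde\phi$, so that the angular ODE (\ref{eqn:AngularODE1}) becomes a Sturm--Liouville eigenvalue problem on $x=\cos\theta\in(-1,1)$ with regular singular points at $x=\pm1$, the coefficients depending \emph{polynomially} on $\omega$ (via $\omega^2$ and $m\omega$) and affinely on $\alpha$. From the elliptic theory of \citep{HolzegelSmuleviciDecay} recalled in Section~\ref{subsec:Potential}, at the base point $(\omega_0,\alpha_0)\in\RR^2$ the eigenvalue $\lambda_0$ is \emph{simple}, with real eigenfunction $S_0$.

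Next I would construct a holomorphic characteristic function whose zeros are exactly the eigenvalues. By Theorem~\ref{thm:RegularSing} at $x=-1$, where the indicial roots are $\pm|m|/2$, there is a unique Frobenius solution $S_-(\,\cdot\,;\lambda,\omega,\alpha)$ of the angular ODE behaving like $(x+1)^{|m|/2}$ near $x=-1$, normalised with leading coefficient $1$; since the ODE coefficients are holomorphic in $(\lambda,\omega,\alpha)$, the coefficient recursion shows $S_-$ is jointly holomorphic in $(\lambda,\omega,\alpha)$, locally uniformly on compact subsets of $(-1,1)$. Define $S_+$ analogously at $x=+1$ and set $w(\lambda,\omega,\alpha):=W(S_-,S_+)\big|_{x=0}$, using that the Wronskian is constant in $x$ up to the explicit integrating factor $\Delta_\theta(1-x^2)$. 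Then $w$ is holomorphic in all three variables, and $w(\lambda,\omega,\alpha)=0$ precisely when $S_-$ and $S_+$ are proportional, i.e.\ when their common solution lies in the corresponding eigenspace; hence the zero set of $w(\,\cdot\,,\omega,\alpha)$ is exactly the spectrum.

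The crucial step is to show $\del w/\del\lambda(\lambda_0,\omega_0,\alpha_0)\neq 0$, which is precisely where simplicity of $\lambda_0$ enters. A standard Wronskian computation (differentiate the ODE in $\lambda$, multiply by $S_0$, integrate by parts) shows that $\del w/\del\lambda$ at the base point is a nonzero multiple of $\int_{-1}^{1}S_0(y)^2\,\dd y=\lVert S_0\rVert_{L^2}^2>0$; equivalently, a simple eigenvalue gives a simple zero of $w(\,\cdot\,,\omega_0,\alpha_0)$. The holomorphic implicit function theorem applied to $w=0$ then yields a neighbourhood of $(\omega_0,\alpha_0)$ in $\CC\times\CC$ and a unique holomorphic $\lambda(\omega,\alpha)$ with $\lambda(\omega_0,\alpha_0)=\lambda_0$ and $w(\lambda(\omega,\alpha),\omega,\alpha)\equiv 0$; restricting $\alpha$ to real values gives the statement, and the implicit function theorem also delivers the claimed uniqueness of this branch of eigenvalues. (If $\alpha_0=0$, one works with whichever convention for $P_\alpha$ was fixed in Section~\ref{subsec:Spheroidal}; the two differ by the constant $a^2\alpha/\ell^2$ and so do not affect analyticity.)

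The main obstacle I anticipate is purely technical bookkeeping at the regular singular endpoints: since the indicial roots $\pm|m|/2$ differ by the integer $|m|$, the \emph{dominant} branch at each endpoint may carry a logarithm, so one must argue throughout with the recessive (log-free) solutions $S_\pm$ and verify their holomorphic parameter-dependence directly from the recursion rather than quoting a meromorphic-dependence statement. A cleaner alternative, bypassing the ODE computation, is Kato's analytic perturbation theory: $(\omega,\alpha)\mapsto P_\alpha(\omega,\ell,a,\alpha)$ restricted to the $m$-sector is a polynomial --- hence entire --- holomorphic family of type (A) on $L^2(S^2)$ with compact resolvent, and $\lambda_0$ is an isolated eigenvalue of algebraic multiplicity one, so it continues to a holomorphic and still simple eigenvalue $\lambda(\omega,\alpha)$ on a neighbourhood, uniquely characterised near $\lambda_0$; restriction to real $\alpha$ then completes the proof.
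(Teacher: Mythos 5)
Your proposal is correct and follows essentially the paper's route: the paper also shoots with the recessive Frobenius solution normalised at $x=-1$, detects eigenvalues through the vanishing of the connection coefficient $F$ onto the singular branch at $x=+1$ (which agrees with your Wronskian $w=W(S_-,S_+)$ up to a nonvanishing holomorphic factor), and closes with the implicit function theorem after establishing $\del F/\del\lambda\neq 0$ by the same differentiate-in-$\lambda$, multiply-by-$\overline S$, integrate-by-parts identity that you use, merely phrased as a contradiction ($\int_0^\pi|S|^2\sin\theta\,\dd\theta=0$) rather than as your direct formula $\del w/\del\lambda\propto\norm{S_0}_{L^2}^2$. Your Kato type-(A) alternative is a legitimate shortcut not taken in the paper, but the main argument you give is the paper's.
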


\begin{proof}
We can use the proof in \citep{ShlapentokhGrowing}.
If $S$ is an eigenfunction, we clearly must have
\begin{align*}
S\sim (1\mp x)^{|m|/2}
\end{align*}
as $x\rightarrow \pm 1$. For any $\omega,\alpha$ and $\lambda$, we can uniquely define a solution $S(\theta,\omega,\alpha,\lambda)$ by requiring that
\begin{align*}
S(x,\omega,\alpha,\lambda)(1+x)^{-|m|/2}
\end{align*}
is holomorphic at $x=-1$ and
\begin{align*}
\left(S(\cdot,\omega,\alpha,\lambda)(1+\cdot)^{-|m|/2}\right)(x=-1)=1.
\end{align*}
Then we have holomorphic functions $F(\omega,\alpha,\lambda)$ and $G(\omega,\alpha,\lambda)$ such that
\begin{align*}
S(x,\omega,\alpha,\lambda)\sim F(\omega,\alpha,\lambda)(1-x)^{-|m|/2}+G(\omega,\alpha,\lambda)(1-x)^{|m|/2}
\end{align*}
as $x\rightarrow 1$. Since $\lambda_0$ is an eigenvalue, we have $F(\omega_0,\alpha_0,\lambda_0)=0$. We want to appeal to the implicit function theorem and define our function $\lambda(\omega,\alpha)$ uniquely near $(\omega_0,\alpha_0)$. Suppose (for the sake of contradiction) that
\begin{align*}
\frac{\del F}{\del\lambda}(\omega_0,\alpha_0,\lambda_0)=0.
\end{align*}
Set $S_{\lambda}:=\del S/\del\lambda$. Since $\del F/\del\lambda=0$, $S_{\lambda}$ satisfies the boundary conditions of eigen\-functions. Moreover, we have
\begin{align*}
&\frac{\dd}{\dd x}\left(\Delta_{\theta}(1-x^2)\frac{\dd S_{\lambda}}{\dd x}\right)-\left(\frac{\Xi^2}{\Delta_{\theta}}\frac{m^2}{1-x^2}-\left(\frac{\Xi}{\Delta_{\theta}}a^2\omega_0^2-2ma\omega_0\frac{\Xi}{\Delta_{\theta}}\frac{a^2}{\ell^2}-\frac{\alpha}{\ell^2}a^2\right)x^2\right)S_{\lambda}\\&~~~~~~~~~~~~~~~~~~~+\lambda_0 S=-S.
\end{align*}
Multiplying both sides by $\overline S$, integrating over $(0,\pi)$ with measure $\sin\theta\,\dd\theta$, integrating by parts and using that $\overline S$ satisfies the angular ODE implies
\begin{align*}
\int_0^{\pi}|S|^2\sin\theta\,\dd\theta=0,
\end{align*}
which is a contradiction. The proof for $\alpha<0$ proceeds similarly.
\end{proof}

\begin{propn}
\label{propn:ImOmega}
If $\omega_I>0$, then
\begin{align*}
-\Im(\lambda\overline{\omega})>0.
\end{align*}
\end{propn}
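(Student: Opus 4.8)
The plan is to extract the sign of $\Im(\lambda\overline{\omega})$ from the ``angular energy identity'' obtained by pairing the eigenvalue equation for $S=S_{ml}$ with $\overline{S}$. First I would multiply the angular ODE --- equation (\ref{eqn:AngularODE1}) if $\alpha\leq0$, equation (\ref{eqn:AngularODE2}) if $\alpha>0$ --- by $\overline{S}$ and integrate over $(0,\pi)$ against $\sin\theta\,\dd\theta$; integrating the second-order term by parts is justified because $S\sim(1\mp\cos\theta)^{|m|/2}$ near $\theta=0,\pi$ (recall $m\neq0$ throughout this section), so the boundary contribution $\Delta_{\theta}\sin\theta\,\del_{\theta}S\,\overline{S}$ vanishes at the poles. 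This yields an identity of the form
\begin{align*}
\lambda\int_0^{\pi}|S|^2\sin\theta\,\dd\theta = P - a^2\omega^2 C - 2ma\,\omega\,E,
\end{align*}
where $C:=\Xi\int_0^{\pi}\Delta_{\theta}^{-1}\cos^2\theta\,|S|^2\sin\theta\,\dd\theta\geq0$, $E:=(a^2/\ell^2)C\geq0$, and $P$ is the sum of the positive kinetic term $\int_0^{\pi}(\Delta_{\theta}|\del_{\theta}S|^2+\Xi^2m^2\Delta_{\theta}^{-1}\sin^{-2}\theta\,|S|^2)\sin\theta\,\dd\theta$ and the mass contribution, which equals $-(\alpha/\ell^2)a^2\int_0^{\pi}\cos^2\theta\,|S|^2\sin\theta\,\dd\theta\geq0$ when $\alpha\leq0$ and $(\alpha/\ell^2)a^2\int_0^{\pi}\sin^2\theta\,|S|^2\sin\theta\,\dd\theta\geq0$ when $\alpha>0$. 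The structural points are: although $\lambda$ and $S$ are complex (for $\Im\omega>0$ the operator $P_{\alpha}$ is not self-adjoint), the numbers $P$, $C$, $E$ are manifestly real, and since $\Delta_{\theta}\geq\Xi>0$ and $m\neq0$ one has $P>0$ and $C,E\geq0$; moreover the only $\omega$-dependence on the right-hand side sits in the explicit coefficients $a^2\omega^2$ and $2ma\omega$.

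Next I would multiply this identity by $\overline{\omega}$ and use $\omega^2\overline{\omega}=|\omega|^2\omega$ and $\omega\overline{\omega}=|\omega|^2$ to obtain
\begin{align*}
\lambda\overline{\omega}\int_0^{\pi}|S|^2\sin\theta\,\dd\theta = P\,\overline{\omega} - a^2C|\omega|^2\omega - 2ma\,E|\omega|^2.
\end{align*}
Since $P$, $C$, $E$, $|\omega|^2$ are real, $\Im\overline{\omega}=-\omega_I$ and $\Im\omega=\omega_I$, taking imaginary parts kills the last term and leaves
\begin{align*}
\Im(\lambda\overline{\omega})\int_0^{\pi}|S|^2\sin\theta\,\dd\theta = -\omega_I(P+a^2C|\omega|^2).
\end{align*}
As $\omega_I>0$, $P>0$, $a^2C|\omega|^2\geq0$ and $\int_0^{\pi}|S|^2\sin\theta\,\dd\theta>0$ ($S$ being a non-trivial eigenfunction), the right-hand side is strictly negative; hence $\Im(\lambda\overline{\omega})<0$, i.e. $-\Im(\lambda\overline{\omega})>0$, as claimed.

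The argument is essentially a bookkeeping exercise; the only points deserving attention are keeping track of the overall sign in front of the bracket in (\ref{eqn:AngularODE1})--(\ref{eqn:AngularODE2}) and of the placement of the mass term, and checking that the boundary terms in the integration by parts vanish, which rests on the $(1\mp\cos\theta)^{|m|/2}$ behaviour of $S$ at the poles together with $m\neq0$. I do not anticipate any genuine analytic obstacle.
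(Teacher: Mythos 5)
Your proposal is correct and follows essentially the same route as the paper: the paper multiplies the angular ODE by $\overline{\omega S}$, integrates by parts and takes imaginary parts, which is exactly your computation with the pairing against $\overline{S}$ and the subsequent multiplication by $\overline{\omega}$ merged into one step, the cross term in $2ma\omega$ dropping out for the same reason ($2ma|\omega|^2 E$ is real). Your sign bookkeeping for the mass term in both cases $\alpha\leq 0$ and $\alpha>0$ and the boundary-term justification at the poles match the paper's argument.
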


\begin{proof}
Let $\alpha\leq 0$. Multiplying the ODE by $\overline{\omega S}$, integrating by parts and taking imaginary parts gives
\begin{align*}
-\int_0^{\pi}\Im(\lambda\overline{\omega})\sin\theta\,\dd\theta
	&=\int_0^{\pi}\omega_I\left(\Delta_{\theta}\left\lvert\frac{\dd S}{\dd\theta}\right\lvert^2+\left[\frac{\Xi^2}{\Delta_{\theta}}\frac{m^2}{\sin^2\theta}-\frac{\alpha}{\ell^2}a^2\cos^2\theta\right]|S|^2\right)\sin\theta\,\dd\theta\\
	&~~~~~~~~~~+\int_0^{\pi}\frac{\Xi}{\Delta_{\theta}}\cos^2\theta\,\Im(a^2\omega^2\overline{\omega})|S|^2\sin\theta\,\dd\theta,
\end{align*}
which is positive for $\omega_I>0$. For $\alpha>0$, the proof proceeds almost verbatim.
\end{proof}

\begin{propn}
\label{propn:LambdaDeriv}
When $\omega$ is real, we have
\begin{align*}
\frac{\del\lambda}{\del\alpha}=-\frac{a^2}{\ell^2}\int_0^{\pi}\cos^2\theta|S|^2\sin\theta\,\dd\theta
\end{align*}
for $\alpha\leq 0$ and
\begin{align*}
\frac{\del\lambda}{\del\alpha}=\frac{a^2}{\ell^2}\int_0^{\pi}\sin^2\theta|S|^2\sin\theta\,\dd\theta
\end{align*}
for $\alpha>0$.
\end{propn}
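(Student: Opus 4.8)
The plan is to run a first-order eigenvalue perturbation (Hellmann--Feynman) argument, carried out by hand in the style of the rest of this appendix. I would give the details only for $\alpha\leq 0$; the case $\alpha>0$ follows verbatim after replacing the $\alpha$-dependent coefficient $-\frac{\alpha}{\ell^2}a^2\cos^2\theta$ appearing in (\ref{eqn:AngularODE1}) by the coefficient $+\frac{\alpha}{\ell^2}a^2\sin^2\theta$ appearing in (\ref{eqn:AngularODE2}).

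First I would record that, since $\omega\in\RR$, every coefficient of the angular ODE (\ref{eqn:AngularODE1}) is real: the only term of $P$ that is not manifestly real, namely $-2\im a\omega\frac{\Xi}{\Delta_{\theta}}\frac{a^2}{\ell^2}\cos^2\theta\,\del_{\tilde\phi}$, contributes $+2ma\omega\frac{\Xi}{\Delta_{\theta}}\frac{a^2}{\ell^2}\cos^2\theta$ on the subspace spanned by $\e^{\im m\tilde\phi}$. Hence $\lambda$ is real and the eigenfunction $S=S_{ml}$ may be taken real, normalised by $\int_0^{\pi}|S|^2\sin\theta\,\dd\theta=1$. Next, the analyticity result proved just above (unique analytic continuation of the eigenvalue branch $\lambda(\omega,\alpha)$, with the eigenfunction constructed analytically in $(\omega,\alpha,\lambda)$ from its prescribed behaviour at the poles) gives that $S_{\alpha}:=\del S/\del\alpha$ exists, and the same indicial analysis at $\theta=0,\pi$ used throughout this appendix (indicial roots $\pm|m|/2$) shows $S_{\alpha}\sim\sin^{|m|}\theta$ near $\theta=0,\pi$.

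The core of the argument: I would differentiate (\ref{eqn:AngularODE1}) with respect to $\alpha$, written in Sturm--Liouville form $\del_{\theta}(\Delta_{\theta}\sin\theta\,\del_{\theta}S)-W\sin\theta\,S+\lambda\sin\theta\,S=0$ with $W$ the bracketed coefficient, noting $\del_{\alpha}W=-\frac{a^2}{\ell^2}\cos^2\theta$. Then I would multiply by $\overline S$, integrate over $(0,\pi)$, and transfer the second-order operator onto $\overline S$ via the Lagrange identity
\begin{align*}
\int_0^{\pi}\left(\overline S\,\del_{\theta}(\Delta_{\theta}\sin\theta\,\del_{\theta}S_{\alpha})-S_{\alpha}\,\del_{\theta}(\Delta_{\theta}\sin\theta\,\del_{\theta}\overline S)\right)\dd\theta=\left[\Delta_{\theta}\sin\theta\left(\overline S\,\del_{\theta}S_{\alpha}-S_{\alpha}\,\del_{\theta}\overline S\right)\right]_0^{\pi}.
\end{align*}
Using that $\overline S$ itself solves (\ref{eqn:AngularODE1}), every integral containing $S_{\alpha}$ cancels, the endpoint bracket vanishes thanks to the $\sin^{|m|}\theta$ behaviour of $S$ and $S_{\alpha}$, and what survives is $\frac{\del\lambda}{\del\alpha}\int_0^{\pi}|S|^2\sin\theta\,\dd\theta=\int_0^{\pi}(\del_{\alpha}W)|S|^2\sin\theta\,\dd\theta$, which after normalising $\int_0^{\pi}|S|^2\sin\theta\,\dd\theta=1$ is exactly the claimed formula $\frac{\del\lambda}{\del\alpha}=-\frac{a^2}{\ell^2}\int_0^{\pi}\cos^2\theta\,|S|^2\sin\theta\,\dd\theta$. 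Running the identical computation for $\alpha>0$ with $\del_{\alpha}W=\frac{a^2}{\ell^2}\sin^2\theta$ produces the second identity.

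The main obstacle is one of justification rather than computation: one must know that $S$ is genuinely differentiable in $\alpha$ and that the endpoint terms at $\theta=0,\pi$ vanish. Both are delivered by the local (indicial) analysis near the poles already invoked in this appendix together with the preceding analyticity proposition, so once those are cited the remainder is a routine double integration by parts.
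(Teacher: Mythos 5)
Your proposal is correct and follows essentially the same route as the paper: differentiate the angular ODE in $\alpha$, multiply by $\overline S$, integrate by parts, and use that $\overline S$ solves the same equation when $\omega\in\RR$, so all terms involving $\del S/\del\alpha$ cancel and only $\del_\alpha W$ survives. The extra justifications you supply (real coefficients, the $\sin^{|m|}\theta$ behaviour at the poles killing the boundary terms, differentiability of $S$ in $\alpha$ from the preceding analyticity proposition, and the unit normalisation of $S$) are exactly the details the paper leaves implicit.
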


\begin{proof}
Let $S_{\alpha}:=\del S/\del\alpha$. First, let $\alpha\leq 0$. 
First one differentiates (\ref{eqn:AngularODE1}) with respect to $\alpha$, then multiplies by $\overline S$ and then integrates by part. Since $\omega\in\RR$, $\overline S$ satisfies the angular ODE, which yields the result. Similarly we obtain the result for $\alpha>0$.
\end{proof}

\section{Twisted derivatives and the modified potential}
\label{sec:twisted_derivative}

To deal with the slow decay or even growth of modes satisfying the Neumann boundary condition, we need to use renormalised derivatives
\begin{align*}
h\frac{\dd}{\dd r}\left(h^{-1}\cdot\right)
\end{align*}
with a sufficiently regular function $h$. 
Defining the modified potential
\begin{align*}
\tilde V^h_a:=\tilde V-\frac{1}{h}\frac{\Delta_-}{r^2+a^2}\frac{\dd}{\dd r}\left(\frac{\Delta_-}{r^2+a^2}\frac{\dd h}{\dd r}\right),
\end{align*}
we obtain a twisted expression for the radial ODE:

\begin{lemma}	
	\label{lemma:EL_twisted_g}
	For all $f\in C^1$ that are piecewise $C^2$,
	\begin{align*}
	h^{-1}\frac{\dd}{\dd r}\left(\frac{\Delta_-}{r^2+a^2}h^2\frac{\dd}{\dd r}\left(h^{-1}f\right)\right)-\tilde V^h_a\frac{r^2+a^2}{\Delta_-}f=\frac{\dd}{\dd r}\left(\frac{\Delta_-}{r^2+a^2}\frac{\dd f}{\dd r}\right)-\tilde V_a\frac{r^2+a^2}{\Delta_-}f.
	\end{align*}
\end{lemma}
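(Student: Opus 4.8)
The statement to prove, Lemma~\ref{lemma:EL_twisted_g}, is a purely algebraic identity relating the twisted second-order operator to the untwisted one, with the modified potential $\tilde V^h_a$ absorbing the discrepancy. My plan is simply to expand the left-hand side by the Leibniz rule and collect terms, writing $f = h\cdot(h^{-1}f)$ throughout and tracking which derivatives land on $h$ and which on $h^{-1}f$.

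\medskip

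\textbf{The plan.} Write $P := \frac{\Delta_-}{r^2+a^2}$ for brevity and let $g := h^{-1}f$, so $f = hg$. First I would expand
\begin{align*}
h^{-1}\frac{\dd}{\dd r}\left(P h^2 \frac{\dd g}{\dd r}\right)
&= h^{-1}\frac{\dd}{\dd r}\left(P h^2\right)\frac{\dd g}{\dd r} + h^{-1}P h^2 \frac{\dd^2 g}{\dd r^2}\\
&= h^{-1}\frac{\dd}{\dd r}\left(P h^2\right)\frac{\dd g}{\dd r} + h P \frac{\dd^2 g}{\dd r^2}.
\end{align*}
On the other side, expand $\frac{\dd}{\dd r}\left(P \frac{\dd (hg)}{\dd r}\right) = \frac{\dd}{\dd r}\left(P h' g + P h g'\right)$ and differentiate again by Leibniz, producing the terms $(Ph')' g + P h' g' + (Ph)' g' + P h g''$. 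The term $P h g''$ matches the second-order term from the left side, so after that cancellation the claim reduces to matching the remaining terms: one must check that
\begin{align*}
h^{-1}\frac{\dd}{\dd r}\left(P h^2\right)\frac{\dd g}{\dd r} - \left(P h' + (Ph)'\right)g'
\end{align*}
equals $\left[(Ph')' \right]g$ minus the potential-difference term times $f$, i.e.\ that the first-order-in-$g'$ terms cancel exactly and the zeroth-order remainder is precisely $h^{-1}(Ph')' f$ — which is $\frac{1}{h}\frac{\Delta_-}{r^2+a^2}\frac{\dd}{\dd r}\left(\frac{\Delta_-}{r^2+a^2}\frac{\dd h}{\dd r}\right)\cdot g = \tilde V_a^h$-correction applied to $f$ after recalling the definition of $\tilde V^h_a$ in Appendix~\ref{sec:twisted_derivative}.

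\medskip

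\textbf{Key step / main obstacle.} The only point requiring care is verifying that the $g'$ coefficients cancel: one has $h^{-1}\frac{\dd}{\dd r}(Ph^2) = h^{-1}(P'h^2 + 2Phh') = P'h + 2Ph'$, while $Ph' + (Ph)' = Ph' + P'h + Ph' = 2Ph' + P'h$; these are identical, so the $g'$ terms drop out cleanly. This is the ``main obstacle'' only in the sense that it is the one place a sign or factor could go wrong; it is not conceptually hard. After that, the surviving zeroth-order term is $h^{-1}(Ph')'f = \frac{1}{h}\frac{\Delta_-}{r^2+a^2}\frac{\dd}{\dd r}\left(\frac{\Delta_-}{r^2+a^2}\frac{\dd h}{\dd r}\right)f$, and comparing with $\tilde V^h_a = \tilde V_a - \frac{1}{h}\frac{\Delta_-}{r^2+a^2}\frac{\dd}{\dd r}\left(\frac{\Delta_-}{r^2+a^2}\frac{\dd h}{\dd r}\right)$ gives exactly $-\tilde V^h_a \frac{r^2+a^2}{\Delta_-}f = -\tilde V_a\frac{r^2+a^2}{\Delta_-}f + h^{-1}(Ph')'f$, which closes the identity. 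The regularity hypothesis ($f\in C^1$, piecewise $C^2$) is exactly what is needed to make all the derivatives above meaningful; the identity then holds pointwise wherever $f$ is twice differentiable, which suffices for all later applications where it is used inside integrals.
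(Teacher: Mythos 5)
Your proof is correct: expanding the twisted operator with $P=\frac{\Delta_-}{r^2+a^2}$, $g=h^{-1}f$, the $g'$ coefficients cancel exactly as you compute, and the leftover zeroth-order term is absorbed by the definition of $\tilde V^h_a$; the paper states this lemma without proof, so your direct Leibniz-rule verification is precisely the routine computation it leaves to the reader. One cosmetic slip: in your prose you write $h^{-1}(Ph')'f=\frac{1}{h}\frac{\Delta_-}{r^2+a^2}\frac{\dd}{\dd r}\bigl(\frac{\Delta_-}{r^2+a^2}\frac{\dd h}{\dd r}\bigr)f$, which is off by a factor $\frac{\Delta_-}{r^2+a^2}$, but your final displayed identity $-\tilde V^h_a\frac{r^2+a^2}{\Delta_-}f=-\tilde V_a\frac{r^2+a^2}{\Delta_-}f+h^{-1}(Ph')'f$ correctly accounts for that factor, so the argument closes as stated.
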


By virtue of twisting, the modified potential can be chosen to be positive for large $r$:

\begin{lemma}
	\label{lemma:twisting_positive_g}
	Let $h:=r^{-1/2+\kappa}$.
	If $|m|$ is sufficiently large, then there is an $R>r_+$ such that $\tilde V^h_a>0$ for $r>R$. The choice of $R$ is independent of $a$ and $\alpha$. Moreover $\tilde V_a^h=\OO(1)$ as $r\rightarrow\infty$.
\end{lemma}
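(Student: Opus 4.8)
The plan is to compute $\tilde V^h_a$ explicitly for $h = r^{-1/2+\kappa}$ and extract the asymptotics as $r\to\infty$, then invoke the negativity/positivity analysis of the potential already developed in Section~\ref{subsec:Potential}. First I would compute the twisting correction term
\begin{align*}
\frac{1}{h}\frac{\Delta_-}{r^2+a^2}\frac{\dd}{\dd r}\left(\frac{\Delta_-}{r^2+a^2}\frac{\dd h}{\dd r}\right)
\end{align*}
with $h = r^{-1/2+\kappa}$. Since $\Delta_-/(r^2+a^2) = 1 + r^2/\ell^2 - 2Mr/(r^2+a^2) \sim r^2/\ell^2$ as $r\to\infty$ and $\dd h/\dd r = (-1/2+\kappa)r^{-3/2+\kappa}$, a direct expansion shows that this correction term is $\OO(r^2)\cdot\OO(r^{-3/2+\kappa})\cdot r^{1/2-\kappa}\cdot(\text{leading }r^{-1}) $, i.e.\ bounded, and in fact one should track it carefully enough to see that it contributes exactly the right constant so that $\tilde V^h_a = \tilde V_a + (\text{correction}) = \OO(1)$; recall from (\ref{eqn:Potential_asymptotics}) that $(r^2+a^2)\tilde V_a/\Delta_- \to \ell^{-2}(2-\alpha)$, so $\tilde V_a \sim \ell^{-2}(2-\alpha)(r^2/\ell^2)\cdot(r^{-2}) = \OO(1)$ already, and the twisting correction only changes the $\OO(1)$ constant. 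This establishes the claim $\tilde V^h_a = \OO(1)$.

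For the positivity statement, I would appeal directly to the structure already exploited in Lemma~\ref{lemma:Vneg} and the proof of Lemma~\ref{lemma:VariationInequalityND}. Write $\tilde V^h_a = \tilde V_a + F$, where $F$ is the twisting correction, which is a continuous function on $(r_+,\infty)$ that is \emph{independent of $m$} (it depends only on $\ell$, $a$, $M$, $\kappa$, hence on $\alpha$, but not on the mode number). By Remark~\ref{rk:HR_bound}, the negativity result of Lemma~\ref{lemma:Vneg} for $\tilde V_a + F$ goes through with the same proof; but here I want the opposite sign, so instead I would rerun the rewriting (\ref{eqn:potential_rewritten}) of the potential at $\omega = \omega_+$. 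From that identity, $\tilde V_a = \frac{\Delta_- m^2\Xi^2}{(r^2+a^2)^2}[\,\cdots\,]$ where the bracket, for large $|m|$, is dominated by $\frac{\lambda_{mm}}{\Xi^2 m^2} - 1 + \frac{r_+^4 - a^2\ell^2}{(r_+^2+a^2)^2} + \frac{a^2(r-r_+)}{\Delta_-}[\cdots]$ plus terms of order $m^{-2}$. The key point is that the last summand $\frac{a^2(r-r_+)}{\Delta_-}[r(2r_+^2+a^2+\ell^2)-a^2\ell^2 r_+^{-1}+r_+^3]$ is \emph{positive} and \emph{increasing without bound} as $r\to\infty$ (it $\sim a^2\ell^2/r \cdot r \cdot \text{const}/r^2 \cdot r = $ — one checks it tends to a positive limit or grows), so for $r$ large enough this positive term dominates the negative boxed term, making the whole bracket positive; since the prefactor $\frac{\Delta_- m^2\Xi^2}{(r^2+a^2)^2}$ is positive, $\tilde V_a > 0$ for $r > R_1(\alpha,a,m)$. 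The $m$-independence of the threshold comes from the fact that for large $|m|$ the $m$-dependent error terms are uniformly small while the positive term is $m$-independent; the $a,\alpha$-independence of $R$ comes from taking a supremum over the relevant compact ranges of these parameters (using the $a$-range from Lemma~\ref{lemma:r+} and $\alpha$ in a bounded interval). Finally, adding the $\OO(1)$, $m$-independent, $r$-asymptotically-$\OO(r^{-?})$ correction $F$ does not destroy positivity once $R$ is chosen large enough, again uniformly in $a$ and $\alpha$.

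The main obstacle I expect is bookkeeping: verifying that the twisting correction term $F$ genuinely decays (or is at least bounded by a fixed $\OO(1)$ constant with the right sign) \emph{uniformly} in $a$ and $\alpha$ over the relevant parameter ranges, and that the threshold $R$ can be chosen independently of $a,\alpha$. This requires being careful that none of the implicit constants blow up as $a\uparrow\ell$ (which is allowed by Lemma~\ref{lemma:r+} when $r_+/\ell$ is large) — but since $\Xi = 1 - a^2/\ell^2$ appears in denominators, one must check that the potentially singular factors $\Xi^{-1}$ are controlled by the $m^2\Xi^2$ prefactor and by working with the rescaled quantities $\lambda_{mm}/(\Xi^2 m^2)$ as in Lemma~\ref{lemma:GroundState}. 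Beyond this uniformity check, everything reduces to the already-established expansions, so the proof is short: quote (\ref{eqn:potential_rewritten}), (\ref{eqn:Potential_asymptotics}) and Remark~\ref{rk:HR_bound}, compute $F$, and conclude.
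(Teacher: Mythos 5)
There are genuine gaps here, and they all stem from mis-tracked growth rates. The twisting correction is \emph{not} bounded: with $h=r^{-1/2+\kappa}$ and $\Delta_-/(r^2+a^2)\sim r^2/\ell^2$ one has
\begin{align*}
\frac{1}{h}\frac{\Delta_-}{r^2+a^2}\frac{\dd}{\dd r}\left(\frac{\Delta_-}{r^2+a^2}\frac{\dd h}{\dd r}\right)\sim r^{\frac{1}{2}-\kappa}\cdot\frac{r^2}{\ell^2}\cdot\frac{\dd}{\dd r}\left(\frac{r^2}{\ell^2}\left(\kappa-\frac{1}{2}\right)r^{-\frac{3}{2}+\kappa}\right)=\frac{\kappa^2-\frac{1}{4}}{\ell^4}\,r^2,
\end{align*}
i.e.\ it grows like $r^2$; in your factor count you dropped the second factor of $\Delta_-/(r^2+a^2)$. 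Likewise $(r^2+a^2)\tilde V_a/\Delta_-\to\ell^{-2}(2-\alpha)$ gives $\tilde V_a\sim(2-\alpha)r^2/\ell^4$, not $\OO(1)$ (you inserted a spurious $r^{-2}$). The actual content of the lemma is that these two $r^2$-growths cancel \emph{exactly}, because $\kappa^2-\tfrac14=2-\alpha$; this is the whole reason for choosing the twisting $h=r^{-1/2+\kappa}$. The paper's proof proceeds precisely this way: it shows $v(r):=\frac{2-\alpha}{\ell^2}\frac{\Delta_-}{(r^2+a^2)^2}r^2-h^{-1}\frac{\Delta_-}{r^2+a^2}\frac{\dd}{\dd r}\bigl(\frac{\Delta_-}{r^2+a^2}\frac{\dd h}{\dd r}\bigr)=\OO(1)$, and obtains positivity because the remaining piece $V_0-\omega^2$ tends at infinity to $\frac{1}{\ell^2}(\lambda+a^2\omega^2-2ma\omega\Xi)$, which by (\ref{eqn:bound_below}) is positive of size comparable to $m^2\Xi^2$; taking $|m|$ large then beats the $m$-independent $\OO(1)$ remainder for all $r>R$.

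Your positivity mechanism cannot be repaired along the lines you sketch. The term of (\ref{eqn:potential_rewritten}) you rely on, $\frac{a^2(r-r_+)}{\Delta_-}[r(2r_+^2+a^2+\ell^2)-a^2\ell^2r_+^{-1}+r_+^3]$, decays like $r^{-2}$ as $r\to\infty$ — this is exactly how it is used in Lemma~\ref{lemma:Vneg}, where it is absorbed as a small decaying term — so it dominates nothing at infinity. More fundamentally, for $\alpha\in(2,9/4)$, which is part of the Neumann range, the untwisted potential satisfies $\tilde V_a\sim(2-\alpha)r^2/\ell^4\to-\infty$, so the intermediate claim ``$\tilde V_a>0$ for $r>R_1$'' is false there, and no argument that treats the twisting correction $F$ as a bounded ($m$-independent) perturbation of $\tilde V_a$ can yield positivity of $\tilde V_a^h$: both $\tilde V_a$ and $F$ are unbounded and only their difference is controlled. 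Finally, your use of large $|m|$ is inverted relative to what is needed: the dominant positive contribution at infinity is the $m$-dependent one (of size $m^2\Xi^2/\ell^2$ from $V_0-\omega^2$), used to absorb an $m$-independent bounded remainder, not an $m$-independent positive term with $m$-dependent errors.
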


\begin{proof}
	We look at the asymptotic behaviour of the different parts of $\tilde V_a$:
	\begin{align*}
	V_0-\omega^2&\sim\frac{1}{\ell^2}\left(\lambda+a^2\omega^2-2ma\omega\Xi\right)>\frac{1}{\ell^2}m^2\Xi^2>0\\
	V_+&=\frac{2\Delta_-}{(r^2+a^2)^2}\frac{r^2}{\ell^2}+\frac{\Delta_-}{(r^2+a^2)^4}\left(a^4\Delta_-+(r^2-a^2)2Mr\right)\\&\sim\frac{2\Delta_-}{(r^2+a^2)^2}\frac{r^2}{\ell^2}+\frac{a^4\Delta_-^2}{(r^2+a^2)^4}
	\end{align*}
	One easily computes that
	\begin{align*}
	v(r):=\frac{2-\alpha}{\ell^2}\frac{\Delta_-}{(r^2+a^2)^2}r^2-h^{-1}\frac{\Delta_-}{r^2+a^2}\frac{\dd}{\dd r}\left(\frac{\Delta_-}{r^2+a^2}\frac{\dd h}{\dd r}\right)=\OO(1),
	\end{align*}
	which yields the result.
\end{proof}

\section{The twisted Euler-Lagrange equation}
\label{sec:Twisted_Euler_Lagrange}

We give here the derivation of the weak twisted Euler-Lagrange equation.

\begin{proof}[Proof of Lemma~\ref{lemma:ELDN}]
	The following proof can be extracted from \citep{Evans}. We give the extension to twisted derivatives here for the sake of completeness.
	The minimiser $f_{a}$ is a minimiser of the functional
	\begin{align*}
	\tL_{a}(f):=\int_{r_+}^{\infty}\left(\frac{\Delta_-}{r^2+a^2}r^{-1+2\kappa}\left\lvert\frac{\dd}{\dd r}\left(r^{\frac{1}{2}-\kappa}f\right)\right\lvert^2+\Vmod_a\frac{r^2+a^2}{\Delta_-}|f|^2\right)\,\dd r
	\end{align*}
	under the constraint
	\begin{align*}
	\mathcal{J}(f)=0,
	\end{align*}
	where
	\begin{align*}
	\JJ(f)=\int_{r_+}^{\infty} G(r,f)\,\dd r,~~~~G(r,f)=\frac{1}{r^2}\left(|f|^2-r_+\right).
	\end{align*}
	Moreover, define $g(r,f):={2f}/{r^2}$.
	Fix $\psi_1\in \uH_{\kappa}^{1}(r_+,\infty)$. We assume in a first step that $g(r,f_a)$ is not identically zero almost everywhere on $(r_+,\infty)$. Then we can find a $\psi_2\in \uH_{\kappa}^{1}(r_+,\infty)$ such that
	\begin{align*}
	\int_{r_+}^{\infty}g(r,f_{a})\psi_2(r)\,\dd r\neq 0.
	\end{align*}
	Define $j(\tau,\sigma):=\JJ(f_{a}+\tau\psi_1+\sigma\psi_2)$	for $\tau,\sigma\in\RR$. Clearly, $j(0,0)-0$. Since $\frac{\del g(r,f_a+\tau\psi_1+\sigma\psi_2)}{\del\tau}\psi_1$ and $\frac{\del g(r,f_a+\tau\psi_1+\sigma\psi_2)}{\del\tau}\psi_2$ are integrable on $(r_+,\infty)$, $j$ is in $C^1$. In particular, we have
	\begin{align*}
	\frac{\del j}{\del\sigma}(0,0)=\int_{r_+}^{\infty}g(r,f_{a})\psi_2(r)\,\dd r\neq 0.
	\end{align*}
	By the Implicit Function Theorem, there is a $\kappa:\,\RR\rightarrow\RR$ such that $\kappa(0)=0$ and
	\begin{align*}
	j(\tau,\kappa(\tau))=0.
	\end{align*}
	In other words, the function $f_{a}+\chi(\tau)$, where
	\begin{align}
	\label{eqn:implicit_curve}
	\chi(\tau):=\tau\psi_1+\kappa(\tau)\psi_2,
	\end{align}
	satisfies the integral constraint. Thus, setting $i(\tau):=\tL_a(f_{a}+\chi(\tau))$,
	we obtain $i'(0)=0$. Note here that $i$ is differentiable in $\tau$ since $f_{a}\in \uH_{\kappa}^1(r_+,\infty)$. We have
	\begin{align*}
	\frac{\dd i}{\dd\tau}\bigg\lvert_{\tau=0}&=2\int_{r_+}^{\infty}\bigg(\frac{\Delta_-}{r^2+a^2}r^{-1+2\kappa}\frac{\dd}{\dd r}\left(r^{\frac{1}{2}-\kappa}f_{a}\right)\left(\frac{\dd}{\dd r}\left(r^{\frac{1}{2}-\kappa}\psi_1\right)+\kappa'(0)\frac{\dd}{\dd r}\left(r^{\frac{1}{2}-\kappa}\psi_2\right)\right)\\&~~~~~~+\Vmod_a\frac{r^2+a^2}{\Delta_-}f_{a}(\psi_1+\kappa'(0)\psi_2)\bigg)\,\dd r.
	\end{align*}
	From (\ref{eqn:implicit_curve}), we deduce
	\begin{align*}
	\kappa'(0)=-\frac{\int_{r_+}^{\infty}g(r,f_{a})\psi_1\,\dd r}{\int_{r_+}^{\infty}g(r,f_{a})\psi_2\,\dd r}.
	\end{align*}
	Setting 
	\begin{align*}
	\lambda:=2\frac{\int_{r_+}^{\infty}\left(\frac{\Delta_-}{r^2+a^2}r^{-1+2\kappa}\frac{\dd}{\dd r}\left(r^{\frac{1}{2}-\kappa}f_{a}\right)\frac{\dd}{\dd r}\left(r^{\frac{1}{2}-\kappa}\psi_2\right)+\Vmod_a\frac{r^2+a^2}{\Delta_-}f_{a}\psi_2\right)\,\dd r}{\int_{r_+}^{\infty}g(r,f_{a})\psi_2\,\dd r}
	\end{align*}
	yields that
	\begin{align*}
	\begin{split}
	\int_{r_+}^{\infty}\bigg(\frac{\Delta_-}{r^2+a^2}r^{-1+2\kappa}\frac{\dd}{\dd r}\left(r^{\frac{1}{2}-\kappa}f_{a}\right)\frac{\dd}{\dd r}\left(r^{\frac{1}{2}-\kappa}\psi_1\right)+\Vmod_a\frac{r^2+a^2}{\Delta_-}f_{a}\psi_1\bigg)\,\dd r=\lambda\int_{r_+}^{\infty}\frac{f_a}{r^2}\psi\,\dd r
	\end{split}
	\end{align*}
	for all $\psi\in \uH_{\kappa}^1(r_+,\infty)$.  We have $f_a\in \uH_{\kappa}^1(r_+,\infty)$, whence $\lambda=-\nu_a$.
	
	It remains to deal with the case $g(r,f_a)=0$ a.\,e. This, however, would yield that $f=0$ in contradiction to the norm constraint.
\end{proof}

\end{appendices}

\bibliographystyle{alphadin}

\addcontentsline{toc}{section}{References}
{\footnotesize
	\bibliography{literatureSuperradiance}}

\begin{thebibliography}{DRSR14b}


\providecommand{\url}[1]{\texttt{#1}}
\expandafter\ifx\csname urlstyle\endcsname\relax
  \providecommand{\doi}[1]{doi: #1}\else
  \providecommand{\doi}{doi: \begingroup \urlstyle{rm}\Url}\fi

\bibitem[And06]{AndersonUniqueness}
\textsc{Anderson}, Michael~T.:
\newblock \emph{{O}n the uniqueness and global dynamics of {A}d{S} spacetimes}.
\newblock Class.Quant.Grav.23:6935-6954,2006.
\newblock \url{http://dx.doi.org/10.1088/0264-9381/23/23/021}.
\newblock \,Version:\,2006

\bibitem[BF]{BreitenlohnerFreedman}
\textsc{Breitenlohner}, P. ; \textsc{Freedman}, D.~Z.:
\newblock \emph{{S}tability in gauged extended supergravity}.
\newblock Ann.Phys.144.2:249-281,,

\bibitem[BR11]{Bizon}
\textsc{Biz\'{o}n}, Piotr ; \textsc{Rostworowski}, Andrzej:
\newblock \emph{{O}n weakly turbulent instability of anti-de {S}itter space}.
\newblock Phys.Rev.Lett.107:031102,2011.
\newblock \url{http://dx.doi.org/10.1103/PhysRevLett.107.031102}.
\newblock \,Version:\,2011

\bibitem[Car]{Carter_Separability}
\textsc{Carter}, B.:
\newblock \emph{Hamilton-Jacobi and Schr\"odinger separable solutions of
  Einstein's equations}.
\newblock Comm.Math.Phys. 10 (1969),,

\bibitem[CD04]{CardosoDias}
\textsc{Cardoso}, Vitor ; \textsc{Dias}, Oscar J.~C.:
\newblock \emph{{S}mall {K}err-anti-de {S}itter black holes are unstable}.
\newblock Phys.Rev. D70 (2004) 084011.
\newblock \url{http://dx.doi.org/10.1103/PhysRevD.70.084011}.
\newblock \,Version:\,2004

\bibitem[CDLY04]{CardosoDiasLemosYoshida}
\textsc{Cardoso}, Vitor ; \textsc{Dias}, Oscar J.~C. ; \textsc{Lemos}, Jos\'{e}
  P.~S.  ; \textsc{Yoshida}, Shijun:
\newblock \emph{{T}he black hole bomb and superradiant instabilities}.
\newblock Phys.Rev.D70:044039,2004; Erratum-ibid.D70:049903,2004.
\newblock \url{http://dx.doi.org/10.1103/PhysRevD.70.044039
  10.1103/PhysRevD.70.049903}.
\newblock \,Version:\,2004

\bibitem[CDY06]{CardosoDiasYoshida}
\textsc{Cardoso}, Vitor ; \textsc{Dias}, Oscar J.~C.  ; \textsc{Yoshida},
  Shijun:
\newblock \emph{{C}lassical instability of {K}err-{A}d{S} black holes and the
  issue of final state}.
\newblock Phys.Rev. D74 (2006) 044008.
\newblock \url{http://dx.doi.org/10.1103/PhysRevD.74.044008}.
\newblock \,Version:\,2006

\bibitem[DDR76]{DamourDeruelleRuffini}
\textsc{Damour}, T. ; \textsc{Deruelle}, N.  ; \textsc{Ruffini}, R.:
\newblock \emph{{O}n Quantum Resonances in Stationary Geometries}.
\newblock Lettere al nuovo cimento 15, 8:257-262, February 1976

\bibitem[Det80]{Detweiler}
\textsc{Detweiler}, Steven:
\newblock \emph{Klein-Gordon equation and rotating black holes}.
\newblock Phys.Rev.D22.10, November 1980

\bibitem[DH06]{DafermosHolzegelInstability}
\textsc{Dafermos}, Mihalis ; \textsc{Holzegel}, Gustav:
\newblock \emph{{D}ynamic instability of solitons in 4+1 dimensional gravity
  with negative cosmological constant}.
\newblock https://www.dpmms.cam.ac.uk/~md384/ADSinstability.pdf, 2006

\bibitem[DHMS12]{DiasHorowitzMarolfSantos}
\textsc{Dias}, Oscar J.~C. ; \textsc{Horowitz}, Gary~T. ; \textsc{Marolf}, Don
  ; \textsc{Santos}, Jorge~E.:
\newblock \emph{{O}n the {N}onlinear {S}tability of {A}symptotically {A}nti-de
  {S}itter {S}olutions}.
\newblock Class.Quant.Grav. 29 (2012) 235019.
\newblock \url{http://arxiv.org/abs/1208.5772}.
\newblock \,Version:\,2012

\bibitem[DHS11]{DiasGravitational}
\textsc{Dias}, Oscar J.~C. ; \textsc{Horowitz}, Gary~T.  ; \textsc{Santos},
  Jorge~E.:
\newblock \emph{{G}ravitational {T}urbulent {I}nstability of {A}nti-de {S}itter
  {S}pace}.
\newblock Class.Quant.Grav. 29 (2012) 194002.
\newblock \url{http://arxiv.org/abs/1109.1825}.
\newblock \,Version:\,2011

\bibitem[Dol07]{Dolan2}
\textsc{Dolan}, Sam~R.:
\newblock \emph{{I}nstability of the massive {K}lein-{G}ordon field on the
  {K}err spacetime}.
\newblock Phys.Rev.D76:084001,2007.
\newblock \url{http://dx.doi.org/10.1103/PhysRevD.76.084001}.
\newblock \,Version:\,2007

\bibitem[Dol12]{Dolan}
\textsc{Dolan}, Sam~R.:
\newblock \emph{{S}uperradiant instabilities of rotating black holes in the
  time domain}.
\newblock Phys. Rev. D 87, 124026 (2013).
\newblock \url{http://dx.doi.org/10.1103/PhysRevD.87.124026}.
\newblock \,Version:\,2012

\bibitem[DR10]{DafermosRodnianskiSmalla}
\textsc{Dafermos}, Mihalis ; \textsc{Rodnianski}, Igor:
\newblock \emph{{D}ecay for solutions of the wave equation on {K}err exterior
  spacetimes {I}-{I}{I}: {T}he cases $|a|\ll M$ or axisymmetry}.
\newblock \url{http://arxiv.org/abs/1010.5132}.
\newblock \,Version:\,2010

\bibitem[DRSR14a]{DafermosRodnianskiShlapentokhScattering}
\textsc{Dafermos}, Mihalis ; \textsc{Rodnianski}, Igor  ;
  \textsc{Shlapentokh-Rothman}, Yakov:
\newblock \emph{{A} scattering theory for the wave equation on {K}err black
  hole exteriors}.
\newblock \url{http://arxiv.org/abs/1412.8379}.
\newblock \,Version:\,2014

\bibitem[DRSR14b]{DafermosRodnianskiShlapentokh}
\textsc{Dafermos}, Mihalis ; \textsc{Rodnianski}, Igor  ;
  \textsc{Shlapentokh-Rothman}, Yakov:
\newblock \emph{{D}ecay for solutions of the wave equation on {K}err exterior
  spacetimes {I}{I}{I}: {T}he full subextremal case $|a| < {M}$}.
\newblock \url{http://arxiv.org/abs/1402.7034}.
\newblock \,Version:\,2014

\bibitem[Eva10]{Evans}
\textsc{Evans}, Lawrence~C.:
\newblock \emph{Partial Differential Equations}.
\newblock Providence : American Mathematical Society, 2010

\bibitem[FS04]{FinsterSchmid}
\textsc{Finster}, Felix ; \textsc{Schmid}, Harald:
\newblock \emph{{S}pectral {E}stimates and {N}on-{S}elfadjoint {P}erturbations
  of {S}pheroidal {W}ave {O}perators}.
\newblock J. Reine Angew. Math. 601 (2006) 71-107.
\newblock \url{http://dx.doi.org/10.1515/CRELLE.2006.095}.
\newblock \,Version:\,2004

\bibitem[HLSW15]{HolzegelLukSmuleviciWarnick}
\textsc{Holzegel}, Gustav ; \textsc{Luk}, Jonathan ; \textsc{Smulevici},
  Jacques  ; \textsc{Warnick}, Claude:
\newblock \emph{{A}symptotic properties of linear field equations in anti-de
  {S}itter space}.
\newblock \url{http://arxiv.org/abs/1502.04965}.
\newblock \,Version:\,2015

\bibitem[Hol09]{HolzegelMassive}
\textsc{Holzegel}, Gustav:
\newblock \emph{{O}n the massive wave equation on slowly rotating
  {K}err-{A}d{S} spacetimes}.
\newblock Commun.Math.Phys.294:169-197,2010.
\newblock \url{http://dx.doi.org/10.1007/s00220-009-0935-9}.
\newblock \,Version:\,2009

\bibitem[Hol11]{HolzegelWell}
\textsc{Holzegel}, Gustav:
\newblock \emph{{W}ell-posedness for the massive wave equation on
  asymptotically anti-de {S}itter spacetimes}.
\newblock \url{http://arxiv.org/abs/1103.0710}.
\newblock \,Version:\,2011

\bibitem[HR99]{HawkingReall}
\textsc{Hawking}, S.~W. ; \textsc{Reall}, H.~S.:
\newblock \emph{{C}harged and rotating {A}d{S} black holes and their {C}{F}{T}
  duals}.
\newblock Phys.Rev. D61 (2000) 024014.
\newblock \url{http://dx.doi.org/10.1103/PhysRevD.61.024014}.
\newblock \,Version:\,1999

\bibitem[HS11]{HolzegelSmuleviciDecay}
\textsc{Holzegel}, Gustav ; \textsc{Smulevici}, Jacques:
\newblock \emph{{D}ecay properties of {K}lein-{G}ordon fields on {K}err-{A}d{S}
  spacetimes}.
\newblock Commun.Pure.Applied.Maths 66, 11, 1751.
\newblock \url{http://arxiv.org/abs/1110.6794}.
\newblock \,Version:\,2011

\bibitem[HS13]{HolzegelSmuleviciQuasi}
\textsc{Holzegel}, Gustav ; \textsc{Smulevici}, Jacques:
\newblock \emph{{Q}uasimodes and a {L}ower {B}ound on the {U}niform {E}nergy
  {D}ecay {R}ate for {K}err-{A}d{S} {S}pacetimes}.
\newblock Analysis and PDE 7, 5 1057.
\newblock \url{http://arxiv.org/abs/1303.5944}.
\newblock \,Version:\,2013

\bibitem[HW12]{HolzegelWarnickBoundedness}
\textsc{Holzegel}, Gustav~H. ; \textsc{Warnick}, Claude~M.:
\newblock \emph{{B}oundedness and growth for the massive wave equation on
  asymptotically anti-de {S}itter black holes}.
\newblock Journal of Functional Analysis, Volume 266, Issue 4, 15 February
  2014, Pages 2436-2485.
\newblock \url{http://dx.doi.org/10.1016/j.jfa.2013.10.019}.
\newblock \,Version:\,2012

\bibitem[IW04]{IshibashiWald3}
\textsc{Ishibashi}, Akihiro ; \textsc{Wald}, Robert~M.:
\newblock \emph{{D}ynamics in {N}on-{G}lobally-{H}yperbolic {S}tatic
  {S}pacetimes {I}{I}{I}: {A}nti-de {S}itter {S}pacetime}.
\newblock Class.Quant.Grav. 21 (2004) 2981-3014.
\newblock \url{http://dx.doi.org/10.1088/0264-9381/21/12/012}.
\newblock \,Version:\,2004

\bibitem[Olv74]{Olver}
\textsc{Olver}, Frank W.~J.:
\newblock \emph{Asymptotics and special functions}.
\newblock New York : Academic Press, 1974

\bibitem[PT72]{PressTeukolsky}
\textsc{Press}, William~H. ; \textsc{Teukolsky}, Saul.~A.:
\newblock \emph{Floating Orbits, Superradiant Scattering and the Black-hole
  Bomb}.
\newblock Nature 239, 211-212, July 1972

\bibitem[SR13]{ShlapentokhGrowing}
\textsc{Shlapentokh-Rothman}, Yakov:
\newblock \emph{{E}xponentially growing finite energy solutions for the
  {K}lein-{G}ordon equation on sub-extremal {K}err spacetimes}.
\newblock Commun.Math.Phys. 329 (2014), no. 3, 859-891.
\newblock \url{http://arxiv.org/abs/1302.3448}.
\newblock \,Version:\,2013

\bibitem[Sta73]{Starobinskii}
\textsc{Starobinskii}, A.~A.:
\newblock \emph{{A}mplification of waves during reflection from a rotation
  ``black hole"}.
\newblock Zh.Eksp.Teor.Fiz. 64, 48-57, January 1973

\bibitem[Tao06]{TaoDispersive}
\textsc{Tao}, Terence:
\newblock \emph{Nonlinear Dispersive Equations: Local and Global Analysis}.
\newblock American Mathematical Society, 2006

\bibitem[Tes12]{Teschl}
\textsc{Teschl}, Gerald:
\newblock \emph{Ordinary Differential Equations and Dynamical Systems}.
\newblock American Mathematical Society, 2012

\bibitem[Vas09]{Vasy}
\textsc{Vasy}, Andras:
\newblock \emph{{T}he wave equation on asymptotically {A}nti-de {S}itter
  spaces}.
\newblock Analysis of PDE 5 (1) (2012) 81-144,.
\newblock \url{http://arxiv.org/abs/0911.5440}.
\newblock \,Version:\,2009

\bibitem[War12]{WarnickMassive}
\textsc{Warnick}, C.~M.:
\newblock \emph{{T}he massive wave equation in asymptotically {A}d{S}
  spacetimes}.
\newblock Commun.Math.Phys. 321 (2013) 85-111.
\newblock \url{http://dx.doi.org/10.1007/s00220-013-1720-3}.
\newblock \,Version:\,2012

\bibitem[ZE79]{Zouros}
\textsc{Zouros}, Theodoros J.~M. ; \textsc{Eardley}, Douglas~M.:
\newblock \emph{Instabilities of Massive Scalar Perturbations of a Rotating
  Black Hole}.
\newblock Ann.Phys.118.139-155, 1979

\bibitem[Zel71]{Zeldovich}
\textsc{Zel'dovich}, Y.:
\newblock \emph{Generation of waves by a rotating body}.
\newblock ZhETF Pis. Red. 14, 4:270-272, August 1971

\end{thebibliography}

\end{document}